\newtcolorbox{examplebox}[2][]
{
  breakable,
  colframe = gray!50,
  colback  = gray!10,
  coltitle = gray!10!black,
  before skip = 10pt,
  after skip = 10pt,
  title    = \textbf{#2},
  #1,
}
\theoremstyle{plain}
\newtheorem{theorem}{Theorem}[chapter]
\theoremstyle{plain}
\newtheorem{lemma}{Lemma}[chapter]
\theoremstyle{definition}
\newtheorem{definition}{Definition}[chapter]
\theoremstyle{plain}
\newtheorem{conjecture}{Conjecture}[chapter]
\theoremstyle{plain}
\newtheorem{proposition}{Proposition}[chapter]
\theoremstyle{plain}
\newtheorem{claim}{Claim}[chapter]
\theoremstyle{plain}
\newtheorem{numericalresult}{Numerical Result}[chapter]
\theoremstyle{plain}
\newtheorem{rigorousresult}{Rigorous Result}[chapter]
\newcommand{\N}{\mathbb{N}}
\newcommand{\Real}{\mathbb{R}}
\newcommand{\Exp}{\mathbb{E}}
\DeclareMathOperator*{\argmin}{argmin}
\newcommand{\Dt}{\Delta t}
\newcommand{\br}{\mathbf{r}}
\newcommand{\bs}{\mathbf{s}}
\newcommand{\bu}{\mathbf{u}}
\newcommand{\bv}{\mathbf{v}}
\newcommand{\bx}{\mathbf{x}}
\newcommand{\by}{\mathbf{y}}
\newcommand{\bz}{\mathbf{z}}
\newcommand{\blambda}{\mathbf{\lambda}}
\newcommand{\bnu}{\mathbf{\nu}}
\newcommand{\minipagebeforesep}{3mm}
\newcommand{\minipageaftersep}{1mm}
\newcommand{\Circ}{\mathcal{C}}
\newcommand{\OPT}{\textsf{OPT}}
\newcommand{\poly}{\textsf{poly}}
\newcommand{\innerp}[2]{\langle#1|#2\rangle}
\newcommand{\ex}[2]{\left\langle#1\right\rangle_#2}
\newcommand{\kket}[1]{|#1\rangle\rangle}
\newcommand{\bbra}[1]{\langle \langle #1|}
\titleformat{\chapter}[display]   
{\normalfont\sffamily\huge\bfseries}{\chaptertitlename\ \thechapter}{20pt}{\huge}   
\titlespacing*{\chapter}{0pt}{150pt}{60pt}
\begin{document}

\frontmatter
\setstretch{\dnormalspacing}

\renewcommand{\thefootnote}{\arabic{footnote}}

\begin{savequote}[75mm]
The method is more important than the discovery, because the correct method of research will lead to new, even more valuable discoveries.
\qauthor{Lev D.~Landau}
\end{savequote}

\chapter{Introduction}\label{ch:introduction}

We find ourselves in an era defined by computation. From everyday digital technologies to the rapid advancements in artificial intelligence, to the influential computational methodologies employed across various scientific fields, computation is everywhere. 
Concurrently, the study of computer science has evolved from the abstract modeling of the Turing machine to a diverse list of subareas, including machine learning, robotics, human-computer interaction, and more. 
As we navigate this swiftly changing period in human history, one cannot help but wonder~—~where will computation lead us in the upcoming decades in terms of science, technology, and society?

Before we speculate on such a grand question, it is imperative to first reconsider the fundamental nature of computation and its interplay with other areas of interest. 
This thesis argues for the advantage of employing a conceptual paradigm and a quantitative language to understand our world through the lens of computation, aptly termed, ``the computational lens''. 
The remainder of this introductory chapter aims to motivate and elucidate this proposal by addressing the integral questions of \textit{what} and \textit{why}. The bulk of this thesis will then center around the author's own research in quantum computation and neuroscience. Finally, in the concluding chapter, I will address the question of \textit{how} to properly apply the computational lens in science.

\section{What is the computational lens?}
In Oxford English Dictionary, the word ``computation'' is defined as
\begin{quote}
Computation, noun.\\
The action or process of computing, reckoning, or counting; arithmetical or mathematical calculation; an instance of this.Obsolete.
\end{quote}
This definition indeed aligns quite well with most people's initial instinct upon hearing the word computation. For instance, our laptops and smartphones continuously execute countless calculations in our day-to-day lives. Or, when faced with an important decision, we thoroughly weigh potential solutions, considering consequences and pros and cons, to pinpoint the optimal one.
Meanwhile, the meaning of the term has inevitably evolved along with our changing world. In this dissertation, I propose the following definition of computation:
\begin{definition}[Informal]\label{def:computation}
Computation is a method of reasoning that emphasizes the analysis and transformation of inputs to outputs in an information processing system, conducted through precise mechanistic procedures and informed by mathematical objectives.
\end{definition}

As this is a dissertation for a degree in computer science, we would not dwell excessively on etymology and philosophy. Nevertheless, to elucidate and motivate the above proposed definition of computation, I will present multiple examples of applying the \textit{computational lens} across various scientific fields.

But first, what is the computational lens? The idea is actually very simple and has been around for decades~\cite{karp2011understanding}: studying any subject of interest from the angle of computation. Namely, characterizing the underlying functionality by deconstructing a system with both bottom-up mechanistic implementations as well as top-down computational principles. Recognizing that a thousand pages of high-level reasoning may not elucidate as much as a single concrete example, I will provide numerous relevant examples across a variety of fields, including physics, biology, economics, and mathematics.

Before embarking on this journey, I want to make three remarks. First, it is important to recognize that not all examples below may precisely align with the idealized definition of the computational lens as promoted throughout this thesis. However, for the sake of providing a comprehensive literature review, I have attempted to incorporate a broad range of relevant examples. Secondly, the notion of computation is broad, and as such, some well-known theories or methodologies in other fields such as physics and biology could potentially be encompassed by the computational lens. That is to say, some readers might feel awkward to see concepts they are familiar with being rephrased in terms of the language of computation. I invite the reader to momentarily set aside their previous definitions of computation, enjoy the following examples, and contemplate why I have chosen such a narrative.
Finally, it is important to note that it requires expertise and time for a new application of the computational lens to make real impact in the other field. As such, many of the examples outlined below still require extensive future work. This includes fostering a deeper dialogue between the involved communities, learning each other's languages, adopting and adapting various tools, and developing a shared appreciation for the scientific questions and methods in question. 

\subsection{Computational lens in physics}
Physics is a discipline devoted to studying the natural world, focusing on matter, motion, the concept of time and space, and beyond. A key methodology in physics involves designing mechanical models or theories for the physical systems under investigation, and then using these models to make predictions or test hypotheses. This approach naturally connects to computation as the simulation of physical models and the calculation within physical theories both involve substantial computational effort. However, by ``the computational lens in physics'', I mean something more than mere simulations. I will illustrate this concept using examples from three different sub-fields in physics.

\medskip \noindent \textbf{Classical mechanics.} 
In essence, a mechanical model is an algorithm and hence is inherently computational~\footnote{From the Newton's laws to the Lagrangian and/or Hamiltonian formalism, all these theories in classical mechanics provide mathematical frameworks in which given the configuration of a system (e.g., the initial position and momentum of each particles, and the potential function of the system), the evolution of the system will be governed by some well-specified rules (e.g., the least action principle)}. 
As long as we are using such a mechanistic view to understand the physical world, then our understanding should be constrained by the algorithmic limit from the theory of computation. This suggests that certain physical systems may never be fully understood through detailed mechanistic models, and these algorithmic limits could shed light on why this is so.

For example, although Newton famously solved the two-body problem, i.e., analytically characterizing all possible physical evolution when there are only two objects in the system, the problem becomes intractable when the number of objects increases to three. 
Turing Award laureate Andrew Yao proposed the use of the \emph{extended Church-Turing thesis}\footnote{The extended Church-Turing thesis postulates that every feasible computation in the physical world can be efficiently simulated by a Turing machine.} to elucidate the challenges inherent in solving the three-body problem\cite{yao2003classical}. Another example beyond the realm of classical mechanics is the diffraction limit, i.e., the point at which two light sources become indistinguishable. A recent study by Chen and Moitra proposed an algorithmic foundation to reassess this limit through the lens of sample complexity~\cite{chen2021algorithmic}. 
Indeed, these previous works from the theoretical computer science community had brought fresh perspectives into ancient physics problems. Meanwhile, it would require further time and effort to translate and adjust the algorithmic reasoning into the language and appreciation of physics to yield illuminating insights.

\medskip \noindent \textbf{Statistical physics} concerns the macroscopic properties of large-scale physical systems. The key difference to classical mechanics is that, in statistical physics we now use probability distributions (or more precisely, ensembles) over all possible states to model a system. In this framework, the value of an observable is treated as the expected value over such a probability distribution. Furthermore, the chosen distributions adopt the form of a softmax function over the energy of a state. Therefore, the way statistical physicists approach to understand the subjects of interest is essentially through a certain optimization paradigm.

This paradigm naturally connects to more sub-fields in the realm of computation, e.g., counting problem (related to entropy) and sampling problem (related to numerical simulation). 
For example, the important Ising models in statistical mechanics is dual to the constraint satisfaction problems (CSPs) and hence enable the application of computational complexity theory. Additionally, numerous heuristics or algorithms utilized in numerical simulations, such as belief propagation, have profound mathematical connections to optimization hierarchies and spectral algorithms.
Through these bridges, lots of physical questions (e.g., phase transition) can be studied algorithmically, and vice versa. 
I recommend a textbook by Mezard and Montanari~\cite{mezard2009information} for readers who want to learn more about the computations in statistical physics.

\medskip \noindent \textbf{Quantum physics} is the study of physics at the scale of atoms and subatomic particles. Its glorious development in the past century has not only reshaped the landscape of physics but also brought up exciting technological advancements. For readers who want to learn more, see~\autoref{ch:prelim quantum} for a preliminary introduction for non-physicists. Here the reader can think of quantum physics as another mathematical framework in explaining the microscopic dynamics, in which given the configuration of a system, the evolution of the system will be governed by some well-specified rules. However, while the quantum theory provides much more accurate predictions to the real world, its physical interpretation as well as its full mathematical underpinning remain elusive.

Previously we saw that the computational aspect in physics is mainly about understanding a given system. Given the unintuitive framework of quantum physics, it is the first time when people started to ask: could there be any non-trivial computation done by a quantum system? Indeed, the seminal Shor's algorithm~\cite{shor1994algorithms} answered this quest affirmatively and opened up this exploding field now known as \emph{quantum computation}. In short, Shor cleverly utilized the new mathematical structure in the quantum paradigm and designed an efficient quantum algorithm for the integer factoring problem, which is believed to be extremely difficult to any practical computer as many real-world cryptographic applications are based on its computational hardness.

In addition to Shor's algorithm and other quantum algorithms designed to solve computational problems more rapidly, the computational way of thinking also enriches our understanding of physics. For example, the concept of \textit{entanglement} (see~\autoref{ch:prelim quantum} for more details) has long mystified physicists, and continues to do so even today. However, the design of quantum algorithms and the ongoing research program exploring the computational advantage of quantum systems have fostered the perspective of viewing entanglement as a particular \emph{computational resource}. This viewpoint has since facilitated a accessible and intuitive approach to reasoning about the bizarre concepts emerging from the quantum formalism.

\medskip \noindent \textbf{Summary of the computational lens in physics.}
The above examples of the computational lens applied to physics can be grouped into three categories:
(i) Usage of algorithmic perspectives to elucidate the limits of mechanical models in physics, such as the challenges inherent to the three-body problem and the unsolvability of 3D Ising models.
(ii) Algorithms serving as new models for comprehending physical phenomena. For instance, the belief propagation algorithm in statistical physics correspond to the concept of Bethe free energy, and the utilization of variational algorithms as an ansatz~\footnote{Often time certain physical models of interest may not immediately yield analytical solutions, as in the case of 3D Ising models mentioned previously. In response, physicists often propose hypotheses regarding the nature of a solution (or approximate solution), effectively creating a model for this solution. For instance, they might suggest that the solution could be generated by a small neural network. This model of a potential solution is referred to as an \textit{ansatz}, and the task then becomes solving the physical model with respect to this proposed ansatz.} for solutions to a physical problem.
(iii) The discovery of novel physical features or insights via the examination of computational perspectives. For example, facilitated by quantum computation, we have enhanced our understanding of the role of quantum entanglement.

\subsection{Computational lens in biology}
Biology is the study of the living world. Over the past two centuries, scientists have accumulated an unprecedented understanding at various scales, but it remains relatively rare to find a theory that seamlessly connects high-level macroscopic phenomena to microscopic mechanisms. In the author's opinion, the theory of evolution is probably the most successful example to date. Specifically, while the concept of natural selection, as proposed by Charles Darwin and Alfred Russell Wallace in 1858~\cite{darwin-wallace}, serves as a computational principle for the emergence of species diversity, genetics provide concretes blueprints for the detailed implementations. Nonetheless, a huge gap persists between the top-down and bottom-up understanding in most areas in biology. For instance, we are far from fully comprehending how DNA sequences encode protein structure or how proteins interact with each other. 

As moving between different levels of organisms  and different layers of abstraction naturally involves computing data from one side to the other, the lens of computation hence naturally serve as an interface to build up a hierarchical understanding with concrete algorithmic mechanisms as well as interpretable computational objectives. In the rest of this subsection, we are going to see several examples of the applications of computational lens to the realm of the living world.

\medskip \noindent \textbf{Evolution.} Natural selection and its modern synthesis together are already a paradigm to reason about biological diversity through the computational lens. 
By thinking about how variations incur the changes in fitness and how it further affect the size of population and so on, such a reasoning template provides a causal model connecting detailed mechanism with high-level functionality.
To further apply the evolutionary thinking to even more complicated problems, there have been many efforts in enriching the paradigm of natural selection. 

To name a few, evolutionary dynamics~\cite{nowak2006evolutionary} utilize mathematical models such as dynamical equations and game theory to further concretize both the mechanism at the genetic level as well as the objective at the population level. On the other hand, there have also been efforts from the theoretical computer science side to explore questions such as the role of sex through an algorithmic perspective~\cite{chastain2014algorithms} or evolutionary timescale through the learning perspective~\cite{valiant2009evolvability}.
Finally, the theory of evolution also inspires back to computer science and open up the fascinating field of evolutionary computing~\cite{eiben2015introduction}.

\medskip \noindent \textbf{Neuroscience} is the study of brain with focuses both on the fundamental biological substrates as well as higher level computations. A more comprehensive introduction to neuroscience and numerous examples of computations in neuroscience from a more biological perspective will be deferred to~\autoref{ch:prelim neuro}. Here, we will only focus on the examples of some theoretical computer science attempts in exploring various topics neuroscience. 

One main challenge in neuroscience is to bridge the understanding across different scales in the brains: from single neuron's behaviors to the dynamics of a population of closely connected neurons, to multiple brain regions, or even to the whole cognitive level. 
In the past few decades, theoretical computer scientists have tried to bring in new perspectives through the computational lens. Theses prior works can be roughly categorized into three types. 
(i) Defining a computational model and studying its computational power as well as biological relevance, e.g., neuroidal model~\cite{valiant2000neuroidal} and assembly calculus~\cite{papadimitriou2020brain}. 
(ii) Viewing neural dynamics as algorithms and examine the underlying computational properties, e.g., distributed algorithms~\cite{lynch2017spiking,su2019spike,hitron2020random}. 
(iii) Studying computational tasks inspired by neuroscience and designing biologically plausible models or algorithms to realize them, e.g., memorization and association~\cite{valiant2006quantitative,valiant2005memorization}
I also recommend a survey by Maass et al.~\cite{maass2019brain}.

\medskip \noindent \textbf{Other examples.} 
In the study of pattern formation in living organisms, Turing proposed a diffusion-reaction model to explain how interactions among two chemical substances could generate stable yet diverse patterns~\cite{turing1952chemical}.
The question of how a group of simple agents can collectively produce non-trivial behaviors is central to the study of swarm intelligence. This subject is explored through the examination of emergent computation (e.g., slime molds solving a maze~\cite{nakagaki2000maze}), the analysis of convergent properties (e.g., bird flocking~\cite{chazelle2014convergence}), or the investigation of underlying algorithmic ideas (e.g., ant colony~\cite{musco2016ant,zhao2022power}). 
In molecular biology, researchers are concerned with fundamental questions such as the processes that map DNA sequences to protein structures. Efforts have been made to apply a computational lens to algorithmic descriptions in gene regulation~\cite{xing2004motifprototyper,ben2002discovering}.

\medskip \noindent \textbf{Summary of the computational lens in biology.} 
The previously discussed examples of the computational lens applied to biology can be classified into three categories:
(i) The use of evolutionary thinking as a computational framework, which serves to elucidate biological diversity and extends even beyond this scope.
(ii) The application of algorithmic perspectives to investigate emergence~\footnote{Emergence refers to the process or phenomenon where novel properties, patterns, or behaviors arise from the interactions and collective behavior of simpler components within a complex system.} in biology, such as pattern formation and bird flocking.
(iii) The employment of computational models to investigate the intricate hierarchical systems in biology, as demonstrated by numerous examples in the field of neuroscience.

\subsection{Computational lens in economics}
There is a huge literature in the intersection of computer science and economics and here I only provide an incomplete list of examples to highlight the new perspective the computational lens could bring. In game theory, equilibrium is the key property to investigate while it becomes computational hard to calculate find the equilibrium of a give game. The seminal work of Daskalakis et al.~\cite{daskalakis2009complexity} showed that finding the Nash equilibrium of a game is indeed computationally intractable in the worst-case assuming some mathematical conjecture. This brought up reexamination on the notion of equilibrium as if the equilibrium is not computationally efficient to find, how would a market converge to it so fast? Or maybe the market does not converge to an equilibrium? Beyond equilibrium, the computational lens also influence other sub-areas of game theory such as mechanism design by considering the algorithmic aspect. See the textbook by Roughgarden~\cite{roughgarden2010algorithmic} for a comprehensive review.

Another important application of the computational lens in economics is modeling an agent as a computationally bounded computer instead of an all-powerful function. For example, this has be adopted to the study of bounded rationality~\cite{kalai1990bounded} where the goal is to better capture how human's decisions deviate from that from the perfect economic rationality.

Bitcoin~\cite{nakamoto2008bitcoin} utilizes the concept of proof of work, which has its root from theoretical cryptography in combating spam mail~\cite{dwork1992pricing}, to base consensus among decentralized agents on the computational hardness in cryptographic primitives. The further growing study of blockchain has brought to decentralized consensus, see a recent textbook~\cite{shi2020foundations} for more coverage.

Last but not least, the concept of fairness has been heatedly discussed in the past few years due to the surging development of machine learning. These advancements immediately call for a quantitative notion of fairness for future deployment of machine learning algorithms. Hence, the area of algorithmic fairness had born~\cite{kleinberg2018algorithmic}.

\medskip \noindent \textbf{Summary of the computational lens in economics.} We can organize the aforementioned examples of the computational lens in economics into three categories: (i) Incorporating computational efficiency considerations into economic problems, e.g., computational hardness of Nash equilibrium and bounded rationality. (ii) Introducing new constructions to economic concepts from computation, e.g., Bitcoin. (iii) Formulating algorithmic definitions for economic concepts, as showcased by algorithmic fairness. 

\subsection{Computational lens in pure mathematics}
While the theory of computation brings up important mathematical problems such as the millennium ``P versus NP'' problem~\cite{cook1971complexity,karp1972reducibility}, the computational way of think also has led to several exciting progresses in some sub-areas in pure mathematics. For example, the recent improvement of sunflower lemma~\cite{alweiss2020improved,rao2019coding}, a big progress on upper bounds for 3-arithmetic progressions~\cite{kelley2023strong}, and the breakthrough quantum computational complexity results ($\text{MIP}^*=\text{RE}$) leading to the resolution of a long-standing open problem in operator algebra~\cite{ji2021mip}.

The examples mentioned above all share a similar vein: they utilize the language of theoretical computer science to traverse the narrow alley of precise logical reasoning. In the Sunflower Lemma result, the author leveraged the angles of coding theory and pseudorandomness to iteratively extricate mathematical structures that are not easily described in one shot. In the 3-arithmetic progressions result, the authors again capitalized on the computational perspective of pseudorandomness to facilitate a computational trade-off among mathematical notions of \textit{density increment}, \textit{spreadness}, and \textit{regularity}. Lastly, in the $\text{MIP}^*=\text{RE}$ result, the author heavily employed the language of nonlocal games as well as concepts from coding theory and proof systems, transforming a matrix approximation problems in infinite-dimensional space into intuitive games of players sending messages to each other.

\medskip \noindent \textbf{Summary of the computational lens in mathematics.}
From a bird's-eye view, the common theme in these works is the ``method of reduction'' (see~\autoref{fig:reduction} for an example). By this, I mean the technique of transforming (i.e., reducing) mathematical or computational problems or structures into different forms that may be more solvable or analyzable. 
Notably, due to its flexibility, the method of reduction allows for creative freedom within the realm of logical reasoning, while still preserving mathematical integrity.
Furthermore, the computational lens facilitates intermediate language to structure mathematical thinking, e.g., many concepts from pseudorandomness and communication complexity theory.
It is exhilarating to anticipate future mathematical breakthroughs achieved through the computational lens once more.

\begin{figure}[h]
    \centering
    \includegraphics[width=13cm]{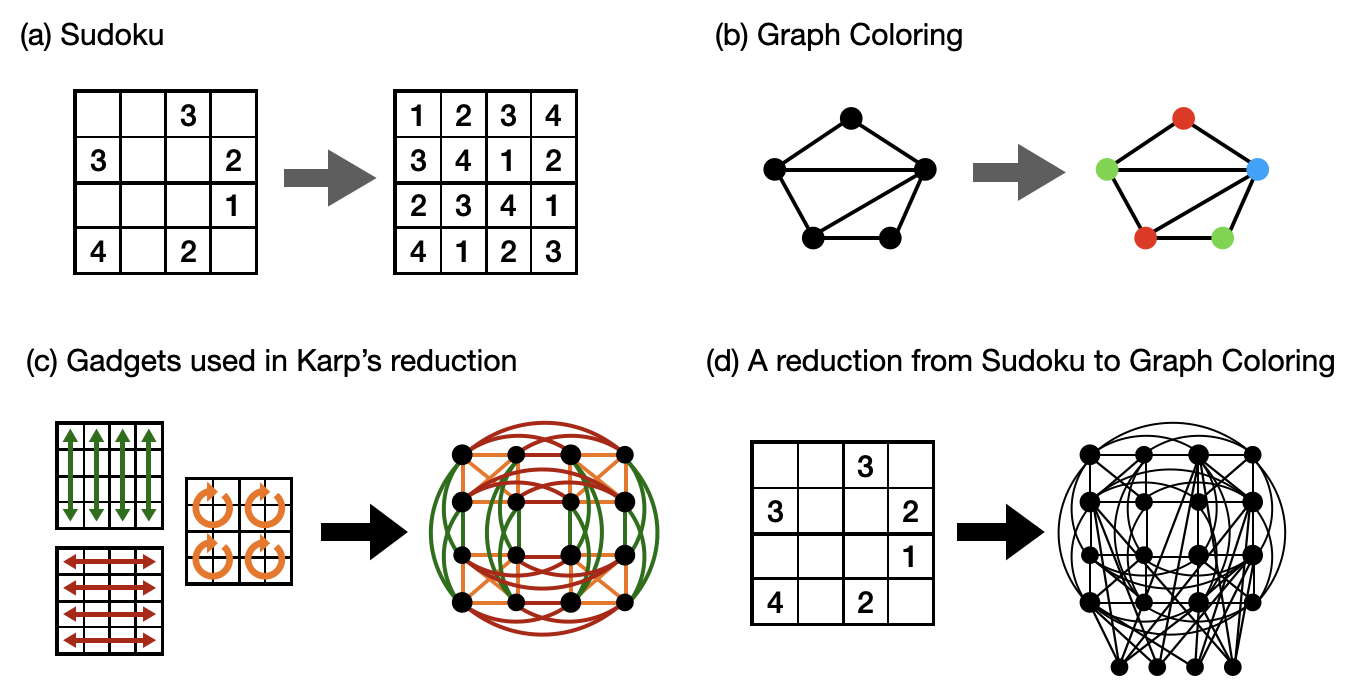}
    \captionof{figure}{An example of a reduction from Sudoku to Graph Coloring. (a) Consider the Sudoku problem where, for simplicity, we use a $4\mathord{\times}4$ grid instead of the common $9\mathord{\times}9$ version. In Sudoku, we need to fill the squares with numbers from 1 to 4, ensuring that each row, each column, and each $2\mathord{\times}2$ region does not contain duplicate numbers. (b) Consider the graph coloring problem commonly seen in theoretical computer science: the problem input is a bunch of vertices connected by edges and a number telling you how many different colors you can use. A valid graph coloring needs to color each vertex, and make sure that the vertices connected by edges are not colored the same. For example, in this figure we can find a valid graph coloring with three colors, but there is no way to achieve a valid graph coloring with only two colors. (c) Gadgets used in a Karp's reduction from Sudoku to graph coloring. (d) we add four more vertices (at the bottom), each potentially representing four different numbers (from left to right, potentially representing 1 to 4), and connect the vertices corresponding to the squares that already have numbers to some of the vertices below. For example, the lower left square is already filled with 4, so we need to connect the corresponding vertex to the first three of the four vertices at the bottom to ensure that it cannot be colored the color corresponding to 4.}\label{fig:reduction}
\end{figure}


\section{Why should we consider the computational lens?}
Now that we have reviewed several examples of the computational lens in physics, biology, economics, and pure mathematics, I hope the proposed definition of computation in~\autoref{def:computation} is much clearer. Moreover, these examples should convince readers of the ubiquity of the computational lens across various research fields. Having established that, the next question would be, why consider the computational lens? What can scientists gain from its application? 

\medskip \noindent \textbf{The challenge of studying complex systems in science.}
Complex systems refer to systems composed of many components that interact with each other, resulting in intricate joint behaviors. Examples of complex systems in science range from cells and protein networks to our brains, from economic markets to climate, and even galaxies. It is not an exaggeration to say that understanding complex systems represents one of the most vital scientific inquiries of our time.

Since the birth of science in the 17th century, mathematics and physics have played a central role in shaping the ways in which we explore and understand the world. However, due to their rigid and analytical nature, these fields have often struggled to handle the inherent complexities of certain systems. A key reason why complex systems present such a challenge to study is the difficulty involved in reconciling multi-level interactions through bottom-up mechanical methods, top-down phenomenological models, and high-level understanding.

\medskip \noindent \textbf{The dawn to a new scientific methodology.}
Recently, we have begun to see a paradigm shift towards the use of computational methodologies, offering a new lens through which we can investigate and comprehend complex systems.
In particular, the two waves of computing~-~the invention of digital computers and the recent explosion of artificial intelligence (AI)~-~have fundamentally altered the way we approach science at the \textit{instrumental level}. Today, scientists often use computers to simulate models for inference or employ AI models for making predictions. These new computational tools have greatly expanded the capabilities of scientific research, offering innovative ways to explore and understand complex systems.

While these technological advancements have provided hope for decoding the intricacies of complex systems, they also necessitate a conceptual framework to guide our scientific inferences and theory-building. As we navigate the two waves of computing, the twilight at the horizon of our knowledge suggests the dawn of a new scientific methodology.

\medskip \noindent \textbf{The computational lens as a new scientific language.}
The goal of this dissertation is to propose a sketch of how the computational lens can bring forth a new methodology at the \textit{conceptual level}, and provide us with a scientific language to understand and study complex systems. The central thesis of this dissertation can be summarized as follows.

\begin{claim}\label{claim:computation}
Computation is a convenient and mechanistic language to understand and analyze information processing systems with the advantage of its composability and modularity.
\end{claim}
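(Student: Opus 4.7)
The plan is to argue for Claim~\ref{claim:computation} not by deductive proof but by methodological demonstration, since the statement is a thesis-level claim about the utility of a scientific language rather than a mathematical equality. First I would sharpen the working vocabulary: what counts as an \emph{information processing system}, what distinguishes a \emph{mechanistic} description from a purely phenomenological one (continuing the thread set by Definition~\ref{def:computation}), and what operational tests constitute \emph{composability} (the ability to chain sub-procedures while preserving correctness or quantitative guarantees) and \emph{modularity} (the ability to abstract and swap sub-procedures behind a common interface). These working definitions fix the burden of proof and give the rest of the dissertation concrete criteria to meet.

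Second, I would stage the main body as two in-depth case studies, aligned with the author's research in quantum computation and neuroscience, each designed to exhibit the two claimed advantages on a non-trivial complex system. For the quantum side, the goal would be to recast physical primitives (unitary evolution, measurement, entanglement generation) as algorithmic subroutines and show that interesting capabilities arise by composition of these subroutines, using entanglement-as-a-resource reasoning of the kind flagged in the introduction to quantum physics above. For the neuroscience side, the goal would be to exhibit a layered account in which low-level neural dynamics realize mid-level primitives (e.g., assembly-style operations) that in turn realize higher cognitive routines, so that modularity is literally the bridge across scales that ordinary biological modeling struggles to supply.

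Third, to make the comparative advantage claim rather than a mere feasibility claim, I would pair each case study with a foil: a matched question approached purely through the native language of physics or biology, and then re-approached through the computational lens. The evidence for the claim is then the delta, namely new theorems proved, new phenomena predicted, or new conceptual clarifications obtained only after the computational reframing, with the ``method of reduction'' highlighted earlier serving as the reusable technical device that manifests composability. I would close by returning to the concluding chapter's \emph{how} question, arguing that the cumulative force of these case studies, plus the cross-field survey already given in this chapter, licenses the informal universal quantifier implicit in the claim.

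The hard part will be epistemological rather than technical. Since the claim is not a theorem, the principal obstacle is avoiding selection bias: showcasing systems where computational thinking happens to succeed does not by itself establish that it is \emph{the} convenient language in general. I would therefore take pains to include at least one honest failure mode or limitation, to calibrate scope, and to distinguish cases where the computational lens adds genuinely new content from cases where it merely renames existing structure, a distinction already anticipated in the remarks following Definition~\ref{def:computation}.
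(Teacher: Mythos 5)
Your proposal is sound and its skeleton matches the dissertation's: the claim is treated as a methodological thesis rather than a theorem, the supporting evidence is staged as two case studies in quantum computation (\autoref{ch:example RCS}) and neuroscience (\autoref{ch:example EI}), and the argument closes by returning to the ``how'' question in \autoref{ch:conclusion}. Where you diverge is in the local justification offered immediately after the claim. The paper does not operationalize composability and modularity, and it does not run matched foils; it argues by analogy and by survey. Composability is defended through the dual meaning of ``compose'' (to synthesize and to create) and a contrast with partial differential equations, which are said to be too detailed and delicate to assemble; modularity is defended through software design patterns (factory, composite, and command patterns, illustrated by a restaurant kitchen); and convenience is defended by positioning computation as a second unifying language alongside mathematics. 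The broad cross-field survey earlier in the chapter then does, informally, the comparative work your foils would do. Your approach buys falsifiability --- operational tests, explicit deltas over non-computational baselines, and an honest failure mode --- at the cost of a burden the dissertation explicitly declines to carry, stating that the chapters are only ``a precursor to a long-term research agenda.'' The paper's approach buys breadth and accessibility but, as you anticipate, is more exposed to the selection-bias objection; the author concedes up front that not all surveyed examples precisely align with the idealized definition, and that distinguishing genuine new content from mere renaming requires further work. Neither route is wrong; yours imposes the more demanding evidential standard, the paper's is more candid about what is presently achievable.
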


In the remainder of this section, I will provide high-level justifications for the aforementioned claim. While recognizing the boldness of this thesis, it is important to emphasize that the development of a new scientific methodology is a process that necessitates time and extensive research. The following chapters of this dissertation are intended to serve as a precursor to a long-term research agenda, which I will elaborate further in~\autoref{ch:conclusion}. More detailed blueprints for the application of the computational lens in science will emerge in the course of my future research.

\medskip \noindent \textbf{The advantage of composability.}
The term ``compose'' has two meanings: (i) to synthesize and (ii) to create. Interestingly, what I want to emphasize here is that computing possesses both meanings of the term compose.

Most physical models use partial differential equations or more generally dynamical systems for describing the systems of interest. These mathematical tools not only have rich descriptive power but also have a strong and solid theoretical foundation, thus leading physicists to go very far. However, these mathematical models are often too detailed, sometimes delicate, so they are not easy to assemble. In contrast, computer science, starting with Turing machines, has the characteristic of easily synthesizing and assembling different functions/algorithms. This flexible and free way of thinking is more convenient for us in terms of \textit{creation}.

At first glance, creation seems to be going in the opposite direction to the scientific method of trying to understand nature: the former is an \textit{artificial} process, while the latter is a \textit{natural process}. If too many artificial elements are added to the scientific model, can we still accurately understand nature? When facing the real world, we will inevitably create a set of understandings through the mental world and abstract formal language. Since this scientific process is inevitably a creative artificial process, can computing, as an interface that makes it easy for people to think about how to create, be used to establish our scientific understanding and even provide a common language between different fields?

\medskip \noindent \textbf{The advantage of modularity.}
Modularity refers to the ability of being separated and recombined. In software engineering, the concept of \emph{Design Patterns}~\cite{gamma1995design} advocates to abstract out common functionalities and problems into templates with well-specified classes, objects, functions, and so on. For example, imagine we want to open a Taiwanese restaurant, with signature dishes by the head chef such as sweet and sour pork ribs, scrambled eggs with tomatoes, Kung Pao chicken, etc. Although each dish is prepared differently, they all have the following steps in their recipe: (1) preparing and cutting ingredients, (2) pre-processing of meat/eggs, (3) sautéing spices, (4) mixing and stir-frying ingredients, (5) final seasoning. Therefore, when writing recipes for dishes, we can design each of these five stages separately, which makes the coordination of teamwork in the kitchen easier in the future. This concept is also known as the \textit{factory pattern}.

Now, when planning the menu, besides the main dishes, appetizers, desserts, drinks, etc., are indispensable. Moreover, depending on the time and special days, we might want to offer special meal deals, providing customers with various combinations of dishes. If we list down all the possible combinations of dishes, the menu would probably need dozens of pages. So, generally, we would use a tree-like structure to flexibly express the meal options. This concept is also known as the \textit{composite pattern}.

Finally, when formulating the restaurant's operation procedures and staff movements, there will be many \textit{command} actions. For example, Head Chef A orders Sous Chef B to chop vegetables, Sous Chef C asks Assistant D to wash dishes, and Manager E instructs Waiter F to ask the customer about food allergies, etc. The participants in these commands may change as the restaurant staff come and go, while the contents of the commands or the connections between them might also update. Therefore, rather than meticulously describing all the interactions between individuals, a convenient abstraction is to view these inter-role \textit{commands} as an \textit{object}. This not only allows the contents of the command to be flexibly modified, but it also makes it easy to change the senders and receivers of the command. This concept is also known as the \textit{command pattern}.

The above examples only cover the tip of the iceberg of design patterns, but it is hoped that they already give readers a rough feeling for the concept of modularity: by abstracting things and functionalities into modules, operation and understanding become more convenient and clear. The perspective and language of computation lens can naturally accommodate modularity, and such advantage may allow computation to extend quantitative/mechanistic analysis methods to originally more \textit{descriptive} research topics.

\medskip \noindent \textbf{Computation as a convenient and mechanistic language.}
Our understanding is built on the foundation of languages. This can take the form of a descriptive language, as used in storytelling or intuition building. It can also be a quantitative, analytical, or mechanistic language, among others. Moreover, a school of thinking can potentially lie at the intersection of several of these categories. For instance, mathematics is quantitative, analytical, mechanical, and beyond. Different languages each possess their unique strengths and weaknesses in terms of expressive power, convenience of manipulation, and appropriate subjects of coverage.

Computation and the application of the computational lens can also be viewed as a language. Beyond the advantages of composability and modularity, computation is an exceptionally convenient and mechanistic language. It is convenient because it revolves around well-defined computational goals and input-output relations, which facilitates flexible reasoning both in our minds and in our communications with others. Its mechanistic nature allows efficient simulations on computers, implying that the computational lens could serve as a language for communication between humans and machines. Moreover, when we appropriately apply the computational lens to other areas in science, it can serve as a language facilitating dialogue across various fields. Mathematics has been performing such a role since the dawn of science. Perhaps now is the time to incorporate computation as a second unifying language?

\section{Outline of the thesis}
The remainder of this thesis is structured in three parts. In~\autoref{ch:prelim quantum} and~\autoref{ch:example RCS} I will elucidate the computational lens in quantum physics and my own work in quantum computational advantage. In~\autoref{ch:prelim neuro} and~\autoref{ch:example EI} I will discuss the computational lens in neuroscience and my own work on spiking neural networks. The thesis will reach its conclusion in~\autoref{ch:conclusion}, where I will present my perspectives on how to properly apply the computational lens in scientific inquiry, along with contemplation on potential future directions. 

The relations of each chapter to published work or manuscript are as follow.

\medskip \noindent \textbf{\autoref{ch:prelim quantum} (Preliminary Introduction to Quantum Physics)} borrows figures and materials from the author's unpublished book~\cite{chou2023}.

\vspace{1mm}
\medskip \noindent \textbf{\noindent\autoref{ch:example RCS} (Quantum Computational Advantage)} is mainly based on the following two papers.

\vspace{2mm}
\noindent Xun Gao, Marcin Kalinowski, Chi-Ning Chou, Mikhail Lukin, Boaz Barak, and Soonwon Choi. Limitations of linear cross-entropy as a measure for quantum advantage. arXiv preprint: 2112.01657, 2021

\vspace{3mm}
\noindent Boaz Barak, Chi-Ning Chou, and Xun Gao. Spoofing linear cross-entropy benchmarking in shallow quantum circuits. In 12th Innovations in Theoretical Computer Science Conference (ITCS 2021), 2021

\vspace{5mm}
\medskip \noindent \textbf{\autoref{ch:prelim neuro} (Preliminary Introduction to Neuroscience)} borrows figures and materials from the author's unpublished book~\cite{chou2023}.

\vspace{5mm}
\medskip \noindent \textbf{\autoref{ch:example EI} (Algorithmic Neuroscience and Emergent Computations)} is mainly based on the following two papers.

\vspace{3mm}
\noindent Chi-Ning Chou, Kai-Min Chung, and Chi-Jen Lu. On the Algorithmic Power of Spiking Neural Networks. In 10th Innovations in Theoretical Computer Science Conference (ITCS 2019), 2019

\vspace{3mm}
\noindent Andres E. Lombo, Jesus E. Lares, Matteo Castellani, Chi-Ning Chou, Nancy Lynch, and Karl K. Berggren. A superconducting nanowire-based architecture for neuromorphic computing. Neuromorphic Computing and Engineering, 2022

\vspace{5mm}
For the sake of completeness, other published works or manuscripts that are not listed above will be cataloged in~\autoref{app:papers}.

\begin{savequote}[75mm]
I think I can safely say that nobody understands quantum mechanics. So do not take the lecture too seriously, feeling that you really have to understand in terms of some model what I am going to describe, but just relax and enjoy it.
\qauthor{Richard P.~Feynman}
\end{savequote}

\chapter[Preliminary Introduction to Quantum Physics]{Preliminary Introduction to\\Quantum Physics}\label{ch:prelim quantum}

When physicists entered the atomic scale, classical mechanics began to fail to explain the phenomena they were observing. These \textit{quantum phenomena} not only gave birth to the skyscraper of quantum physics, but the emerging sophisticated mathematical structures also enriched the development of modern theoretical computer science. Despite being a gem of physics since the 20th century, quantum physics still holds many mysteries. With the popularization of science, an increasing number of terms prefixed with ``quantum'' have emerged in everyday life. What exactly is ``quantum'', and what does it have to do with computation?

The primary aim of this chapter is to provide readers with limited background in physics and mathematics a glimpse into the world of quantum physics and quantum computing. To appeal to a wider audience, I have made the conscious decision to favor simplicity over depth. I encourage readers to concentrate more on the overarching concepts and narratives and to consult the cited references for a more in-depth exploration of the subject matter. For those readers who are keen on delving more seriously into these topics, I highly recommend the textbook by Sakurai~\cite{sakurai1995modern} for quantum mechanics, and the textbook by Nielsen and Chung~\cite{nielsen2002quantum} for quantum computing and quantum information.

\section{Quantum  phenomena and the mathematical formalism}
In the early 20th century, physicists began to encounter a host of phenomena in microscopic experiments that couldn't be explained by classical mechanics. From the discretization of physical quantities to the wave-particle duality of light, these \textit{quantum phenomena} compelled theoreticians to construct an entirely new physical framework to understand the microscopic world. Thus, under the collective brainstorming of numerous brilliant minds (including Einstein, Heisenberg, and Bohr), a completely new mathematical theory was established. These theories not only accounted for the diverse quantum phenomena but also predicted many novel physical properties, even further leading to innovative mathematical and computational theories. However, the quantum theory's worldview, which is inherently quite different from classical mechanics, still perplexes physicists and philosophers as to how to interpret the relationship between these mathematical structures and the real world. In this section, we will focus on introducing several classic quantum phenomena, the basics of quantum formalism, and the entanglement between quantum physics and computation.

\subsection{Quantum phenomena}
In the framework of classical mechanics, we describe many physical quantities using ``continuous'' variables. For instance, the speed $v$ of a ball could be 10 meters/second, 0.1 meters/second, or 0.0001 meters/second, meaning that the magnitude of $v$ can be any number arbitrarily close to 0. However, with the advancement of experimental techniques, physicists gradually discovered that many physical quantities surprisingly could not take on just any values. These physical quantities, which can only have discrete values, are also known as ``quantized'' values. For example, the seminal Stern–Gerlach experiment demonstrated that angular momentum is quantized. In~\autoref{fig:quantum sg 1}, the reader can sense and understand this without needing to possess much related physics knowledge.

\begin{figure}[ht]
    \centering
    \includegraphics[width=12cm]{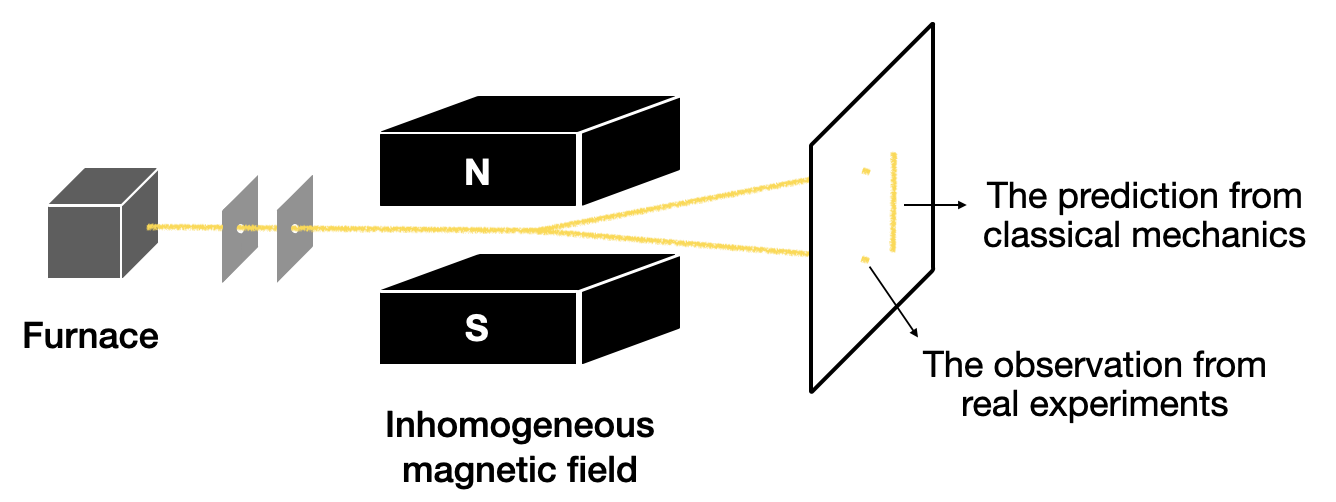}
    \caption{Stern–Gerlach experiment showed that angular momentum is quantized. In the experiment, first, a furnace is used to heat many silver ions to high temperatures, causing each silver ion to start having nonzero angular momentum in various directions. Next, these silver ions are shot into a non-uniform magnetic field, which runs from top to bottom, parallel to the ground. This magnetic field pushes each silver ion up or down based on the direction and magnitude of its angular momentum. If the value of angular momentum could be arbitrarily close to 0, the distribution of silver ions on the final screen should be a continuous line. However, in the Stern-Gerlach experiment, the distribution of silver ions on the screen is discrete, thus demonstrating that angular momentum is quantized.}
    \label{fig:quantum sg 1}
\end{figure}

\subsection{Wave-Particle Duality}
When we think of waves and particles, the images that might come to our minds are of a vibrating line and a tiny ball. In physics, waves and particles have been generalized and abstracted as fundamental characteristics of physical properties. A ``wave'' depicts physical properties that exhibit oscillation and can superpose on each other, such as when two sound waves of different frequencies meet, they can superpose to create a sound wave of a new shape. A ``particle'' corresponds to physical properties that have definite values, such as the mass of an object. In the world of classical physics, a physical property can be described either as a wave or as a particle. However, in some experiments, physicists gradually discovered that many physical properties surprisingly have both wave and particle characteristics. This is known as wave-particle duality. When we extend the setup of the previously mentioned Stern-Gerlach experiment, it can demonstrate that angular momentum possesses wave-particle duality.

\begin{figure}[ht]
    \centering
    \includegraphics[width=10cm]{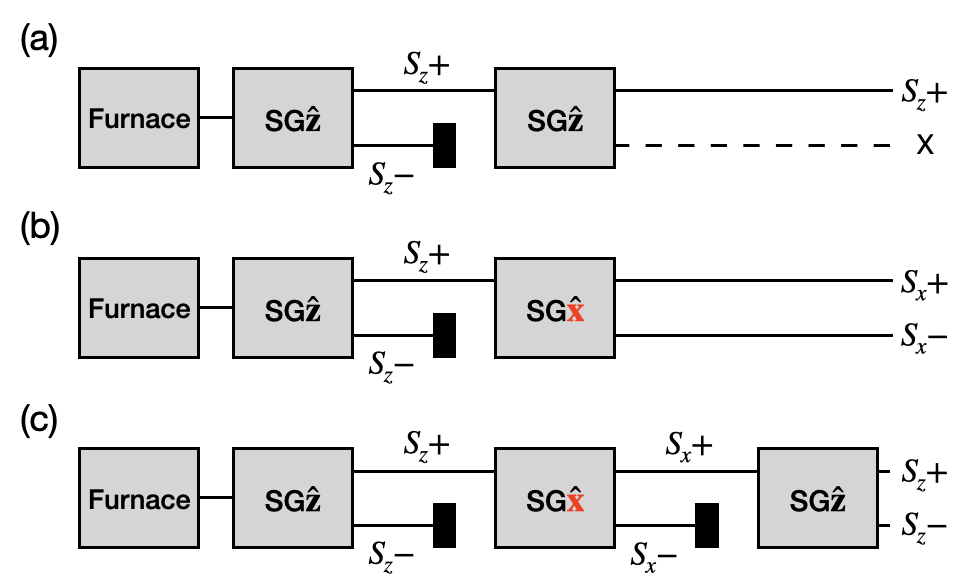}
    \caption{The sequential version of Stern-Gerlach experiment showed that angular momentum possesses wave-particle duality. For the sake of notation, we use $SG\hat{\mathbf{z}}$ to denote a Stern-Gerlach experiment conducted with the magnetic field in the z-direction (i.e., up and down), and the silver ions are classified into two groups, $S_z^+$ and $S_z^-$, according to their angular momentum in the z-direction. Additionally, we use $SG\hat{\mathbf{x}}$ to denote an experiment conducted with the magnetic field in the x-direction (i.e., in and out of the picture), and the silver ions are divided into two groups, $S_x^+$ and $S_x^-$, based on their angular momentum in the x-direction. Here are the observed results of three different sequences of SG experiments. (a) After conducting an $SG\hat{\mathbf{z}}$ experiment once, if the $S_z^-$ ions are removed and another $SG\hat{\mathbf{z}}$ experiment is conducted, only $S_z^+$ ions will remain. (b) If an $SG\hat{\mathbf{z}}$ experiment is conducted and the $S_z^-$ ions are removed, then an $SG\hat{\mathbf{x}}$ experiment is performed, both $S_x^+$ and $S_x^-$ ions will be observed. (c) If in experiment b the $S_x^-$ ions are removed and another $SG\hat{\mathbf{z}}$ experiment is conducted, both $S_z^+$ and $S_z^-$ ions will be observed. This means that although in experiment a we saw that after the $S_z^-$ ions were removed, conducting $SG\hat{\mathbf{z}}$ again would not yield any $S_z^-$ ions. However, if an $SG\hat{\mathbf{x}}$ experiment is conducted followed by an $SG\hat{\mathbf{z}}$ experiment, $S_z^-$ ions will indeed be observed. This physical property is a wave-like characteristic, so this series of experiments tells us that light exhibits wave behavior.}
    \label{fig:quantum sg 2}
\end{figure}

\subsection{How to mathematically model quantum phenomena?}
From the above discussions, we can think of the observed quantum phenomena as giving physicists a new set of \emph{constraints} and \emph{relations} on how atoms work at the microscopic scale. To be a little more precise, there are two things we care about an object of interest: (i) its state and (ii) the operations acting on the object and how it changes its state. It is very tempting to simultaneously model these two aspects and incorporate the mathematical model with our own physical intuitions. For example, for a particle (in the classical physics sense), the possible states are its location and the possible operations are shifting it in various directions. This immediately gives the mathematical picture of a point moving in the (3-dimensional) Euclidean space. The same modeling approach can be done for waves (in the Fourier space though) too. How can we identify the right mathematical space for the states of a quantum object and the associated operations?

One brilliant perspective in quantum physics is to shift our focus from the state space to the space of all possible operators~\footnote{The operator view had also been taken in some classical theories, but in the author's opinion, it brought much more impacts in the quantum realm.}. Notice that once we fully understand how all the possible operators work, we don't even need to describe the underlying state anymore. Here, by ``fully understand how all the possible operators work'', I mean in a very mathematically precise sense: understanding all the \textit{commutation relations} among operators. Namely, We need to answer questions such as which operation will always keep the object the same (identity), what's the difference between applying operator $A$ before operator $B$ and applying $B$ before $A$. Rigorously, we have to define an \emph{algebraic structure}~\footnote{An algebra is a mathematical structure consisting of a set together with multiplication, addition, and scalar multiplication by elements of a field (e.g., complex numbers) and satisfying the axioms of vector space and bilinearity. In the context of this chapter, one should think of the element in an algebra as an operator we can apply to the quantum object we are studying. Meanwhile, most of the time we don't necessarily need the full generality of algebra to study a physical object. For example, group or ring structure could already be sufficient.} for the collection of operators of interest. For concreteness, in the following we give an example on \textit{angular momentum}.

\begin{examplebox}{An example: angular momentum.}
In high school, we learn in the physics class that the (orbital) angular momentum of a rotating (rigid) object is defined as $L\mathord{=}rmv$ where $r$ is the orbit's radius, $m$ is the mass, and $v$ is the tangential speed. Or more generally, in 3-dimensional space angular momentum becomes a 3-dimensional vector $\mathbf{L}\mathord{=}\mathbf{r}\mathord{\times}\mathbf{p}$ where $\mathbf{r}$ and $\mathbf{p}$ are position and momentum vector respectively, and $\times$ is the cross product. Furthermore, angular momentum is important because it is a conserved quantity of a closed system.

To define angular momentum in terms of operators, we instead specify the three operators $\hat{\mathbf{L}}\mathord{=}(\hat{L}_x,\hat{L}_y,\hat{L}_z)$ that measures the angular momentum along the $x,y,z$ axis respectively. Namely, when applying operator $\hat{L}_x$ on the object of interest, it will tell us the the angular momentum of this object along the $x$ axis, and also potentially change the state of the system~\footnote{This way we model how a physical experiment would affect the underlying state of the object under investigation.}. To fully understand $\hat{\mathbf{L}}$, we have to understand all the commutation relations among $\hat{L}_x,\hat{L}_y,\hat{L}_z$. Beautifully, angular momentum exhibits the following elegant commutation relations~\footnote{Here we omit the reduced Planck constant $\hbar$.}:
\begin{equation}\label{eq:angular momentum}
[\hat{L}_x,\hat{L}_y]=i\hat{L}_z,\ [\hat{L}_y,\hat{L}_z]=i\hat{L}_x,\ [\hat{L}_z,\hat{L}_x]=i\hat{L}_y
\end{equation}
where $[A,B]=AB-BA$ and $i$ is the imaginary number.

Finally, through this formalism of operator algebra, as long as some triplet of operators satisfies the above commutation relations, we can view them as a certain kind of angular momentum. Moreover, this also allows us to connect angular momentum to mathematical object known as \textit{spinor}, and hence opens up a rich abstraction for exploring the underlying physics and generalization.
\end{examplebox}

Note that here the collection of operators can also contain operations that give us information about the object under investigation. In physics, we also call such an operator an \emph{observable}. For example, there could be position operators giving us the position of the object or momentum operators telling us the momentum of the object. Once again, notice that the operator picture is indeed complete in the sense that it captures all the information we could study on a physical object through experiments.

\medskip \noindent \textbf{An important example: spin.}
Interestingly, in the above-mentioned Stern-Gerlach experiment, while the silver atom used in the experiment is a neutral particle which does not possess orbital angular momentum~\footnote{Here orbital angular momentum refers to the angular momentum from electron revolving in an orbit.}, the Stern-Gerlach experiment suggests that the silver atom contains an intrinsic quantity that assembles angular momentum. Specifically, the experiment hints that there are angular momentum operators (i.e., a triplet of operators satisfying~\autoref{eq:angular momentum}) associated with this intrinsic quantity. Physicists call such an intrinsic property of a particle as \textit{spin}.

\begin{examplebox}{More on spin.}
Physically, spin is an intrinsic property of a particle that exhibits wave-like properties. Mathematically, spin corresponds to a triplet of operators $(\hat{S}_x,\hat{S}_y,\hat{S}_z)$ that satisfies the commutation relations of angular momentum (i.e.,~\autoref{eq:angular momentum}). In addition to the three spin operators, it is also natural to define the total spin operator as $\hat{S^2}\mathord{=}\hat{S_x^2}\mathord{+}\hat{S_y^2}\mathord{+}\hat{S_z^2}$, which is a conserved quantity of a closed system. With some algebraic manipulations, one can check that $\hat{S^2}$ can only take values half-integer values, i.e., $0,\frac{1}{2},1,\frac{3}{2}$ and so on. Namely, the simplest non-trivial case would be a closed system with total spin number being $\frac{1}{2}$. We call particle with total spin number $\frac{1}{2}$ a spin-$\frac{1}{2}$ particle. For example, proton, neutron, and electron are all spin-$\frac{1}{2}$ particles.
\end{examplebox}

\section{On the theoretical side: quantum computational speedup}

Beginning in the 1960s, some physicists started to ponder the construction of computing systems that incorporated quantum physics. It wasn't until the 1980s, after pioneering research by several physicists, mathematicians, and computer scientists, that the theoretical definitions of a ``quantum Turing machine'' and ``quantum circuit'' gradually took shape. Building on these mathematical models, more and more people began to design various ``quantum algorithms'', trying to understand if the special properties in quantum physics could provide speedups in certain computational problems.

A line of exciting research developments culminated in the Shor's algorithm, proposed by Peter Shor in 1994~\cite{shor1994algorithms}, which drew worldwide attention to quantum computing. Shor discovered that quantum computers could theoretically crack a core computational problems in cryptography at high speeds. This means that if large-scale quantum computers could be implemented, many financial encryption systems used in real life would no longer be secure.

In this section, we will see how do physicists and computer scientists use the spooky quantum phenomena to encode information and further accommodate non-trivial computations.

\subsection{Spin-$\frac{1}{2}$ as the quantum bit}\label{sec:quantum spin qubit}
In computer science, we use strings of bits (i.e., 0 or 1) to encode information and perform computation on them. Physically, a bit is realized by a transistor which can either be in an ``on'' or ``off'' state. We can use other materials to implement the abstract concept of bits as well. For example, prior to transistor-based computers, vacuum-tube computers with magnetic-core memory used the direction of magnetization to represent 0 and 1. The key is to map the physical state of a system to the abstraction of bit strings and correspond the manipulation of bit operations to changes in the system's state. Once this is achieved, we have a digital computer.
That's being said, on one side we have the abstraction of bits and some available logical operations to perform computation. On the other side we have the physical implementation of bits and some available physical operations on top of them to fully simulate the logical bits. The development of computer science had long been focusing on the abstraction side while engineers worked on building physical systems that can accommodate the logical abstraction in use.

The landscape changed when quantum physics entered the scene.  Pioneering scientists noticed that a quantum system can cater to much richer operations than the tradition logical ones. By choosing (quantum) spin, the simplest non-trivial quantum object, as the fundamental information carrier, the concept of a quantum bit, or qubit, was born. This gave birth to the field of quantum computation, which aims to explore this innovative abstraction of the qubit and the potential computations associated with it.

\begin{examplebox}{A short technical introduction to qubit.}
A quantum bit, or qubit, is represented mathematically by a two-dimensional vector in a complex Hilbert space, for example, $\alpha\ket{0}\mathord{+}\beta\ket{1}$, where $\alpha$ and $\beta$ are referred to as the amplitudes of the qubit. Both of them are complex numbers, and $|\alpha|^2\mathord{+}|\beta|^2\mathord{=}1$. This means a qubit is a unit-length two-dimensional complex vector, where $\ket{0}$ and $\ket{1}$ form a coordinate (orthogonal basis) in the Hilbert space the qubit resides in.

\vspace{3mm}
\includegraphics[width=13.5cm]{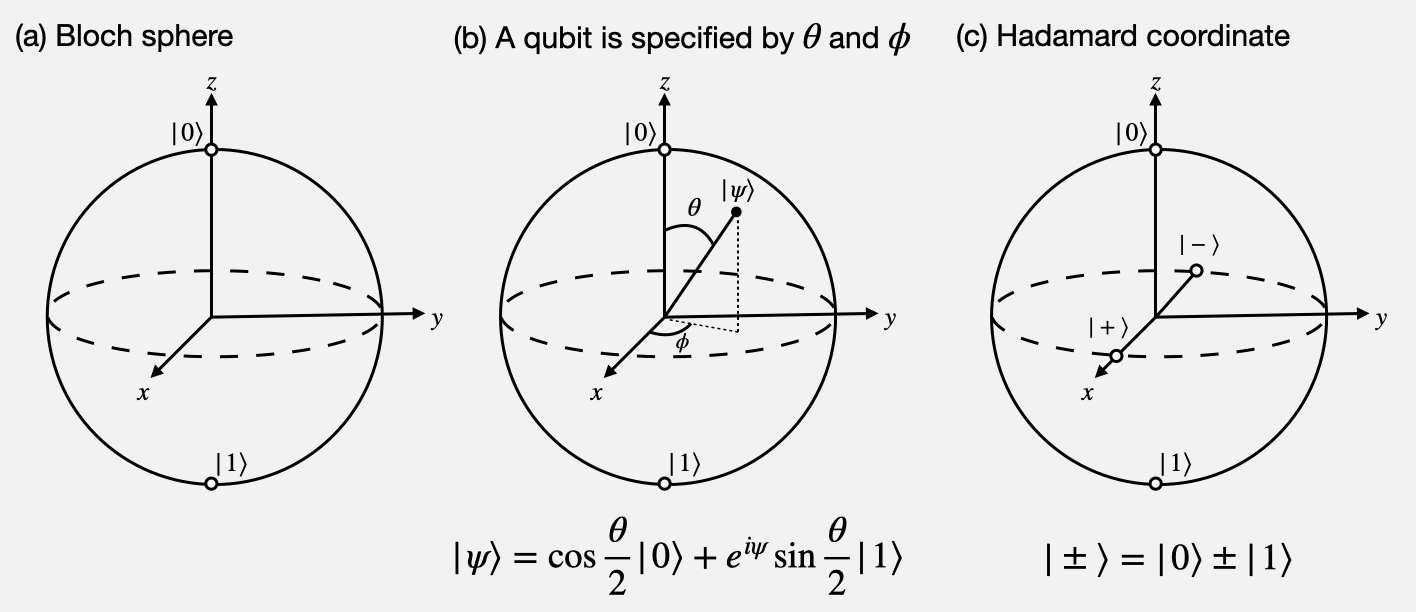}
\captionof{figure}{(a) A quantum bit, or qubit, can be represented geometrically on the surface of a Bloch sphere, where the north and south poles correspond to the states $\ket{0}$ and $\ket{1}$ respectively. It's important to note that the Bloch sphere should not be directly associated with the three-dimensional space in the real world, it is merely a graphical aid to help us understand qubits. (b) Any quantum state can be described by two parameters. The first parameter, $\theta$, represents the difference in amplitude between $\ket{0}$ and $\ket{1}$, and the second parameter, $\phi$, represents the phase difference between the amplitudes of $\ket{0}$ and $\ket{1}$. (c) The Hadamard coordinate system, which we have been discussing, includes states $\ket{+}$ and $\ket{-}$. Their locations on the Bloch sphere are as follows: $\ket{+}$ is on the equator of the Bloch sphere halfway between $\ket{0}$ and $\ket{1}$, while $\ket{-}$ is also on the equator but on the opposite side of the sphere from $\ket{+}$.}\label{fig:qubit}

However, we cannot directly read the amplitudes (i.e., $\alpha$ and $\beta$) of a qubit. The only way to acquire information about the qubit is through a ``measurement''. Similar to the concept of measurement in quantum physics, we first need to select a coordinate system and then project the qubit onto this system. The result obtained after measurement will depend on the length of the projection. For example, if the qubit $(\ket{0}\mathord{+}\ket{1})/\sqrt{2}$ is measured in the coordinate system formed by $\ket{0}$ and $\ket{1}$, there's a 50\% chance to get $\ket{0}$ and a 50\% chance to get $\ket{1}$.

Another common measurement system is the Hadamard coordinate system, where the basis states are $\ket{+}\mathord{=}(\ket{0}\mathord{+}\ket{1})/\sqrt{2}$ and $\ket{-}\mathord{=}(\ket{0}\mathord{-}\ket{1})/\sqrt{2}$. For the previously mentioned qubit $(\ket{0}\mathord{+}\ket{1})/\sqrt{2}$, if measured in the Hadamard coordinate system formed by $\ket{+}$ and $\ket{-}$, there is a 100\% chance to get $\ket{+}$.

Finally, note that the mathematical formulation above for qubit works for a spin-$\frac{1}{2}$ particle as qubit is simply an alias of spin-$\frac{1}{2}$ particle in the context of computing.
\end{examplebox}

\subsection{Quantum computational models}
Algorithms are mechanistic recipes designed to solve specific computational problems. Similar to how a baker's recipe is constrained by the available kitchenware, algorithms for a computational model are also tied to the available atomic operations. Hence, before talking about design quantum algorithms, we have to first figure out the quantum computational model we are going to use.

In the world of quantum computing, we can naturally define analogous computational models of circuits and Turing machines from the kingdom of classical computing. However, since the operations that can be performed on a quantum state without measurement are very constrained~\footnote{Mathematically speaking, only so-called unitary transformations can be performed.}, the definition of a quantum Turing machine is far more complex than a classical Turing machine, and it is also less intuitive to use. Therefore, people mainly focus on quantum circuits.

\begin{figure}[ht]
    \centering
    \includegraphics[width=13cm]{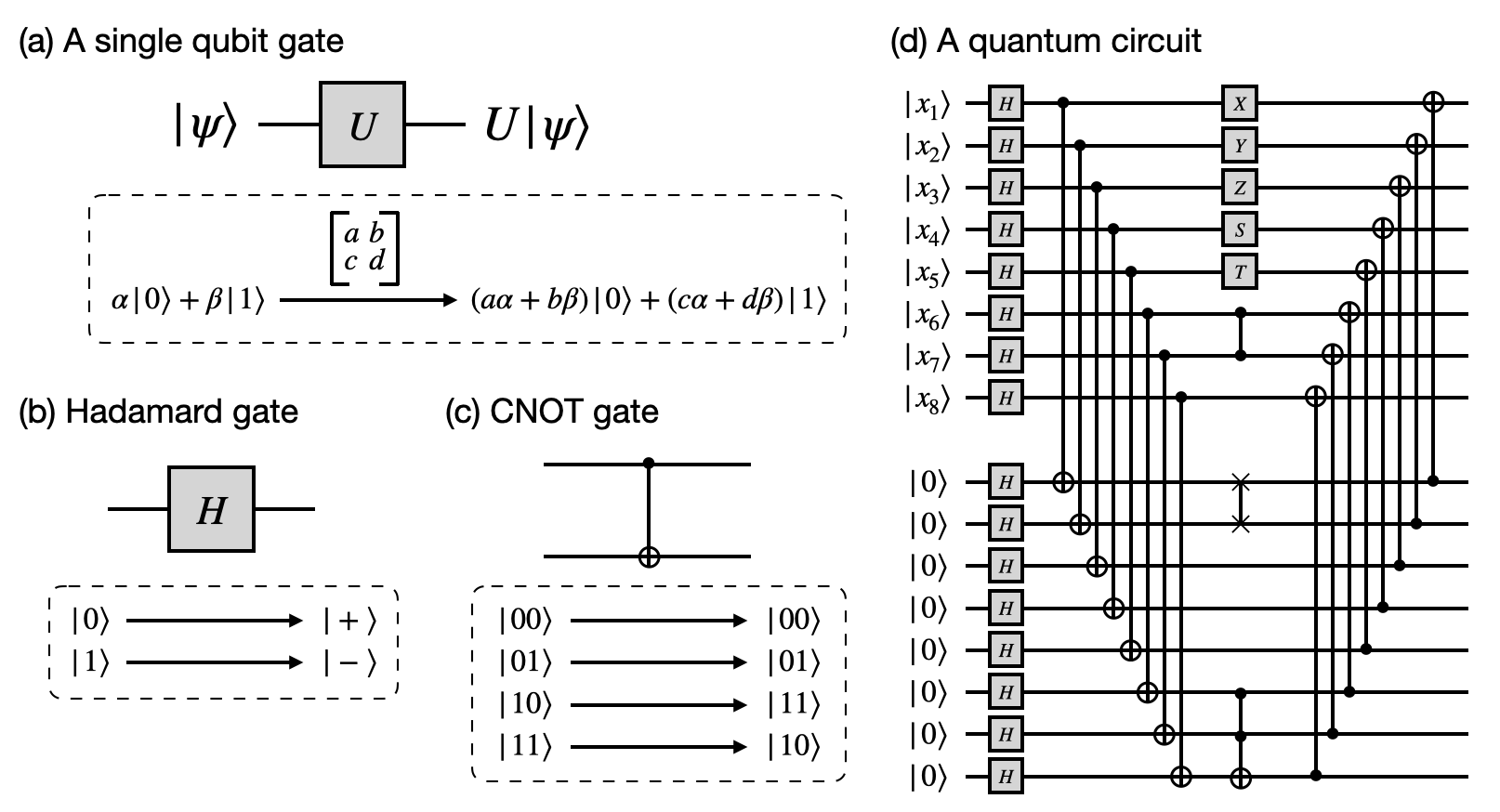}
    \caption{(a) A single qubit gate acts on a single qubit, and its action can be described by a $2\times 2$ matrix. (b) One common single-qubit gate is the Hadamard gate, which transforms the basis of a qubit from the 0/1 basis to the +/- basis. Mathematically, the Hadamard gate is represented by a $2\times 2$ matrix, and when it acts on a qubit, it creates a superposition of states, which is one of the fundamental features of quantum computation. (c) The controlled NOT gate (CNOT) acts on two qubits. The CNOT gate flips the second qubit (target qubit) if and only if the first qubit (control qubit) is in state $\ket{1}$. (d) A quantum circuit is typically depicted with horizontal lines, each representing a qubit, and boxes on these lines, each representing a quantum gate. The input is usually placed on the left, and there might be some auxiliary qubits (ancilla qubits) set to 0. The gates in the middle include the Pauli $X, Y, Z$ gates, phase gate, $T$ gate, controlled $Z$ gate, swap gate, and Toffoli gate. The circuit is read from left to right, representing the sequence of operations applied to the qubits.}
    \label{fig:quantum circuit}
\end{figure}

In quantum circuits, the basic computational units are known as quantum gates, which can perform transformations on a few quantum bits (qubits). As mentioned earlier, due to the constraints of quantum mechanics, quantum gates must conform to a specific mathematical form and cannot be any arbitrary function like classical logic gates. In simple terms, intuitively, quantum gates must preserve all information, and mathematically, this corresponds to the so-called \textit{invertibility} (or \textit{unitarity}). In~\autoref{fig:quantum circuit}, we will see a few common quantum gates (classified according to the number of qubits they act on).

Quantum circuits are computational models that sequentially apply quantum gates on different sets of qubits. So how do we compute with quantum circuits? It can basically be described in the following three steps:
\begin{enumerate}
\item Design a good quantum circuit. Note that the description of a quantum circuit itself is classical information, meaning it can be designed and represented on a classical computer.
\item Prepare the input of the computational problem instance into a quantum state and connect it to the corresponding input of the quantum circuit. Note that we usually add some \textit{ancilla qubits}, set to all zeros, as extra input qubits.
\item Execute the quantum circuit and measure the output to get an output quantum state. If it is a decision-type computational problem, we usually just look at the first output result.
\end{enumerate}

Two remarks on quantum circuits are made here. First, this is a universal computational model, meaning that in terms of computational power it is equivalent to Turing machine. Secondly, it is widely believed that for sufficiently complex quantum circuits, a direct simulation from classical computers would require exponential time~\cite{aaronson2017complexity}

\begin{examplebox}{Another computational model: quantum adiabatic computing.}
Quantum circuit is a computational model that quantizes classical circuits. Can we use the time evolution in quantum physics itself for computation? Perhaps this is easier to realize in practice? Quantum adiabatic evolution is just such a computational model.

In Schr\"{o}dinger equation $i\hbar\frac{\partial}{\partial t}\ket{\psi}\mathord{=}\hat{H}\ket{\psi}$, the Hamiltonian operator $\hat{H}$ plays an important role. Now let's consider the eigenstates of the Hamiltonian operator, and sort them according to the size of their eigenvalues. If we start from the ground state of a Hamiltonian operator $\hat{H_0}$, and slowly adjust $\hat{H_0}$ to another Hamiltonian operator $\hat{H_1}$, can we make this quantum state evolve to always stay in the ground state?

If possible, then we can perform computation in the following way: Take $\hat{H_0}$ as a simple Hamiltonian operator, so that we clearly know its ground state. Then, according to the computational problem to be solved, set $\hat{H_1}$ so that its ground state is the solution to this computational problem. Similar to quantum circuits, this computational model is universal.

However, there is a crucial point here: How slow must the transition from $\hat{H_0}$ to $\hat{H_1}$ be to ensure that the evolution of the quantum state always remains in the ground state? The quantum adiabatic theorem tells us that the speed of this transition will be related to the energy gap between the ground state and the first excited state - the smaller the energy gap, the slower the required speed. Intuitively speaking, if the movement is too fast, it will inject some extra energy into the system, causing the ground state to possibly be excited to other eigenstates. This is why such a method of computation is called adiabatic evolution.

\vspace{3mm}
\begin{center}
\includegraphics[width=8cm]{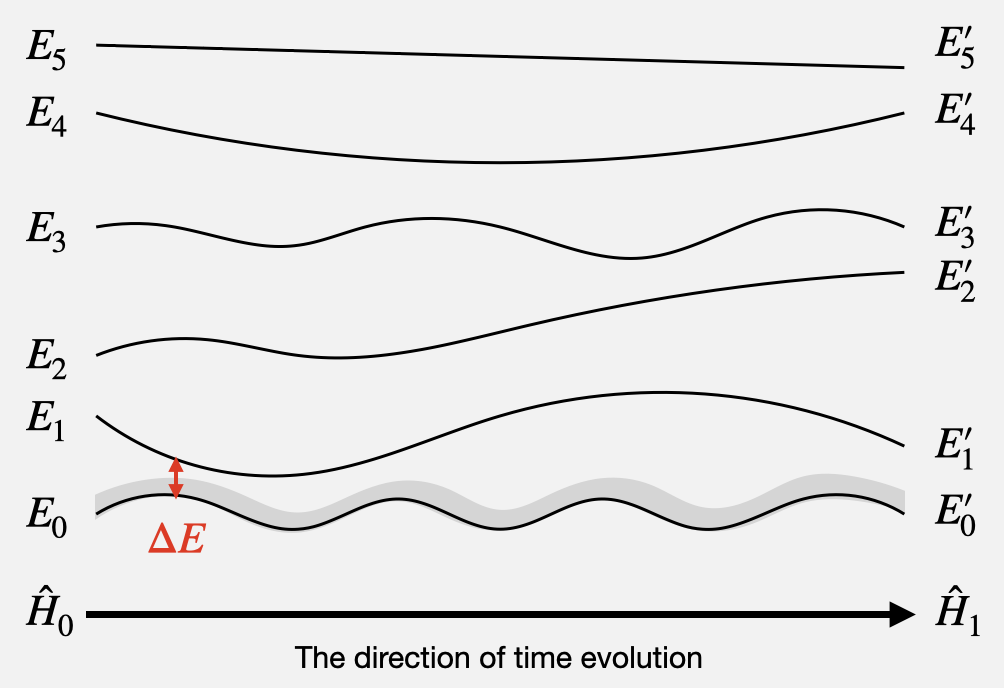}
\captionof{figure}{Quantum adiabatic computing. Hamiltonian operators $\hat{H_0}$ and $\hat{H_1}$ have their respective energy levels/eigenvalue spectra (as in the far left and far right of the figure), where the goal of the calculation is to find the ground state of $\hat{H_1}$ (i.e., the eigenstate corresponding to $E_0'$). First, we would intentionally choose an H0 such that its ground state is well prepared, and then start from it, gradually transforming $\hat{H_0}$ into $\hat{H_1}$. The vibration/instability of the ground state (as shown in the gray area in the figure) will be proportional to the size of the energy gap. Therefore, if the energy gap remains large during the evolution process, it can ensure that the quantum state always remains in the ground state (at each respective time).}\label{fig:qac}
\end{center}
\end{examplebox}

\subsection{Quantum algorithms}\label{sec:quantum prelim algorithm}
Interestingly, with just a few quantum operations, we can already perform tasks that seem infeasible for a classical computer. Thanks to the ability to superpose across multiple quantum states, we can superpose various inputs and evaluate them on a given function \textit{simultaneously}. However, this comes with a significant caveat: the resulting evaluation remains a superposition of results. That is, if we directly measure the outcome, we only get the result from a random input, rendering it no different from probabilistic computing.

So, it is no exaggeration to say that the crux of quantum algorithms is figuring out how to efficiently extract useful information from a superposition. Lov Grover discovered that, generally, one could always achieve a quadratic speed-up over a classical computer. More specifically, let $f:\{0,1\}^n\to\{0,1\}$ be a Boolean function that one can access freely. In the worst-case scenario - with respect to the unknown function $f$ (imagine $f$ as having exactly one input $x$ evaluating to $1$) - it would require a classical computer $\Omega(2^n)$ function calls to find an input $x$ such that $f(x)=1$. However, the seminal Grover's quantum algorithm can find such an $x$ with high probability using only $O(\sqrt{2^n})$ function calls\cite{grover1996fast}. Remarkably, this quadratic speedup for quantum algorithms is optimal, meaning for a generic $f$, we can only achieve a quadratic quantum speedup over classical algorithms\cite{bennett1997strengths}.

While the improvement from $2^n$ to $\sqrt{2^n}$ is indeed impressive, it's not astoundingly exciting, because the running time of Grover's quantum algorithm is still not polynomial-time. The real excitement came when a line of works showed that there are quantum algorithms achieving \textit{exponential} speed-up on certain computational problems with an inherent structure. Notably, Peter Shor, in 1994, demonstrated a polynomial-time quantum algorithm for the integer factoring problem~\cite{shor1994algorithms}, which is not believed to have polynomial-time classical algorithms. More discussion on the comparison of polynomial-time and exponential time will be provided in the next subsection.

From a bird's eye view, all these non-trivial quantum algorithms leverage the special structure underlying the computational problems and use quantum algorithmic techniques such as quantum Fourier transform~\cite{coppersmith2002approximate}, quantum phase estimation~\cite{kitaev1995quantum}, etc., to extract useful information from quantum superposition. For readers interested in a deeper understanding of quantum computational models and quantum algorithms, I recommend the textbook by Nielsen and Chung~\cite{nielsen2002quantum}.

\subsection{Quantum computational complexity}
It's critical to emphasize upfront that in terms of computability, quantum computational models are \textbf{not} inherently superior to classical computers (those using bits rather than qubits). Instead, the main focus is computational efficiency, the key battleground where researchers hope to demonstrate a quantum advantage over classical computers.

How to demonstrate one computational model being much faster than another in solving a certain computational problem? It would require some benchmark or quantitative measure for computational efficiency. In practice, it is always easier to see concrete numbers, e.g., in the machine learning community, there are countless benchmarks out on the market for different learning algorithms to compete and compare with each other. However, the situation is drastically different in the realm of quantum computing. Here, we need an alternative evaluation methodology for two reasons. First, as the technology of quantum computing is still in its early stages~\footnote{Or, depending on your perspective, quantum computing could be considered in its childhood or adolescence.}, it is not yet feasible to implement and assess the performance of theoretically interesting quantum algorithms. Second, the ultimate aim in quantum computation is to demonstrate significant computational speedup over \textit{any} classical computer. Thus, evaluation criteria should provide convincing evidence of the infeasibility for classical computers, encompassing not only current but also future algorithms and computing architectures yet to be born.

Given this historical context, the (theoretical) quantum computing community adopted \textit{asymptotic analysis} from computational complexity theory as the main evaluation criteria. In brief, instead of assessing the computation speed on a specific problem instance, theorists consider a family of computational problems with increasing input sizes. 
Computational efficiency is then defined as the time an algorithm spends in relation to the input size - this function is also referred to as the \textit{time complexity} of an algorithm.
Furthermore, the performance of an algorithm is evaluated based on its running time with large input sizes~\footnote{More precisely, theorists consider scenarios where the input size tends towards infinity.}. As such, they focus only on the leading order in the time complexity, grouping those algorithms or models with the same order into the same complexity class.
In summary, quantum computational complexity involves classifying the quantum complexity of various computational problems and identifying those whose quantum complexity is asymptotically smaller than the classical complexity.

\begin{examplebox}{More on asymptotic analysis and complexity classes.}
For algorithms, since the speed of modern computers is so fast, we often can't even perceive the difference between time complexities of $10n$, $100n$, and $n\mathord{+}1000$. Therefore, for computer scientists, the time complexities of $10n$ and $100n$ are essentially the same. They belong to the same order of complexity and are denoted as $O(n)$. The $O(\cdot)$ here is also known as the \textit{Big O notation}, which conceptually tells us to ignore the impact of lower order terms and the different constant factors in front of $n$.

According to the amount of resources used, we can further distinguish different categories of computational complexity. The most common two are the so-called \textit{polynomial complexity} and \textit{exponential complexity}. Conceptually, this is closely related to \textit{Moore's Law}. Moore's Law is an observation made by Gordon Moore, the co-founder of the famous semiconductor company Intel, in 1965. He speculated that the number of transistors that can be accommodated per unit of integrated circuits could double every two years. In layman's terms, Moore predicted that computer performance would roughly double every two years. That is, after $N$ years, the performance of the computer could be $2^{N/2}$ times what it is now. This rate of growth is also known as exponential growth.

If we believe in Moore's Law, then for computational problems with a computational complexity far less than exponential growth, there will always be a day when we have a fast enough computer to easily solve them. Polynomial complexity is one of these types of computational problems. Specifically, if a complexity function $f(n)$ can be proven to have a constant $k\mathord{>}0$ such that $f(n)\mathord{=}O(n^k)$, then we would say its complexity is polynomial. Common complexity scaling such as linear complexity ($O(n)$) and quadratic complexity ($O(n^2)$) both belongs to polynomial complexity. On the other hand, for computational problems with exponential complexity, even if Moore's Law is true, we may still not be able to quickly compute them with computers.

\vspace{3mm}
\begin{center}
\includegraphics[width=13cm]{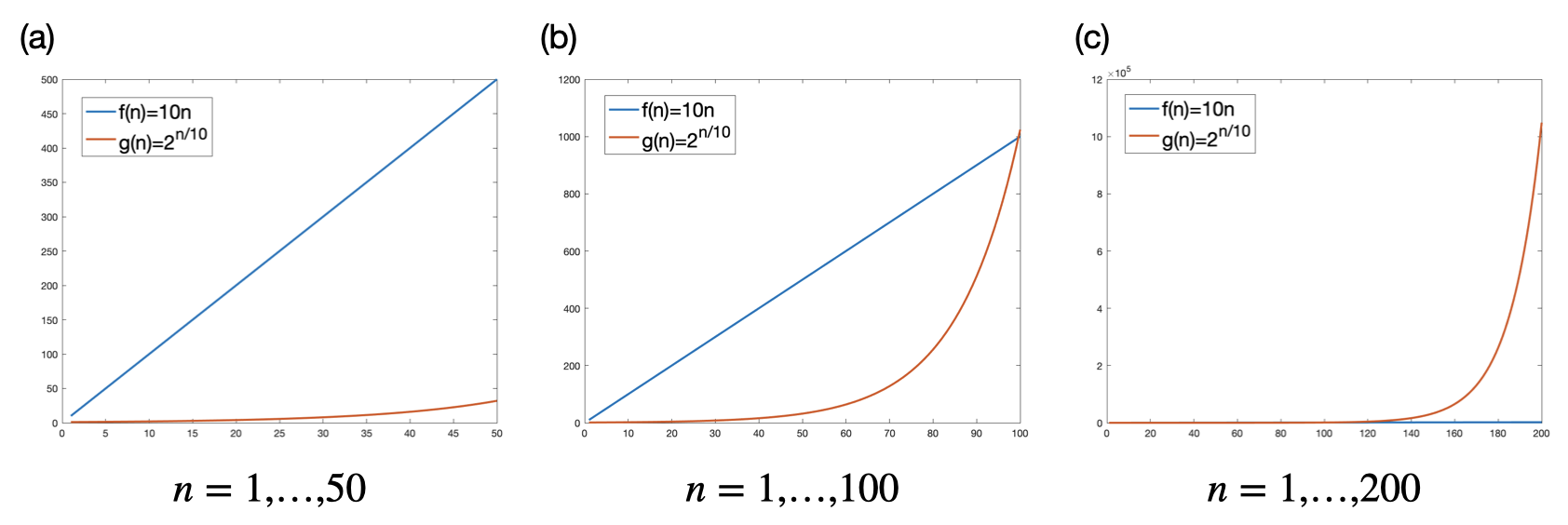}
\captionof{figure}{Asymptotic analysis, polynomial complexity, and exponential complexity. In (a) to (c), the blue line corresponds to a function with polynomial complexity, and the orange line corresponds to a function with exponential complexity. We can see that as n becomes larger, the size of the exponential complexity will very quickly surpass the size of the polynomial complexity.}\label{fig:polynomial exponential}
\end{center}
\end{examplebox}

\section{On the practical side: near-term quantum devices}
Now that we have explored the theoretical blueprint of quantum computing, let's shift our attention to the practical side. In this section, I will start by discussing DiVincenzo's criteria~\cite{divincenzo2000physical} for constructing a quantum computer to orient the reader towards practical considerations. Next, I will introduce a few of the current approaches to implementing quantum computers. Finally, I will address the crucial topics of quantum error correction and fault-tolerant quantum computing.

\subsection{DiVincenzo's criteria}
In 2000, theoretical physicists David  DiVincenzo put out a seminal paper, titled ``The Physical Implementation of Quantum Computation''~\cite{divincenzo2000physical}, in which he delineated seven conditions for constructing a quantum computer. The first five are necessary for quantum computation:

\begin{enumerate}
\item A scalable physical system with well-characterized qubit.
\item The ability to initialize the state of the qubits to a simple fiducial state, such as $\ket{000\cdots}$.
\item Long relevant decoherence times.
\item A ``universal'' set of quantum gates.
\item A qubit-specific measurement capability.
\end{enumerate}
The remaining two are necessary for quantum communication:
\begin{enumerate}
\item The ability to interconvert stationary and flying qubits.
\item The ability to faithfully transmit flying qubits between specified locations.
\end{enumerate}

In the interest of brevity, I will not delve into these criteria here. However, I encourage interested readers to take a look at DiVincenzo's paper, which is only nine pages long. Here, I merely wish to provide two comments as food for thought.

First, note that DiVincenzo's criteria are relatively high-level, allowing for the possibility of multiple types of implementation. The outlined conditions guide practitioners and engineers in tackling the formidable task of quantum computing development step by step. This approach differs somewhat from the usual theoretical strategy, and in my opinion, it is vital for future theorists to propose more of such ``criteria'' to steer the experimental advancement of quantum computing.

Secondly, as DiVincenzo himself noted in his paper, these criteria pertain to \textit{general-purpose} quantum computing - that is, a computing architecture that is universal in terms of its computability. It's entirely plausible that other forms of quantum computing could be more attainable. Therefore, while DiVincenzo's criteria serve as a lighthouse in the voyage of constructing quantum computers, we still need researchers to continuously explore potential alternatives until we reach our destination.

\subsection{Current approaches of implementing quantum computation}
Previously, as mentioned in~\autoref{sec:quantum spin qubit}, the concept of a qubit was inspired by spin in quantum physics. However, practically realizing a qubit isn't as straightforward as directly employing a physical spin (e.g., the spin of an electron). Instead, experimentalists typically construct artificial systems that mimic the fundamental properties of a spin. Broadly speaking, a qubit can be realized as a two-level quantum system, where a high and a low energy level correspond to $\ket{1}$ and $\ket{0}$ respectively. The main task for experimentalists is to ensure that such a quantum system can exhibit quantum phenomena (like superposition), satisfy DiVincenzo's criteria, and demonstrate solving some computational problems.

There are several different practical approaches to implementing quantum computation. In the remainder of this subsection, we will examine three of them: superconducting qubits, trapped ions, and neutral atoms.

\medskip \noindent \textbf{Superconducting qubits.}
Superconductivity is a phenomenon where electrons can travel through a material with no energy cost when the temperature is low enough - this is a quantum effect. The concept of quantum computing with superconducting qubits revolves around the use of superconducting materials as wires to construct an electrical circuit. When the temperature is sufficiently low, the superconducting circuit begins to exhibit quantum phenomena and contains discrete energy levels. By properly spacing the energy levels and associating the lowest and the second lowest level to $\ket{0}$ and $\ket{1}$ respectively, we arrive a candidate qubit construction.

Even though the two most prominent demonstrations of quantum computing to date - by Google~\cite{arute2019quantum} and IBM~\cite{kim2023evidence} - both use superconducting qubits, this approach comes with significant challenges. Firstly, superconductivity requires extremely low temperatures (around 10 mK), which necessitates considerable energy and space for refrigeration, and complicates the scaling up of quantum computers. Secondly, as a superconducting circuit is relatively large, it is more likely to interact with the environment, leading to decoherence. For more details on quantum computing using superconducting qubits, refer to a recent perspective article~\cite{bravyi2022future}.

\medskip \noindent \textbf{Trapped ions.}
Ions are atoms that carry an electric charge and hence they respond to electric fields. A wealth of technology, dating back to the 1950s, has centered on leveraging this fact to trap ions with extraordinary precision. By using these existing techniques to control the location of ions, and identifying the ground state and an excited state of an electron (in the ion) as $\ket{0}$ and $\ket{1}$, a natural candidate of qubit comes up.

Given that the necessary technology is already established, trapped ion quantum computing is perhaps the most widespread implementation, both in industry (e.g., IonQ, Quantinuum) and academia (e.g., University of Waterloo, University of Maryland). However, the primary challenge in this approach remains scalability. For more details, refer to a recent survey~\cite{bruzewicz2019trapped}.

\medskip \noindent \textbf{Neutral atoms.}
With the aid of optical tweezers, it is possible to trap neutral atoms~—~atoms without a charge—with high precision. This paves the way for another method of implementing qubits, using the electronic states of certain neutral atoms, such as Rydberg atoms. Qubits constructed in this way can be stably controlled via laser techniques, and the unique Rydberg state allows for atom-to-atom interaction, enabling the construction of two-qubit gates.

While neutral-atom quantum devices face similar hurdles as the other methods we've discussed, this approach has recently seen significant advancements, with companies such as QuEra leading the charge~\cite{quera}. For more details on neutral-atom quantum computing, refer to a recent review~\cite{morgado2021quantum}.

\subsection{Quantum error correction and fault tolerant quantum computing}
At we saw in the previous subsection, there are many quantum computer implementations based on different physical/engineering technologies, each with its own advantages. However, if compared to the historical development of classical computers, quantum computers are probably still in the stage of vacuum tube computers of the 40s to 50s, or even earlier. Although many laboratories can now realize dozens, hundreds, or even over a thousand quantum bits, these quantum bits are all so-called \textit{physical qubits}, which are affected by noise from experimental instruments or unknown sources, so they may exhibit unexpected behaviors. The algorithms we saw in~\autoref{sec:quantum prelim algorithm} are all based on so-called \textit{logical qubits}, which are perfect and flawless qubits that can perform various mathematical operations freely. Consequently, it's a pressing challenge for both theorists and experimentalists to figure out how to efficiently convert physical qubits into logical qubits, enabling the systematic scaling up of quantum computing devices.

The \textit{quantum fault-tolerance theorem}, also known as the \textit{threshold theorem}, presents a significant breakthrough in tackling this issue. Several groups of theorists~\cite{aharonov1997fault,knill1998resilient,kitaev2003fault} independently found that once the noise of physical qubits is reduced to a certain level, quantum error correction technology can systematically transform physical qubits into logical qubits where the transformation only requires a manageable increase in the number of qubits. The quantum fault-tolerant theorem provides considerable reassurance to the community. However, it's also important to note that determining the precise value of the error threshold and the overhead of the number of qubits requires further research. As it stands, we are still a long way from reaching this boundary~\cite{campbell2017roads}.

\section{Concluding remarks}
The discovery of the quantum world stems from physicists' endless curiosity about microscopic phenomena. The subsequent physical models that emerged have brought about rich mathematical structures, and have given people anticipations for new computational models. Looking back in fifty years, will we marvel at the prophetic insights of these theoretical algorithms, or will we have a completely different understanding of quantum computing?
\begin{savequote}[75mm]
Quantum computation is as much about testing Quantum Physics as it is about building powerful computers.
\qauthor{Umesh Vazirani}
\end{savequote}

\chapter{Quantum Computational Advantage}\label{ch:example RCS}

\emph{Quantum computational advantage} refers to the experimental demonstration of the computational power of a quantum device far beyond that of any existing classical devices.
Such demonstration is important because it not only constitutes a milestone of quantum technology, but also challenges the so called \emph{extended Church-Turing thesis}~\cite{arora2009computational,aaronson2011computational}, which has been central to computational complexity theory. A straightforward way to demonstrate quantum advantage would be to explicitly run a quantum algorithm, such as the Shor's integer factoring~\cite{shor1994algorithms}, for problems whose size is too large (e.g. $2048$-bit integers) to be solved by any known algorithm running on classical computers.
However, this would require a quantum device with a large number of near-perfect qubits, which is well beyond the capabilities of the existing technology. State-of-the-art quantum devices consist of several dozens of \emph{imperfect} qubits~\cite{zhang2017observation,arute2019quantum,USTC,zhu2021quantum,ebadi2020quantum,Scholl.2021}.
Even the exploration of a potential scaling advantage requires larger systems, consisting of at least several hundred coherent qubits.

\section{Quantum computational advantage proposals based on RCS}

Instead of implementing such quantum algorithms, most of the current efforts towards demonstrating quantum advantage have focused on \emph{sampling problems}~\cite{preskill2012quantum,harrow2017quantum,lund2017quantum}, which are well suited for near-term quantum devices~\cite{arute2019quantum,USTC,zhu2021quantum,zhong2020quantum,zhong2021phase}.
In these problems, one is asked to produce a sequence of random bitstrings drawn from a certain probability distribution. A natural choice of a distribution that would be challenging for a classical computer to reproduce is one based on a highly entangled many-body wavefunction.
Indeed, it has been shown~\cite{bremner2011classical,aaronson2011computational,bremner2016average,farhi2016quantum,boixo2018characterizing,bremner2017achieving,gao2017quantum,bermejo2017architectures,terhal2002adaptive} that, for a wide class of quantum states, exact sampling by classical computers  is intractable under plausible assumptions~\cite{terhal2002adaptive,bremner2011classical,aaronson2011computational,aaronson2016complexity,bremner2016average,aaronson2016complexity,bouland2019complexity,movassagh2018efficient,bouland2021noise,kondo2021fine}.

To demonstrate quantum advantage using an actual sampling experiment, one needs to introduce a \emph{benchmark} that measures how close the sampled distribution $q(x)$ of a quantum device is to the (ideal) target distribution $p(x)$.
The idea is that on one hand, one shows that the samples from the quantum device achieve high values (indicating good correlation with the ideal distribution), while on the other hand, one presents evidence that there \emph{does not exist} an efficient classical algorithm that can produce samples achieving comparable values.
If the difference between the classical and quantum resources needed to achieve a certain value of the benchmark scales \emph{exponentially} with the system size, this demonstrates that quantum devices have an exponential computational advantage even in the regime where the gates are too noisy to allow for quantum error correction. Quantum computational advantage proposals of this kind is known as the \emph{random circuit sampling (RCS)} approach.

\subsection{Benchmarks for demonstrating computational advantage via RCS}
Now, let us be more mathematically precise on the setup of RCS-based quantum advantage. Let $\Circ$ be an $n$-qubit random quantum circuit and $U_\Circ$ be the underlying unitary. In RCS, we are interested in the output distribution $p_\Circ(x)$ induced by the quantum state $U_\Circ\ket{0^n}$. In particular, there are two common quantitative measures: fidelity and cross-entropy benchmark (XEB). The former captures the distance between two quantum states while the latter captures the distance between two classical distributions.

\begin{definition}[Fidelity]\label{def:fidelity}
Let $\Circ$ be a quantum circuit and $\rho$ be a (mixed) quantum state. The fidelity between the ideal state $U_C\ket{0^n}$ and $\rho$ is defined as follows.
\begin{equation}\label{eq:fidelity}
F_\Circ(\rho) := \tr(U_\Circ\ketbra{0^n}U_\Circ^\dagger~\rho) \, .
\end{equation}    
\end{definition}

While fidelity is a fundamental measure in quantum information, it is inherently ``of many-body quantum nature''. Thus, it is in principle computationally expansive to estimate the fidelity between two quantum states. Consequently, in practice people came up with proxies for fidelity that are (i) empirically estimable and (ii) revealing the closeness of two quantum states. The measure that has been widely adopted in RCS-based quantum advantage experiments is the cross entropy benchmark (XEB) defined as follows.

\begin{definition}[Log XEB]\label{def:log XEB}
Let $\Circ$ be a quantum circuit and $q$ be a probability distribution over $\{0,1\}^n$. Denote $p_\Circ$ as the ideal distribution of $\Circ$ and is defined as $p_\Circ(x)=|\bra{x}U_\Circ\ket{0^n}|^2$. The logarithmic cross-entropy (log XEB) between $p_\Circ$ and $q$ is defined as follows.
\begin{equation}\label{eq:log XEB}
\chi_\Circ^\text{logarithmic}(q) := \sum_{x\in\{0,1\}^n} p_\Circ(x)\log\frac{p_\Circ(x)}{q(x)}
\end{equation}
\end{definition}

For statistical efficiency, in practice people use the linearized version of cross-entropy defined as follows.
\begin{definition}[Linear XEB]\label{def:linear XEB}
Let $\Circ$ be a quantum circuit and $q$ be a probability distribution over $\{0,1\}^n$. Denote $p_\Circ$ as the ideal distribution of $\Circ$ and is defined as $p_\Circ(x)=|\bra{x}U_\Circ\ket{0^n}|^2$. The linear cross-entropy (linear XEB) between $p_\Circ$ and $q$ is defined as follows.
\begin{equation}\label{eq:linear XEB}
\chi_\Circ^\text{linear}(q) := \left(2^n\sum_{x\in\{0,1\}^n}p_\Circ(x)q(x)\right)-1
\end{equation}
\end{definition}

As we will only discuss linear XEB in the rest of the thesis, from now on we denote linear XEB as $\chi_\Circ$ for simplicity. See Rinott et al.~\cite{rinott2020statistical} and the supplementary materials of Arute et al.~\cite{arute2019quantum} for more discussions on why empirically people prefer linear XEB (\autoref{eq:linear XEB}) over the logarithmic version (\autoref{eq:log XEB}).

\medskip \noindent \textbf{On the completeness side}, i.e., when an experiment perfectly simulates the circuit $\Circ$, we have $\rho=U_\Circ\ketbra{0^n}U_\Circ^\dagger$ and $q(x)=|\bra{0^n}U_\Circ^\dagger\ket{x}|^2=p_\Circ(x)$. Thus,
\[
F_\Circ(\rho) = \tr(U_\Circ\ketbra{0^n}U_\Circ^\dagger~\rho)  = \tr(U_\Circ\ketbra{0^n}U_\Circ^\dagger U_\Circ\ketbra{0^n}U_\Circ^\dagger) = 1 \, .
\]
On the other hand, it is well-known that when a random quantum circuit is scrambling enough, the ideal distribution $p_\Circ$ follows the Porter-Thomas distribution. Hence, for each fixed $x\in\{0,1\}^n$ and every $a>0$, we have $\Pr_{\Circ}[p_\Circ(x)=a]\propto e^{-Da}$ where the randomness is over the choice of random circuits $\Circ$ and $D>0$ is some global normalizing factor to ensure $\sum_xp_\Circ(x)=1$. With this, one can derive that 
\[
\chi_\Circ(q) = \left(2^n\sum_{x\in\{0,1\}^n}p_\Circ(x)q(x)\right)-1 = \left(2^n\sum_{x\in\{0,1\}^n}p_\Circ(x)^2\right)-1\approx1
\]
with high probability over the choice of random circuits $\Circ$.

\medskip \noindent \textbf{On the soundness side}, i.e., if the quantum simulation is not perfect, we would expect both the fidelity and XEB would reflect such discrepancy. Indeed, when the quantum simulation is completely random, i.e., $\rho=\frac{1}{2^n}\sum_{x\in\{0,1\}^n}\ketbra{x}$ being the maximally mixed state, and $q(x)=\frac{1}{2^n}$ being the uniform distribution, we have
\[
\Exp\left[F_\Circ(\rho) = \tr(U_\Circ\ketbra{0^n}U_\Circ^\dagger~\rho)\right] = \frac{1}{2^n}\sum_{x\in\{0,1\}^n}|\bra{0^n}U_\Circ^\dagger\ket{x}|^2 = \frac{1}{2^n}\sum_{x\in\{0,1\}^n}p_\Circ(x) = \frac{1}{2^n}
\]
and
\[
\chi_\Circ(q) = \left(2^n\sum_{x\in\{0,1\}^n}p_\Circ(x)q(x)\right)-1 = \left(\sum_{x\in\{0,1\}^n}p_\Circ(x)\right)-1 = 0 \, .
\]

To sum up, we now know that both fidelity and XEB are measures taking value (roughly) within $[0,1]$ such that a perfect simulation would get a \textit{score} close to $1$ and a completely random simulation would get a score close to $0$. The whole business of the RCS-based quantum computational advantage using XEB is then centering around the following two conjectures.

\begin{conjecture}[Informal]\label{conj: XEB quantum}
For a quantum simulation with high fidelity, it also has high XEB value.
\end{conjecture}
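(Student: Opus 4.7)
The plan is to reduce the conjecture to a combination of (i) a simple algebraic identity that places fidelity and XEB on the same footing, (ii) a standard noise-model assumption, and (iii) the Porter--Thomas statistics of the ideal distribution. First, observing that $q(x) = \bra{x}\rho\ket{x}$ when $\rho$ is the state produced by the quantum device, I would rewrite the linear XEB as
\begin{equation*}
\chi_\Circ(q) \;=\; 2^n\,\tr\bigl(\rho\, M_\Circ\bigr)\,-\,1, \qquad M_\Circ \;\coloneqq\; \sum_{x\in\{0,1\}^n} p_\Circ(x)\,\ketbra{x},
\end{equation*}
so that XEB becomes a linear functional of $\rho$ directly comparable with $F_\Circ(\rho) = \tr\bigl(\rho\, U_\Circ\ketbra{0^n}U_\Circ^\dagger\bigr)$. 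The two observables differ only in the off-diagonal contributions of the ideal projector in the computational basis.

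Next, I would impose the standard global depolarizing noise model $\rho = F_d\, U_\Circ\ketbra{0^n}U_\Circ^\dagger + (1-F_d)\,\mathbb{I}/2^n$ for some parameter $F_d \in [0,1]$, a widely adopted effective description of the accumulated noise in a deep scrambling random circuit. Under this model a direct expansion shows
\begin{equation*}
q(x) \;=\; F_d\,p_\Circ(x) + (1-F_d)\,2^{-n}, \qquad \chi_\Circ(q) \;=\; F_d\!\left(2^n\sum_x p_\Circ(x)^2\,-\,1\right),
\end{equation*}
while $F_\Circ(\rho) = F_d + (1-F_d)/2^n \approx F_d$ for large $n$.

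Third, I would invoke the Porter--Thomas distribution of $p_\Circ$: for random circuits that approximate a unitary $2$-design, $\Exp_\Circ[p_\Circ(x)^2] = 2/(2^n(2^n+1))$, hence $\Exp_\Circ[2^n\sum_x p_\Circ(x)^2] \to 2$ as $n \to \infty$. A fourth-moment concentration argument over the circuit ensemble then shows that this quantity is close to $2$ with high probability over $\Circ$. Combining the three steps yields $\chi_\Circ(q) \approx F_d \approx F_\Circ(\rho)$, which is the desired conclusion.

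The hard part is justifying the global depolarizing approximation. Realistic noise is a composition of local, potentially coherent and correlated error channels whose net action is not a priori depolarizing, and the off-diagonal correction $\tr(\rho\, N_\Circ)$ with $N_\Circ = U_\Circ\ketbra{0^n}U_\Circ^\dagger - M_\Circ$ may in principle conspire with $\Circ$ to deviate systematically from the uniform form assumed above. Making the argument rigorous therefore requires showing that, for a typical sufficiently deep random circuit and a broad class of local noise channels, this off-diagonal contribution concentrates around the value it would take under depolarizing noise --- essentially a statement that noise scrambles along with the circuit. Pinning down the regimes in which this holds (and where it might fail) plausibly motivates the more detailed analysis in the remainder of this chapter.
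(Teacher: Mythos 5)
You should first note that this statement is an \emph{informal conjecture}, not a theorem: the paper explicitly says that both conjectures ``are mathematically ill-defined'' prior to the work presented in the chapter, and the chapter's purpose is not to prove \autoref{conj: XEB quantum} but to give it a quantitative treatment and, in important regimes, to undermine it. Your argument --- rewrite XEB as a linear functional of $\rho$, impose the global depolarizing ansatz $\rho = F_d\,U_\Circ\ketbra{0^n}U_\Circ^\dagger + (1-F_d)\,\mathbb{I}/2^n$, and invoke Porter--Thomas to get $\chi_\Circ \approx F_d \approx F_\Circ$ --- is the standard completeness heuristic from the experimental literature, and as a heuristic it is fine. But you have correctly identified the load-bearing assumption yourself, and the paper's own analysis shows that this is exactly where the conjecture fails rather than merely where the proof gets hard.

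Concretely, the paper's diffusion-reaction mapping shows that the relation $\chi_\Circ \approx F_\Circ$ holds only in the weak-noise / fast-scrambling regime $N\epsilon \ll \tau_{\rm eq}^{-1}$, where the particle distribution relaxes to equilibrium before appreciable probability loss ($\beta\approx 1$ in \autoref{eq:weak_noise_solution_u}); once $N\epsilon \gtrsim \tau_{\rm eq}^{-1}$, the distribution is driven out of equilibrium, $\beta<1$, and the XEB systematically \emph{overestimates} the fidelity by an amount that grows with the noise rate and depends on the entangling-gate ensemble (see \autoref{fig:stat_noise_fid_vs_xeb}). Worse for your reduction, the paper's first rigorous result shows that fidelity is multiplicative across subsystems while XEB is only additive ($F_{\rm total}=\prod_i F_i$ versus $\chi_{\rm total}\approx\sum_i\chi_i$), so at fixed per-gate error rate and growing system size the two quantities provably scale differently and cannot track each other --- no concentration argument can rescue the global-depolarizing ansatz in that limit. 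So your proposal does not close the gap you named; the remainder of the chapter is devoted to showing that in the relevant asymptotic regime the gap is real and the conjecture, read as an unconditional statement, is false.
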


\begin{conjecture}[Informal]\label{conj: XEB classical}
Every classical algorithm requires a substantial amount of time to achieve high XEB value.
\end{conjecture}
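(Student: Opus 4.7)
The plan is to prove the conjecture by contradiction: assume there exists a polynomial-time classical algorithm $\mathcal{A}$ that, on input a description of a random circuit $\Circ$, outputs samples drawn from some distribution $q_\Circ$ satisfying $\chi_\Circ(q_\Circ) \geq b$ for some non-negligible constant $b > 0$. I would then aim to derive a consequence that contradicts a standard complexity-theoretic assumption, namely the non-collapse of the polynomial hierarchy to a finite level (following the template of Aaronson-Arkhipov and Bouland-Fefferman-Nirkhe-Vazirani).

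The first step is to rewrite the XEB condition operationally. Since $\chi_\Circ(q_\Circ) = 2^n\,\Exp_{x \sim q_\Circ}[p_\Circ(x)] - 1$, a classical algorithm achieving high XEB is essentially one whose samples are concentrated on outputs $x$ with unusually large ideal probability $p_\Circ(x) = |\bra{x}U_\Circ\ket{0^n}|^2$. Applying Stockmeyer's approximate counting theorem to $\mathcal{A}$ then yields an $\textsf{FBPP}^{\textsf{NP}}$ procedure that, for a random circuit $\Circ$ and a random outcome $x$, produces a multiplicative estimate of $p_\Circ(x)$ to within a small relative error with good probability over $(\Circ, x)$. Crucially, this step uses the anti-concentration property of random circuits (Porter–Thomas), which ensures that typical probabilities $p_\Circ(x)$ sit at the scale $2^{-n}$, so that high XEB forces genuine approximation rather than trivial mass on outliers.

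The key and most delicate step is to upgrade this average-case approximation into a worst-case exact computation of $|\bra{x}U_\Circ\ket{0^n}|^2$, a quantity known to be $\#\textsf{P}$-hard. I would use a polynomial interpolation argument: each gate of $\Circ$ is perturbed along a Cayley-type path parameterized by $\theta$, so that $p_{\Circ(\theta)}(x)$ is a low-degree polynomial in $\theta$. Querying the noisy $\mathcal{A}$-derived oracle at many values of $\theta$ corresponding to random (hence typical) circuits, and then interpolating back to the worst-case target $\theta = 0$, recovers the exact worst-case value, assuming the error propagation can be controlled. The main obstacle is precisely this error analysis: the additive error from the XEB-based oracle must be small enough relative to the anti-concentration scale that Berlekamp–Welch style robust interpolation succeeds, and this currently only works for XEB values very close to $1$ and under a strong average-case hardness conjecture. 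Concluding the argument, the existence of such a worst-case solver would place $\#\textsf{P} \subseteq \textsf{BPP}^{\textsf{NP}}$, and by Toda's theorem this collapses the polynomial hierarchy, yielding the desired contradiction. I would therefore view the conjecture as provable only modulo (i) a robust average-case hardness assumption for $p_\Circ(x)$ and (ii) a sufficiently strong anti-concentration statement—both of which remain the genuine technical bottlenecks.
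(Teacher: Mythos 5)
There is a fundamental mismatch here: the statement you are trying to prove is stated in the paper as an \emph{informal conjecture}, and the paper does not prove it. On the contrary, the entire point of the chapter is to give a quantitative treatment of this conjecture and then to provide strong evidence \emph{against} it in the regimes that matter. The paper's Rigorous Result 3.2 exhibits a linear-time classical algorithm achieving XEB value $2^{-O(d)}$, which refutes XQUATH (the Aaronson--Gunn formalization of precisely this conjecture) whenever the single-qubit gates form a 2-design, and the numerical results show the algorithm gets within roughly one order of magnitude of the Google and USTC experimental XEB scores in about a second on one GPU. So a "proof" of the conjecture as stated would contradict the paper's own results.

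The concrete gap in your argument is the implicit assumption that achieving a high XEB score forces the classical sampler's distribution $q_\Circ$ to be close to $p_\Circ$ in a sense strong enough for Stockmeyer plus polynomial interpolation to extract near-exact values of $p_\Circ(x)$. It does not. The XEB is a single linear functional $2^n\sum_x p_\Circ(x)q(x)-1$, and the paper's analysis (the diffusion-reaction / Ising mapping) shows it is approximately \emph{additive} across weakly coupled subsystems while fidelity is \emph{multiplicative}: splitting the circuit into disjoint pieces and simulating each piece exactly yields $\chi_{\textrm{total}}\approx\sum_i\chi_i$ even though the global state has essentially zero fidelity with the ideal one. A distribution that is a product over small subsystems carries no information about the globally scrambled amplitudes, so no Stockmeyer-based reduction can recover $p_\Circ(x)$ from it; yet it scores a nontrivial XEB. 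Your own caveat — that the interpolation only works for XEB values "very close to $1$" — is exactly where the argument dies: the experimentally claimed advantage lives at XEB $\sim 10^{-3}$, and in that regime the conjecture's formalization is false, not merely unproven. If you want to salvage something, the honest statement is that hardness of \emph{sampling from a distribution close to $p_\Circ$ in total variation} may follow from your outline under the usual assumptions, but that is a strictly stronger requirement on the classical algorithm than achieving a high XEB value, and the gap between the two is precisely the vulnerability the paper exploits.
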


The above two conjectures are mathematically ill-defined and indeed that's the situation prior to our work~\cite{GKCLBC21}. In the rest of this chapter, we are going to give a quantitative treatment on the above two conjectures and clarify the landscape.

\subsection{Prior and concurrent works}
The XEB measure has been used in recent  experiments~\cite{arute2019quantum,USTC}, where sampling from random unitary circuits was performed.
Specifically, Google~\cite{arute2019quantum}  achieved an  XEB value of $\chi_p \approx 0.002$ on a two-dimensional, 53-qubit quantum device (Sycamore) implementing circuits up to depth 20 and estimating XEB under reasonable assumptions~\footnote{Since directly calculating the XEB value of 53-qubit Sycamore at depth 20 is computationally intractable, Google extrapolated their smaller-system results to estimate the final XEB value.}.
{Recently, the USTC group~\cite{USTC,zhu2021quantum} extended the number of qubits and claimed the XEB value of $6.62\times 10^{-4}$ and $3.66\times 10^{-4}$, for system sizes up to 56 qubits and 60 qubits, respectively.} In both cases, it has been  conjectured that such values are challenging to achieve using state-of-the-art classical computing devices on a realistic time scale.

Prior works challenging  quantum advantage~\cite{gray2020hyper,huang2020classical,pan,fu2021closing,refute1,refute2} obtained comparable or higher XEB values using heavy computational resources. 
While these classical methods are tailored to challenge Google's current setup (53 qubits, depth 20), up to now it was unclear if and how they could be extended to larger systems. In fact, it has been argued that by simply increasing the system size to about 60$\sim$70 qubits, one could defeat such classical spoofing algorithms~\cite{blog}.
Indeed, in  more recent experiments~\cite{zhu2021quantum} 
(60 qubits, depth 24), it has been suggested that the new device bypasses the challenge of these algorithms.

\section{Limitations of XEB as a measure for RCS-based quantum advantage}

In a joint work with Barak and Gao~\cite{BCG21}, and a subsequent work with Gao, Kalinowski, Lukin, Barak, and Choi~\cite{GKCLBC21}, we reexamine the above-mentioned quantum computational advantage proposal based on XEB, adopted by Google and USTC. Through the lens of a classical algorithm for spoofing XEB, we simultaneously examine three distinct perspectives: (i) the experimental regime, (ii) the complexity-theoretic scaling, and (iii) the physical picture. For the sake of clarity and cohesion in the presentation, we defer the introduction of our classical spoofing algorithm to~\autoref{sec:classical spoofing algorithm}. In the remainder of this section, we will provide an overview of our contributions from each perspective. Further details will be elucidated in the subsequent sections and corresponding appendices.

\medskip \noindent \textbf{The experimental regime.}
\emph{A highly efficient classical algorithm (1 GPU around 1s), whose performance is within around one order magnitude} with current experimental devices.
We consider a random circuit ensemble modelled after the one used in Ref.~\cite{arute2019quantum,USTC,zhu2021quantum} (see~\autoref{app:circuit architecture} for detailed information).
Our algorithm achieves a mean XEB value that is about 8\% of Google's experiment (53 qubits, depth 20), and 12\% and 2\% of USTC's experiments (56 qubits, depth 20 and 60 qubits, depth 24) respectively, with the running time  $\approx$1s using 1 GPU (32GB NVIDIA Tesla V100). We can get higher XEB value by taking more running time. E.g., 12.3\% of Google's experiment with 50s and 5\% of USTC's second experiment with 4s.

Remarkably, the XEB value of our algorithm generally \emph{improves} for larger quantum circuits, whereas that of noisy quantum devices quickly deteriorates. Such scaling continues to hold when the number of qubits is increased while the depth of the circuit and the error-per-gate are fixed, as explicitly confirmed from numerical simulations for 1D and 2D square and the extended Sycamore architecture in~\autoref{fig:intro_scaling}(b-d).

\medskip \noindent \textbf{The complexity-theoretic scaling.}
\emph{A linear-time classical algorithm that outperforms any noisy 1D quantum circuit.} For one-dimensional quantum circuits consisting of Haar random unitary gates, we present a linear-time classical algorithm which achieves higher XEB values than noisy quantum devices. Concretely, for every uncorrelated error rate $\epsilon>0$ per gate, our algorithm can spoof the XEB measure when the number of qubits is sufficiently large. Here, uncorrelated error refers to errors from different locations being uncorrelated, i.e., the error channel is a tensor product of error channels of each location.

For general circuit architecture, Aaronson and Gunn~\cite{aaronson2019classical} reduced the classical hardness of spoofing the XEB measure to the Linear Cross-Entropy Quantum Threshold Assumption (XQUATH), which is a stronger version of the Quantum Threshold Assumption (QUATH)~\cite{aaronson2016complexity}.Our results refute XQUATH assuming the single qubit gates are Haar random.

\medskip \noindent \textbf{The physic picture.}
We present a way to analyze quantum circuit dynamics using classical statistical physics. Specifically, for a wide class of random circuit ensembles involving single qubit Haar random gates, we show that the dynamics of both noisy quantum circuits and our classical algorithm can be understood in terms of an effective diffusion-reaction process which was originally used to study the scrambling of circuits~\cite{mi2021information}. In this effective description, the application of each layer of a quantum circuit translates to particles undergoing a random walk (diffusion) for a single time step on a graph representing the  circuit architecture. Furthermore, each particle can duplicate itself, and a pair of particles may recombine into a single particle at a certain rate (reaction). The rates of particle diffusion and reaction are determined by the properties of two-qubit quantum gates, such as the average amount of entanglement they generate. The XEB and the fidelity of ideal circuits are given by different aspects of particle distribution at the last circuit layer, as we elaborate in~\autoref{app:stat_DR_dynamics}.

The XEB value in a noisy circuit and our algorithm will decrease from the ideal value when a particle hits a defective (omitted or noisy) gate.
In the case of noisy quantum circuits, every gate is noisy, so the decrease in the XEB value is proportional to the total number of particles in the diffusion-reaction process. Intuitively, when the system size grows, there are more particles hitting noisy gates and thus the XEB value becomes smaller. In our algorithm, the XEB decreases whenever a particle hits an omitted gate at the boundaries of disconnected sub-regions. Intuitively, when the system size grows, there is more space for particles to diffuse away from the boundary and thus, in general, the XEB value can become larger. This qualitatively explains the asymptotic scaling of XEB in~\autoref{fig:intro_scaling}.

\subsection{Summary of our results}

\vspace{2mm}
\noindent\textbf{Numerical result 3.1:}
Our algorithm exhibits favorable scaling behavior in increasing system sizes (while fixing the strength of noise and circuit depth) for both conventional gatesets and Google's gatesets. 

\vspace{2mm}
\noindent\textbf{Numerical result 3.2:} 
Our algorithm outperforms the experiments of Google and USTC for conventional random unitary circuits, that is, the single qubit gate is Haar random.

\vspace{2mm}
\noindent\textbf{Numerical result 3.3:}
Our algorithms achieve XEB value within around one order of magnitude of the experiments of Google and USTC for a slight modification of their gatesets. 

\vspace{2mm}
\noindent\textbf{Rigorous result 3.1:}
The average XEB is additive and the average fidelity is multiplicative between our algorithm and distribution from ideal circuit. The assumption is that the average is over unitary 1-design random gates.

\vspace{2mm}
\noindent\textbf{Rigorous result 3.2:}
Our algorithm achieves XEB value $2^{-O(d)}$ in linear time of system size and hence refutes the theoretical guarantee of XEB-based quantum computational advantage, i.e., XQUATH, in sublinear depth. The assumption is that the single qubit gate ensemble is unitary 2-design. No assumption on two qubit gate and circuit architecture.

\vspace{2mm}
\noindent\textbf{Claim:} For 1D circuits with Haar random two qubit gate, the XEB of our algorithm is greater than noisy circuit with arbitrarily small constant noise when $d=\Omega(\log n)$. 

This claim is made by combination of numerical result and analytical formula, and further supported from statistical physics argument.

\subsection{Our classical spoofing algorithms}\label{sec:classical spoofing algorithm}

We now describe an efficient classical algorithm $C$ that, in a wide range of physically relevant situations, produces a probability distribution with XEB values larger or comparable to that of an $\epsilon$-noisy circuit, at least on average. In such situations, the existence of our algorithm suggests XEB on its own is not a good benchmark for certifying quantum advantage.

\vspace{2mm}
\noindent\textbf{The underlying intuition of our algorithms.} 
Our algorithm is inspired by the observation that entanglement growth in a noisy quantum circuit is reduced by errors spread over the entire circuit in both space and time [\autoref{fig:intro_scaling}(a)]. These  effectively truncate entanglement and correlations among different subsystems. In our algorithm, we introduce similar amount of effective errors, but they occur only at specific locations such that the quantum circuit becomes easier to simulate. As an example,~\autoref{fig:intro_scaling}(a) shows how omitting a few specific gates at certain locations (which amounts to particular types of error, i.e. \emph{gate defects}) can split a circuit into multiple disconnected sub-circuits. Alternatively, one can apply completely depolarizing channel before and after an entangling gates. These approaches explicitly remove correlations between subsystems. Intuitively, when the amount of ``effective noise'' in a noisy quantum simulation is comparable to the ``effective error'' in our algorithm (proportional to the number of omitted gates), the XEB of the latter is larger due to the stronger correlation among errors [see~\autoref{fig:oneerror}(b)].

Since the size of each sub-circuit is much smaller than that of the original circuit, the algorithm can be significantly faster than a direct simulation of the global circuit. For example, when ran on 53-qubit circuits, such as Google's, it takes a few seconds using a single GPU (32GB NVIDIA Tesla V100).

\vspace{2mm}
\noindent\textbf{The skeleton of our algorithms.} 
We now describe our classical algorithm. For concreteness, we illustrate our algorithm using 1D quantum circuits although it is straightforward to generalize it to other circuit architectures. Let $N$ be the total number of qubits, $d$ be the depth, and $ l$ be
the maximum size of subsystems [see~\autoref{fig:intro_alg}(a) for an example with $N=12$, $d=7$, and $ l=4$]. We start by partitioning the $N$ qubits into subsystems of size at most $ l$ by
omitting any gates acting across two different subsystems [see~\autoref{fig:intro_alg}(c)].
We then simulate each subsystem separately. 
Using brute-force methods, simulating a subsystem of $ l$ qubits takes at most $2^{O(l)}d$ time. There are $\lceil N/ l\rceil$ subsystems and hence the total running time of our algorithm is at most $\frac{2^{O( l)}}{ l} Nd.$ In particular, if $ l$ is fixed and does not scale with the total system size $N$ or depth $d$, the time complexity is linear in the circuit size $Nd$.
We claim that the bitstring  distribution induced by the factorizable wavefunction obtained from our algorithm achieves relatively high XEB values.
We also improve our basic algorithm using several statistical tricks as elucidated in~\autoref{app:algorithm improve}.

\begin{figure}[h]
\includegraphics[width=7cm]{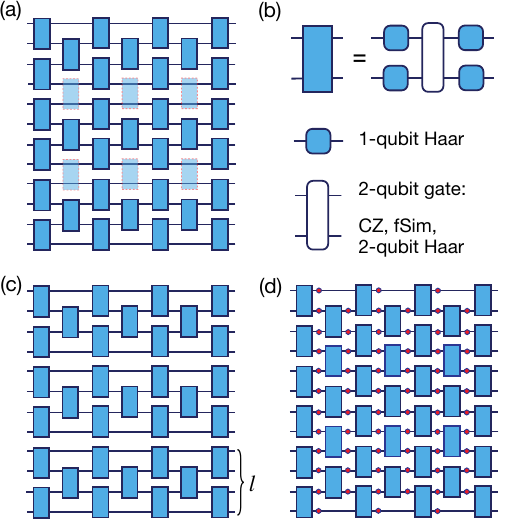}
\centering
\caption{Illustration of our algorithms.
(a) The target (ideal) circuit to simulate. The light blue gates correspond to the ones omitted in (c).
(b) Each random two-qubit gate in our circuit consists of any (potentially fixed) two-qubit gate surrounded by 4 single-qubit Haar random gates. When compared to experimental data, the single-qubit random gates are chosen to be a slight modification of those used in Ref.~\cite{arute2019quantum,USTC,zhu2021quantum},
(c) Our algorithm: one can approximately simulate the ideal circuit by simply omitting a certain subset of gates (in light blue color with red dashed boxes) in the ideal circuit (a). Then, the circuit separates into isolated subsystems. We denote the maximal size of a subsystem as $ l$.
(d) Noisy circuit: we model the dynamics of noisy quantum circuits by applying probabilistic single-qubit noise (e.g. depolarizing or amplitude damping) channels  to all qubits, after each layer of unitary evolution.
}
\label{fig:intro_alg}
\end{figure}

\subsection{Experimental regime}\label{sec:quantum experimental regime}
For the experimental regime, we consider quantum circuit architectures that are of experimentally relevance. In particular, we consider 2D quantum circuits in the Sycamore and Zuchongzhi architectures in two different settings. First, we focus on the role of the two-qubit gate, and we analyze the performance of our algorithm for three different two-qubit gate ensembles: Haar, CZ, and fSim. For the single-qubit gate we choose either independent Haar-random gates which allows for efficient analysis using the diffusion-reaction model or the more experimentally-relevant discrete gate set. Second, we compare our algorithm against the experimental results of Refs.~\cite{arute2019quantum,USTC,zhu2021quantum}. There, we focus on the fSim gate, and we assume the experimentally relevant discrete single-qubit gate set.
These analyses lead to two main results, summarized in~\autoref{fig:intro_advantage_regime} and~\autoref{tab:running_time}.
For numerical calculations, we used a single GPU machine (32GB NVIDIA Tesla V100). 

\begin{numericalresult}{(Favorable scaling in increasing system sizes)}\label{res:numerical 1}
In the Sycamore architecture with Haar-random single-qubit gates or the experimentally relevant gate set (fSim + discrete single-qubit gates), 
Our algorithm has the following properties:
\begin{itemize} 
\item for 1D circuits with the Haar random two‐qubit gate ensemble, the algorithm achieves higher average XEB value than quantum simulation with noise level $\epsilon=1\%$ as the system size grows (with fixed depth to $16$). Results summarized in~\autoref{fig:intro_scaling}(b);
\item for 2D circuits with the Haar random two‐qubit gate ensemble, the algorithm achieves higher average XEB value than quantum simulation with noise level $\epsilon=2\%$ or $2\%$ as the system size grows (with fixed depth to $16$). Results summarized in~\autoref{fig:intro_scaling}(c);
\item for the extended version of google's Sycamore circuits, the algorithm achieves higher average XEB value than quantum simulation as the system size grows by extrapolation. Results summarized in~\autoref{fig:intro_scaling}(d);
\end{itemize}
\end{numericalresult}

\begin{figure}[h]
\includegraphics[width=14cm]{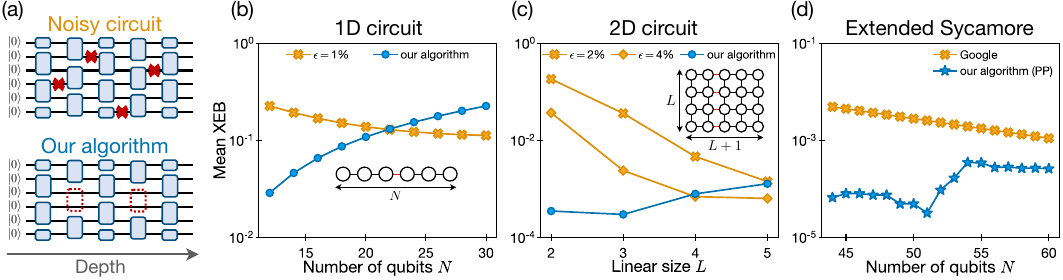}
\centering
\caption{
Classical algorithms spoofing XEB for quantum circuits in various  architectures.
(a) Schematic diagrams illustrating the key idea of our algorithm.
In noisy quantum circuits, errors (red crosses) randomly occur at a rate $\epsilon>0$, spread over the entire circuit.
In our algorithm, we introduce effective, highly localized errors by omitting or modifying a few entangling quantum gates (red dotted boxes) such that the circuit splits into smaller segments and becomes easier to simulate classically.
(b-d) Performance of our algorithm. We obtain high XEB values (blue circles and stars)  compared to noisy circuits (yellow crosses and diamonds) for 1D, 2D, and the extended Sycamore circuit architectures [see~\autoref{fig:intro_partition}].
(b) 1D circuits of depth $d=16$ in the brick-work layout, with the Haar random two-qubit gate ensemble.
(c) 2D circuits of depth $d=16$ in a $L\times (L+1)$ square lattice, with the Haar random two-qubit gate ensemble.
Our algorithm outperforms noisy quantum circuits (here with error rates $\epsilon=0.02$,  $0.04$) for sufficiently large system sizes.
Insets in (b-c) show the circuit architecture and the position of omitted gates (red lines).
(d) Comparison of the mean XEB value obtained by our improved algorithm (light blue circles) to Google's Sycamore in which case we extrapolated experimental results using the ansatz ${\rm XEB}\sim \exp(-c_1 N-c_2 Nd )$. We extended the Sycamore architecture horizontally up to 60 qubits; see~\autoref{fig:intro_partition} for more details.
For this simulation, we assumed a quantum circuit ensemble with random single-qubit gates similar to (but slightly modified) those used in Ref.~\cite{arute2019quantum,USTC,zhu2021quantum}.
}
\label{fig:intro_scaling}
\end{figure}

\begin{numericalresult}{(Different gate ensembles)}\label{res:numerical 2}
In the Sycamore architecture with $N=53$, $d=20$ with Haar-random single-qubit gates, our algorithm (using the partition in~\autoref{fig:intro_partition}) has the following properties:
\begin{itemize} 
\item the algorithm achieves significant average XEB value for all depths shown in~\autoref{fig:intro_advantage_regime}. As a reference, the expected XEB value of a noisy quantum device with depth 20 and error rate $\epsilon\approx0.5\%$ is $\approx0.002$;
\item the choice of the two-qubit gate affects the value of XEB, which can be understood in terms of the diffusion-reaction model~\autoref{sec:stat_DR_mapping};
\item the discrete single-qubit ensemble results in much lower XEB values (green crosses in~\autoref{fig:intro_advantage_regime}), which is caused by the faster scrambling time;

\item the running time (computing the vector of output probabilities) is only 4-8 seconds;
\end{itemize}
\end{numericalresult}

\begin{figure}
\includegraphics[width=10cm]{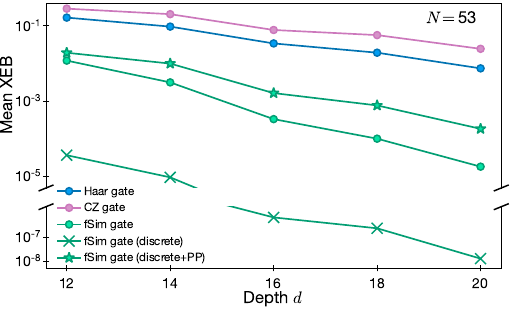}
\centering
\caption{Mean XEB obtained by our algorithm for different two-qubit gate ensembles, on Google's circuit geometry. Circles denote the Haar single-qubit gate set, while the green crosses (stars) correspond to the more experimentally relevant discrete set (with amplification using the \emph{top-$k$} method).}
\label{fig:intro_advantage_regime}
\end{figure}

\begin{table*}
\begin{center}
\begin{tabular}{|p{3.5cm}||c|c|c|}
\hline
   & Google~\cite{arute2019quantum} & USTC-1~\cite{USTC} & USTC-2~\cite{zhu2021quantum}
\\
\hline
\hline
System size & 53 qubits, 20 depth & 56 qubits, 20 depth & 60 qubits, 24 depth
\\
\hline
Claimed running time on a supercomputer~\cite{zhu2021quantum} & 15.9d & 8.2yr & $4.8\times10^{4}$yr
\\
\hline
Running time on quantum processor & 600s & 1.2h & 4.2h
\\
\hline
Experimental XEB   & $2.24\times10^{-3}$  & $6.62\times10^{-4}$  & $3.66\times10^{-4}$
\\
\hline
\hline
Running time of our algorithm (1 GPU$^{(a,b)}$) & 0.6s & 0.6s &  1.5s
\\
\hline
XEB of our algorithm$^{(b)}$ & $1.85\times10^{-4}$ & $8.18\times10^{-5}$ & $7.75\times10^{-6}$  \\
\hline
Ratio of ours to experimental XEB  & $8.26\%$ & $12.4\%$ & $2.12\%$\\ 
\hline
\hline
Running time of our algorithm (a different partition) & 50s &  &  4s
\\
\hline
XEB of our algorithm (a different partition) & $2.7\times10^{-4}$ &  & $2.05\times10^{-5}$ \\
\hline
Ratio of ours to experimental XEB (a different partition)  & $12.3\%$ &  & $5.6\%$\\
\hline
\end{tabular}
\end{center}
\caption{
The comparison of XEB values (using the top-$k$ post-processing) and running times in the quantum advantage regime.
We find that the average XEB values from our algorithm is largely independent of the choice $k\lesssim 10^4$ (corresponding to more than $k^2\sim 10^8$ distinct bitstrings for two subsystems), above which they slowly decrease. See~\autoref{sapp:improved} for the $k$-dependence as well as the estimated STD of XEB values.
(a) The running time is measured on a device using 1 GPU (NVIDIA Tesla V100).
 (b) The performance of our algorithm (XEB value and running time) listed here are measured for the partitions in~\autoref{sapp:improved} which are not optimized and are chosen for 1 GPU simulation with bounded memory (32GB for our device).
 The tensor network algorithm is based on Ref.~\cite{kalachev2021recursive} and implemented by a Julia package OMEinsum.jl~\cite{julia2}.
 }
\label{tab:running_time}
\end{table*}

\begin{numericalresult}{(Comparison with experimental results)}\label{res:comparisons}
For the experimentally relevant gate set (fSim + discrete single-qubit gates) the performance of our algorithm can be summarized (see also~\autoref{tab:running_time}) as follows
\begin{itemize} 
\item using the top-$k$ post-processing method, the algorithm achieves average XEB values within around one order magnitude ($\approx 2\%\sim 12\%$) to recent experiments up to depth 20 and 24, respectively.;
\item the running times (computing the vector of output probabilities and choosing the top-$k$ bitstrings) are on the order of one second.
\item the STD is conjectured to be comparable to the mean value for large enough $k$ but without decreasing XEB too much; this is supported numerically for Google's Sycamore architecture [see~\autoref{sapp:improved}]. 
\end{itemize}
\end{numericalresult}
In summary, our numerical simulations show that our algorithm achieves XEB values within around one order magnitude to Google's and USTC's circuits in the quantum advantage regime with the experimentally-relevant gate set.
While our basic algorithm is simple and efficient, there are ways to achieve higher XEB values by adding more sophisticated algorithmic ingredients. For example, we show that after adding a simple post-processing step (the top-$k$ method), our algorithm can achieve much higher XEB values; e.g., compare green crosses and stars in ~\autoref{fig:intro_advantage_regime}. In fact, we only considered here the most straightforward way to determine the locations of omitted gates (or maximal depolarization noise), which may not be optimal. As we see from Table \ref{tab:running_time}, different partitions with roughly the same number of qubits in each part has different XEB values and running time. By generalizing our method, e.g., making the locations of omitted gates (maximal depolarization noise) time/depth dependent, we expect an improved version of our algorithm may produce higher XEB without substantially increasing the computational resources. In this work, we mainly focus on using 1 GPU, which limits the possibilities of partitions. It is interesting to explore Using multiple GPUs with better XEB values.
In addition, it is an interesting future direction to explore further algorithmic improvements (e.g., adding a modest amount of entanglement).

\subsection{Complexity-theoretical scaling}\label{sec:quantum complexity results}
XEB and fidelity exhibit different scaling behaviors when a system size is increased with a fixed error rate, implying that two quantities cannot agree in a certain scaling limit. While a rigorous analysis can be made using the framework presented in~\autoref{sec:stat_DR_mapping}, here we consider a toy model illustrating the origin of the different scaling  behaviors.

\begin{rigorousresult}[Scaling of XEB and fidelity]
Consider $k$ disjoint $N$-qubit systems, each undergoing noisy circuit evolution with corresponding XEB values $\chi_i = 2^N \sum_x p_i(x) q_i(x) - 1$ and fidelities $F_i$ with $i=1,2,\dots, k$.
If we consider the $k$ disjoint systems as a single composite system of $kN$ qubits, then the fidelity scales multiplicatively, i.e. $F_\textrm{total} = \prod_i F_i$, while the XEB scales additively: $\chi_\textrm{total}\approx\sum_i \chi_i$.
\end{rigorousresult}

The multiplicity of fidelity can be directly checked from its definition in~\autoref{eq:fidelity}. For the additivity of XEB, we have
\begin{align}
    \chi_\textrm{total} &= 2^{kN} \sum_{\{x_i\}} \prod_i p_i(x_i) q_i (x_i) - 1\\
   \nonumber &= \prod_i\left(2^N\sum_{x_i}p_i(x_i)q_i(x_i)\right)-1\\
    &=\prod_i (\chi_i + 1) - 1 \approx \sum_i \chi_i,
\label{eqn:additive_nature}
\end{align}
where the second equality is due to the product structure across the subsystems and we assumed that $\chi_i \ll 1$ in the last line, relevant for the regime of our interest. While this example may seem contrived as each subsystem is perfectly isolated, one can also devise an example, where all subsystems are strongly coupled by unitary gates and result in fully globally scrambled quantum states. 

This discrepancy in scaling stems fundamentally from the structure of the XEB formula in~\autoref{eq:linear XEB}: as two distributions $p(x)$ and $q(x)$ become uncorrelated~\footnote{Mathematically, we say $p(x)$ and $q(x)$ are uncorrelated if  $\Sigma_x p(x)q(x)/2^N = (\Sigma_{x} p(x)/2^N)(\Sigma_{x} q(x)/2^N)$.} from one another, the first term in~\autoref{eq:linear XEB} tends to a finite value, $1$, rather than approaching zero. This offset is explicitly subtracted in order to obtain a value within an interval $[0,1]$, but it also leads to distinct scaling behavior for large composite systems.

Next, we discuss the performance of our algorithm on 1D circuits with gates drawn from the Haar ensemble. For the purpose of this section, $C$ denotes either the algorithm introduced in~\autoref{sec:classical spoofing algorithm} or its self-averaging version described in detail in~\autoref{app:algorithm improve}. The self-averaging version has the same average XEB but a smaller STD, at the cost of requiring more computational power. However, we consider constant subsystem size $l=O(1)$; thus, even the self-averaging algorithm runs in the time linear in $Nd$.

\begin{rigorousresult}{(1D circuits with Haar ensemble)}\label{thm:xeb 1D}
For 1D random quantum circuits with gates drawn from the Haar ensemble and depth at least $d> c\cdot\log N$ for some constant $c>0$, 
\begin{itemize}
\item
for any constant $\epsilon>0$ and large enough $N$ (roughly $N\epsilon>1$), we have
\begin{equation}\label{eq:1D exp}
\Exp_{U}[\chi_U(C)] \ge \Exp_{U}[\chi_U(\mathcal{N}_\epsilon)]
\end{equation}
for both the basic and the self-averaging algorithms.
\item
we conjecture that
\begin{equation}\label{eq:1D var}
\sqrt{\textsf{Var}(\chi_U(C))_U} \approx \Exp_{U}[\chi_U(\mathcal{N}_\epsilon)] \,
\end{equation}
for the self-averaging algorithm (see~\autoref{app:algorithm improve}), which is suggested by numerical simulations. Namely, the standard deviation of $\chi_U(C)$ is comparable to its expectation value $\Exp_{U}\chi_U(C)]$.
\end{itemize}
Combined, this yields a linear-time classical algorithm that spoofs XEB for any noisy quantum simulation of 1D circuits with the Haar gate ensemble, when the number of qubits is large enough.
\end{rigorousresult}

\autoref{eq:1D exp} states that the average XEB of our algorithm is at least as large as that of any noisy circuit with a constant noise level $\epsilon>0$. As mentioned previously, in practice, we would like the conclusion of~\autoref{eq:1D exp} to generalize to typical circuits $U$ (not only on average) --- this can be guaranteed by showing that the variance of the XEB value is small. This notion is expressed in~\autoref{eq:1D var}, which says that the variance is comparable to the expectation value, and hence our algorithm works for typical instances with large probability. Notice that, in the large depth limit, we expect this to hold only for the self-averaging algorithm. When discussing 1D circuits, where the purpose is to provide complexity-theoretic implications, the analysis of the STD concerns only the self-averaging algorithm.

From a technical point of view, our results are derived by showing that the following quantities decay exponentially with the depth of the circuit
\begin{eqnarray}
\nonumber \Exp_{U}[\chi_U(C)]&=&O(e^{-\Delta_1 d}), \\
\nonumber  \Exp_{U}[\chi_U(\mathcal{N}_\epsilon)]|_{\epsilon\rightarrow0\text{ while } N\epsilon>1} &=&O(e^{-\Delta_3 d}).
\end{eqnarray}
Additionally, numerical simulations support the scaling of the STD as
\begin{equation*}
   \sqrt{\Exp_{U}[\chi^2_U(C)}{U}-\Exp_{U}[\chi_U(C)]^2]=O(e^{-\Delta_2 d}) 
\end{equation*}
for some constants $\Delta_1,\Delta_2>0$ that depend on the subsystem size $l$ and $\Delta_3>0$ that depends on the noise level $\epsilon$. 

\begin{figure}
\includegraphics[width=10cm]{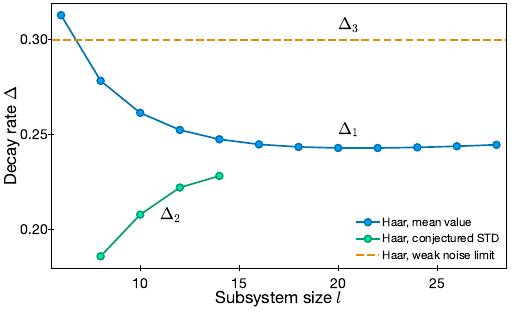}
\centering
\caption{Exponential decay rates in 1D circuits with the Haar gate ensemble. The mean value (blue) and the standard deviation (green) of the XEB obtained by our algorithm. The horizontal (dashed orange) line is the mean XEB value of the noisy circuit in the weak-noise limit. Intuitively, a smaller $\Delta$ corresponds to a larger XEB value. The STD is estimated by an approximate method~\autoref{sapp:Ising_model}, since the direct calculation is not practical. In~\autoref{sapp:Ising_model}, 
 we give a strong numerical evidence  that this approximation is in fact a conservative estimation, i.e., the true STD should be even smaller ($\Delta_2$ should be larger).
}
\label{fig:intro_1D_gaps}
\end{figure}

We emphasize that this scaling is unexpected: the decay rate of the expected XEB value achieved by our algorithm does not depend on the system size but only depends on the depth of the circuit.
We derive $\Delta_1,\Delta_3$ as constants in~\autoref{sapp:DR_stationary}.
Numerically, we show in~\autoref{fig:intro_1D_gaps} an estimate on $\Delta_1,\Delta_2$ and $\Delta_3$ {(of 1D circuits with the Haar gate ensemble) with $\epsilon\to0$ while keeping the system size large enough; i.e., $N\epsilon>1$. For the Haar ensemble,
our numerical results show $\Delta_1<\Delta_3$, where a larger $\Delta$ implies a smaller corresponding quantity in the deep-circuit limit. The numerical calculations suggest that $\Delta_1\approx \Delta_2$: around $l=14$, the gap between the two is very small and $\Delta_2$ (green curve) seems to increase continuously. The green curve is expected to be only a conservative estimation, as explained in~\autoref{sapp:DR_stationary}.

\subsection{The physical picture}

In this section, we assume the single qubit gate is haar random and present an analytic framework to understand the relation between the XEB and the fidelity under various conditions, including different quantum circuit architectures and the presence of noise or omitted gates.
We will find that, in these settings, both the XEB and the fidelity, averaged over an ensemble of unitary circuits, can be efficiently estimated by mapping the quantum dynamics to classical statistical mechanics models, such as the diffusion-reaction model.
This mapping to the diffusion-reaction model was previously developed in Ref.~\cite{mi2021information} for the purpose of studying quantum information scrambling under random circuit dynamics.
Here we use a similar method to study behavior of the XEB and fidelity in random circuits with various entangling gates.
In the special case of 1D circuits, the effective model can be further simplified to a ferromagnetic Ising spin model in two dimensions, allowing us to obtain the scaling behavior analytically. 

\begin{figure}
    \centering
    \includegraphics[width=0.5\linewidth]{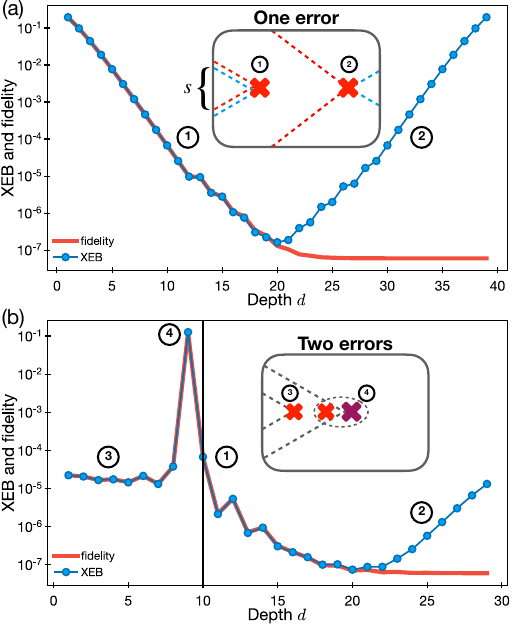}
    \caption{Effects of a single or double error at various locations on the XEB and fidelity. 
    (a) In the presence of a single error, the XEB and fidelity is reduced to an exponentially small but nonzero value that depends on the location of the error. The scaling of the XEB or fidelity can be understood in terms of the size $|s|$ of the error operator propagated to boundaries in the Heisenberg picture (inset). 
    (b) In the presence of two errors, the XEB and fidelity significantly depend on their relative location: the effect of one error can be masked (marked 3) or even cancelled (marked 4) by that of another error.}
    \label{fig:oneerror}
\end{figure}

\vspace{2mm}
\noindent\textbf{Overall methodology.}
We first outline how quantum dynamics can be mapped to a classical statistical mechanics model. The XEB and the fidelity can be written as
\begin{align}
    \label{eqn:chi_dup_hilbert}
    \chi_U+1 &= \sum_{x} \bra{x} U \rho_0 U^\dagger\ket{x} \bra{x} \mathcal{M}_U^{(a)} [\rho_0 ] \ket{x} 2^N,\\
    \label{eqn:f_dup_hilbert}
    F_U &= \sum_{x,x'} \bra{x} U\rho_0U^\dagger\ket{x'} \bra{x'} \mathcal{M}_U^{(a)} [\rho_0 ] \ket{x},
\end{align}
where $\rho_0 = \ket{0^N}\bra{0^N}$ is the initial state of the system, and $\mathcal{M}_U^{(a)}[\cdot]$ is a quantum channel associated with the ideal unitary evolution ($a$=ideal), noisy quantum dynamics ($a$=noisy), or our classical algorithm with omitted gates ($a$=algo).
For a different choice of $a=\{\textrm{ideal}, \textrm{noisy},\textrm{algo}\}$,~\autoref{eqn:chi_dup_hilbert} and~\autoref{eqn:f_dup_hilbert} become the XEB and the fidelity of the corresponding case, respectively. The sum over $x,x'$ represents the summation over all possible $N$-qubit configurations (bitstrings). 

The key idea is to realize that both the XEB and the fidelity can be expressed as the expectation values of observables in an extended Hilbert space. 
More explicitly, we envision having two identical copies of the  Hilbert space: one representing the ideal circuit dynamics, and the other representing the dynamics in either the ideal circuit, noisy circuit, or our algorithm [see~\autoref{fig:stat_mapping_outline}(a)].
Then, we have 
\begin{align}
    \chi_U+1 &= \Tr{ \mathcal{B}_\textrm{XEB} \left( U\rho_0U^\dagger \otimes\mathcal{M}_U^{(a)}[\rho_0] \right) },\label{eq:chi_Bchi}\\ 
    F_U &= \Tr{ \mathcal{B}_F \left( U\rho_0U^\dagger \otimes \mathcal{M}_U^{(a)}[\rho_0] \right) },\label{eq:f_Bf}
\end{align}
where $\mathcal{B}_\textrm{XEB} = 2^N \sum_x \ket{x}\bra{x}\otimes \ket{x}\bra{x} $ and $\mathcal{B}_F = \sum_{x,x'} \ket{x}\bra{x'}\otimes \ket{x'}\bra{x} $ are Hermitian observables defined in the enlarged space. 
In the following, we simply use $\mathcal{B}_b$ with $b\in\{{\rm XEB},F\}$.

A convenient way to study the type of operators in Eqs.~\eqref{eq:chi_Bchi}-\eqref{eq:f_Bf} is to represent them as tensor networks whose contraction results in $\chi_U+1$ or $F_U$,  as shown in~\autoref{fig:stat_mapping_outline}(a,b).
In general, the contraction of these tensor network diagrams for any given $U$ would be computationally difficult, as it is equivalent to evaluating the corresponding quantum circuit.
However, we are mostly interested in the average-case behavior of a class of random quantum circuits with gates drawn from specific gate ensembles. In this case, we can perform the averaging over the gate ensemble before contracting the network. Crucially, we find that the averaging process allows us to re-express the tensor network as a summation over exponentially many simple diagrams enumerated by different configurations of classical variables $s$ [see~\autoref{fig:stat_mapping_outline}(b)].

This emergent mathematical structure---namely the summation over all possible configurations of classical variables---is similar to the path integral formulation of a classical Markov process, or a partition function in statistical mechanics models~\cite{huang1963statistical}.
Indeed, we will show that $\chi_U+1$ and $F_U$, averaged over an ensemble of unitary gates, are \emph{exactly} described by a diffusion-reaction model or a classical Ising spin model.

\subsubsection{An emergent diffusion-reaction model}\label{sec:stat_DR_mapping}

\begin{figure*}[h]
\includegraphics[width=14cm]{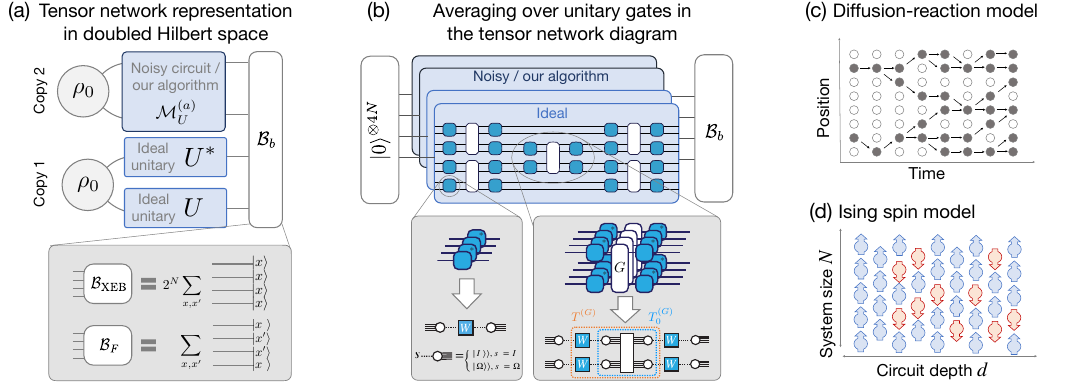}
\centering
\caption{
Mapping quantum circuits to statistical mechanics models.
(a) Both XEB and fidelity can be written as observables $\mathcal{B}_b$ with $b=\textrm{XEB},F$ in a duplicated Hilbert space by using tensor network representations.
The duplicated Hilbert space consists of the tensor product of Copy 1, representing an ideal circuit evolution, and Copy 2, representing the dynamics of either noisy circuit or our algorithm with omitted gates.
(b) For the tensor network diagrams representing XEB or fidelity, each random unitary gate (blue boxes) and its complex conjugate (blue boxes with asterisks) appear twice: in Copy 1 and in Copy 2.  One can perform averaging over an ensemble of unitary gates without explicitly evaluating the tensor network diagram, which gives rise to a simpler tensor network diagram with new classical variables, $s$, associated with each averaged single-qubit unitary gate (bottom left).
Entangling unitary gates $G$ dictate the dynamics of variables $s$, which is encapsulated in the transfer matrices of the classical statistical mechanics model (bottom right).
(c) Schematic diagram for the diffusion-reaction model. Each site can be occupied by a particle (filled) or remain unoccupied (empty). In every discrete time step, each particle may either stay on the same site, move to a neighboring site (diffusion), or duplicate itself to a neighboring site (reaction). Finally, a pair of particles located on neighboring sites may recombine into a single particle (reaction). Each of these processes has a specific probability that depends on the underlying gate ensemble. 
(d) Quantum circuits in 1D can be mapped to the classical Ising spin model in 2D.
}
\label{fig:stat_mapping_outline}
\end{figure*} 

We now describe the exact mapping from random unitary circuits to the diffusion-reaction model.
To derive this mapping, we will first consider the bulk of the tensor network in the absence of any noise or omitted gates, i.e., $\mathcal{M}_U^{\textrm{ideal}}[\rho_0] = U \rho_0 U^\dagger$. We will follow with the analysis of the boundaries at $t=0$ (initial state) and at $t=d$ (contraction with the observable $\mathcal{B}_b$).
Finally, we will consider how the presence of noise or omitted gates influences the system.

\emph{Bulk of the ideal circuit.}---
The central ingredient of the mapping to statistical mechanics models is the averaging over an ensemble of unitary gates~\cite{dankert2009exact}. In our case, we consider a single-qubit unitary $u \in SU(2)$ averaged over the Haar ensemble (or any other ensemble that forms a unitary 2-design). As depicted in~\autoref{fig:stat_mapping_outline}(b), every random unitary $u$ appears exactly 4 times: a pair of $u$ and $u^\dagger$ for the ideal dynamics and another pair for the quantum channel $\mathcal{M}_{U}^{(a)}$. Since these sets of 4 random gates are independent, we can average them locally within the circuit using the 2-design property~\cite{dankert2009exact},
\begin{align}\label{eq:2design}
    \mathbb E_{u}[ u\otimes u^* \otimes u \otimes u^*] = \kket{I}\bbra{I} + \frac{1}{3} \kket{\Omega} \bbra \Omega, 
\end{align}
where $\kket{I}$ and $\kket{\Omega}$ are mutually orthogonal operators in the duplicated Hilbert space defined as
\begin{align}\label{eq:I_Omega}
  \nonumber  \bbra{a,b,c,d} I\rangle \rangle &= \frac{1}{2} \delta_{ab} \delta_{cd},\\
    \bbra{a,b,c,d} \Omega \rangle \rangle &= \frac{1}{2} \sum_{\mu=\text{x,y,z}} \sigma^{\mu}_{ab} \sigma^{\mu}_{cd},
\end{align}
with Pauli matrices $\sigma^\mu$, and $a,b,c,d\in\{0,1\}$.
We note that by using this notation, we are implicitly utilizing the channel-state duality (also known as  the Choi–Jamiołkowski isomorphism~\cite{choi1975completely}), where operators such as density matrices are vectorized: $\rho = \sum_{ij} \rho_{ij} \ket{i}\bra{j}\rightarrow \kket\rho =\sum_{ij}\rho_{ij}\ket i\ket j$.
Intuitively, $\bbra{I}$ and $\bbra{\Omega}$ represent the normalization and the total polarization correlation between the two copies, respectively; see~\autoref{sec:stat_DR_mapping} for the detailed derivation of these properties.

Notice that~\autoref{eq:2design} is a sum of two projectors, up to normalization factors.
Therefore, by applying~\autoref{eq:2design} to every quadruple of single-qubit unitary gates, the tensor network diagram factorizes into smaller parts, which are enumerated by different assignments of classical variables $s\in\{I,\Omega\}$ associated with every independent single-qubit unitary gate.
We interpret the classical variable $s$ at a certain site in space-time as if that site is in a vacuum state ($s=I$) or occupied by a particle ($s=\Omega$).
In this picture, the particle configuration at a specific time step is given by the assignment of $I$ or $\Omega$ values to $s$ variables within that time slice. Then, the tensor network describes how the particle configuration is advanced in every time step, which is captured by the transfer matrix $\mathcal{T}$.

The transfer matrix between two time steps is determined by the product of local transfer matrices $\mathcal T = \prod_G T^{(G)}$.
In turn, a local transfer matrix $T^{(G)}$ is given by the combination of the prefactor $1/3$, originating from~\autoref{eq:2design}, and a non-trivial contribution $T_0^{(G)}$  associated with a single two-qubit gate $G$, as shown in~\autoref{fig:stat_mapping_outline}(b). 
We evaluate $T_0^{(G)}$ explicitly by contracting (four copies of) a two-qubit gate $G$ with four vectors $\kket{s}$, where $s=I,\Omega$, arising from four single-qubit random gates before and after $G$ [see~\autoref{fig:intro_alg}(b)]:
\begin{align}\label{eq:transfer G}
    T^{(G)}_{0;s_1 s_2 s_3 s_4} = \bbra{s_1}\bbra{s_2} G\otimes G^* \otimes G \otimes G^* \kket{s_3}\kket{s_4}.
\end{align}
Explicit calculations lead to the general form of the $T$-matrix
\begin{equation}\label{eq:transfer_general}
T^{(G)}=\begin{pmatrix}
1 & 0 & 0 & 0\\
0 & 1-D & D-R & R/\eta\\
0 & D-R & 1-D & R/\eta\\
0 & R & R & 1-2R/\eta
\end{pmatrix},
\end{equation}
written in the basis $\{II,I\Omega,\Omega I,\Omega\Omega\}$. 
This formula has been derived in Ref.~\cite{mi2021information} for studying quantum scrambling. In this work, we apply it to study  vulnerabilities of the XEB.
Here, $D\ge0$ and $R\ge0$ are parameters that depend on the specific choice of the entangling unitary gate $G$ (the gate ensemble), while $\eta=3$ for any 2-qubit gate. We call $D$, $R$ and $\eta$, the diffusion rate, reaction rate, and reaction ratio, respectively, and summarize their values for a few common entangling gates in~\autoref{tab:DR}. 

\begin{table}
\begin{center}
\begin{tabular}{|c|c|c|c|c|}
\hline
 & CZ & Haar & fSim & fSim$^*$\\
\hline
diffusion rate $D$ & 2/3 & 4/5 & 1 & 1 \\
\hline
reaction rate $R$ & 2/3 & 3/5 & $1/3+\sqrt3/6$ & 2/3\\
\hline
\end{tabular}
\end{center}
\caption{Values of the diffusion rate $D$ and the reaction rate $R$ for a few different entangling gates.}
\label{tab:DR}
\end{table}
 
We note that each column of $T$ is normalized to unity, implying that the matrix indeed describes a transfer matrix for a stochastic process. For example, the entry in the 2nd column and the 4th row specifies the probability of the two sites going from $I\Omega$ to $\Omega\Omega$---this is an example of the ``reaction'' process. Other transitions are given in the following, with probabilities written on top of the arrows,
\begin{eqnarray}
\nonumber \text{vacuum:} && \quad II\xrightarrow{1} II\\
\nonumber \text{stay:} &&  \quad I\Omega\xrightarrow{1-D} I\Omega , \Omega I\xrightarrow{1-D} \Omega I \\
\nonumber \text{move:} && \quad  I\Omega\xrightarrow{D-R} \Omega I ,\\
 \Omega I\xrightarrow{D-R} I\Omega  ,\nonumber
&&\quad\Omega\Omega\xrightarrow{1-2R/\eta} \Omega\Omega\\
\nonumber \text{duplication:} && \quad I\Omega,\Omega I\xrightarrow{R} \Omega\Omega \\
\nonumber \text{recombination:} && \quad  \Omega\Omega\xrightarrow{R/\eta} I\Omega,\Omega I .
\end{eqnarray}
The third process (move) is the ``diffusion'' (i.e., random walk), while the last two (duplication and recombination) are reaction processes, i.e., particle creation and annihilation. Notice that a particle cannot be created from the vacuum or annihilated into the vacuum without interacting with another particle.

\emph{Boundary conditions at the initial state and at the final time.}---
Next, we turn to the boundaries of our tensor network diagram.
First, we contract the input state $\rho_0 \otimes \rho_0$, denoted as $\kket{0^{\otimes 4}}^{\otimes N}$, with tensors associated with all $2^N$ possible particle configurations.
This leads to the vector ${\mathbf u}^{\otimes N}$, where
\begin{equation}\label{eq:ini_vec}
{\mathbf u}=
\begin{pmatrix}
\langle\innerp{I}{0^{\otimes4}}\rangle \\ \langle\innerp{\Omega}{0^{\otimes4}}\rangle
\end{pmatrix}
=
\begin{pmatrix}
1/2 \\ 1/2
\end{pmatrix},
\end{equation}
which follows directly from~\autoref{eq:I_Omega}.
This vector describes the initial distribution of particles: every site is occupied by a particle or remains empty with probabilities 1/2.

Similarly, at the final layer, we contract the $\mathcal{B}_b$ observables with tensors associated with all $2^N$ possible particle configurations, leading to dual vectors
$\mathbf v_\text{\tiny XEB}^{\top\otimes N}$ and $\mathbf v_F^{\top\otimes N}$ for the XEB and the fidelity, respectively, where
\begin{eqnarray}\label{eq:ini_vec 2}
\nonumber \mathbf{v}_\text{\tiny XEB}^\top&=&\begin{pmatrix}
\langle\innerp{I}{\beta_\textrm{XEB}}\rangle & \frac{ \langle\innerp{\Omega}{\beta_\textrm{XEB}}\rangle}{3}
\end{pmatrix} =  
\begin{pmatrix}
2 & 2/3
\end{pmatrix},
\\
\mathbf{v}_F^\top&=&\begin{pmatrix}
\langle\innerp{I}{\beta_F}\rangle & \frac{ \langle\innerp{\Omega}{\beta_F}\rangle}{3}
\end{pmatrix} =  
\begin{pmatrix}
1 & 1
\end{pmatrix}.
\end{eqnarray}
and
\begin{align}
\kket{\beta_\textrm{XEB}}&=2 \sum_{i\in\{0,1\}}\ket i\ket{i}\ket{i}\ket i,\label{eq:xeb_bc}\\
\kket{\beta_F}&= \sum_{i,i^\prime\in\{0,1\}}\ket i\ket{i^\prime}\ket{i^\prime}\ket i,\label{eq:f_bc}
\end{align}
are the single-site versions of $\mathcal{B}_b$, i.e., $\mathcal{B}_b = \beta_b^{\otimes N}$.
We find that $\mathbf{v}_\text{\tiny XEB}$ is distinguished from $\mathbf{v}_F$ by unequal weights between $I$ and $\Omega$ (by a factor of $1/3$) aside from the global normalization factor $2$.
This allows an intuitive explanation: as previously mentioned,  $\bbra{\Omega}$ represents total polarization correlation between two copies of quantum states, but XEB  depends only on correlations measured in the computational basis constituting $1/3$ of the total on average.

Combining the results from bulk transfer matrices, and initial and final boundary conditions, we obtain the expression for the ensemble-averaged XEB and fidelity:
\begin{align}
\label{eq:chi_U_avg}
\chi_\text{av} + 1&\equiv   \mathbb{E}_u [\chi_U]+1 =
    \mathbf v_\text{\tiny XEB}^{\top\otimes N}
    \left(\prod_{j=1}^d \mathcal{T}_j \right)
    \mathbf{u}^{\otimes N}\\
\label{eq:F_U_avg}
F_\text{av} &\equiv
    \mathbb{E}_u[F_U] =
    \mathbf v_F^{\top\otimes N}
    \left(\prod_{j=1}^d \mathcal{T}_j\right)
    \mathbf{u}^{\otimes N},
\end{align}
where $\mathcal{T}_j$ is the transfer matrix for $N$ particles at time-step $j$.

\emph{XEB and fidelity as statistics of a particle distribution.}---
Our results in Eqs.~\eqref{eq:chi_U_avg} and~\eqref{eq:F_U_avg} allow for an intuitive understanding of the XEB and the fidelity in terms of particle distributions in the diffusion-reaction model.
We note that these two quantities differ only by the boundary condition at the final time $t=d$, as defined in Eqs.~\eqref{eq:xeb_bc}-\eqref{eq:f_bc}.
Hence, both the XEB and the fidelity are fully determined by the probability distribution of particle configurations, $\mathbf p$, obtained by evolving the initial uniform distribution $\mathbf{u}^{\otimes N}$ for $d$ time steps:
\begin{equation}\label{eq:Cpath}
{\mathbf p}\equiv\mathcal T_d\cdots\mathcal T_2\mathcal T_1{\mathbf u}^{\otimes N}.
\end{equation}
From this distribution, the XEB and the fidelity can be evaluated by simply contracting either $\mathbf v^{\top\otimes N}_{\rm XEB}$ or $\mathbf v^{\top\otimes N}_{F}$, which corresponds to computing certain statistics of the particle distribution.
For instance, all entries in $\mathbf v_F^{\top\otimes N}$ are unities, implying that $\mathbf v^{\top\otimes N}_F \mathbf{p}$ is equal to the summation over all probabilities:
\begin{align}
    F_\text{av}=\mathbf v^{\top\otimes N}_F \mathbf{p} = \mathbb E_\mathbf{p} [1],\label{eq:XEBFideal01}
\end{align}
where $\mathbb E_\mathbf{p}[\cdot]$ denotes the averaging over the distribution $\mathbf{p}$.
In the absence of any noise or omitted gates, the transfer matrix in~\autoref{eq:transfer_general} preserves the total probability, leading to $F_\text{av} = \mathbb E_\mathbf{p} [1] = 1$.
This result is trivially expected in the quantum circuit picture --- in the absence of any noise or omitted gates, the fidelity must always be unity.
We will soon see how this picture is modified when we introduce noise or omit  gates.

Similarly, the average XEB is
\begin{eqnarray}
\nonumber \chi_\text{av}+1&=&{\mathbf v^{\top\otimes N}_\text{\tiny XEB}}{\mathbf p}=2^N\mathbb E_\mathbf{p}\left[\frac{1}{3^{\#\Omega\text{ in the last layer}}}\right],\label{eq:XEBFideal02}
\end{eqnarray}
where $\#\Omega$ denotes the total number of particles.

\emph{Effects of noise or omitted gates.}---
When unitary dynamics is interspersed by noise channels ($\mathcal{M}_U^{({\rm noisy})}$) or when some of the gates are omitted in our classical algorithms ($\mathcal{M}_U^{({\rm algo})}$), only the bulk part of the tensor network changes, leading to a modified transfer matrix.
For a noisy circuit, the new transfer matrix is
\begin{equation}
\label{eq:transfer_matrix_noisy}
    T^{(G)}_\epsilon= (I_\epsilon\otimes I_\epsilon)\, T^{(G)} \quad\text{with }I_{\epsilon}=
    \begin{pmatrix}
    1 & 0\\
    0 & 1-c\epsilon
    \end{pmatrix},
\end{equation}
where $c$ is a constant depending on the type of noise. For example, $c=4/3$ for the depolarizing noise $\mathcal N_\epsilon(\rho)=(1-\epsilon)\rho+\epsilon/3\sum_{\mu}\sigma^{\mu}\rho\sigma^\mu$, and $c=2/3$ for the amplitude damping noise. 

Unlike the transfer matrix in the ideal case, the noisy-circuit transfer matrix in~\autoref{eq:transfer_matrix_noisy} no longer describes a stochastic process. 
That is, the sum of each column in $T^{(G)}_\epsilon$ is less than unity, implying that the probability is not conserved.
Thus, the effect of noise gives rise to the ``loss of probability'' in our diffusion-reaction model. In general, this leads to an unnormalized final distribution $\mathbf{p}$ and reduced average fidelity $F_\text{av} < 1$.
Crucially, the loss of probability occurs only when a particle ($\Omega$) is present at a given space-time point. The diagonal entries in $I_\epsilon$ imply that the probability associated with a given particle configuration will be damped by a factor $(1-c\epsilon)^{\#\Omega}$ at every time step. 
Therefore, we expect an interesting interplay between the diffusion-reaction dynamics of particles and the probability loss.

For our classical algorithm, it is the omission of gates that modifies the transfer matrix.
In this case, only local transfer matrices associated with an omitted gate are affected
\begin{equation}\label{eq:TP_projector}
    T^{(G)}\rightarrow  (P_I\otimes P_I)\cdot T^{(G)}=P_I\otimes P_I
    \quad\text{with }P_I=
    \begin{pmatrix}
    1 & 0\\
    0 & 0
    \end{pmatrix}.
\end{equation}
Similarly to the noisy circuit case, the omission of gates also causes the loss of probabilities; thus, the fidelity becomes smaller than 1.
More specifically,~\autoref{eq:TP_projector} implies that, at any given time, the probability weights associated with particle configurations containing at least one particle at the site of omitted gates must vanish; such configurations do not contribute to the average XEB or fidelity.
Thus, the only non-vanishing contributions arise from diffusion-reaction processes in which not a single particle ever appears at the sites of omitted gates throughout the entire dynamics.
The average fidelity will then be the total probability of such diffusion-reaction processes, and the average XEB is determined by the resultant unnormalized distribution $\mathbf{p}$.

We remark that the deterministic loss of probability at the positions of omitted gates leads to the factorization of the transfer matrix in~\autoref{eq:TP_projector} (as a product of two projectors).
Due to this factorization, $\mathbf{p}$ for the whole system also factorizes into independent probability vectors for two isolated subsystems.
This feature allows the numerical calculation of the average XEB for system sizes up to the quantum advantage regime (60 qubits, depth 24).

\vspace{2mm}
\noindent\textbf{Dynamics of the XEB and fidelity.} 
Having introduced the mapping of random unitary circuits to the diffusion-reaction model in the previous section, we now leverage this formalism to understand the quantitative behavior of the XEB and the fidelity under various conditions. In particular, we explain the key concepts used to obtain results presented in~\autoref{sec:quantum experimental regime} and~\autoref{sec:quantum complexity results}.

\emph{Ideal circuit. ---}
In the absence of noise and omitted gates, the fidelity remains equal to unity trivially, due to the conservation of the total probability. 
It is non-trivial, however, to see how the average XEB approaches unity in the limit of deep quantum circuits~\cite{arute2019quantum}, which we now explain in terms of diffusion-reaction dynamics.
Both the XEB and the fidelity, at late times (large depths), are determined by the output vector $\bf p$. 
For the transfer matrix in~\autoref{eq:transfer_general}, this distribution converges to a fixed point in the large-depth limit.
In the current case, there are two fixed points for local transfer matrices, ${\bf u}_1=(1/4,3/4)$ and ${\bf u}_2=(1,0)$.
The former represents a nontrivial steady-state solution in which the total normalization, and three different types of correlations (along $x$, $y$, and $z$ directions) are equally distributed, while the latter represents a trivial solution where two copies are both in completely mixed states; hence, no correlation is generated during dynamics.
It can be shown that the global stationary distribution is given as a mixture of ${\bf u}_1^{\otimes N}$ and ${\bf u}_2^{\otimes N}$, whose ratio is determined by the initial condition ${\bf u}^{\otimes N}$:
\begin{align}\label{eq:limiting_distribution}
    \lim_{d\rightarrow \infty} {\bf p} =
    (1-2^{-N}){\bf u}_1^{\otimes N} + 2^{-N} {\bf u}_2^{\otimes N}+O(4^{-N})
\end{align}
The dominant contribution originates from the non-trivial equilibrium configuration $\mathbf u_1$,
whereas the ${\bf u}_2$ term constitutes a small correction.

The nontrivial term describes the homogeneous distribution of  particles with the density 3/4, as shown in~\autoref{fig:stat_population_last}(a), contributing to the XEB
$$
\mathbf{v}_\text{\tiny XEB}^\top{\bf u}_1=2\left(\frac{1}{4}\cdot1+\frac{3}{4}\cdot\frac{1}{3}\right)=1.
$$
The trivial term gives $\mathbf{v}_\text{\tiny XEB}^\top{\bf u}_2=2$ per site.
Combined together with appropriate coefficients, we obtain the average XEB $\chi_\text{av}=(1-2^{-N})\approx 1$ as expected.
We note that the net contribution from the trivial solution (${\bf u}_2$ term) is always $+1$, which exactly cancels the constant term $-1$ in the definition of the XEB.

\emph{Noisy circuit.}---
If noise is introduced to the system, the total probability is no longer conserved, and ${\bf u}_1^{\otimes N}$ does not form a stationary solution. However, we can still predict the behavior of the average XEB and fidelity using our model.
We distinguish two regimes: (a) the weak noise limit where the total probability loss rate $N\epsilon$ is much smaller than the inverse equilibration time $\tau_{\rm eq}^{-1}$ of the particle distribution, $N\epsilon \ll \tau_{\rm eq}^{-1}$, and (b) strong noise limit $N\epsilon \gg \tau_{\rm eq}^{-1}$.
In terms of quantum circuit dynamics, these conditions correspond to the comparison of the total error rate to the scrambling time.

\begin{figure}[h]
\includegraphics[width=7cm]{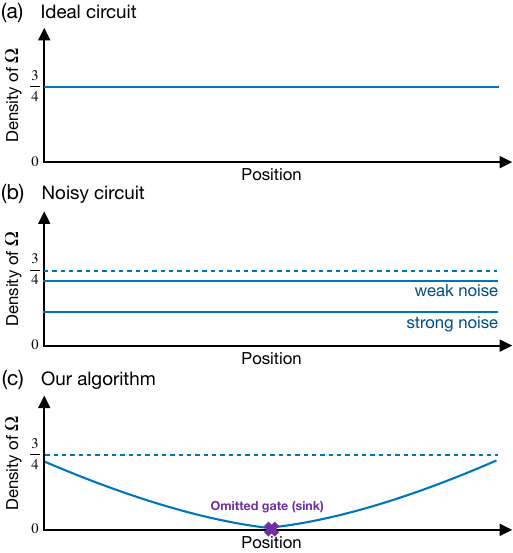}
\centering
\caption{Sketch of the particle population distribution at the last layer.
The vertical axis is the density of particles ($\Omega$) at the final layer normalized by the total probability, and the horizontal axis is the position of sites.
(a) Ideal circuits. 
(b) Noisy circuits.
The density is decreased relatively to the ideal case.
The discrepancy becomes larger for larger noise rates.
(c) Our spoofing algorithm. Close to the position of an omitted gate, or a ``sink'' (purple cross),  the density of particles is suppressed.
}
\label{fig:stat_population_last}
\end{figure}

In the limit of weak noise, the steady state configuration must stay close to that of the equilibrium solution, because the system relaxes quickly before any substantial probability loss occurs.
Thus, the output probability vector at the final time is not severely affected by the probability loss during preceding times, other than a global re-scaling factor.
This leads to the (un-normalized) equilibrium state $\widetilde{\mathbf p}=\widetilde{\mathbf u}_1^{\otimes N}$, where 
\begin{equation}
\label{eq:weak_noise_solution_u}
\widetilde{\mathbf u}_1\approx \alpha\begin{pmatrix}1/4\\3/4 \beta \end{pmatrix}.
\end{equation}
Here $\alpha$ is the re-scaling factor that accounts for the probability loss (per site) during the diffusion-reaction dynamics, and it generally decreases exponentially with depth.
The parameter $\beta $ quantifies the deviation of $\widetilde{\mathbf{u}}_1$ from its equilibrium shape, and generally $\beta \approx 1$ in the weak-noise limit.
The precise value of $\beta$ depends on the strength of noise and the equilibration time.
As long as $\beta \approx 1$, $\widetilde{\mathbf p}$ is a simple re-scaling of the ideal-circuit distribution, and XEB approximates the fidelity well; both quantities are suppressed by the factor of $\alpha^N$.

In the limit of relatively strong noise (slow equilibration), the particle configuration cannot relax to its equilibrium before it is significantly affected by the probability loss.
In this limit, the deviation of $\widetilde{\mathbf{u}}_1$ from the equilibrium becomes significant, and  $\beta < 1$ decreases with the increasing strength of noise.
This is because, generically, the probability loss associated with $\Omega$ particles during dynamics results in a reduced density of particles at the last layer~[see~\autoref{fig:stat_population_last}(b)].
The reduced density of particles implies that the XEB is larger than the fidelity because the boundary vector $\mathbf{v}_{\rm XEB}$ has a higher weight for the vacuum than for the particle state, whereas $\mathbf{v}_{F}$ has the same weight for both states.
Hence, the larger the noise rate, the greater the deviation of the XEB from the fidelity.
Eq.~\eqref{eq:weak_noise_solution_u} no longer holds for greater noise strengths\footnote{
In this work, we focus on the experimentally relevant regime, where the strength of noise and the depth of the circuit are not too large, such that the fidelity remains sufficiently greater than $2^{-N}$. When the fidelity is close to $2^{-N}$, the discussion in this paragraph no longer holds, as the contribution from subdominant terms in Eq.(\ref{eq:limiting_distribution}) becomes significant.}.

\begin{figure}[h]
\includegraphics[width=7cm]{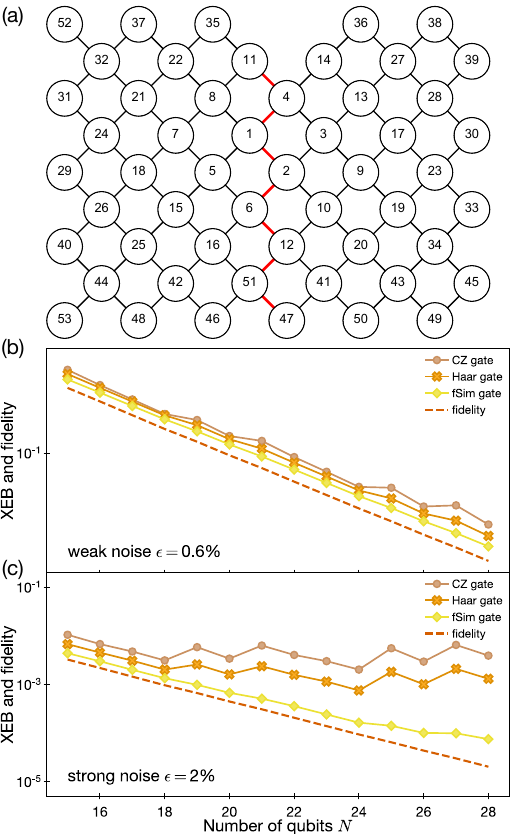}
\centering
\caption{XEB vs. fidelity in noisy circuits. The XEB always overestimates the fidelity, but the deviation depends on the gate ensemble and the strength of noise. (a) For this calculation, we use the original qubit ordering from Fig.~S25 in Ref.~\cite{arute2019quantum} (see also Fig.\ref{fig:intro_partition}). 
(b) Weak noise regime ($\epsilon=0.6$\%). The XEB approximates the fidelity well, and the fidelity values for all gate ensembles are almost the same. (c) Strong noise regime ($\epsilon=2$\%). The quality of the XEB-to-fidelity approximation strongly depends on the choice of the gate ensemble. Among the three ensembles considered here, the fSim ensemble gives the best result.}
\label{fig:stat_noise_fid_vs_xeb}
\end{figure}

\emph{Spoofing algorithm.}--- Our algorithm is designed to leverage the discrepancy between the XEB and the fidelity. 
In contrast to homogeneous errors spread over the bulk of the circuit, the errors in our algorithm are highly inhomogeneous and localized --- they appear only at specific positions where we omit gates. 
This inhomogeneity leads to a particle distribution that is far from its equilibrium counterpart.
More specifically, the position of an omitted gate behaves like a ``sink'' of probabilities --- any configurations containing particles at sink sites, at any time, will acquire vanishing contribution to $\widetilde{\mathbf p}$.
Therefore, in any non-vanishing contribution to $\widetilde{\mathbf{p}}$, the relative density of particles with respect to the density of vacuum states  is substantially lowered near the sink [see~\autoref{fig:stat_population_last}(c)].
This large imbalance (relative to the equilibrium) leads to the large XEB-to-fidelity ratio.
Thus, given the same value of fidelity, which is controlled by the total number of omitted gates, one can achieve high XEB values because vacuum state $I$ has a larger weight in the XEB than in the fidelity.

The non-equilibrium, spatially inhomogeneous dynamics of particles also leads to a distinct scaling behavior. In our algorithm, the average XEB value increases with the system size $N$, when the number of omitted gates is fixed.
This can be intuitively explained: the more space for particles to diffuse to, the less likely it is for them to hit sink sites, leading to an effectively smaller particle loss rate and reduced imbalance in the particle density, relative to the equilibrium. 

Here, we make two remarks.
First, while our analysis remained qualitative and focused on two extreme cases of error models, i.e., one with completely homogeneous noise and another fully localized errors, we emphasize that our intuitive understanding can be straightforwardly generalized to arbitrary circuit geometry with arbitrary inhomogeneous error models in both space and time.
In such cases, one can directly estimate the distribution $\widetilde{\mathbf p}$ by using conventional approaches such as Monte Carlo methods.
Second, we comment that, intuitively, larger diffusion and reaction rates imply shorter time required to reach the equilibrium distribution. In other words, given a circuit architecture, the XEB will be on average a better proxy for the fidelity in circuits consisting of faster scrambling (entangling) gates, with larger $R$ and $D$.

\begin{savequote}[75mm]
There is no scientific study more vital to man than the study of his own brain. Our entire view of the universe depends on it.
\qauthor{Francis H.C.~Crick}
\end{savequote}

\chapter[Preliminary Introduction to Neuroscience]{Preliminary Introduction to\\Neuroscience}\label{ch:prelim neuro}

Neuroscience is the study of the brain and is intrinsically interdisciplinary. While any introductory text on neuroscience will certainly miss some perspectives, the learning journey resembles playing a jigsaw puzzle, where the more one studies, the bigger and more comprehensive pictures will emerge. The goal of this chapter is to provide readers without a neuroscience background a concise crash course on some basics, in order to build up relevant knowledge for appreciating the big picture this thesis aims to address. Furthermore, we hope to stimulate the reader's interest in continuing to learn more about neuroscience.

\section{Different levels and scales in the brain}
The brain is a multi-level information processing system: from nerve cells (a.k.a., neurons) using chemical ions to transmit electrical signals, to a network of neurons, and up to the cognitive realm. In different levels and scales, there are different players as well as different phenomenology. In this section, we will quickly go over some basic facts in neuroscience from the bottom to the top for the reader to prepare some mental picture on the landscape of neuroscience.

\subsection{Cellular and molecular level}
An adult human brain has about 100 billion neurons (also known as nerve cells, or simply neurons), along with 2 to 10 times more glial cells that play a supporting role. However, since neurons are the primary carriers of neural signals, most research focuses on them. Here, we will briefly introduce the basic operation of neurons at the cellular and molecular level from various perspectives.

\medskip \noindent \textbf{Basic structure of a neuron.}
There are thousands of types of neurons, which may vary slightly among different levels of animals. From the most abstract perspective, a neuron consists of four main regions: (i) soma (i.e., cell body), (ii) dendrites, (iii) axon, and (iv) presynaptic terminals.

As the name suggests, the cell body, soma, is responsible for maintaining cellular metabolism and function, and stores genetic information. Dendrites and axons extend from the cell body, playing the roles of receiving and transmitting signals, respectively. Finally, the end of the axon branches into several presynaptic terminals to connect with various receiving neurons.

\begin{figure}[h]
    \centering
    \includegraphics[width=8cm]{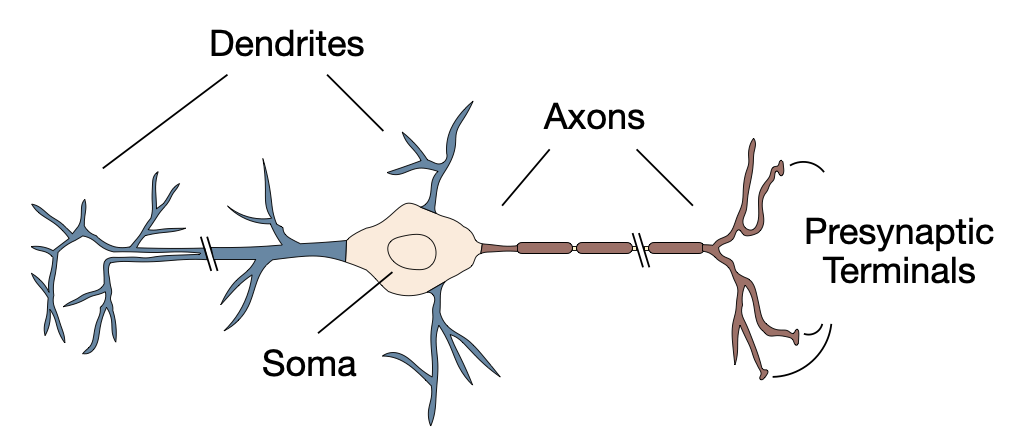}
    \caption{The four basic structures in a neuron.}
    \label{fig:neuron}
\end{figure}

Real neurons have an incredibly diverse range of types and unexpected appearances and behaviors. To this day, countless cellular and molecular neuroscientists are working hard to create a dictionary of different neuronal cell types.

\medskip \noindent \textbf{How a single neuron works.}
The neuronal cell membrane divides the internal and external worlds of a cell, preventing ions from passing through arbitrarily. In particular, four common ions: sodium ($\text{Na}^+$), potassium ($\text{K}^+$), chloride ($\text{Cl}^-$), and calcium ($\text{Ca}^{2+}$), create a difference in electrical potential between the inside and outside of a neuron's cell body based on their relative concentrations. This is referred to as the \textit{membrane potential}.

The cell membrane doesn't completely block ions from passing through it. Many ion channels composed of complex proteins determine whether to allow specific ions to move in or out based on their concentration differences or specific signals. These ion channels play a crucial regulatory role, enabling neurons to sensitively control very subtle changes in electrical potential and intricately alter their membrane potentials.

Under the regulation of ion channels, a neuron's membrane potential typically rests at a \textit{resting potential} of around -70mV. This is maintained by specialized channels, such as the sodium-potassium pump ($\text{Na}^+$ $\text{K}^+$ pump), which use energy (ATP) to maintain the ion concentration difference. If other neurons transmit many signals in a short period, raising the membrane potential to -55mV, an \textit{action potential} is triggered. In a brief moment, the neuron's membrane potential rises (depolarization) and then falls (polarization), entering a temporary refractory period. The dramatic events of an action potential also cause voltage fluctuations in the axon, rapidly transmitting the message (the occurrence of an action potential) from the cell body to the axonal terminals, preparing to send it to other neurons. Based on its appearance, an action potential is often referred to as an \textit{electrical spike}.

\begin{figure}[h]
    \centering
    \includegraphics[width=13cm]{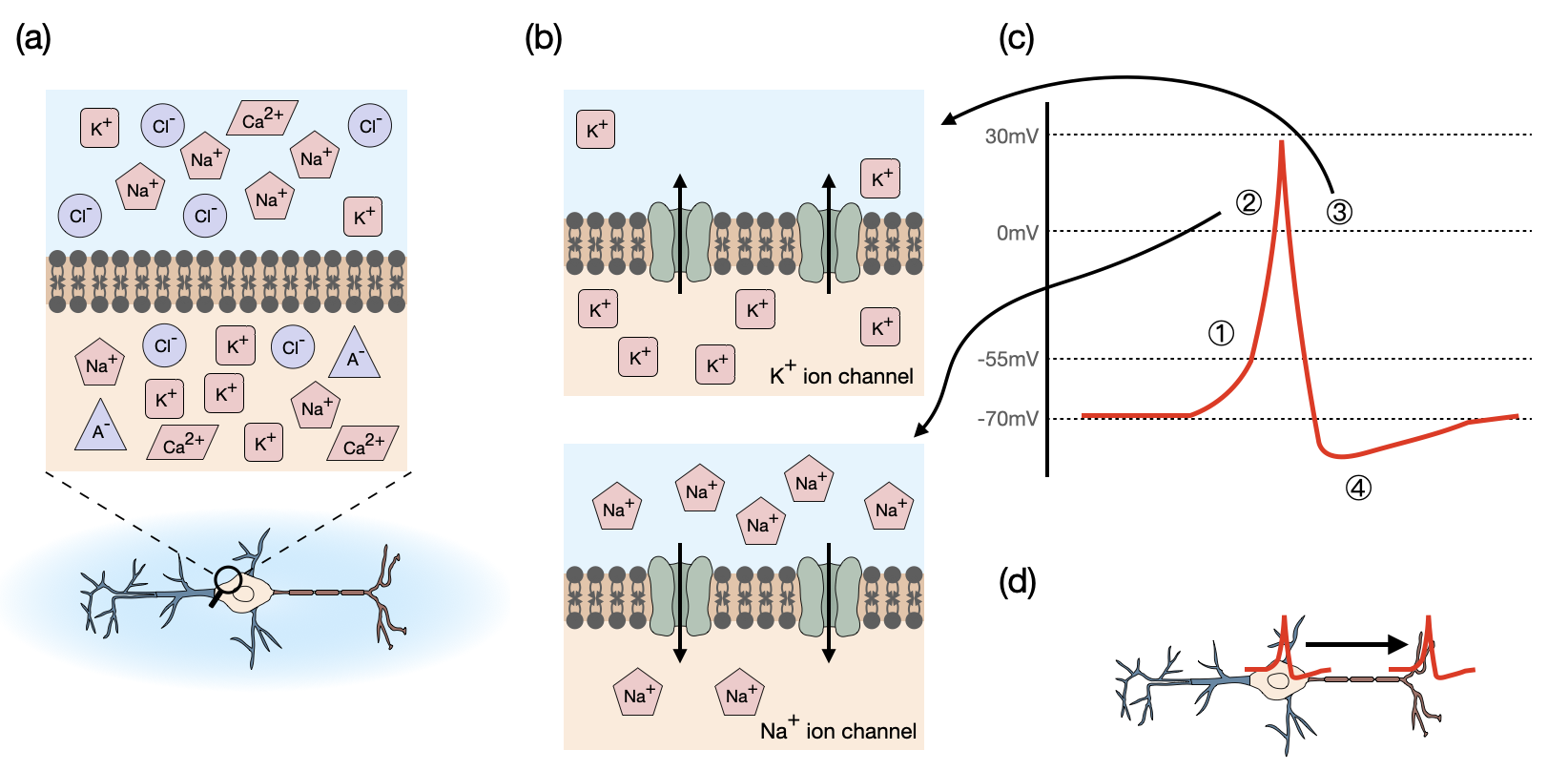}
    \caption{How do neurons work. (a) The cell membrane separates various ions inside and outside the cell, resulting in a difference in concentration. Potassium ions have a higher concentration inside a neuron, while sodium, chloride, and calcium ions have a higher concentration outside a neuron. Additionally, there are some organic anions within the cell. (b) There are ion channels on the cell membrane, and each ion has its own specific channel which allows the ions to diffuse in and out of the neurons based on the concentration differences. The figure above shows the potassium ion channel, and the one below shows the sodium ion channel. Moreover, there are special pumps (not shown in the figure) that use energy to move certain ions against the direction of their concentration gradient. (c) When the membrane potential reaches the threshold value (approximately -55mV), the sodium ion channel will open, allowing a large amount of sodium ions to rush into the cell body, causing a rapid rise in membrane potential. When the membrane potential reaches around 30mV, the sodium ion channels close and the potassium ion channels open. Consequently, the massive influx of potassium ions causes the membrane potential to decrease, sometimes even slightly lower than the resting potential, and enters a refractory period, temporarily making it more difficult to generate an action potential. (d) When an action potential occurs in the cell body, it is transmitted to the end through the axon, preparing to relay the signal to other neurons.}
    \label{fig:action potential}
\end{figure}

\medskip \noindent \textbf{How neurons communicate with each other.}
Dendrites and axons of neurons are akin to their hands. However, when two neurons ``hold hands'', one must use its dendrite while the other uses its axon. The place where they hold their hands is called a \textit{synapse}. When the presynaptic neuron (the one using the axon) generates an action potential and carries the signal to the presynaptic terminal as previously described, the postsynaptic neuron's potential will change in different ways depending on the type of synapse (chemical or electrical).

\begin{figure}[h]
    \centering
    \includegraphics[width=13cm]{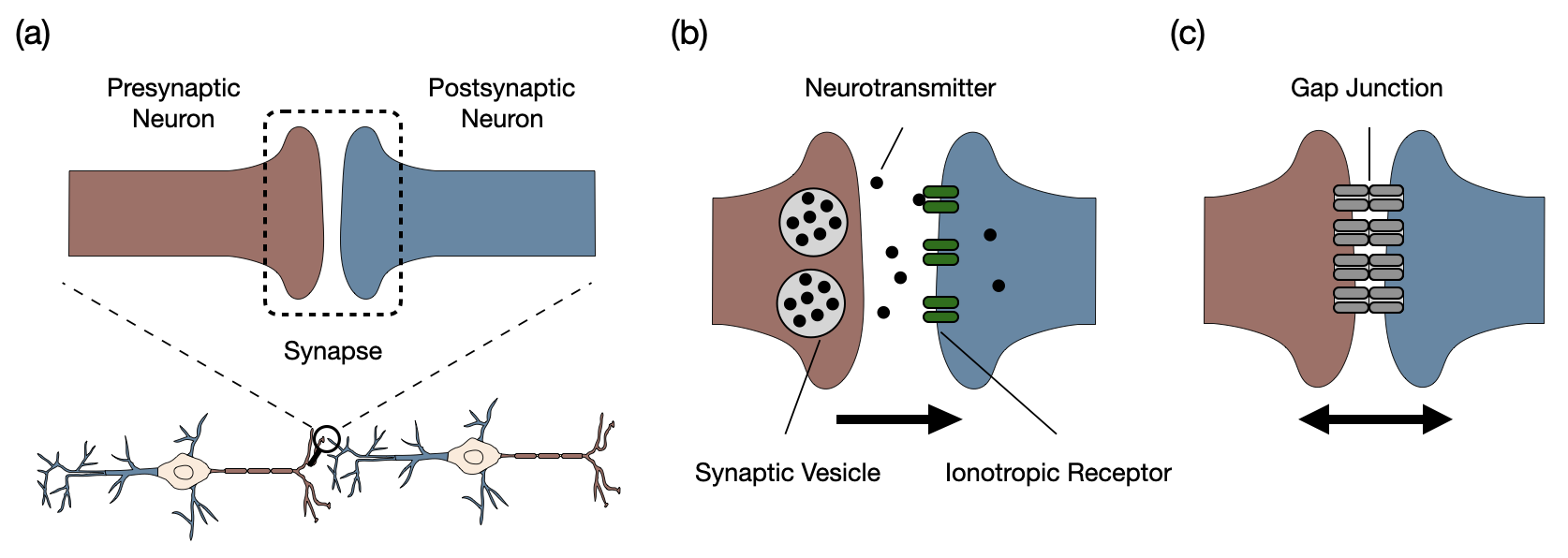}
    \caption{Synapses. (a) A synapse refers to the junction between a presynaptic neuron and a postsynaptic neuron. (b) Chemical synapses transmit signals through neurotransmitters. After the presynaptic neuron generates an action potential, the synaptic vesicles in the presynaptic terminals release their stored neurotransmitters. These neurotransmitters then diffuse to the (ionotropic) receptors on the dendrites of the postsynaptic neuron. (c) Electrical synapses, on the other hand, directly connect two neurons through gap junctions. Generally, electrical synapses transmit signals faster, more delicately, and can be bidirectional. Chemical synapses are slightly slower, but their signal effects can be more extensive, such as enhancing the signal.}
    \label{fig:synapse}
\end{figure}

Similar to how a handshake may become firmer as people become more familiar with each other, the magnitude of the potential change caused by synapses can also vary depending on factors such as the frequency of interaction between the two neurons. This property is referred to as synaptic plasticity. Many neuroscientists believe that this characteristic is an essential mechanism used by the brain during learning. See~\autoref{sec:synaptic plasticity} for more on synaptic plasticity and learning.

\subsection{Circuit level}
When many neurons connect to one another, they jointly form a circuit. Such a network is not arbitrary, but often has a very structured organization in the brain. By examining experimental data from different organisms, neuroscientists abstract out \textit{circuit motifs} for higher-level reasoning.

Four common circuit motifs are: (i) feedforward network, which contains layers of neurons connected in a feedforward manner; (ii) divergent network, which sends information (i.e., neuronal activities) from a few neurons to a large number of neurons that are potentially far apart; (iii) convergent network, which aggregates information from multiple neurons to a few neurons; (iv) recurrent network, which contains neurons that interconnect with each other (not necessarily all-to-all). See~\autoref{fig:circuit motifs} for a pictorial illustration.

\begin{figure}[h]
    \centering
    \includegraphics[width=12cm]{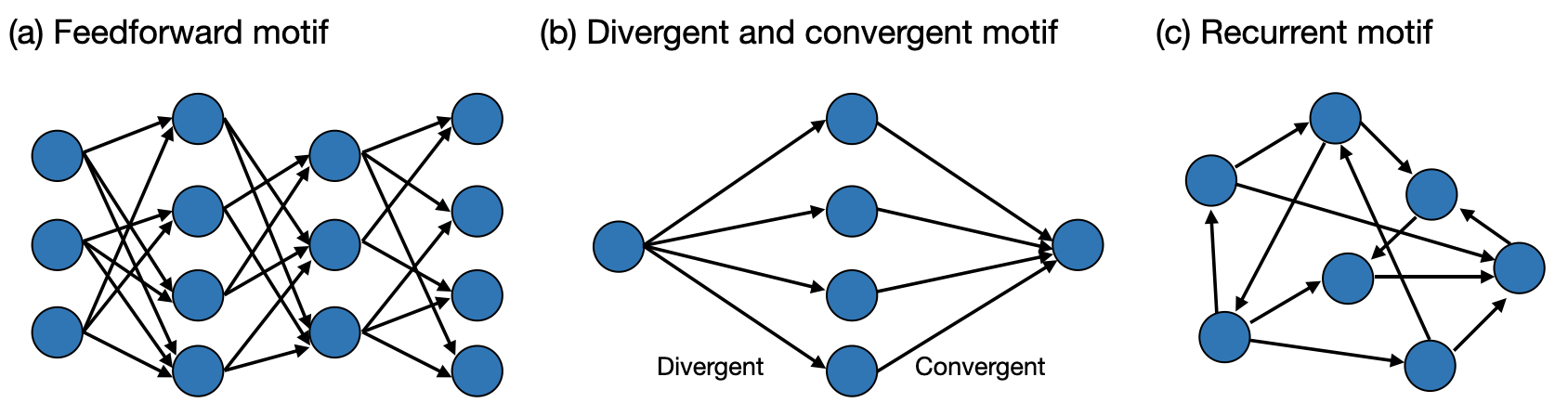}
    \caption{Common circuit motifs.}
    \label{fig:circuit motifs}
\end{figure}

In biology, there are always exceptions, and indeed, any real biological neural network may not precisely fit into the above-mentioned categories. In the following, let us take a look at the \textit{cortical microcircuit} as an example to get a taste of how complex the picture can be.

\medskip \noindent \textbf{An example: cortical microcircuit.}
The \textit{cerebrum} is undoubtedly one of the most important brain areas. It is responsible for various sensory functions, motor sequencing, language, decision-making, emotion, learning, and more. The cerebral cortex, or sometimes abbreviated as \textit{cortex}, consists of the gray matter\footnote{Gray matter contains the soma of a large number of neurons, as opposed to white matter, which contains the axons of neurons.} that covers the cerebrum. It appears folded, resulting in a vast surface area and accounting for approximately half of the total human brain weight. The cortex typically exhibits a distinct six-layer structure and contains hundreds of millions of cortical columns. Each cortical column consists of around 10,000 neurons spanning the six layers of the cerebral cortex and has a large number of internal connections.

Now that we know we can roughly think of the cortex as a massive and important brain area organized in an extremely parallel (i.e., cortical columns) fashion. However, the neurons in the cortex come from hundreds to thousands of different cell types and have recurrent connectivity within their cortical column (both inter- and intra-layer), with other columns, as well as with other brain regions (e.g., thalamus). Fortunately, a neuron in the brain is usually either excitatory or inhibitory\footnote{An excitatory (resp. inhibitory) neuron either only increases (resp. decreases) the activity of the other neurons it connects to.}, allowing us to imagine the cortex as containing a group of excitatory and a group of inhibitory neurons in each layer of each cortical column. Finally, based on anatomical data, we can further refine our understanding of the \textit{cortical microcircuit} as described in~\autoref{fig:cortical microcircuit}.

\begin{figure}
    \centering
    \includegraphics[width=4.5cm]{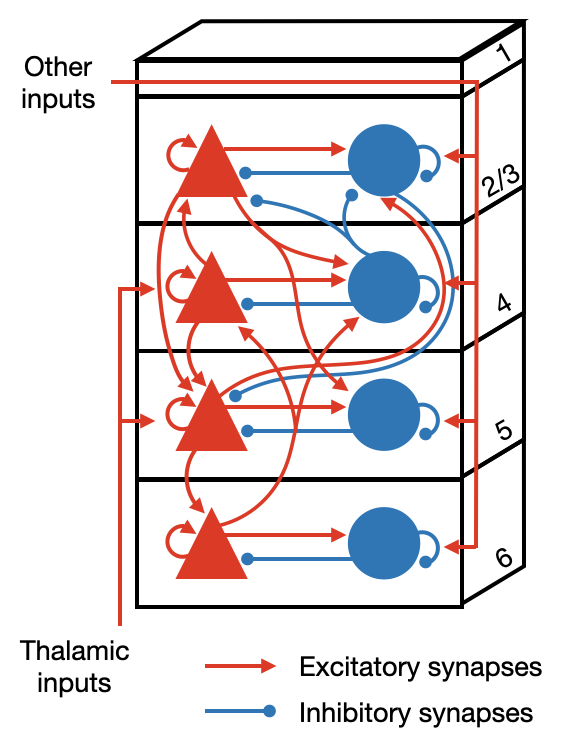}
    \caption{Cortical microcircuit. Each layer of each cortical column can be abstracted as having a group of excitatory neurons and a group of inhibitory neurons, the connection relationship of which is described as a cortical microcircuit.}
    \label{fig:cortical microcircuit}
\end{figure}

\subsection{System level}
When a single network or multiple networks of neurons together correspond to a specific functionality or exhibit a clear anatomical organization, they are often referred to as a system. Systems neuroscience aims to study the functions associated with various structures and systems within the brain, as well as their relationships with perception and behavioral performance. Of course, the biological world is not as clear-cut in its division of labor as the artificial world. Thus, individual subsystems in the brain may not be solely responsible for a single task, and even the boundaries between systems may not be clearly defined.

While different neuroscientists may have their preferred ways to introduce systems neuroscience, we will present an oversimplified classification of brain systems into three functional types: (i) sensory systems, (ii) motor systems, and (iii) integration, learning, and memory  systems, to provide a glimpse into systems neuroscience.

\begin{figure}[h]
    \centering
    \includegraphics[width=12cm]{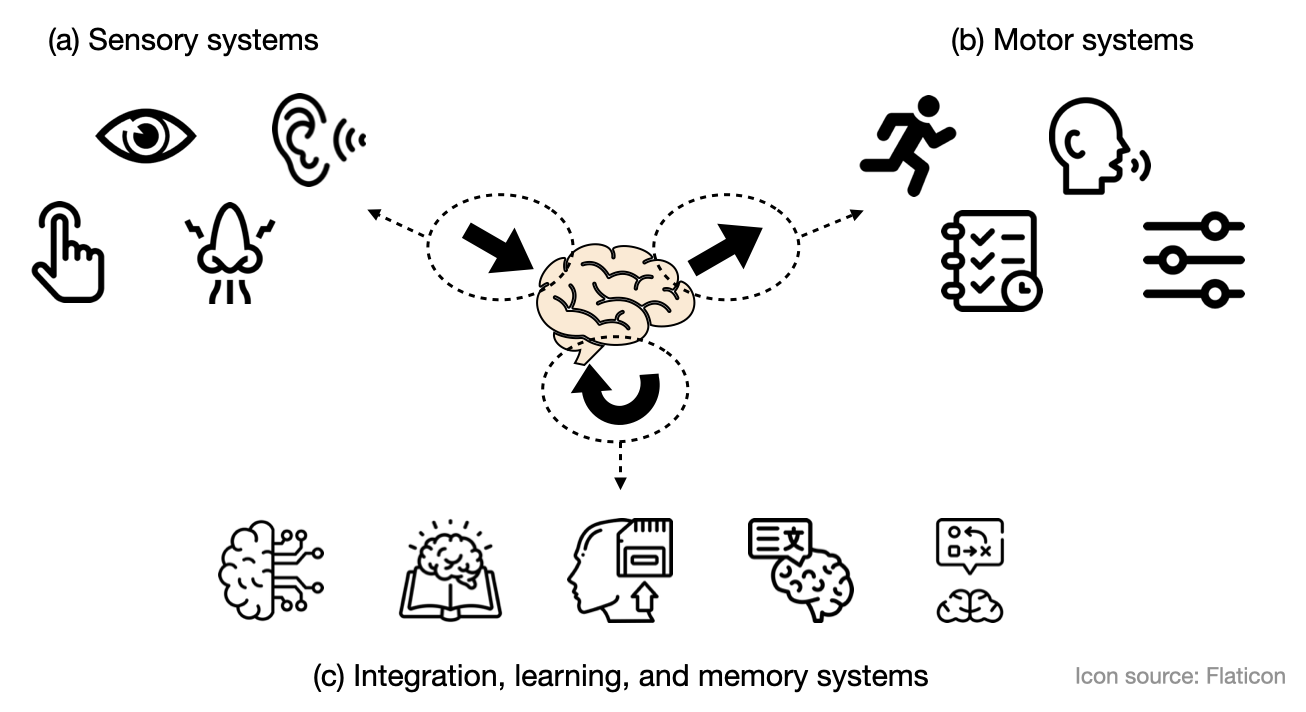}
    \caption{Common types of system in neuroscience.}
    \label{fig:neural systems}
\end{figure}

\medskip \noindent \textbf{Sensory systems.}
The brain acquires information from the outside world through sensory systems. Common sensory systems include the visual system, auditory system, tactile system, and olfactory system, among others.

\medskip \noindent \textbf{Motor systems.}
The brain's signals generate responses to the external world through the motor system. Common motor systems and functions include muscle control, motor planning, and so on.

\medskip \noindent \textbf{Integration, learning, and memory systems.}
Between the sensory and motor systems, the brain undergoes many signal transformations and processing. How to classify these is still an ongoing task and debate. Since many functions cannot be as clearly distinguished as perception and action, here we collectively refer to them as integration, learning, and memory systems.

\subsection{Cognitive level}
Cognition is a loaded term. Broadly speaking, it refers to the mental processes of sensing, knowing, understanding, reasoning, and much more. Traditionally, cognitive science, i.e., the science of cognition, employs a top-down approach, focusing heavily on behaviors and seeking high-level cognitive functions.

Cognitive neuroscience is the study of the biological underpinnings of cognition. By utilizing experimental recordings and computational modeling, researchers aim to provide explanations of cognitive functions at the implementational level.

\section{Toward a holistic understanding: different methodologies}

In the current landscape of neuroscience advancements, numerous studies derive from medical pursuits related to neurological diseases, and theoretical physics research formulating mathematical models to elucidate the brain's complex equations. Only a few years ago, scientists succeeded in constructing the complete brain connectome of a fruit fly, encompassing over three thousand neurons and more than half a million cellular connections. Simultaneously, the latest generation of electrophysiological probes can now record the activity of hundreds of neurons. Within this exceptionally interdisciplinary and diverse field, an inherent downside is the inevitable presence of vast unknown territories for every researcher. However, viewing from a different angle, this very aspect implies an ample space for the application and flourishing of diverse methodologies.

\subsection{The mechanistic approach}
It is so tempting to build up understanding from the ground up, as we have done in physics. From Newton's law to Schr\"{o}dinger's equation, the mechanistic view of the physical world gives us a comforting sense of comprehending the machinery of the universe.

The mechanistic approach has played a central role in neuroscience since its early days. From single-neuron modeling to synaptic plasticity, neuroscientists have constructed numerous mechanistic models (mostly in the form of dynamical systems) that capture empirical phenomena with high precision. One seminal example is the \textit{Hodgkin-Huxley model}~\cite{hodgkin1952quantitative}, which is a mathematical model for the electrophysiology of neurons. Inspired by electrical circuits, the Hodgkin-Huxley model can reproduce the single-neuron activity of dozens of neuronal cell types with a dynamical system using only a few tunable parameters. 

In addition to modeling neuron dynamics, the mechanistic approach is extensively used in representing networks of neurons, cortical circuits, thalamocortical circuits, hippocampal neurons, and so forth. These mechanistic models frequently yield detailed mathematical descriptions of neural substrates, drawing upon information from the chemical level. Consequently, the explanations and predictions produced by the mechanistic approach often provide a relatively high degree of precision. For further resources on the mechanistic approach, I recommend the textbook by Izhikevich~\cite{izhikevich2007dynamical}.

\subsection{The normative approach}
Unlike the mechanistic approach, which begins with a detailed account of the system of interest, the normative approach starts by identifying high-level principles. By postulating normative objectives, neuroscientists can then narrow down the search space of potential theories and establish guidance for discovering new phenomena.

One classic example is the \textit{efficient coding principle}, which is rooted from Barlow's redundancy-reduction hypothesis~\cite{barlow1961possible}. Conceptually, the sensory system can be simplified as a communication channel for transmitting useful information from sensory areas to other down-stream systems for use. How should the sensory system \textit{encode} various input stimuli? It can be imagined that if this were considered an engineering problem, there would be many possible implementation methods. However, using the visual system as an example, neuroscientists have discovered similar \textit{receptive field}~\footnote{The receptive field of a neuron refers to the \textit{field} (in the physical sense, or one can think of it simply as a collection of patterns) of input stimuli that is highly correlated with the activation of the neuron.} in model organisms on different evolutionary paths. Indeed, evolution seems to have consistently chosen specific ways to implement the brain's sensory systems. The efficient coding principle offers a concise answer to this big question: the sensory system will use the ``most efficient way'' to encode information. See~\autoref{sec:efficient coding} for more on efficient coding principle.

Contrary to the mechanistic approach, the normative approach utilizes a top-down perspective, and hence might tie less with strong biophysiological evidences. However, the normative approach offers a significant advantage in its capacity to provide a more succinct and unifying understanding of a neural substrate's functional role. This allows researchers to identify feasible mechanisms and anticipate the higher-level computations that may emerge from the integration of these lower-level mechanisms. Subsequently, the insights derived from this normative understanding can be channeled into other research methodologies to devise experiments for testing the proposed high-level computational principle. They can also serve as foundational stepping stones for a modular understanding of the brain on a grander scale. One example is the beautiful connection between dopamine and reward prediction error (from reinforcement learning), see~\autoref{sec:dopamine} for more details.

\subsection{The computational modeling approach}
As the development of computing power and resources surges, the research rationale of using computational models to build up understanding and theories has influenced many scientific fields, and neuroscience is no exception. Here, computational models broadly refer to mathematical models that can be simulated on a computer~\footnote{The computational model here refers to something totally different from the computational model in the context of theoretical computer science. There a computational model refers to a mathematical model that can perform computation using certain resources and basic rules. And the goal is to capture the computability and computational efficiency etc.}. These models could either be hand-crafted or coming from data analysis. Namely, most mechanistic models are also considered computational models. Similarly, scientists hope that these computational models can accurately reproduce the essential phenomena observed in the subjects of interest.

On the other hand, a computational model may contain many more parameters and often requires fine-tuning. In particular, with the widespread adoption of deep learning in recent years, many scientists now use trained artificial neural networks for modeling purposes. As the search space for model structures and parameters grows boundlessly, researchers turn to the aforementioned normative method to narrow down their choices.

It seems that the computational modeling approach is simply an intersection of the mechanistic approach and the normative approach. What really makes it stand out from merely an intersection of these two previous approaches?

One crucial advantage of computational modeling is its ability to produce useful inferences grounded in data and biology. As seen in many-body physics, traditional mechanistic models can quickly become computationally intractable. Meanwhile, although being able to provide high-level reasoning, normative models often lack justification for implementation details. Computational modeling overcomes these challenges by building up efficiently simulatable models that can serve as \textit{artificial organisms} for thorough examination and analysis.

\begin{examplebox}{Artificial neural networks for neuroscience}
Artificial neural networks broadly refer to computational models that mimic or are inspired by the structure of the biological brain. In recent years, by continually deepening and expanding artificial neural networks, we have achieved previously unimaginable successes in engineering (e.g., Go, protein structure, image recognition, language models, etc.). Can these exciting developments, in turn, advance our understanding of our own brains?

Beyond the direct use of artificial neural networks to assist in data analysis (as shown in~\autoref{fig:ann}(a)), some researchers have started using certain artificial neural networks as models for certain systems for quantitative analysis (as shown in~\autoref{fig:ann}(b)). Another common approach is to use artificial neural networks to learn the input-output patterns of real brain neural data, and then perform analysis on the learned models (as shown in~\autoref{fig:ann}(c)). I recommend the survey papers by Richards et al.~\cite{richards2019deep} an Yang et al.~\cite{yang2020artificial} for further reading.

\vspace{3mm}
\begin{center}
\includegraphics[width=12cm]{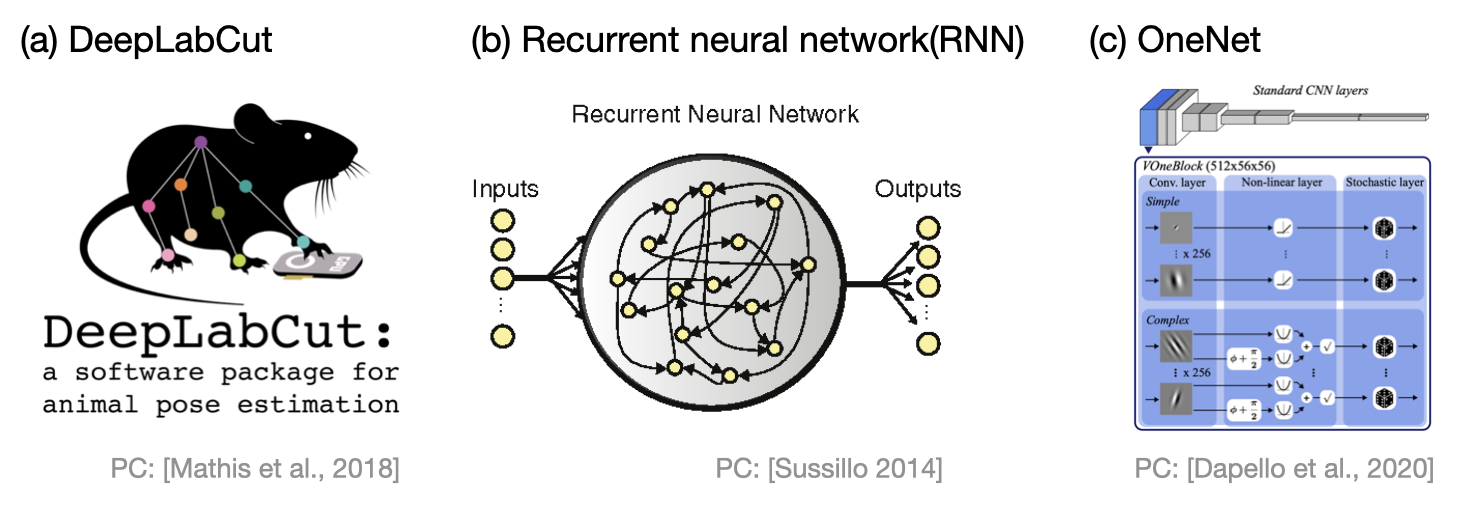}
\captionof{figure}{Artificial neural networks for neuroscience. (a) DeepLabCut~\cite{mathis2018deeplabcut} is a software developed using pre-trained artificial neural networks, which can help experimental scientists annotate features on animal exteriors, such as noses, limbs, etc. (b) In recent years, Recurrent Neural Networks (RNNs) have been commonly used in numerical simulations to depict the neural network within a brain region~\cite{sussillo2014neural}. Researchers would use the RNN trained through input-output to serve as the mathematical model of this brain area for further analysis. (c) The neural networks in perception systems usually have less recurrent connectivity, making it easier to clearly record and analyze neuronal activity in a hierarchical manner. The visual system is an especially popular research direction. In the study shown in the figure by Dapello et al.~\cite{dapello2020simulating}, the input and output neuronal activity of brain area V1 are used as training data for a specially designed feedforward artificial neural network. The researchers claim that this artificial neural network, which mimics the brain's V1 area in certain aspects, performs better than traditional models.}\label{fig:ann}
\end{center}
\end{examplebox}

\section{Examples of computations in neuroscience}\label{sec:neuro computations}
Now that we have had a glimpse into neuroscience, we are ready to discuss the computational lens in neuroscience. The reader might wonder if the computational modeling approach mentioned in the previous section already represents the marriage of computer science and neuroscience. The purpose of this thesis is to convince you that there is much more to be gained from the computational perspective. In this section, we are going to see several examples of computations in neuroscience. Instead of building concrete models or theories, we will focus on extracting out the underlying algorithmic and computational aspects via language from computer science. 

It is important to note that the examples presented here are far from exhaustive. I believe that there is a wealth of additional insights yet to be discovered in neuroscience via the computational lens. The fruition of such discoveries would require effective and respectful communication across the various sub-disciplines in the field.

\subsection{Sensory processing: an example in visual systems}
Retina is the brain of the eyes, where the first layer, the photoreceptor cell, plays a role similar to the photosensitive component in a camera, with each photoreceptor cell dutifully monitoring a small corner of the world. However, when delving deeper into the neural network of the retina, or even entering the vision-related areas of the cerebral cortex, things get even more interesting: neurons no longer just pay attention to the strength of light in a small corner, but start to respond to specific patterns in a certain area. Just like the guards in a prison, each watching over the movements in a certain area, and only reacting to suspicious disturbances. For a neuron, the region and the corresponding pattern that make it particularly responsive are called its \textit{receptive field} (as shown in~\autoref{fig:vision 1}(b)).

\begin{figure}[h]
    \centering
    \includegraphics[width=12cm]{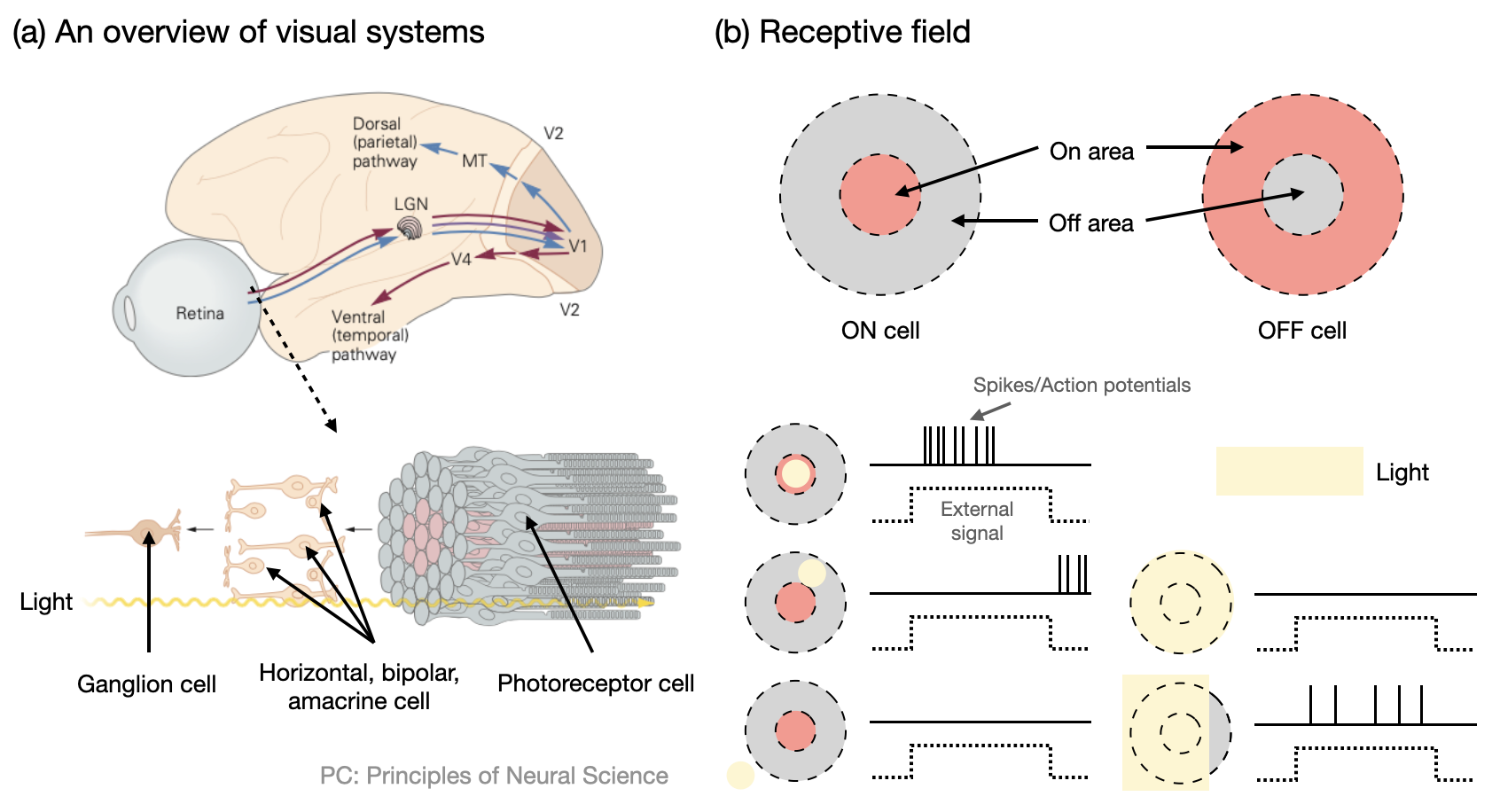}
    \caption{(a) Light triggers signals in the retina, which are then transmitted to the LGN (Lateral Geniculate Nucleus) in the thalamus, and then to the primary visual cortex in the cerebral cortex, and continue to be transmitted along two different paths. In the retina, light first stimulates photoreceptor cells, then proceeds through horizontal, bipolar, and amacrine cells, and finally the ganglion cells converge the signal to the LGN. Note that each type of neuron here has many different subcategories. (b) The receptive field of ganglion cells usually presents two concentric circles, with the central and peripheral areas each having opposite responses. Taking ON cells as an example, when light shines in the middle part, it will stimulate the neuron's potential, but when light shines on the periphery, it will inhibit. The five examples below the figure explain the response of ON cells in different situations.}
    \label{fig:vision 1}
\end{figure}

The first area related to vision in the cerebral cortex is called the \textit{primary visual cortex (V1)}. In 1959, David Hubel and Torsten Wiesel conducted their famous experiment~\cite{hubel1959receptive}, tirelessly showing different visual images to cats in an attempt to clarify the receptive fields of neurons in V1. Initially, they tried to place spots of light in different positions to see if the neurons would respond to specific locations, but things did not go as planned. It was not until later that they discovered many V1 neurons had a special response to a line-shaped light source. Moreover, some neurons reacted violently when a line moved in a certain direction at a certain angle. In other words, the receptive fields of V1 neurons are line-shaped and may even be related to the time axis! This important discovery, along with their subsequent significant contributions to the visual system, unsurprisingly earned them the Nobel Prize in Physiology or Medicine in 1981.

In V1, each neuron may pay attention to different angles, and even the same neuron may emit some pulses for angles close to the receptive field. Therefore, the concept of the \textit{tuning curve} was born: for a certain neuron (for example, a neuron in V1), plot a function of the corresponding pulse frequency based on its parameters for different input stimuli (such as the angle of light stimuli). In this way, the tuning curve can show how a neuron \textit{encodes} a certain angle (as shown in~\autoref{fig:vision 2}(b)).

\begin{figure}[h]
    \centering
    \includegraphics[width=12cm]{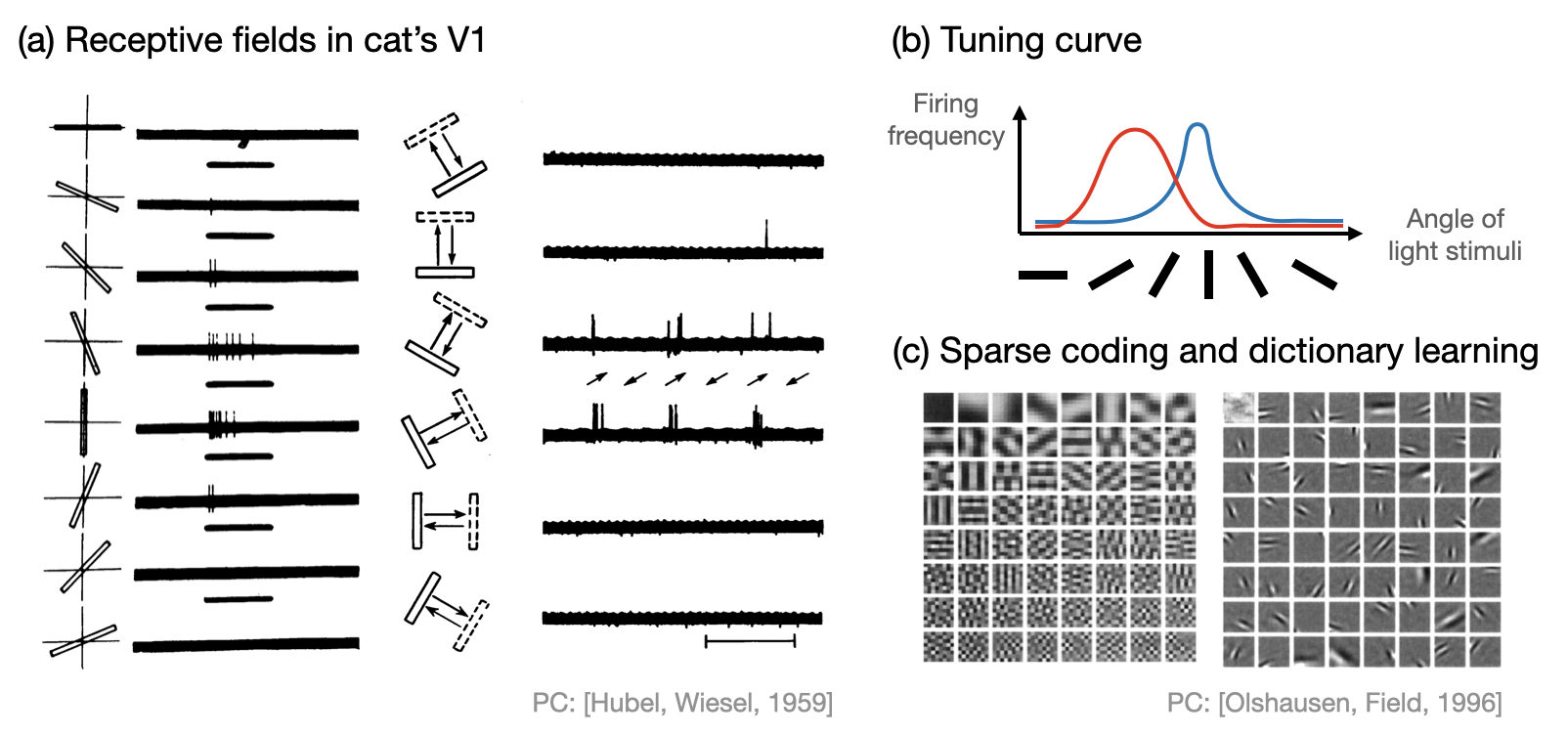}
    \caption{(a) The receptive fields of neurons in V1 are usually linear. In the left figure, for instance, when the light source line is in a vertical direction, the observed neuron has the most frequent pulses. Some V1 neurons' receptive fields are related to the direction of movement. In the right figure, for instance, when the light source line moves to the right at a 30-degree angle, the observed neuron has the most frequent pulses. (b) Tuning curves visualize the degree of a certain neuron's (in this case, there are two in red and blue) response to input stimuli of different directions. This allows for a comparison of whether different neurons cover all possible input stimulus directions. (c) Olshausen and Field used numerical experiments to attempt to explain the shape of receptive fields. They performed a principal component analysis on some image inputs and obtained receptive fields like the one in the left figure. However, once a sparse condition is added, the shape of the receptive field (like the right figure) will resemble the linear shape observed in V1.}
    \label{fig:vision 2}
\end{figure}

In the sensory system, neurons, especially those closer to the input end, can be observed to have a clear receptive field and their tuning curves can be plotted. With these concepts, we can further speculate from a computational perspective how these neurons and related brain areas handle sensory input: the receptive field of each neuron is like a word in a dictionary. When the eyes receive an image, the responsive visual neurons are like telling the next brain area what words appear in this image. So what vocabulary and language does the brain use to communicate?
For example, in 1996, Bruno Olshausen and David Field introduced the concept of \textit{sparse coding}~\cite{olshausen1996emergence}, trying to argue through numerical experiments that the shape of the receptive field is due to the brain's desire for sparser neuron responses (as shown in~\autoref{fig:vision 2}(c)). In terms of vocabulary and language, it is equivalent to saying that they hope to express what they see with just a few simple words. This idea also indirectly opened up the development of \textit{dictionary learning} in computer science.

\subsection{Synaptic plasticity and learning}\label{sec:synaptic plasticity}
Changes in the strength of synapses between neurons are known as synaptic plasticity. In addition to the previously mentioned Long-Term Potentiation (LTP) and Long-Term Depression (LTD), where neurons fire together wire together, neuroscientists have observed many other mechanisms of synaptic strength changes (for example, Spike-Timing-Dependent Plasticity (STDP) shown in~\autoref{fig:synaptic plasticity}(a)). Theoretical neuroscientists have also proposed numerous mathematical models in an attempt to explore the relationship between synaptic plasticity and the computations performed by neural networks (as in~\autoref{fig:synaptic plasticity}(b)).

\begin{figure}[h]
    \centering
    \includegraphics[width=12cm]{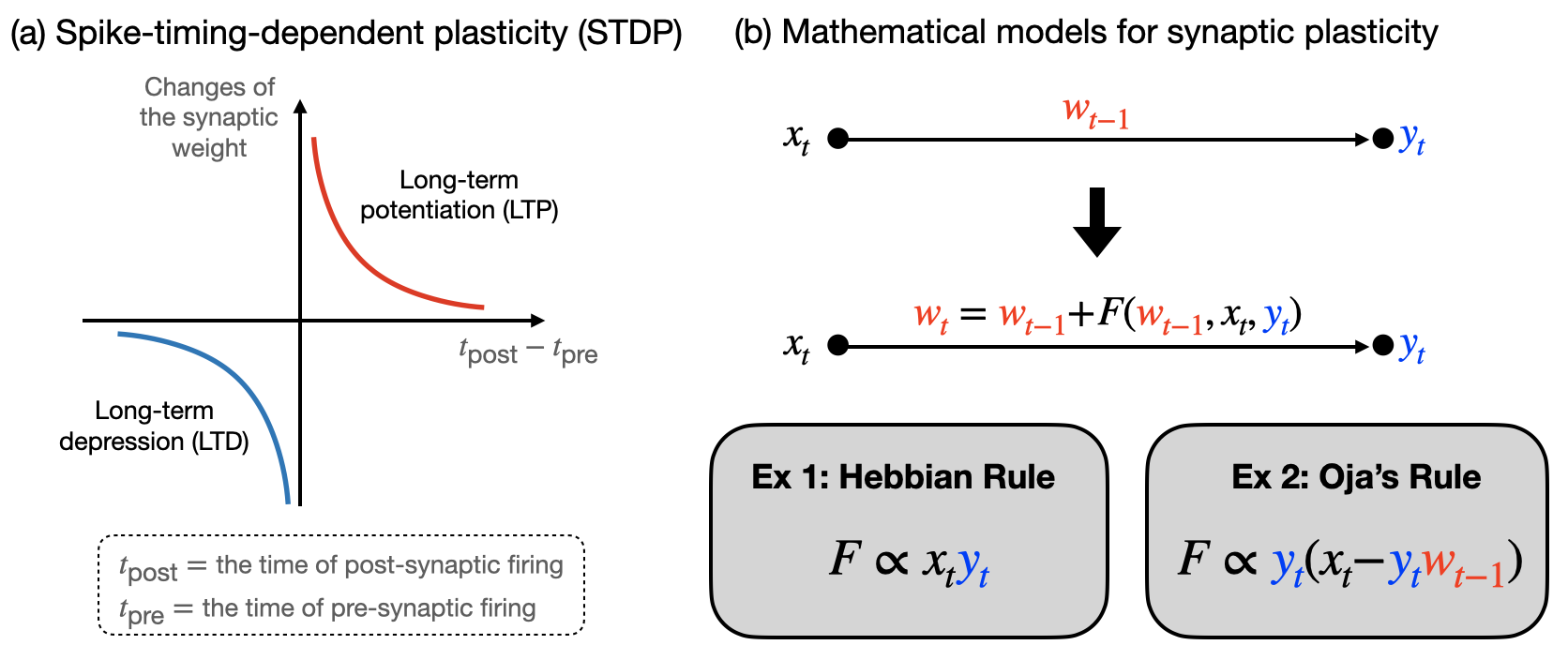}
    \caption{Synaptic plasticity and learning. (a) Spike-Timing-Dependent Plasticity (STDP). When the spike of the postsynaptic neuron closely follows the spike of the presynaptic neuron, the strengthening of the synapse will be greater. Conversely, if the postsynaptic neuron fires before the presynaptic neuron, the strength of the synapse will decrease. (b) Simplified models of synaptic plasticity. In mathematical terms, the activity of the presynaptic neuron (such as the average number of spikes) is marked as $x$, and the activity of the postsynaptic neuron is marked as $y$. Then, the change in synaptic strength w will be related to $x$, $y$, and the previous $w$. Two common synaptic plasticity rules are the Hebbian rule and the Oja's rule. The former corresponds to the idea that if two neurons simultaneously exhibit the same response, the synaptic strength will be enhanced. The latter adds a homeostasis term to control the synaptic strength from becoming too large. Both are linked with certain computational problems, such as Principle Component Analysis (PCA).}
    \label{fig:synaptic plasticity}
\end{figure}

If we consider not just the connection between one neuron and others, but how the synaptic strength between any two neurons in the entire neural network changes, how do we establish and understand the relationship between synaptic plasticity and computation? On an abstract level, since a neural network is like a computing process (for example, mapping sensory input to action output), synaptic plasticity can be seen as a ``computation that changes the computation process''. Therefore, it is naturally linked with learning. Will designing and analyzing different synaptic plasticity rules and their potential/corresponding computational problems/principles allow us to have a more modular understanding of the underlying implementations in the brain?

In the world of artificial neural networks, \textit{backpropagation} is a remarkably successful artificial synaptic plasticity rule. Its main concept is to use the chain rule of calculus to calculate how the strength of each synapse should change to minimize the overall loss with respect to a certain objective function. Although backpropagation has led to unprecedented advancements in artificial neural networks and artificial intelligence, due to biological constraints (for example, specific synaptic connection ways), neuroscientists generally believe that the way the brain learns should be somewhat distanced from backpropagation. How exactly do our brains learn? Could a thorough understanding of synaptic plasticity help us clarify the essence of learning?

\subsection{Navigation}

Imagine you've arrived in a new city. Your phone hasn't connected to the internet, so you can't use online maps. Since you can't speak the language, you can't ask for directions or buy a map. However, staying in the hotel is too boring, so you decide to go out and wander, thinking that with the help of the sun, you should be able to return to the hotel before it gets dark. Walking on unfamiliar streets, what you see are shop signs that are as incomprehensible as hieroglyphs. Although you can't quite figure out what they are selling, you're beginning to recognize some of the recurring chain stores. Maybe the one with the bright red sign is a convenience store, pharmacy, or fast-food restaurant?

Not just for tourists in foreign lands, navigation is also crucial for animals living in the natural world. How does the brain guide ants to find their way home after going out for food? Monarch butterflies migrate from Mexico to North America over several generations, how do they do it? What computational principles are supporting these processes?
Let's peek into the labyrinthine world of the brain through the \textit{Place cells} and \textit{Grid cells} recognized by the 2014 Nobel Prize in Physiology or Medicine.

\begin{figure}[h]
    \centering
    \includegraphics[width=13cm]{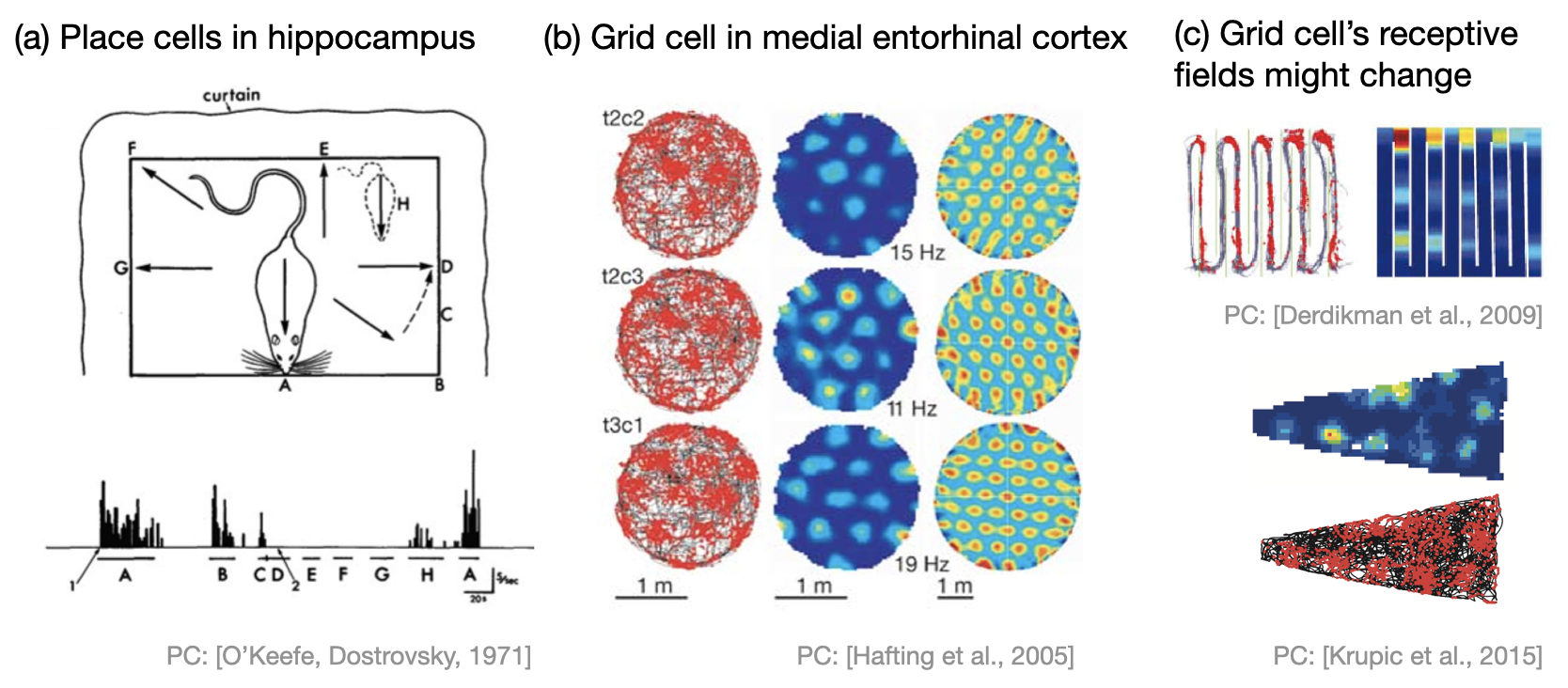}
    \caption{(a) O'Keefe and Dostrovsky discovered in their experiments that certain neurons in the hippocampus of mice only generate impulses when the mouse passes through a specific location. For instance, the number 1 neuron in the picture only has significant activity when the mouse is at location A and location B. (b) Mosers' laboratory found that many neurons in the intermediate dorsal part of the entorhinal cortex have receptive fields at grid points. In the diagram, each horizontal axis corresponds to one neuron. The far-left column in the vertical axis shows the position of the neuron's impulses (red dots) when the mouse moves in a circular space. The middle column is the strength of the receptive field, which is the frequency of the neuron's impulses when the mouse is at different locations. The far-right column is the autocorrelation function of neuron activity in space, which here shows a strong positive correlation in the neuron activity at grid points.}
    \label{fig:navigation}
\end{figure}

John O’Keefe is a neuroscientist trained in psychology. Like most psychologists, O'Keefe places great emphasis on the role of animal behavior. Therefore, when he shifted his research from the amygdala, where he had been working on during his PhD, to the hippocampus, his primary focus was on what abnormal behaviors mice with hippocampal damage would display. Quickly, he and his students found that these mice performed particularly poorly on spatially related tasks, especially when moved to new environments. Through more in-depth experiments, they discovered some special neurons in the CA1 region of the hippocampus that had strong responses when the mouse passed through a particular position (as shown in~\autoref{fig:navigation}(a)). These types of neurons were thus named \textit{place cells}, opening up the exploration of the role of the hippocampus in navigation and memory in neuroscience.

Edvard Moser and May-Britt Moser are the fifth couple to jointly win the Nobel Prize. After completing their doctorates, the couple spent some time learning under O’Keefe in London before returning to their home country of Norway to establish their own laboratory. Inspired by the concept of place cells, the two Mosers decided to investigate how place cells were formed. From anatomical data of neuronal connections, they and their students tried related brain areas, finally discovering in 2005 that many special neurons in the entorhinal cortex responded extremely regularly based on the position of the mouse. If the mouse is in a square space, the receptive field of this kind of neuron relative to the mouse's position would approximate the grid points of a hexagonal grid (as shown in~\autoref{fig:navigation}(b)), hence the name \textit{grid cells}.

So can we imagine navigation in the brain as having grid cells establish a map's longitude and latitude, and place cells record important locations? Unfortunately, this image is a bit oversimplified, as there are always exceptions in the biological world. For instance, with place cells, experiments can observe that when the mouse changes environment, the sensed position also changes, a phenomenon called \textit{remapping}. If the shape of the environment is changed, the receptive field of grid cells will lose their hexagonal grid and become other types of grids (as shown in~\autoref{fig:navigation}(c)).

Place cells and grid cells are among the few types of neurons in the deeper areas of the brain that still have clear receptive fields, hence a lot of research work is focused on understanding their causes and characteristics. However, navigation in the brain is more like a dynamic map and is closely related to other functions (e.g., memory). Unraveling the functional role of place cells and grid cells within the intricate circuitry of the brain remains an ongoing quest for researchers.

\subsection{Dopamine and reinforcement learning}\label{sec:dopamine}
People often hear that the release of dopamine (DA) brings a sense of happiness. In fact, dopamine is a type of neurotransmitter in the brain. Neurons that release dopamine are referred to as dopaminergic neurons and are mainly distributed in the Ventral Tegmental Area (VTA) and Substantia Nigra (SN) in the midbrain (located in the brainstem), as well as other areas like the hypothalamus.

In the famous experiments conducted in the 1950s by James Olds and Peter Milner, they inserted electrodes into certain brain areas of mice. When the mouse pushed a rod, a current was injected to trigger action potentials in the nearby neurons~\cite{olds1954positive}. They found that when electrodes were placed in a particular area, the mouse would continually push the rod, and the attention of the mouse could not be diverted by food, water, or even the attraction of the opposite sex. This obvious behavioral correlation led them to closely study this brain area, and they found that most of the neurons here are connected to dopaminergic neurons!

So, are the sources of happiness in our brains controlled by dopaminergic neurons? It's not that simple! In the late 1990s, Wolfram Schultz and his collaborators discovered that dopaminergic neurons not only respond when receiving rewards (the sources of happiness designed in the experiment), but they also get excited once there are signs that a reward is imminent (for example, smelling the aroma of delicious food)~\cite{schultz1997neural}. Surprisingly, if the mouse doesn't receive a reward in the end, the number of impulses of the dopaminergic neurons actually decreases (as shown in the figure below)!

\begin{figure}[h]
    \centering
    \includegraphics[width=13cm]{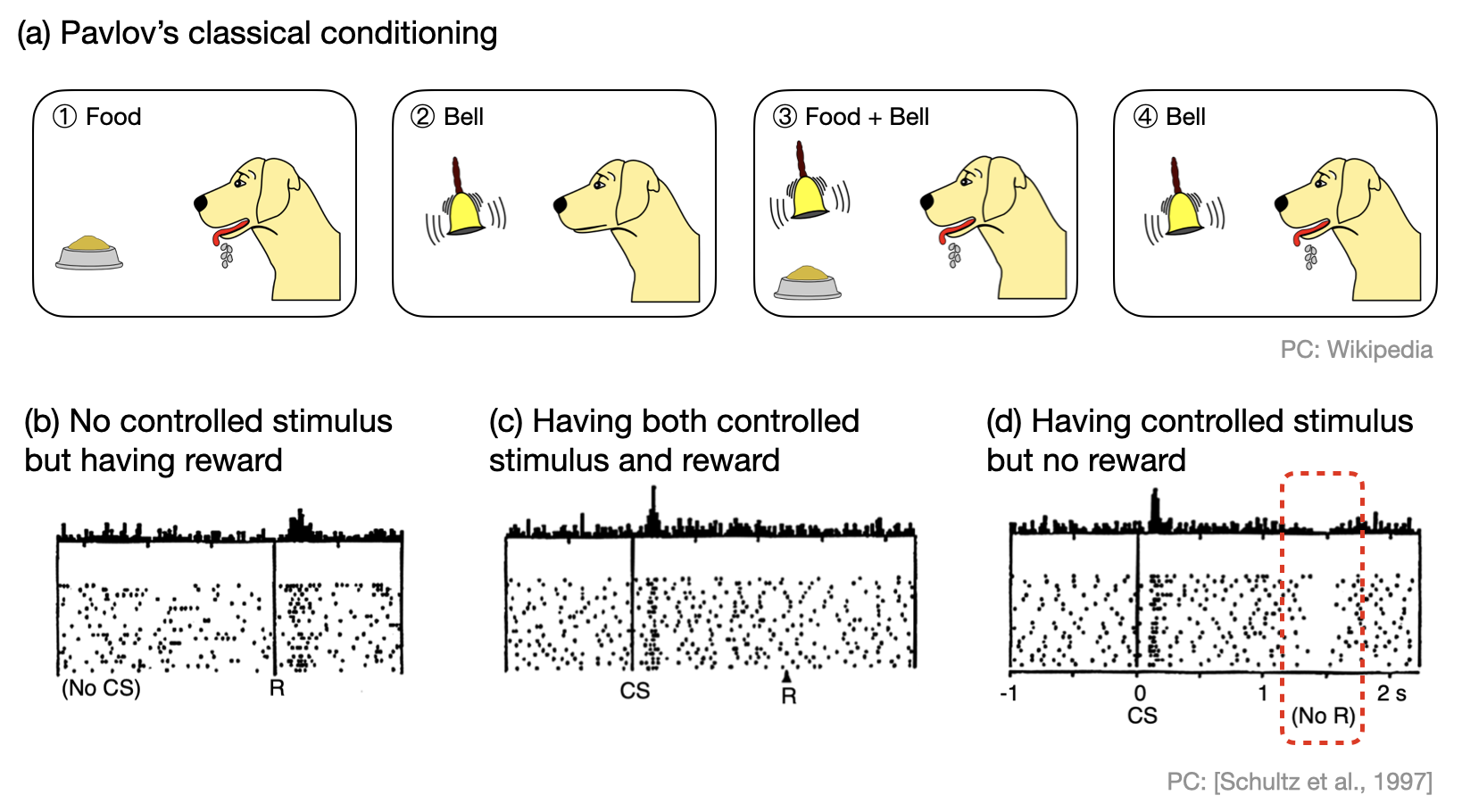}
    \caption{(a) In Pavlov's classical conditioning experiment, the reward (food in the figure) and the stimulus (the bell sound in the figure) are repeatedly presented to the test animal, causing it to develop conditioning. That is, once a stimulus is received (hearing the bell), it will produce the reaction that would occur upon seeing a reward (salivating). (b-d) These are from the monkey experiment mentioned the paper by Schultz et al.~\cite{schultz1997neural}. In the experiment, the monkeys are subjected to visual or auditory stimuli, and juice as a reward. Each row in the figure represents the impulse timing of a dopaminergic neuron over time (leftward), and the histogram at the top represents the total number of impulses. The ``CS'' at the bottom represents the moment the controlled stimulus appears, and ``R'' represents the moment the reward appears. (b) When there is no stimulus but a reward, the response of the dopaminergic neurons quickly increases after the reward appears. (c) But once the monkey has been classically conditioned, as soon as the stimulus appears, the response of the dopaminergic neurons immediately rises. Later, when the reward appears, the activity of the dopaminergic neurons does not particularly fluctuate. (d) If there is only a stimulus and no reward, the dopaminergic neurons will suddenly decrease the impulse frequency after the moment when the expected reward does not appear.}
    \label{fig:dopamine}
\end{figure}

Therefore, the activity of dopaminergic neurons is gradually considered related to \textit{reward prediction}: if expectations are met, then dopaminergic neurons will not be particularly active. Conversely, if expectations are not met, dopaminergic neurons seem to release a \textit{prediction error} signal by reducing activity.

In recent years, neuroscientists have conducted increasingly detailed research on dopaminergic neurons. For example, different signaling patterns or different cellular activities may have different functions. They have found that dopaminergic neurons also seem not to be limited to reward prediction in Pavlovian classical conditioning. Perhaps, because humans are always thinking about pursuing happiness, they simply equate dopamine with happiness. But maybe happiness does not have a single source, and it's not necessarily the end goal. When thinking about the nature of dopaminergic neurons, perhaps we will gain a new understanding of happiness?

\subsection{Efficient coding principle}\label{sec:efficient coding}
Early studies on the visual nervous system were centered around analyzing different receptive fields. As we learn about the receptive fields of more and more neurons, a half-scientific, half-philosophical question naturally arises: What are the functions of these receptive fields? Why do the receptive fields of visual neurons look like this and not something else? As Barlow pointed out in his famous 1961 paper~\cite{barlow1961possible}, if we didn't know that birds could fly, we would still be at a loss no matter how hard we studied the complex structure of their wings. What kind of computational/functional principles govern the sensory system?

In the same paper, Barlow proposed three hypotheses, trying to guide people's research and understanding of the sensory system from the perspective of \textit{computational theory} in Marr's three-level analysis~\cite{marr2010vision}. Among them, the \textit{redundancy-reduction hypothesis} has been passed down to later generations as the more general \textit{efficient coding hypothesis}. As these are top-down hypotheses, here we will follow Barlow's article and focus on abstract analogies, without discussing specific experimental examples.

Let's simplify the sensory system into a channel for transmitting information. It is constrained by some biological limitations (such as the way neurons connect, physiological conditions, noise caused by uncertain factors, etc.), and the goal is to transmit useful information to other systems (such as motion, integration, learning, memory, etc.) for use. How should the sensory system \textit{encode} various input information (such as the scenery seen, the sound heard, the touch felt)? One can imagine that if this is seen as an engineering problem, there will be many possible implementations. However, using the visual system as an example, neuroscientists have found similar receptive field shapes in model organisms on different evolutionary paths. At first glance, evolution seems to have regularly chosen some specific ways to implement the sensory system in the brain.

The answer to this big question given by the efficient coding hypothesis is simple: the sensory system will use the ``most efficient way'' to encode information.

Usually, when such an abstract grand principle is proposed, it will be seriously questioned and examined for its specific correspondences and implementations in the biological world. What does ``most efficient'' mean? What unit is used to calculate it? What is the biological significance? As the lengthy questions and debates continue, it would not be excessive to write another book. The \textit{efficient coding hypothesis} is still a hot concept in the (theoretical) neuroscience world today, but for each researcher, their understanding of the definition may be slightly different.

\section{From neuroscience to computation}
In~\autoref{sec:neuro computations}, we've seen several examples of computations drawn from neuroscience. Conversely, neuroscience has also had a profound impact on research in computer science. In this brief section, we will conclude this chapter by highlighting two notable examples of this: neuromorphic computing and artificial intelligence.

\subsection{Neuromorphic Computing}
While a single supercomputer cluster today can consume more electrical power than a small city, the human brain operates incredibly efficiently, requiring only about 20 watts of power. This stark contrast has inspired scientists and engineers to explore new computing frameworks beyond the traditional von Neumann and transistor-based paradigm. This research direction, known as \textit{neuromorphic computing}, aims to achieve advancements at three different levels of abstraction: materials, circuits, and algorithms.

At the material level, the focus is on developing electronic neurons or other physical neuron analogs that can emulate the computational power and energy efficiency of biological neurons. For instance, demonstrating the ability to use spiking and event-driven computing.
At the circuit level, research is dedicated to understanding how to integrate these so-called \textit{cold neurons} into circuits or chips to perform standard computations. Specifically, one critical challenge here is the issue of implementing distributed computing and memory.
At the algorithmic level, the goal is to design algorithms from a neuron's perspective. That is, to create algorithms that are specially tailored for achieving computational efficiency and advantages in neural networks.
For a more comprehensive introduction to neuromorphic computing, I would recommend referring to a survey paper by Schuman et al.~\cite{schuman2017survey} or a recent roadmap paper by Christensen et al.~\cite{christensen20222022}.

\subsection{Artificial intelligence}
The intertwined history of artificial intelligence (AI) and neuroscience is filled with constant inspiration and cross-pollination. Today, we marvel at the remarkable successes of deep neural networks in AI, with many of their groundbreaking and historically significant architectures being heavily inspired by their biological counterparts. For instance, the perceptron~\cite{rosenblatt1958perceptron} is modeled after neurons, while the key idea in convolutional networks~\cite{lecun1995convolutional} draws from the primate visual system~\cite{hubel1959receptive}.

But the knowledge that computer scientists have extracted from biology is not limited to the replication of brain-based structures. On a normative level, animal learning has been a wellspring of inspiration for reinforcement learning~\cite{sutton2018reinforcement}. On a system level, cognitive processes such as attention, episodic memory, working memory, continual learning, and more, have informed the development of novel architectures and algorithms.

The reciprocal relationship between AI and neuroscience continues, with each field making considerable contributions to the other. As we uncover more about the intricate workings of the brain, we can anticipate further advancements and innovations in the realm of AI. For a deeper exploration of this topic, the review article by Hassabis et al.~\cite{hassabis2017neuroscience} provides an excellent overview and serves as a valuable reference.

\section{Concluding remarks}
Neuroscience is a topic that is both related to ourselves, intriguing, and full of possibilities. For readers who haven't had much exposure to the topic before, I hope this chapter can ignite your foundation and passion for further learning. For those already in the field, I hope the arrangement and the narrative of this chapter can bring you new insights.

Although computational tools and concepts have started to be extensively used in various research in neuroscience in recent years, the perspective of ``understanding the brain through computational thinking'' is just beginning to emerge. I personally believe that ``computation'' is the most comfortable and consensual formal language for humans to engage in ``mechanical and logical thinking''. However, how to establish a \textit{computational language} and corresponding quantitative analysis techniques, experimental designs, as well as theoretical frameworks and interpretations, still have a long way to go.

\begin{savequote}[75mm]
A theory is valuable only insofar as it proposes detailed and particular mechanisms to explain a wide variety of phenomena in its domain and insofar as it stimulates new experiments.
\qauthor{Gerald M.~Edelman}
\end{savequote}

\chapter[Algorithmic Neuroscience and Emergent Computations]{Algorithmic Neuroscience and\\Emergent Computations}\label{ch:example EI}

From the previous chapter, we have seen many examples of the theoretical and computational studies in neuroscience. In this chapter, we aim to propose two different angles/methodologies through the computational lens: the algorithmic neuroscience and emergent computations.

\medskip \noindent \textbf{Algorithmic neuroscience} refers to studying and modeling systems in neuroscience as performing certain algorithms. By focusing on the algorithmic ingredients, we care less about the underlying detailed implementations (as opposed to the modelings via dynamical systems) and emphasize the computational roles and composability with other parts of the broader brain regime. Through the algorithmic angle, it might provide a more flexible analytical framework to study neural circuits as well as their compositions and interactions.

\medskip \noindent \textbf{Emergent computations} refer to the computational aspects that arise from the inductive biases of biological constraints, underlying tasks, evolutionary relics, and input structures. By building up a dictionary between these inductive biases and computations, it might bring up clearer high-level recipe for the intuitions to connect the cognitive side of neuroscience to the bottom level.

In the rest of this chapter, we will first see a concrete example of how the algorithmic perspective can shed light on a simple and classic model from theoretical neuroscience, and lead to the discovery of a emerging computation. In~\autoref{ch:conclusion}, we will also see a proposal for future research agenda along this line of thoughts.

\section{E/I balanced neural networks}\label{sec:E/I balanced}
Neurons in many brain areas exhibit irregular neural activities. For example, the firing patterns in the central nervous systems share similar statistics (e.g., interspike interval distribution) to that of a Poisson process~\cite{abeles1991corticonics,bair1994power,softky1993highly}. Meanwhile, some controlled experiments found that the firing pattern of a cortical neuron becomes regular when it is stimulated by a constant current~\cite{mainen1995reliability,holt1996comparison}, suggesting that the irregularity might carry information instead of purely being noisy. These experimental observations raise several questions regarding the irregularity of neural coding at the population level:
\begin{itemize}
\item \textbf{(The implementation problem)} What are the underlying mechanisms and implementations in the brain that lead to the generation of irregular spike trains?
\item \textbf{(The coding problem)} How do neurons encode and decode information via such a seemingly unreliable manner? What is the coding efficiency?
\item \textbf{(The computation problem)} What computation can be accommodated by these irregular firing activities? What would be the computational advantage over regular neural code?
\end{itemize}

The balance of excitatory and inhibitory neural signals, referred to as E/I balance, serves as a conceptual framework for investigating the aforementioned research questions. This concept originated from two commonly observed anatomical implementations in the brain, particularly within cortical areas. Firstly, Dale's law~\cite{kandel1968dale,strata1999dale} postulates that a neuron releases only one type of chemical neurotransmitter at all its synapses, either exciting or inhibiting the connected neurons. Secondly, the cortex predominantly consists of pyramidal cells as the main excitatory neurons, while multiple types of inhibitory interneurons exist~\cite{markram2004interneurons}, contributing to the modulation of neural activity. Supported by numerous experimental findings~\cite{haider2006neocortical,xue2014equalizing,tan2009balanced,wilent2005dynamics}, it is widely believed that the neural signals from excitatory neurons are delicately balanced by those from inhibitory neurons.

From a theoretical standpoint, E/I balance has served as both a biological constraint and a guiding principle for researchers investigating the enigma of irregular spiking activities in neural networks. Numerous efforts have been devoted to developing mathematical models and dynamics that phenomenologically resemble experimental findings. Specifically, there are two common approaches: (i) the normative approach, which employs computational objectives to derive biologically plausible E/I balanced networks~\cite{boerlin2013predictive,bourdoukan2012learning,barrett2013firing}; and (ii) the dynamical approach, which utilizes mathematical models such as randomly connected recurrent neural networks to explore the parameter regimes that yield E/I balance and Poisson-like output statistics~\cite{renart2010asynchronous}. These endeavors have shed light on the potential roles of E/I balance in the brain, including predictive coding~\cite{boerlin2013predictive}, local supervised learning~\cite{bourdoukan2015enforcing}, working memory~\cite{boerlin2011spike}, neuron loss~\cite{barrett2016optimal}, and more. However, both approaches have certain limitations. In the normative approach, the derived models often require fine-tuning and do not anatomically align with experimental observations. In the dynamical approach, while fine-tuning issues are less of a concern, the computational perspective remains relatively understudied. For readers interested in delving deeper into the theoretical aspects of E/I balance, we recommend the survey by Den\`{e}ve and Machens~\cite{deneve2016efficient}.

\section{An algorithmic investigation on E/I balanced SNNs}\label{sec:SNNs}
To bridge the normative and dynamical approaches in studying E/I balance, we advocate the usage of an algorithmic lens to discover underlying emergent computations. In this section, we will explore an example from a joint work with Chung and Lu~\cite{CCL19}. This example showcases how sparse computation emerges as a result of the biological constraint of E/I balance and a normative principle of energy conservation. Through this example, we aim to provide the reader with a glimpse of how the computational lens can potentially offer new insights into neuroscience, enticing further exploration and understanding.

We start with introducing the integrate-and-fire model~\cite{lapicque1907recherches,burkitt2006review1,burkitt2006review2} for biological neural networks in~\autoref{sec:IAF}.
Next, in~\autoref{sec:optimal SNN} we narrow down the focus to the \textit{optimal E/I balanced SNNs}, which is normatively derived from a computational objective by Barrett et al.~\cite{barrett2013firing}. Finally, we present our results on the emergent computation in optimal E/I SNNs in~\autoref{sec:SNNs results} and provide an overview on the algorithmic and computation aspects in~\autoref{sec:SNNs algorithm} and~\autoref{sec:SNNs emergence} respectively.

\subsection{Integrate-and-fire neural networks}\label{sec:IAF}
Let $m$ be the dimension of the input stimuli and $n$ be the number of integrate-and-fire neurons in the E/I balanced networks. Each neuron is associated with a membrane potential that vary over time and together this forms a potential vector $\bv(t)\in\Real^n$ for every time $t\geq0$. The dynamic of the potential vector is governed by the following differential equation
\begin{equation}\label{eq:SNN continuous}
\frac{d \bv(t)}{dt} = -\tau \bv(t) -\Omega \bs(t) + I
\end{equation}
where $\tau$ is a leaky parameter, $\Omega\in\Real^{n\times n}$ is the recurrent connectivity matrix, $\bs(t)$ is the spike train at time $t$, and $I\in\Real^m$ is the external input current. Concretely, the spike train of neuron $i$ is defined as $s_i(t)=\sum_{t_i^{(j)}<t}\delta(t-t_i^{(j)})$ where $\{t_i^{(j)}\}_{j}$ are the times when the potential of neuron $i$ exceeds the firing threshold. Note that in the setting above, most parameters (e.g., $\Omega,I$) are time-independent, i.e., the network is \textit{static}. It is of great interest to study the \textit{dynamic} setting where the network parameters might change over time due to learning. Nonetheless, here we focus on the static scenario and leave the question of dynamic setting for future exploration.

For convenient (or complication, depending on your background), one can also consider the following discrete dynamic where here the time $t$ is indexed by non-negative integers.
\begin{equation}\label{eq:SNN discrete}
\bv(t+1) = -\tau \bv(t) - \Omega \bs(t) + I\cdot\Delta t
\end{equation}
where $\bs(t)$ is the indicator vector of whether a neuron fire a spike at time $t$ and $\Delta t>0$ is the discrete time-step. Concretely, $s_i(t)=1$ when $v_i(t)>1$; otherwise, $s_i(t)=0$. The following is an example of a network with $m=2$ and $n=5$.

Finally, the firing rate, i.e., the average number of spikes, is defined as $\br(t) = \int_0^t\bs(t')dt'/t$ in the continuous case and $\br(t) = \sum_{t'=0}^{t-1}\bs(t')/t$ in the discrete case. In the following, we provide a simple example for the reader to get familiar with the setup.

\begin{examplebox}{Example (An integrate-and-fire spiking neural network).}
Let us consider a network of three neurons with the following parameter settings. First, all of them have $\tau=0$, i.e., non-leaky, and the firing threshold to be $1$. Next, the connectivity matrix $\Omega$ and the external input current vector $I$ are
\[
\Omega = \begin{pmatrix}
    1 & 0 & \sqrt{2}/2 \\
    0 & 1 & \sqrt{2}/2 \\
    \sqrt{2}/2 & \sqrt{2}/2 & 1
\end{pmatrix} \ ,\ I = \begin{pmatrix}
    1\\ 2\\ 3\sqrt{2}/2
\end{pmatrix} \, .
\]
Finally, we numerically simulate this integrate-and-fire SNN and plot the membrane potential and firing rate of each neuron in~\autoref{fig:SNNs example potential}.
\end{examplebox}

\begin{figure}[h]
    \centering
    \includegraphics[width=13cm]{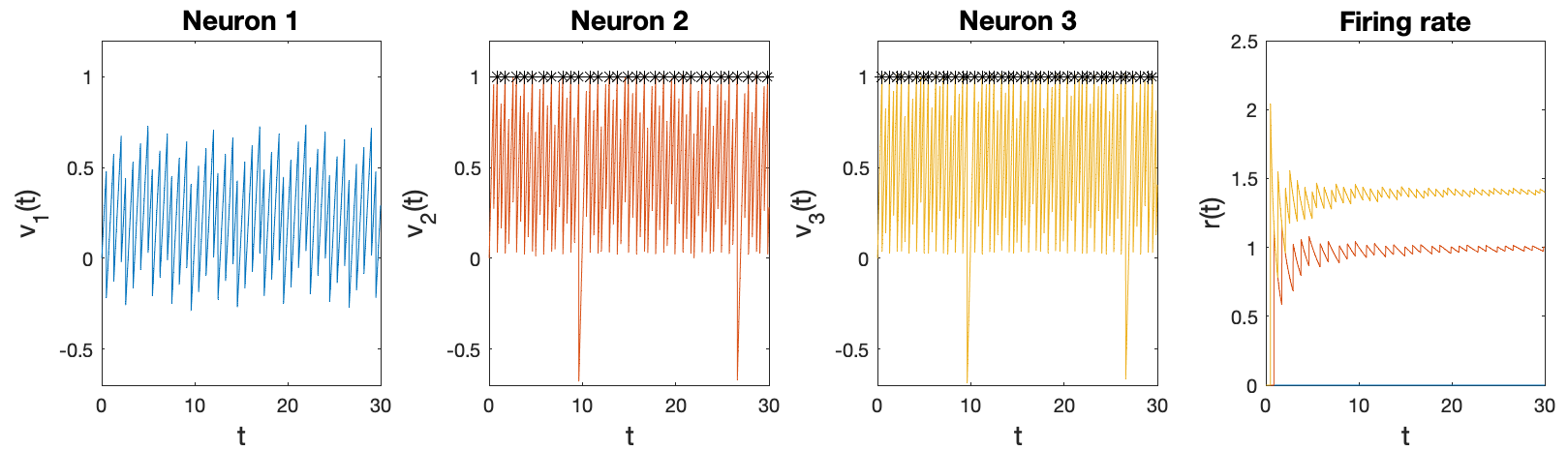}
    \caption{A three-neuron integrate-and-fire spiking neural network. The parameter setting is described in the above example box. The figure plots the membrane potential and firing rate of each neuron and the black asterisk indicates the timing when the neuron fires a spike.}
    \label{fig:SNNs example potential}
\end{figure}

As a natural summary statistics of a SNN, it is of great interest for neuroscientists to analyze the firing rate of a given network. Of course, one can always simply simulate the SNN and empirically calculate the firing rate. But how fast would such an empirical estimation of firing rate converge? Moreover, is there any analytical connection between the firing rate and the network parameters?

\subsection{Optimal E/I balanced networks}\label{sec:optimal SNN}
From~\autoref{eq:SNN continuous} and~\autoref{eq:SNN discrete}, one can sense that a spiking neural network is in general hard to analyze due to the non-linearity from the activation rule. Specifically, as there are so mcuh possibilities of network parameters (i.e., $D,F,\Omega$), the dynamic of a spiking neural network is in principle analytically intractable. Nevertheless, in a work of Boerlin, Den\`{e}ve, and Machens~\cite{boerlin2011spike}, they derived, under the normative assumption of predictive coding, the network connectivities should follow certain algebraic relation, which they termed the optimal E/I balanced condition.

\begin{definition}[Optimal E/I balanced~\cite{boerlin2011spike}]
Let $F\in\Real^{n\times m}$ be the feedforward synaptic weight, let $C\in\Real^{n\times n}$ be the recurrent connectivity, and let $D\in\Real^{n\times m}$ be the decoding synaptic weight. We say this spiking neural network is optimal E/I balanced if $\Omega=FF^\top$ and $D=F^\top$.
\end{definition}

So now the dynamics of the SNNs, i.e., ~\autoref{eq:SNN continuous} and~\autoref{eq:SNN discrete}, become
\begin{equation}\label{eq:optimal SNN continuous}
\frac{d \bv(t)}{dt} = -\tau \bv(t) -FF^\top \bs(t) + F\bx
\end{equation}
for the continuous case and
\begin{equation}\label{eq:optimal SNN discrete}
\bv(t+1) = -\tau \bv(t) - FF^\top \bs(t) + F\bx\cdot\Dt
\end{equation}
for the discrete case where the external input current is modeled as $I=F\bx$.

Furthermore, in a subsequent work, Barrett, Den\`{e}ve, and Machens~\cite{barrett2013firing} found that when the SNN is optimally E/I balanced, then there is a analytical expression for the firing rate of the network. Concretely, they used a normative theory to derive the following optimal E/I balanced condition.

\begin{proposition}[Firing rate prediction in optimal E/I balanced SNNs\cite{barrett2013firing}]\label{prop:BDM}
In an optimal balanced SNN, the firing rate will converge to minimize $E(\br)=-\br^\top FF^\top \br - 2\br^\top F\bx$ under the constraint of each coordinate of $\br$ being non-negative, i.e.,
\[
\lim_{t\to\infty} E(\br(t)) = \min_{\br\geq0}E(\br) \, .
\]
\end{proposition}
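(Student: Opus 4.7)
The plan is to expose the integrate-and-fire dynamics under optimal E/I balance as a greedy coordinate-descent scheme on a convex quadratic, and then to deduce convergence from the KKT characterization of $\min_{\br\ge 0} E(\br)$.

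First I would unfold the membrane potential along a trajectory. In the non-leaky case $\tau=0$, integrating~\autoref{eq:optimal SNN continuous} and using $\int_0^t \bs(s)\,ds = t\br(t)$ yields the clean identity
\begin{equation*}
  \bv(t) \;=\; t\bigl(F\bx - FF^\top \br(t)\bigr),
\end{equation*}
so that, up to the positive factor $t/2$, each coordinate of $\bv(t)$ equals (the negative of) the gradient $\partial_i E(\br(t))$ of the associated quadratic evaluated at the current firing rate. Thus under the optimal E/I condition the membrane potentials serve as \emph{local readouts of a global gradient}. This is the structural miracle that $\Omega = FF^\top$ and $D = F^\top$ purchase, and it is the key lemma from which everything else follows.

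Next I would translate the firing rule into an optimization step. Neuron $i$ spikes when $v_i(t)>1$, i.e., exactly when $(F\bx - FF^\top \br(t))_i > 1/t$; each spike increments $r_i$ by $1/t$, which by convexity of the quadratic produces a strict decrease of $E(\br)$. Reading this off, the dynamics is greedy coordinate descent on the non-negative orthant with a shrinking optimality tolerance $1/t$. As $t\to\infty$, any asymptotic firing rate $\br^\star$ must therefore satisfy: $(F\bx - FF^\top \br^\star)_i \le 0$ for every $i$ (no quiescent neuron wants to fire), $\br^\star\ge 0$ (firing rates are non-negative by construction), and complementary slackness (the potentials of actively spiking neurons hover near threshold, forcing the corresponding gradient component to vanish in the rescaled limit). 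These are precisely the KKT conditions for $\min_{\br\ge 0} E(\br)$, and by convexity they certify the optimum.

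The main obstacle is in making these asymptotic heuristics rigorous. First, the dynamics is event-driven and discontinuous: ``coordinate descent'' must be formalized through a monotone-potential argument showing that $E(\br(t))$ decreases at each spike up to an $O(1/t)$ slack and is bounded below along the orbit, and a concurrency rule is needed when several neurons cross threshold simultaneously. Second, the leaky case $\tau>0$ breaks the clean affine identity for $\bv(t)$: the natural rate variable becomes the exponentially filtered spike train $\bar\br(t) = \tau\int_0^t e^{-\tau(t-s)}\bs(s)\,ds$, for which one must verify that $\bar\bv(t) \approx F\bx - FF^\top \bar\br(t)$ in steady state so that the same KKT characterization survives the limit. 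I would expect a perturbation argument around $\tau = 0$, combined with the stability of the convex KKT system and a Lyapunov-type bound on the filtered dynamics, to close this final gap.
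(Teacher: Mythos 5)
First, a framing point: the paper never proves \autoref{prop:BDM} as stated — it is quoted as a numerically supported observation of Barrett et al., and the paper's actual proof is of the rigorous counterpart \autoref{thm:linearsystem} in \autoref{app:SNNs least squares} (convergence of the residual $\|\bx_F-F^\top\br(t)\|_2$ at rate $O(1/t)$). Your opening step coincides exactly with the heart of that proof: telescoping the dynamics gives $\bv(t+1)-\bv(1) = t\Dt\,(F\bx - FF^\top\br(t))$ (the paper's \autoref{eq:conservation}), i.e., the membrane potentials are $t$ times the (projected) gradient of the quadratic at the current rate. (Minor note: the objective as printed, $E(\br)=-\br^\top FF^\top\br-2\br^\top F\bx$, is concave; you silently used the intended convex form $\br^\top FF^\top\br-2\br^\top F\bx=\|\bx-F^\top\br\|_2^2-\|\bx\|_2^2$, which is correct.)

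Where your proposal has a genuine gap is in how it converts this identity into convergence. The coordinate-descent/KKT reading needs two things you do not establish: (i) that $E(\br(t))$ is monotone up to $O(1/t)$ slack — it is not, because between spikes $\br(t)$ is renormalized by the growing denominator $t$, so $E$ can increase during charging phases; and (ii) that active neurons' potentials ``hover near threshold,'' which amounts to a uniform bound on $\bv(t)$. Point (ii) is not automatic: a spike subtracts $FF^\top\bs(t)$ from \emph{all} potentials, and the off-diagonal entries of $FF^\top$ can push other neurons far above or below threshold, so no single coordinate is trivially confined to $[\,\cdot\,,\eta]$. This uniform boundedness is precisely the technical content of the paper's proof: \autoref{lemma:DSNN-potentialbounded} shows $\|\bv(t)\|_{(FF^\top)^{\dagger}}\le 2\sqrt{\kappa\eta n}$ by induction, using the inner-product inequality $\bv(t)^\top\bs(t)\ge\eta|\Gamma(t)|$ to show that every spiking event produces enough negative drift in that matrix norm to offset the quadratic cost of the spike (this is where the conditions $\eta\ge\lambda_{\max}$ and the smallness of $\Dt$ enter). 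Once that bound is in hand, the paper needs no KKT argument at all for the least-squares statement: dividing the telescoped identity by $t\Dt$ and taking the $(FF^\top)^\dagger$-norm immediately yields $\|\bx_F-F^\top\br(t)\|_2=O(1/t)$. So your approach is not wrong in spirit, but the step you defer to ``a monotone-potential argument'' is exactly the main lemma of the paper's proof, and the natural Lyapunov quantity is not $E(\br(t))$ but $\|\bv(t)\|_{(FF^\top)^{\dagger}}$. (The KKT/duality machinery you invoke does appear in the paper, but only for the stronger $\ell_1$-minimization result \autoref{thm:l1}, via the dual SNN and ideal coupling.)
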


Finally, the following is an example of optimal E/I balanced SNN.

\begin{examplebox}{Example (An optimal E/I balanced SNN).}
Let us consider the same SNN described in the previous example. It is not difficult to check that it is actually an optimal E/I balanced SNN with
\[
F    = \begin{pmatrix}
    1 & 0  \\
    0 & 1  \\
    \sqrt{2}/2 & \sqrt{2}/2
\end{pmatrix} \ ,\ \bx = \begin{pmatrix}
    1\\ 2
\end{pmatrix} \, .
\]
\end{examplebox}

\subsection{Our results}\label{sec:SNNs results}
Barrett, Den\`{e}ve, and Machens~\cite{barrett2013firing} made a good first step in analytically characterizing the firing rate of optimal E/I balanced networks. They numerically supported the above proposition in several toy networks and sketched the underlying physical intuition via the concept of tight E/I balanced~\footnote{In the theory of E/I balance, there are two types of balancedness that have been studied: (i) the loose balance where the inhibitory signal only cancel out the slow time scale part of the excitatory signal and (ii) the tight balance where the inhibitory signal tightly track the excitatory signal with only a tiny time shift. The phenomenology, functionality, and dynamic of these two types of E/I balance are quite different and it is of both theoretical and experimental interest to explore deeper into their computational aspects.}. Meanwhile, from the computer science perspectives, their observation left out several interesting research questions:

\begin{enumerate}
    \item Would the proposition hold for all the optimal E/I balanced networks?
    \item How fast is the convergence of the firing rate?
    \item If there are more than one solution to the error function $E(\br)$, which one would the firing rate converge to?
\end{enumerate}

In a collaboration with Chung and Lu~\cite{CCL19}, we answered all the above questions using rigorous mathematics.
First, the non-negative least squares problem defined in the following exactly captures the observation of Barrett, Den\`{e}ve, and Machens~\cite{barrett2013firing} as described in~\autoref{prop:BDM}.
\begin{definition}[Non-negative least squares problem]\label{def:non-negative least squares}
Let $F\in\Real^{m\times n}$ and $\bx\in\Real^m$, define
\begin{equation}
	\begin{aligned}
	& \underset{\br\in\Real^n}{\text{minimize}}
	& & \frac{1}{2}\|\bx-F^\top\br\|_2 \\
	& \text{subject to}
	& & \br\geq\mathbf{0}.
	\end{aligned}
    \tag{Non-negative least squares}
\end{equation}
\end{definition}

The following is an informal version of the theorem we prove regarding the convergence to the (non-negative) least squares problem. See~\autoref{app:SNNs least squares} for the formal theorem statement as well as a complete proof.
\begin{theorem}[Informal]\label{thm:SNNs main least squares}
The firing rate of an optimal balanced SNN will efficiently converge to a solution to the non-negative least squares problem. 
\end{theorem}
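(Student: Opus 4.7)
The plan is to read the spiking dynamics as an (approximate) gradient-flow-like procedure for the NLS objective $E(\br) = \tfrac{1}{2}\|\bx - F^\top \br\|_2^2$, with the scaled membrane potential $\bv(t)/t$ encoding (up to a vanishing term) the negative gradient of $E$ evaluated along the trajectory. Concretely, I would integrate the continuous ODE to obtain a direct algebraic identity between $\bv(t)/t$ and $\nabla E(\br(t))$, establish a uniform-in-$t$ bound on $\|\bv(t)\|_\infty$ using the firing threshold, and then deduce KKT optimality of any limit point of $\br(t)$. The convergence rate of $\nabla E(\br(t)) \to 0$ then drops out at $O(1/t)$, which I will argue is the quantitative content of the word ``efficiently.''

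First I would treat the non-leaky case $\tau = 0$; the leaky case is analogous with an exponential kernel. Integrating $d\bv/dt = -FF^\top \bs(t) + F\bx$ from $0$ to $t$ and dividing by $t$ gives the key identity
\[
\frac{\bv(t)}{t} \;=\; \frac{\bv(0)}{t} \,+\, F\bx \,-\, FF^\top \br(t) \;=\; \frac{\bv(0)}{t} \,-\, \nabla E(\br(t)),
\]
so the gradient of the NLS objective is literally read off from the scaled membrane potential. Next I would prove a uniform bound $\|\bv(t)\|_\infty \leq C$ depending only on $F$ and $\bx$: the upper bound $v_i(t) \lesssim 1$ follows from the firing rule, since whenever $v_i$ crosses threshold neuron $i$ spikes and $v_i$ drops by $(FF^\top)_{ii} > 0$. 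The lower bound is subtler because off-diagonal entries of $FF^\top$ can be negative and push $v_i$ downward without triggering a reset; I would control this by bounding the total number of spikes of each neuron via the upper-threshold argument combined with boundedness of $F\bx$. Together with the identity, this yields $\nabla E(\br(t)) = O(1/t)$.

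Finally, I would translate vanishing-gradient into KKT optimality. Non-negativity $\br(t) \geq 0$ is automatic since firing rates are averages of nonnegative spike indicators. Complementary slackness follows from a case split on each coordinate $i$: if neuron $i$ fires infinitely often, each spike pins $v_i$ near its post-reset value, so $v_i(t)/t \to 0$ and hence $[\nabla E(\br(t))]_i \to 0$; if neuron $i$ fires only finitely often, then $v_i(t) \leq 1$ for all sufficiently large $t$, so $\limsup v_i(t)/t \leq 0$, which gives $\liminf [\nabla E(\br(t))]_i \geq 0$. Since $E$ is convex on the nonnegative orthant, this KKT system characterizes the NLS minimizers, so every limit point of $\br(t)$ solves the problem. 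The main obstacle will be twofold: (i) ruling out pathological oscillations in which different subsets of neurons fire in alternating bursts so that $\br(t)$ itself fails to converge even though its limit points are all NLS minimizers---this likely requires a Lyapunov argument using near-monotone decrease of $E(\br(t))$ along the dynamics---and (ii) porting the argument to the discrete-time update, where Euler-step artefacts introduce additional $O(\Dt)$ residuals in the key identity that must be absorbed into the final $1/t$ rate without spoiling the KKT case analysis.
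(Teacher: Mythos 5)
Your first step---integrating the dynamics and dividing by $t$ to read off $F\bx - FF^\top\br(t)$ from the scaled membrane potential---is exactly the paper's ``potential conservation'' identity (\autoref{eq:conservation}), and your observation that everything then reduces to a uniform-in-$t$ bound on the potential is also the right reduction. The gap is in how you propose to get that bound. You want $\|\bv(t)\|_\infty \le C$, and you correctly flag that the lower bound is the hard direction, but the mechanism you offer (``bounding the total number of spikes of each neuron via the upper-threshold argument'') is circular: controlling how far inhibitory cross-terms $-(FF^\top)_{ij}s_j(t)$ can drive $v_i$ downward requires an upper bound on the weighted spike counts $\sum_j (FF^\top)_{ij}N_j(t)$ that grows no faster than $(F\bx)_i\, t\,\Dt$, which is essentially the statement $FF^\top\br(t)\approx F\bx$ you are trying to prove. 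The resulting system of inequalities among the $N_i(t)$ does not close for general $F$ (e.g., strongly correlated rows), and even your upper bound on $v_i$ is not a single-neuron argument in discrete time, since simultaneous spikes of other neurons can push $v_i$ up through the off-diagonal entries of $FF^\top$.

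The paper's resolution (\autoref{lemma:DSNN-potentialbounded}) is to bound $\bv(t)$ not in $\ell_\infty$ but in the matrix norm $\|\cdot\|_{(FF^\top)^{\dagger}}$, which makes the spiking step a genuine Lyapunov decrease: the firing rule gives $\bv(t)^\top\bs(t)\ge \eta\,|\Gamma(t)|$ (\autoref{lemma:DSNN-keylemma}) while the spike recoil satisfies $\|FF^\top\bs(t)\|^2_{(FF^\top)^{\dagger}} = \|F^\top\bs(t)\|_2^2 \le \lambda_{\max}|\Gamma(t)|$, so choosing $\eta\ge\lambda_{\max}$ and $\Dt$ small enough makes the cross term dominate and the squared matrix norm non-increasing whenever any neuron fires; when no neuron fires, $\|\bv(t)\|_\infty\le\eta$ trivially controls the matrix norm. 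This norm is also exactly the right one on the other side of the identity, since $\|F\bx - FF^\top\br(t)\|_{(FF^\top)^{\dagger}} = \|\bx_F - F^\top\br(t)\|_2$, so the $O(1/t)$ rate for the $\ell_2$ residual follows with no norm-equivalence loss. Finally, note that the formal statement (\autoref{thm:linearsystem}) only asserts $\|\bx_F - F^\top\br(t)\|_2 = O(1/t)$; your KKT/complementary-slackness analysis and the discussion of limit points of $\br(t)$ itself are not needed here (selection among multiple optima is the content of the separate $\ell_1$-minimization theorem, which requires a much heavier dual-space argument).
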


Note that when $m<n$, the non-negative least squares problem might have more than one optimal solution. Thus, a natural question would be: which solution would the firing rate of SNNs converge to? In~\autoref{thm:SNNs main least squares} we show that when the SNNs are non-leaky, the firing rate will converge to the solution with the least $\ell_1$ norm as captured by the following optimization problem.
\begin{definition}[Non-negative $\ell_1$ minimization problem]\label{def:l1 min}
Let $F\in\Real^{n\times m}$ and $\bx\in\Real^m$, define
\begin{equation}
	\begin{aligned}
	& \underset{\br\in\Real^n}{\text{minimize}}
	& & \|\br\|_1 \\
	& \text{subject to}
	& & F^\top\br=\bx,\ \br\geq\mathbf{0}
	\end{aligned}
    \tag{$\ell_1$ minimization}
\end{equation}
where $\|\br\|_1$ is defined as $\sum_i|r_i|$.
\end{definition}
As $\br$ correspond to the firing rate vector of SNNs, its $\ell_1$ norm corresponds to the firing rate of the whole SNN. That is, our result indicates that the SNN will converge to the most ``spike-efficient'' solution.

When the SNNs are leaky, the firing rate will instead converge to the optimal solution of a (non-negative) Lasso (Least absolute shrinkage and selection operator) problem (or equivalently, the basis pursuit denoising (BPDN) problem), which is essentially the non-negative least squares problem with an extra $\ell_1$ regularized term.
\begin{definition}[BPDN/Lasso]\label{def:lasso}
Let $F\in\Real^{n\times m}$, $\bx\in\Real^m$, and $\Lambda>0$, define
\begin{equation}
	\begin{aligned}
	& \underset{\br\in\Real^n}{\text{minimize}}
	& & \frac{1}{2}\|\bx-F^\top\br\|_2^2\\
	& \text{subject to}
	& & \|\br\|_1\leq \Lambda,\ \br\geq\mathbf{0}.
	\end{aligned}
    \tag{BPDN}
\end{equation}
Or, equivalently for some $\beta>0$,
\begin{equation}
	\begin{aligned}
	& \underset{\br\in\Real^n}{\text{minimize}}
	& & \frac{1}{2}\|\bx-F^\top\br\|_2^2 + \beta\|\br\|_1 \\
	& \text{subject to}
	& & \br\geq\mathbf{0}.
	\end{aligned}
    \tag{Lasso}
\end{equation}
\end{definition}
The constant $\beta$ in~\autoref{def:lasso} determines the tradeoffs between the sparsity of the solution (i.e., the $\ell_1$ norm of the solution) and the error (i.e., $\|\bx-F^\top\br\|_2^2/2$). Intuitively, the optimal solution of Lasso corresponds to the solution that minimize the error under certain firing budget constraint.

Now, let us summarize our results in the following theorem (an informal version).

\begin{theorem}[Informal]\label{thm:SNNs main sparse}
The firing rate of an optimal balanced SNN will efficiently converge to a sparse solution to the non-negative least squares problem. In particular, if the SNN is non-leaky, the firing rate will converge to the least $\ell_1$ norm solution. If the SNN is leaky, the firing rate will converge to the (non-negative) Lasso (with a proper choice of the regularizing constant) solution.
\end{theorem}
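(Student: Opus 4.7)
The plan is to sharpen \autoref{thm:SNNs main least squares}---which only pins down convergence to some non-negative least-squares solution---by matching the full KKT system of non-negative $\ell_1$ minimization in the non-leaky case and of non-negative Lasso in the leaky case. The central observation is that the time-rescaled membrane potential is exactly a dual variable for the corresponding convex program, so the integrate-and-fire threshold and reset rules translate directly into dual feasibility and complementary slackness.

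For the non-leaky case ($\tau=0$), I would integrate \autoref{eq:optimal SNN continuous} with $\bv(0)=\mathbf{0}$ to obtain $\bv(t)=F\bx\,t-FF^\top\!\int_0^t\bs(t')dt' = t\,F\bigl(\bx-F^\top\br(t)\bigr)$. Defining the auxiliary vector $\blambda(t):= t\bigl(\bx - F^\top\br(t)\bigr)$ yields the exact identity $\bv(t) = F\blambda(t)$, and the threshold bound $\bv(t)\le\mathbf{1}$ then reads $F\blambda(t)\le\mathbf{1}$ at every time. That is, $\blambda(t)$ is always feasible for the LP dual $\max\{\bx^\top\blambda : F\blambda\le\mathbf{1}\}$ of non-negative $\ell_1$ minimization. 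Differentiating gives the elegant law $\dot\blambda(t)=\bx-F^\top\bs(t)$---a continuous ascent along the dual gradient $\bx$, interrupted by jumps $-F^\top\be_i$ whenever neuron $i$ fires, i.e.\ precisely when a constraint $(F\blambda)_i\le 1$ becomes active. This is exactly a constrained dual-ascent dynamics on the feasible polytope.

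From here I would extract $(\br^\star,\blambda^\star)$ as (possibly subsequential) time averages and verify the three KKT conditions for non-negative $\ell_1$ minimization: primal feasibility $F^\top\br^\star=\bx$ via \autoref{thm:SNNs main least squares}; dual feasibility $F\blambda^\star\le\mathbf{1}$ via the threshold bound together with boundedness of $\blambda(t)$ (which follows from full column rank of $F$); and complementary slackness $(F\blambda^\star)_i=1$ whenever $r^\star_i>0$, from the fact that such neurons satisfy $v_i(t)=1$ on a set of times with positive density. These force $\br^\star$ to be a minimum-$\ell_1$-norm non-negative solution. The leaky case follows the same template after replacing the plain integral by the exponentially smoothed one; the asymptotic identity becomes $F(\bx - F^\top\br^\star) = \tau\,\bar\bv^\star$, so $\bv(t)\le\mathbf{1}$ gives $F(\bx - F^\top\br^\star)\le \beta\mathbf{1}$, and firing neurons saturate the inequality with equality at $\beta = \tau\cdot\bar v_i^\star$. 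Matching yields the KKT conditions of non-negative Lasso with regularizer $\beta$ proportional to $\tau$.

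The main obstacle I anticipate is quantifying the ``efficient convergence'' claim and pinning down the precise value of $\beta$ in the leaky case. For the rate, I plan to use a quadratic Lyapunov function combining $\|\br(t)-\br^\star\|^2$ with a term penalizing dual-feasibility violation, and bound its drift under both the continuous drift and jump components of the dynamics to obtain a polynomial rate in $t$. Identifying $\beta$ is the harder step: the asymptotic $\bar v_i^\star$ at firing neurons depends on the reset levels and on the spectrum of $FF^\top$ restricted to the active support, so showing that $\bar v_i^\star$ is in fact the same across all firing neurons---a prerequisite for the single-parameter Lasso interpretation---will likely require a careful steady-state analysis of the inter-spike potential excursions rather than a purely convex-analytic argument.
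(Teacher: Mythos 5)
Your starting point is exactly the paper's: your $\blambda(t)=t(\bx-F^\top\br(t))$ is precisely the dual SNN $\bu(t)$ with $\bv(t)=F\bu(t)$, and the observation that the threshold rule enforces dual feasibility $F\blambda(t)\le\mathbf{1}$ is the geometric picture the paper builds on. The gap is in how you get from there to the minimum-$\ell_1$ solution. First, your complementary-slackness step does not go through as stated: a neuron with $r^\star_i>0$ has $F_i^\top\blambda(t)$ repeatedly \emph{crossing} the threshold and then being knocked back by the spike jump $-\alpha F^\top\be_i$, so the constraint is only saturated up to an $O(\alpha)$-sized oscillation, and the trajectory $\blambda(t)$ need not converge at all (different subsequential limits can sit at different points of the optimal face when $m<n$). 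You therefore only ever obtain an \emph{approximate} KKT system, and you have no mechanism for converting approximate KKT into a quantitative bound on $\|\br^\star\|_1-\OPT^{\ell_1}$. The paper resolves exactly this by showing that the (ideal) primal--dual pair satisfies the \emph{exact} KKT conditions of a perturbed $\ell_1$ program (with $\bx$ replaced by $F^\top\br^{\text{ideal}}(t)$) and then invoking a perturbation bound (\autoref{lemma:idealalgorithm-OPTbounds}) to control the $\ell_1$ error by the $\ell_2$ residual; nothing in your plan plays this role.

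Second, your proposed Lyapunov function $\|\br(t)-\br^\star\|^2$ presupposes the limit $\br^\star$ you are trying to identify, which is circular until uniqueness of the limit is established. The non-monotonicity caused by spikes is the central technical obstacle here, and the paper's ideal-coupling construction (projecting $\bu(t)$ onto a shrunken polytope $\mathcal{P}_{F,1-\tau}$ so that spikes leave the coupled point unchanged), together with the auxiliary SNNs and the strict-improvement potential $\bx^\top(\bu^{\text{ideal}}+\sum_d\bu^{\text{auxiliary}}_d)$ of \autoref{lem:strict improvement}, exists precisely to manufacture a monotone quantity. This machinery also surfaces assumptions your proposal omits but which are necessary for the statement to hold as written: the niceness condition $\gamma(F)>0$ (which in particular guarantees uniqueness of the $\ell_1$ minimizer) and a sufficiently small spike strength $\alpha\le\alpha(\gamma(F)/(n\lambda_{\max}))$. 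Your caution about the leaky case is warranted; note that the paper's rigorous theorem (\autoref{thm:l1}) covers only the non-leaky case, with the Lasso correspondence (and your $\beta\propto\tau$ identification) left at the level of the dual-gradient heuristic.
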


The mathematically rigorous version of the above theorem will be stated later in~\autoref{thm:l1}, where several technical conditions would arise. Nevertheless, the main contribution here is the introduction of the first connection between optimal E/I balanced SNNs and sparse recovery through the lens of duality and geometry in convex optimization.

\subsection{An algorithmic perspective of optimal E/I balanced SNNs}\label{sec:SNNs algorithm}

\medskip \noindent \textbf{The duality of optimization problems.}
In (convex) optimization theory, \textit{duality} is an important tool to investigate the optimality, geometry, and stability of optimization problems. In the work with Chung and Lu~\cite{CCL19}, we discover a beautiful interplay between optimal balanced SNNs and sparse recovery through the lens of duality. For readers who are new to optimization theory, we recommend a standard textbook by Boyd and Vandenberghe~\cite{boyd2004convex}. Now, let us start with the dual problems of the optimization problems of our interest.

The (non-negative) $\ell_1$ minimization (\autoref{def:l1 min}) has the following dual problem (see~\autoref{app:l1 min derivation} for a derivation).

\vspace{\minipagebeforesep}
\begin{minipage}{\linewidth}
	\begin{minipage}{0.45\linewidth}
		\begin{equation}\label{op:l1 min}
		\begin{aligned}
		& \underset{\br\in\Real^n}{\text{minimize}}
		& & \|\br\|_1 \\
		& \text{subject to}
		& & F^\top\br=\bx,\ \br\geq0
		\end{aligned}
		\end{equation}
	\end{minipage}
	\begin{minipage}{0.45\linewidth}
		\begin{equation}\label{op:l1 min dual}
		\begin{aligned}
		& \underset{\bu\in\Real^m}{\text{maximize}}
		& & \bx^\top\bu \\
		& \text{subject to}
		& & F\bu\leq1.
		\end{aligned}
		\end{equation}
	\end{minipage}
\end{minipage}
\vspace{\minipageaftersep}

The (non-negative) Lasso (\autoref{def:lasso}) has the following dual problem (see~\autoref{app:lasso derivation} for a derivation).

\vspace{\minipagebeforesep}
\begin{minipage}{\linewidth}
	\begin{minipage}{0.45\linewidth}
		\begin{equation}\label{op:lasso}
		\begin{aligned}
		& \underset{\br\in\Real^n}{\text{minimize}}
		& & \frac{1}{2}\|\bx-F^\top\br\|_2^2 + \beta\|\br\|_1\\
	    & \text{subject to}
	    & & \br\geq\mathbf{0}.
		\end{aligned}
		\end{equation}
	\end{minipage}
	\begin{minipage}{0.45\linewidth}
		\begin{equation}\label{op:lasso dual}
		\begin{aligned}
		& \underset{\bu\in\Real^m}{\text{maximize}}
		& & \frac{1}{2}\|\bx\|_2^2 - \frac{1}{2}\|\bx-\beta\bu\|_2^2 \\
		& \text{subject to}
		& & F\bu\leq1.
		\end{aligned}
		\end{equation}
	\end{minipage}
\end{minipage}
\vspace{\minipageaftersep}

\medskip \noindent \textbf{The dual space and the dual process of SNNs.}
The dual problems (\autoref{op:l1 min dual} and~\autoref{op:lasso dual}) naturally inspire the definition of the dual space of SNNs. Here we motivate the definition via defining the \textit{dual dynamics} of optimal balanced SNNs as follows. Let $\tau,F$ be the parameters as one would use in~\autoref{eq:optimal SNN continuous} (and in~\autoref{eq:optimal SNN discrete} for the discrete case), define processes $\bu(t)$ by the following differential equation or update rule:
\begin{equation}\label{eq:optimal SNN continuous dual}
\frac{d \bu(t)}{dt} = -\tau \bu(t) -F^\top \bs(t) + \bx
\end{equation}
for the continuous case and
\begin{equation}\label{eq:optimal SNN discrete dual}
\bu(t+1) = -\tau \bu(t) - F^\top \bs(t) + \bx\cdot\Delta t
\end{equation}
for the discrete case. Here $\bs(t)$ follows the same definition as discussed in~\autoref{eq:SNN continuous} and~\autoref{eq:SNN discrete}. Once we set $\bu(0)$ properly so that $\bv(0)=F\bu(0)$, then by construction, we have $\bv(t)=F\bu(t)$ for every $t$.

\medskip \noindent \textbf{A geometric view of the dual dynamics.}
We can now try to interpret the dual dynamics by examining the spiking activities. First, the spiking condition of neuron $i$ at time $t$ becomes the following.
\begin{equation}
\bv_i(t) > \eta \Leftrightarrow F_i^\top \bu(t) > \eta
\end{equation}
where $F_i$ is (the transpose of) the $i$-th row vector. Moreover, the effect of the $i$-th neuron's spiking on the dual dynamic is $-F_i$. Namely, this gives us a geometric picture to interpret the dual dynamic: whenever the dual process $\bu(t)$ surpasses the hyperplane defined as $W_i:=\{\bu:\ F_i^\top\bu=\eta\}$, then it will bounce back in the direction of the normal vector of this hyperplane. See~\autoref{fig:SNNs dual geometry} for an illustration.

\begin{figure}[h]
    \centering
    \includegraphics[width=13.5cm]{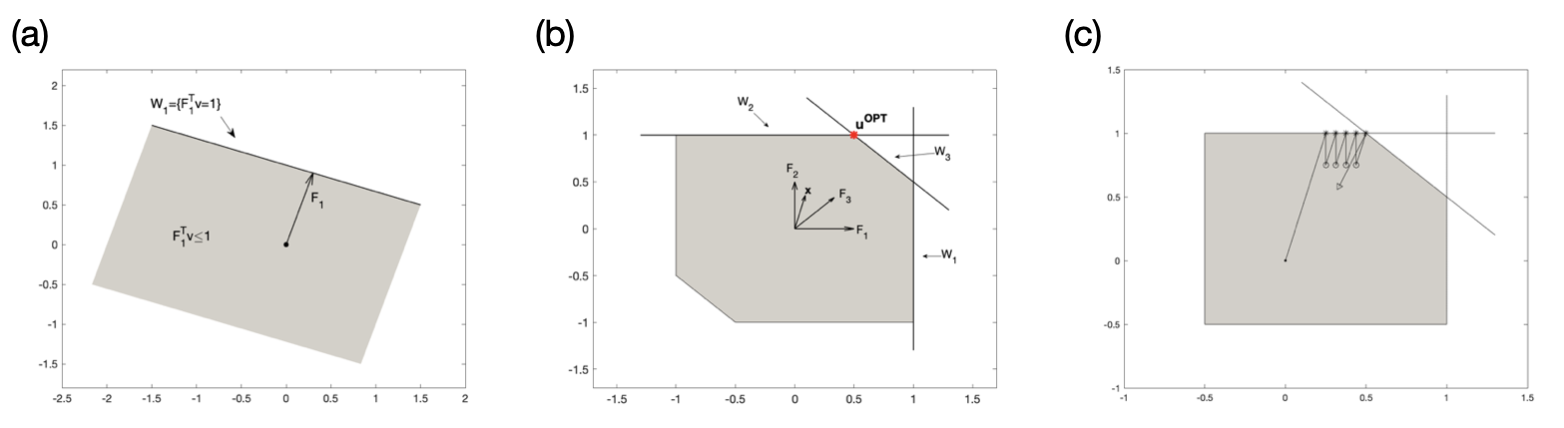}
    \caption{A geometric view of the dual dynamics. (a) An example of the dual space geometry of a single neuron. The ``dual wall'' $W_1$ and the feasible region (plotted gray). (b) An example of the dual space geometry of a three-neuron SNN (the same example as described earlier). (c) The dual space dynamic of the same three-neuron SNN with external current $\bx=[1,2]^\top$.}
    \label{fig:SNNs dual geometry}
\end{figure}

\medskip \noindent \textbf{The dual dynamic is a greedy projected gradient descent algorithm for the dual problem.}
Finally, we are ready to put every pieces together: the dual dynamic (\autoref{eq:optimal SNN continuous dual} and~\autoref{eq:optimal SNN discrete dual}) is a greedy projected gradient descent algorithm for the dual problems (\autoref{op:l1 min dual} and~\autoref{op:lasso dual}). To make this claim mathematically rigorous, it requires extra mathematical assumptions as well as another 20 pages of tedious derivations. Hence, we defer that part to~\autoref{app:SNNs l1 min details} and focus on the high-level intuition here.

Let us use the non-leaky and continuous optimal balanced SNNs as a starting example. Recall that the primal (i.e., the original membrane potential vector) and the dual process are

\vspace{\minipagebeforesep}
\begin{minipage}{\linewidth}
	\begin{minipage}{0.5\linewidth}
		\begin{equation*}
		\frac{d\bv(t)}{dt} = -FF^\top\bs(t) + F\bx
        \tag{Primal}
		\end{equation*}
	\end{minipage}
	\begin{minipage}{0.45\linewidth}
		\begin{equation*}
		\frac{d\bu(t)}{dt} = -F^\top\bs(t) + \bx \, .
        \tag{Dual}
		\end{equation*}
	\end{minipage}
\end{minipage}
\vspace{\minipageaftersep}

The primal and dual optimization problem of (non-negative) $\ell_1$ minimization are the following.

\vspace{\minipagebeforesep}
\begin{minipage}{\linewidth}
	\begin{minipage}{0.5\linewidth}
		\begin{equation*}
		\begin{aligned}
		& \underset{\br\in\Real^n}{\text{minimize}}
		& & \|\br\|_1 \\
		& \text{subject to}
		& & F^\top\br=\bx,\ \br\geq0
		\end{aligned}
        \tag{Primal}
		\end{equation*}
	\end{minipage}
	\begin{minipage}{0.45\linewidth}
		\begin{equation*}
		\begin{aligned}
		& \underset{\bu\in\Real^m}{\text{maximize}}
		& & \bx^\top\bu \\
		& \text{subject to}
		& & F\bu\leq1.
		\end{aligned}
        \tag{Dual}
		\end{equation*}
	\end{minipage}
\end{minipage}
\vspace{\minipageaftersep}

While at first glance, it's unclear the connection between the primal dynamic and the primal problem, the dual dynamic is naturally solving the dual problem: the external current term $\bx$ in the dual dynamic corresponds to increasing the objective value (i.e., $\bx^\top\bu$) and the spiking effect in the dual dynamic corresponds to making sure $\bu(t)$ remain feasible (i.e., $F\bu\leq1$).

Similarly, for the leaky case the primal and dual process are

\vspace{\minipagebeforesep}
\begin{minipage}{\linewidth}
	\begin{minipage}{0.5\linewidth}
		\begin{equation*}
		\frac{d\bv(t)}{dt} = -\tau\bv(t) -FF^\top\bs(t) + F\bx
        \tag{Primal}
		\end{equation*}
	\end{minipage}
	\begin{minipage}{0.45\linewidth}
		\begin{equation*}
		\frac{d\bu(t)}{dt} = -\tau\bu(t) -F^\top\bs(t) + \bx \, .
        \tag{Dual}
		\end{equation*}
	\end{minipage}
\end{minipage}
\vspace{\minipageaftersep}

The primal and dual optimization problem of (non-negative) Lasso are the following.

\vspace{\minipagebeforesep}
\begin{minipage}{\linewidth}
	\begin{minipage}{0.5\linewidth}
		\begin{equation*}
		\begin{aligned}
		& \underset{\br\in\Real^n}{\text{minimize}}
		& & \frac{1}{2}\|\bx-F^\top\br\|_2^2 + \beta\|\br\|_1\\
	    & \text{subject to}
	    & & \br\geq\mathbf{0}.
		\end{aligned}
        \tag{Primal}
		\end{equation*}
	\end{minipage}
	\begin{minipage}{0.45\linewidth}
		\begin{equation*}
		\begin{aligned}
		& \underset{\bu\in\Real^m}{\text{maximize}}
		& & \frac{1}{2}\|\bx\|_2^2 - \frac{1}{2}\|\bx-\beta\bu\|_2^2 \\
		& \text{subject to}
		& & F\bu\leq1.
		\end{aligned}
        \tag{Dual}
		\end{equation*}
	\end{minipage}
\end{minipage}
\vspace{\minipageaftersep}

Now the dual objective is to minimize the distance between $\bx$ and $\beta\bu$. And indeed, the gradient of $\|\bx-\beta\bu\|_2^2/2$ is $\beta^2\bu-\beta\bx$, which is exactly right hand side of the dual dynamic when there's no spike. From this view, one can also see that when $\beta$ goes to $0$ corresponds to the amount the leakage goes to $0$.

\subsection{An emergent computation of optimal E/I balanced SNNs}\label{sec:SNNs emergence}
Through the dual dynamic and the lens of algorithms, we further establish the connection between the firing rate of an optimal balanced SNN and sparse recovery. We now state the formal version of the two main theorems.

\begin{restatable}[Lease squares]{theorem}{snnleastsquares}\label{thm:linearsystem} 
For every $F\in\Real^{n\times m}$, $\bx\in\Real^m$, and $\epsilon>0$, let $\br(t)$ be the firing rate of an SNN with $\Omega=FF^\top$, $I=F\bx$, $\alpha=1$, $\eta\geq\lambda_{\max}$, and $\Dt<\frac{\sqrt{\lambda_{\min}}}{24\sqrt{n}\cdot\|\bx_F\|_2}$ where $\bx_F$ is the orthogonal projection of $\bx$ to the orthogonal complement of the null space of $F$. We have that $\|\bx_F-F^\top\br(t)\|_2\leq2\sqrt{\kappa\eta n}\cdot\|\bx_F\|_2/(t\cdot\Dt)$ for every $t\in\N$.

In particular, for every $\epsilon>0$, if we set $\eta = \lambda_{\max}$ and $\Dt=\frac{\sqrt{\lambda_{\min}}}{24\sqrt{n}\cdot\|\bx_F\|_2}$, then $\|\bx_F-F^\top\br(t)\|_2\leq\epsilon$ for every $t\geq\Omega(\kappa n/\epsilon)$.
\end{restatable}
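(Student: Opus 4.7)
The plan is to build on the dual-dynamics framework from Section 4.2.4. With $\alpha=1$ we are in the non-leaky regime, so after initializing the dual process $\bu(0)$ to satisfy $\bv(0)=F\bu(0)$ the recursion $\bu(t+1)=\bu(t)-F^\top\bs(t)+\bx\cdot\Dt$ maintains the invariant $\bv(t)=F\bu(t)$. The key identity comes from applying the orthogonal projection $P_F$ onto the row span of $F$ to both sides of this recursion: since $F^\top\bs(t)$ already lies in that subspace and $P_F\bx=\bx_F$, telescoping from $0$ to $t-1$ yields
\begin{equation*}
\bx_F - F^\top\br(t) \;=\; \frac{P_F\bigl(\bu(t)-\bu(0)\bigr)}{t\cdot\Dt}.
\end{equation*}
Proving the theorem thus reduces to a uniform bound on $\|P_F\bu(t)\|_2$.

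I would next convert this into a bound involving the potential vector $\bv(t)$. Any vector in the row span of $F$ is stretched by $F$ by a factor of at least $\sqrt{\lambda_{\min}}$, so $\|P_F\bu(t)\|_2 \leq \|F\bu(t)\|_2/\sqrt{\lambda_{\min}} = \|\bv(t)\|_2/\sqrt{\lambda_{\min}} \leq \sqrt{n}\,\|\bv(t)\|_\infty/\sqrt{\lambda_{\min}}$. It therefore suffices to show $\|\bv(t)\|_\infty = O(\eta)$ uniformly in $t$; combining this with the identity above, using $\eta\geq\lambda_{\max}$ together with $\kappa=\lambda_{\max}/\lambda_{\min}$, and accounting for the initial-condition contribution $P_F\bu(0)$ (which can be taken to scale with $\|\bx_F\|_2$) delivers the advertised $2\sqrt{\kappa\eta n}\,\|\bx_F\|_2/(t\cdot\Dt)$ rate. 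The ``in particular'' statement then follows by plugging $\eta=\lambda_{\max}$ and $\Dt$ at its upper threshold into this bound and solving for $t$.

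The main obstacle is controlling $\|\bv(t)\|_\infty$ in the discrete-time spiking model. In continuous time the firing rule outlaws overshoot, but a single discrete step both adds a drift term $F\bx\cdot\Dt$ and subtracts a reset term $FF^\top\bs(t)$ that can, in principle, compound when several coordinates cross threshold simultaneously. The step-size hypothesis $\Dt<\sqrt{\lambda_{\min}}/\bigl(24\sqrt{n}\,\|\bx_F\|_2\bigr)$ is calibrated precisely so that the per-step drift $\|F\bx\|_\infty\,\Dt$ and the worst-case cross-talk (controlled through $\|FF^\top\|_{\mathrm{op}}\leq\lambda_{\max}\leq\eta$) both stay on the order of $\eta$. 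I would prove $\|\bv(t)\|_\infty \leq O(\eta)$ by induction on $t$, splitting each update into a drift phase followed by a reset phase and invoking the step-size condition to rule out spike cascades; substituting this uniform bound back into the telescoping identity then completes the argument.
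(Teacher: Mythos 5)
Your telescoping step is sound and is in fact identical to the paper's: since $\bv(t)=F\bu(t)$, the quantity you want to bound, $\|P_F\bu(t)\|_2$, equals $\|\bv(t)\|_{(FF^\top)^{\dagger}}$, which is exactly the matrix norm the paper controls (its \autoref{lemma:PSDmatrix} gives $\|F\bu\|_{(FF^\top)^\dagger}=\|P_F\bu\|_2$). The gap is entirely in how you propose to establish that uniform bound. You reduce it to the claim $\|\bv(t)\|_\infty=O(\eta)$, proved by a coordinate-wise induction split into a drift phase and a reset phase. This is the weak link. When a set $\Gamma(t)$ of neurons fires, coordinate $i$ changes by $-(FF^\top\bs(t))_i=-\|F_i\|_2^2 s_i-\sum_{j\in\Gamma(t)\setminus\{i\}}F_i^\top F_j\, s_j$, and the off-diagonal cross terms can each be as large as $\lambda_{\max}$ in magnitude and of the wrong sign, so a single spike event can push an individual potential up by as much as $|\Gamma(t)|\cdot\lambda_{\max}$, which is not $O(\eta)$. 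The step-size hypothesis does not help here: $\Dt$ controls the drift $F\bx\cdot\Dt$, not the spike cross-talk, and nothing coordinate-wise prevents this overshoot from compounding. Even a posteriori, the paper's bound only yields $\|\bv(t)\|_\infty\le\|\bv(t)\|_2\le 2\eta\sqrt{\kappa n}$, not $O(\eta)$, so the invariant you want to induct on is not known to hold and is likely false for adversarial $F$ with strongly negatively correlated rows.

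The mechanism that actually closes the argument is a global energy inequality, not a coordinate-wise one: the paper inducts directly on $\|\bv(t)\|^2_{(FF^\top)^\dagger}$ (equivalently $\|P_F\bu(t)\|_2^2$) and uses the observation that $\bv(t)^\top\bs(t)\ge\eta|\Gamma(t)|$ whenever spikes occur (each firing coordinate contributes at least $\eta$ to the inner product). In the quadratic form, the cross-talk you are worried about aggregates to exactly $\|F^\top\bs(t)\|_2^2\le\lambda_{\max}|\Gamma(t)|$, which is dominated by the negative drift $-2\bv(t)^\top\bs(t)\le-2\eta|\Gamma(t)|$ since $\eta\ge\lambda_{\max}$; the $\ell_\infty\le\eta$ bound is only ever used in the complementary case $\Gamma(t)=\emptyset$, where it is immediate. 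If you replace your $\ell_\infty$ induction with this quadratic-form induction, the rest of your write-up goes through and recovers the stated $2\sqrt{\kappa\eta n}$ constant (your conversion $\|P_F\bu\|_2\le\sqrt{n}\|\bv\|_\infty/\sqrt{\lambda_{\min}}$ would otherwise also cost you an extra $\sqrt{\kappa n}$ factor relative to the theorem). One minor additional point: the initialization is $\bv(0)=\mathbf{0}$, so you may simply take $\bu(0)=\mathbf{0}$ rather than arguing its contribution scales with $\|\bx_F\|_2$.
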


Next, to state the theorem about sparse recovery, we need to introduce an extra parameter to the SNN dynamic: the spike strength $\alpha$, which is a non-zero real number capturing the strength of the spiking effect. Namely,~\autoref{eq:optimal SNN continuous} and~\autoref{eq:optimal SNN discrete} become
\begin{equation}\label{eq:optimal SNN continuous alpha}
\frac{d \bv(t)}{dt} = -\tau \bv(t) - \alpha\cdot FF^\top \bs(t) + F\bx
\end{equation}
and
\begin{equation}\label{eq:optimal SNN discrete alpha}
\bv(t+1) = -\tau \bv(t) - \alpha\cdot FF^\top \bs(t) + F\bx\cdot\Dt
\end{equation}
respectively.

\begin{restatable}[Sparse recovery]{theorem}{snnlonemin}\label{thm:l1}
For every $F\in\Real^{n\times m}$ and $\bx\in\Real^m$ where all the row of $F$ has unit norm, let $\gamma(F)$ be the niceness parameter of $F$ defined later in~\autoref{def:nice}. Suppose $\gamma(F)>0$ and there exists a solution for $F^\top\br=\bx$. There exists a polynomial $\alpha(\cdot)$ such that for any $t\geq0$, let $\br(t)$ be the firing rate of the SNN with $\Omega = FF^\top$, $I = F\bx$, $\eta=1$, $0<\alpha\leq \alpha(\frac{\gamma(F)}{n\cdot\lambda_{\max}})$. Let $\OPT^{\ell_1}$ be the optimal value of the $\ell_1$ minimization problem (\autoref{op:l1 min}). For any $\epsilon>0$, when $t\geq\Omega(\frac{m^2\cdot n\cdot\|\bx\|_2^2}{\epsilon^2\cdot\lambda_{\min}\cdot\OPT^{\ell_1}})$, then $\|\bx_F- F^\top\br(t)\|_2\leq\epsilon\cdot\|\bx_F\|_2$ and $\|\br(t)\|_1\leq(1+\epsilon)\cdot\OPT^{\ell_1}$.
\end{restatable}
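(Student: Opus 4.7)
The plan is to analyze the spiking trajectory entirely through the dual process $\bu(t)$ introduced in \autoref{sec:SNNs algorithm}, reading the optimal balanced SNN as a reflected gradient ascent on the dual LP
\begin{align*}
\max_{\bu\in\Real^m}\; \bx^\top \bu \quad\text{subject to}\quad F\bu \le \mathbf{1},
\end{align*}
whose optimum equals $\OPT^{\ell_1}$ by strong LP duality. The drift term $\bx$ in $d\bu/dt = \bx - \alpha F^\top \bs(t)$ pushes $\bu(t)$ along the dual gradient, while a spike of neuron $i$---triggered exactly when $F_i^\top \bu$ crosses the wall $W_i = \{F_i^\top \bu = 1\}$---subtracts $\alpha F_i$ from $F\bu$, pulling the iterate back into the polytope. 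This immediately yields dual near-feasibility: $F_i^\top \bu(t) \le 1 + O(\alpha)$ uniformly in $t$ and in $i$.

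Two further estimates drive the argument. The first is approximate primal feasibility $\|F^\top \br(t) - \bx_F\|_2 \le \epsilon'\|\bx_F\|_2$ for sufficiently large $t$, obtained by adapting the convergence proof of \autoref{thm:linearsystem} with the per-spike cancellation rescaled by the spike strength $\alpha$. The second is an approximate complementary slackness condition extracted by integrating the dual dynamics telescopically, yielding the identity
\begin{align*}
\bu(t) \;=\; \bu(0) + \bx\cdot t\Dt - \alpha\,t\,F^\top \br(t).
\end{align*}
Once $F^\top \br(t)\approx \bx$ is in hand, this pins down the projection of $\bu(t)$ onto the row space of $F$; moreover, any neuron with nonzero time-averaged $r_i(t)$ must have spiked repeatedly, forcing $F_i^\top \bu(t) \in [1-O(\alpha),\,1+O(\alpha)]$ for most recent times.

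Combining these pieces via weak LP duality gives
\begin{align*}
\|\br(t)\|_1 \;=\; \mathbf{1}^\top \br(t) \;\approx\; \br(t)^\top F\bu(t) \;=\; \bigl(F^\top \br(t)\bigr)^\top \bu(t) \;\approx\; \bx^\top \bu(t) \;\le\; (1+O(\alpha))\,\OPT^{\ell_1},
\end{align*}
where the first $\approx$ invokes the complementary-slackness estimate on active neurons, the second uses primal near-feasibility, and the final inequality is weak duality against the near-feasible $\bu(t)$. Choosing $\alpha$ small enough (the polynomial $\alpha(\gamma(F)/(n\lambda_{\max}))$ in the theorem statement) and $t$ large enough drives all error terms below $\epsilon\cdot\OPT^{\ell_1}$, and the approximate primal feasibility follows from the adapted version of \autoref{thm:linearsystem}.

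The main obstacle is making the complementary-slackness step quantitative against the geometry of $F$. A naive argument loses a multiplicative factor for every neuron whose wall $W_i$ is rarely hit, because then $F_i^\top \bu(t)$ can drift well below $1$ and spoil the approximation $\mathbf{1}^\top \br(t)\approx \br(t)^\top F\bu(t)$; equivalently, the telescoping identity for $\bu(t)$ controls only the projection of $\bu(t)$ onto the row span of $F$, not $\|\bu(t)\|_2$ itself. The niceness parameter $\gamma(F)$ enters here precisely to rule out such degeneracies and to bound $\|\bu(t)\|_2$ in terms of $\OPT^{\ell_1}$ and $\gamma(F)^{-1}$. Calibrating the spike strength $\alpha$ against $\gamma(F)/(n\lambda_{\max})$ so that every error absorbs into $\epsilon\cdot\OPT^{\ell_1}$ is the delicate bookkeeping at the heart of the argument.
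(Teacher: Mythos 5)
Your framing — read the network through the dual process, view it as a reflected/projected gradient ascent on $\max\{\bx^\top\bu : F\bu\le\mathbf{1}\}$, and close with weak duality — is exactly the paper's own high-level picture. But the two load-bearing steps of your argument are the ones the paper has to work hardest for, and both have genuine gaps. First, primal feasibility. You cannot get $\|\bx_F - F^\top\br(t)\|_2\to 0$ by "adapting the convergence proof of \autoref{thm:linearsystem}": that proof bounds the potential in the $(FF^\top)^{\dagger}$-norm using $\eta\ge\lambda_{\max}$, whereas here $\eta=1$ with unit-norm rows, so $\eta<\lambda_{\max}$ in general. The telescoping identity only converts \emph{boundedness} of $\bu(t)$ into primal feasibility, and the real obstruction is that the projected flow's instantaneous progress is $\bx^\top\dot\bu=\|\bx-\Pi_{F_{\Gamma}}\bx\|_2^2$, the \emph{linear}-projection residual over the active set $\Gamma$, which can be far smaller than the \emph{conic} residual $\|\bx-F^\top\br^{\text{ideal}}(t)\|_2^2$ whenever active walls carry zero conic weight. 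The paper's entire machinery — the ideal coupling of \autoref{def:ideal coupling} (so that spikes provably do not move the ideal point, \autoref{lem:ideal SNN unchaged}, which is where $\alpha\le\alpha(\gamma(F)/(n\lambda_{\max}))$ is actually consumed), plus the $m-1$ auxiliary SNNs and the combined potential of \autoref{lem:strict improvement} — exists precisely to certify improvement proportional to the conic residual. Nothing in your proposal plays this role.

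Second, the complementary-slackness step. You correctly diagnose that a neuron contributing to the time-averaged rate $\br(t)$ need not have its wall active at time $t$, but your proposed repair (bounding $\|\bu(t)\|_2$ via $\gamma(F)$) does not address that failure mode: the inequality you need is $F_i^\top\bu(t)\ge 1-o(1)$ on the support of $\br(t)$, and no norm bound on $\bu(t)$ gives that. The paper sidesteps the issue entirely: it shows $(\br^{\text{ideal}}(t),\bu^{\text{ideal}}(t))$ is an \emph{exact} optimal primal--dual pair of the perturbed program in which $\bx$ is replaced by $F^\top\br^{\text{ideal}}(t)$ (\autoref{lemma:proofofidealalgorithm-KKT}), then applies the LP perturbation lemma to get $\bigl|\|\br^{\text{ideal}}(t)\|_1-\OPT^{\ell_1}\bigr|\le\sqrt{n/\lambda_{\min}}\cdot\|\bx-F^\top\br^{\text{ideal}}(t)\|_2$ (\autoref{lemma:idealalgorithm-OPTbounds}), and finally handles the transient by splitting $\br(t)$ into its $[0,t_0]$ and $(t_0,t]$ parts. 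To complete your route you would need either to prove a quantitative complementary-slackness statement for the actual trajectory (hard, for the reason you identify) or to import something like this perturbation argument; as written, the chain $\mathbf{1}^\top\br(t)\approx\br(t)^\top F\bu(t)$ is asserted in the direction that dual feasibility alone cannot give you.
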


For those who are not familiar with theoretical computer science, the constants (e.g., 24, 48 in~\autoref{thm:linearsystem}) and the big O notations in the theorem statements are not optimized. Namely, the right way to interpret the results is to focus on the asymptotic scaling. For example,~\autoref{thm:linearsystem} says that as long as the parameters of an optimal SNNs satisfy the required condition, then its firing rate will converge to an optimal solution of the least squares problem with rate $O(1/t)$. Generally speaking, the firing rate could converge much faster than the upper bound in the theorem statement, however, the theorems assert that the guarantees would still hold in the worst-case scenario.

Next, a few words on the introduction of spike strength $\alpha$ and the technical condition $\gamma(F)$ in~\autoref{thm:l1}. As the first attempt in establishing the connection between optimal E/I balanced SNNs and sparse recovery, we focus on the regime where an SNN would solve the corresponding optimization problem when $t$ goes to infinity. Moreover, as we adopt the methodology of provable analysis and hence we sacrifice a little bit bio-plausibility (i.e., in biological SNNs, $\alpha$ is close to $1$). We pose it as an interesting future work to systematically examine the connection between optimal E/I balanced SNNs and sparse recovery in a wider parameter regime.

The complete proof of the above two theorems are provided in~\autoref{app:SNNs least squares} and~\autoref{app:SNNs l1 min details} respectively. We encourage the reader with mathematical background and interest to glimpse through. Meanwhile, for the broader audience, we discuss the high-level ideas and takeaways as follows.

\medskip \noindent \textbf{Optimal balance leads to energy conservation in a matrix norm.}
As pointed out by Barrett et al.~\cite{barrett2013firing}, the optimal E/I balanced SNNs track the input signal by having each neuron's membrane potential correspond to the residual error $\bx-F^\top\br$. Hence, the integrate-and fire rule naturally ensures that the $\ell_\infty$ norm (i.e., the largest coordinate value in a vector) of the potential vector is upper bounded. However, this does not provide a tight characterization of the signal recovery performance as the residual error is typically measured in terms of the $\ell_2$ norm, i.e., $\|\bx-F^\top\br\|_2$.

The key idea in the proof of~\autoref{thm:linearsystem} is to analyze the energy conservation of the potential vector $\bv(t)$ in the matrix norm $\|\bv(t)\|_{(FF^\top)^\dagger}:=\sqrt{\bv(t)^\top(FF^\top)^\dagger\bv(t)}$, which has a tight connection to the residual error as revealed by simple algebraic manipulations (see \autoref{lemma:DSNN-potentialbounded} for more details). In addition to the mathematical elegance, this insight also hints at potential biological relevance, since the feedforward matrix $F$ defines the geometry of the residual error space. Specifically, by applying the matrix norm $\|\cdot\|_{(FF^\top)^\dagger}$, one can gain a clearer understanding of which signal direction $\bx$ the SNN can efficiently represent. This could pave the way for future investigations into synaptic learning rules for $F$.

\medskip \noindent \textbf{Optimal balance plus integrate-and-spike induce a sparse computation.}
Let us revisit the overarching research questions on E/I balanced networks as discussed in~\autoref{sec:E/I balanced}. The primary motivation stems from the enigma of irregular activity patterns observed in the brain, and E/I balance suggested as a conceptual framework to address the three cornerstone questions: the implementation problem, the coding problem, and the computation problem. In particular, the theory of optimal balanced SNNs demonstrated how a coding principle leads to a specific network implementation.

In the joint work with Chung and Lu~\cite{CCL19} as presented in~\autoref{sec:SNNs}, we further establish a link between optimal balanced SNNs and an emergent computation: sparse recovery. Our results offer an illustration of how a coding principle (i.e., optimal balance) combined with a biological constraint (i.e., integrate-and-fire neurons) can lead to a non-trivial computation (i.e., sparse recovery). This might bring up two future research directions. First, within the context of optimal balanced SNNs, what are the further computational or coding implications of sparse recovery? In particular, as this naturally connects to dictionary learning when it comes to the learning of feedforward connectivity matrix $F$, would this emergent sparse computation lead to further non-trivial downstream computations? Second, regarding the grand question of irregular spiking, could sparse recovery serve as a new computational principle, stimulating the identification of a next coding principle and networks organization for future studies?

\begin{savequote}[75mm]
The best theory is inspired by practice. The best practice is inspired by theory.
\qauthor{Donald E.~Knuth}
\end{savequote}

\chapter{Conclusion}\label{ch:conclusion}

Thus far in this dissertation, we have examined numerous instances of prior work employing the computational lens across various scientific fields in~\autoref{ch:introduction}. We have also scrutinized the author's previous efforts within quantum physics and neuroscience in~\autoref{ch:example RCS} and~\autoref{ch:example EI} respectively.
Indeed, these instances underscore the pervasiveness of the computational perspective. However, unlike the two transformative waves of computing~–~the popularization of digital computers and the rise of artificial intelligence~–~the approach of thinking computationally has not yet achieved widespread adoption across different fields at a conceptual level. Consequently, in this final chapter, I will endeavor to answer the question, ``How might we employ the computational lens more effectively?'' by reflecting on what may be currently lacking or absent in the prevailing methodology. Ultimately, I will delineate a series of thoughts on potential future research directions in both quantum physics and neuroscience.

\section{How to employ the computational lens?}
Rather than attempting to provide a recipe for others to follow, the remaining chapters of this dissertation will present my view on the challenges of studying interdisciplinary problems through the computational lens. 


\medskip \noindent \textbf{On the difference of terminologies, methodologies, and appreciations.}
When it comes to interdisciplinary collaboration, the first reaction of many people is probably about the challenge of learning a new language in the other field. Indeed, the language of a field often contains three difficult ingredients to grasp (listed by increasing order): (i) terminologies, i.e., the jargon and technical concepts; (ii) methodologies, i.e., methods of reasoning, making statements, and constructing results; (iii) appreciations, i.e., the types of questions to ask, which results are interesting, and the kind of understanding to pursue.

When I venture into a new field, I typically like to tackle these challenges in reverse order: first, getting to know the key questions people are asking and the kind of understanding they are pursuing. Then, try to resonate myself with their way of thinking without forgoing my own previous mindset. Next, I delve into the methodology in use, comparing it with methods I know from other fields to see if there's a chance for integration. Through this process of understanding the appreciations and methodologies, I often naturally begin to learn the field's terminologies. When I decide to delve deeper into a field, I study their terminologies systematically, not just memorizing terms, but contemplating why particular terms were created for specific concepts. Using the correct terminology can indeed sound professional, but understanding why those terms are used can, in my opinion, lead to a deeper understanding of the field.

\medskip \noindent \textbf{On the tango between questions and approaches.}
While this is a dissertation about the proposal of a new scientific methodology, I want to emphasize the importance of the way people ask questions. In the end, our goal is to elevate the understanding of the subject of interest. Though people with different training will approach the same problem in unique ways, I believe the research question should always be \emph{relevant}~\footnote{Here, I intentionally avoid using the word ``application'' as it might be interpreted differently by different people.}.
Indeed, the word relevant can have a broad interpretation. For instance, cooking could be seen as relevant to baseball because athletes require well-designed diets to maintain their optimal condition. Yet, this doesn't imply that every research in cooking is relevant to baseball. In the broadest sense I can conceive, I would say a research question is relevant as long as the authors genuinely care about the subject of interest and reflect in the question they study.

At the same time, with my theoretical background, I acknowledge that some major scientific breakthroughs have emerged from a focus on research methodology, rather than a concrete question of relevance. It is always fascinating to see how seemingly unrelated logical or mathematical derivations can yield insightful revelations about real-world problems. For instance, Schwarzschild's solution to Einstein's field equations led to the discovery of blackholes, and Parisi's glorious replica calculations opened a new chapter in spin glass theory. What makes these theoretical works stand out is that they not only demonstrate profound abstract reasoning, but they also circle back to asking and answering questions that are relevant to the subject of interest.
I believe that this tango between the question and approach is particularly important in the interdisciplinary endeavor of applying a computational lens to various scientific fields.

\medskip \noindent \textbf{On a web of scientific understanding from various perspectives.}
As the complexity of scientific inquiries continues to grow, traditional disciplinary boundaries begin to blur, propelled by two principal forces: an intrinsic pressure, driven by the necessity to seek novel paradigms and methodologies to address intricate problems; and an extrinsic pressure, originating from the fact that the subjects of interest frequently reside at the intersection of multiple disciplines. 

This implies that we are inexorably moving towards constructing a ``web of scientific understanding from various perspectives''. By this, I mean a specific scientific question could be approached through methodologies X, Y, and Z. None of these methodologies may stand out as the singular, definitive approach. Instead, it is the holistic picture, created by the convergence of these diverse methodologies, that truly illuminates and provides deep insights into the original question.

In my view, the computational lens can function not only as an independent perspective but also as a component embedded within various methodologies.
However, it's important to clarify that I am proposing a \textit{web} of understanding rather than a single \textit{unifying} perspective. Consequently, a crucial aspect I wish to emphasize is that when selecting a methodology to tackle a problem, the underlying assumptions and guiding principles must be thoroughly discussed upfront. Additionally, a web of understanding only comes after examining the same problem with multiple angles \textit{separately}.

Throughout my doctoral journey, I encountered numerous instances where I conflated rigorous mathematical approaches with empirical phenomenological methodologies within a single project, leading me to nowhere. These experiences taught me that while the computational lens could operate independently (for example, when analyzing a problem algorithmically), when integrating it with other scientific approaches, it is crucial to maintain clarity about the dominant methodology.

\section{Future research directions}
In this section, I will present some thoughts and suggestions regarding potential future research questions in the domains of quantum computation and neuroscience, as examined through the computational lens. The discussions are meant to be high-level and primarily conceptual, aiming to identify promising and important avenues for further exploration.

\subsection{Quantum physics}

\medskip \noindent \textbf{Benchmarking methods for the near-term quantum computational advantage.}
Quantum computing has attracted considerable attention, funding, and intellectual resources over the past decade. The seminal paper by Preskill~\cite{preskill2012quantum}, which introduced the concept of quantum supremacy (also known as quantum computational advantage), has been used as a significant intermediate milestone guiding both theoretical and experimental development within the field. Although Google claimed to have achieved this goal, subsequent research~\cite{pan,GKCLBC21} has cast serious doubts on the benchmarks they employed. A similar cat-and-mouse dynamic occurred recently when, just two weeks after IBM published a demonstration of quantum advantage on a quantum simulation of a 127-qubit kicked Ising quantum system~\cite{kim2023evidence}, a group of researchers promptly announced an efficient classical simulation on arXiv~\cite{tindall2023,begusic2023}.

In my view, the primary issue at hand revolves around the tension between asymptotic scalings and finite-size considerations. Theoretical researchers often focus on discussing computational speed-up in terms of scaling input sizes towards infinity and comparing the order of leading terms. In contrast, experimentalists usually prioritize concrete numbers for benchmarking purposes. It is unquestionable that the momentum of quantum computing was initiated by the exponential speed-up suggested by Shor's algorithm. However, at this juncture in the evolution of the field, we need quantitative measures that can accurately indicate our current position and guide future advancements.

Reflecting on the recent advances in machine learning, it is not an exaggeration to state that many breakthroughs have been propelled by well-structured task-specific datasets. A notable example is ImageNet~\cite{deng2009imagenet}, which soon paved the way for the breakthrough of AlexNet~\cite{krizhevsky2017imagenet}. This suggests that a similar approach could be instrumental within the realm of quantum computing. I propose the development of public benchmarking framework that encapsulate both computational and engineering aspects of quantum computing, ideally also incorporating factors such as error-correction. The tasks within these benchmarks should be designed in alignment with established quantum computing criteria, such as the seminal Bell test~\cite{bell1964einstein} or those outlined by DiVincenzo~\cite{divincenzo2000physical}. A well-structured and comprehensive benchmarking suite of this nature could significantly advance the field by offering clear, standardized assessment metrics, driving competitive progress, and highlighting areas requiring further research and development.

\medskip \noindent \textbf{Quantum computations for science.}
It is a widely accepted view in the field that quantum computing is not intended to accelerate all types of computations. While Shor's algorithm has given hope for exponential speedup on problems originating from computer science, it is important to remember Feynman's visionary anticipation of using quantum computers to more accurately simulate quantum systems. This reminds us that the potential advantage of quantum computing other than speedup; it could also entails the ability to accommodate complex quantum phenomena that are inherently intractable for classical computers.

Hence, I extend an invitation to my fellow computer scientists and students to invest time in studying the beautiful world of quantum physics and other relevant scientific fields, such as chemistry. The insights you can learn from these fields will not be a waste of time; rather, they have the potential to seed your mind with fresh inspirations and perspectives. In particular, through the application of the computational lens, we may strive to identify phenomena or pose questions from these fields using the language of computation, thereby unveiling potentially novel applications of quantum computing. This exploration has the potential to significantly enrich our understanding and further expand the horizon of possibilities in the realm of quantum computation.

\subsection{Neuroscience}

\medskip \noindent \textbf{Population geometry as an intermediate quantitative language.}
Neurons in the brain generate patterns of activity in response to external stimuli or while an animal is performing a specific task. These patterns, known as neural representations, are pivotal for understanding how the brain encodes information and performs computations. Traditional methods, which primarily focus on the neural representations of single neurons or small groups of neurons, have had considerable success in detailing the intricate tuning properties of various cell types. However, as we transition into an era of large-scale neural recordings, the need to decipher complex, high-dimensional, and noisy neural representations demands the development of new frameworks to facilitate quantitative investigations.

Population geometry approaches neural representations as high-dimensional geometric objects, providing a language that enables both quantitative analysis and intuitive understanding. 
Meanwhile, the powerful modeling method and analytical techniques in statistical physics allow us to systematically derive order parameters~—~macroscopic observables characterizing critical phenomena in a system~—~that link population geometry to computational principles at the level of neural representations. For instance, a recent work by Chung, Lee, and Sompolinsky~\cite{chung2018classification} utilized spin glass theory and geometry, has exemplified this by deriving the notion of manifold capacity in connection to the neural manifold's geometrical properties under the context of invariant object recognition.
As neural representations are emerging language used by the brain, it is an exciting research direction to build up a quantitative framework to systematically bridge lower-level implementations with higher-level functionalities. Specifically, the computational lens could potentially help identifying useful computational objectives to investigate and modularizing our understanding.

\medskip \noindent \textbf{Algorithmic neuroscience as a modeling tool.}
A recurrent theme in this thesis is the emphasis of algorithmic thinking providing a modular and mechanical understanding of an information processing system. It can complement the traditional mathematical and physical approach to studying the brain. In the research direction of algorithmic neuroscience, I propose a systematic approach to model various neural systems algorithmically. This involves characterizing their input-output behaviors, abstracting their core algorithmic concepts (possibly represented in the form of pseudocodes), extracting out computational principles, analyzing their computational complexity, and so on.  While these steps may appear to be standard procedures within computer science, I want to highlight that the mindset should differ significantly in this context.
Notably, our aim here is not to \textit{design} new algorithms with a focus on speed or optimization, as one would typically aim for in computer science. Instead, the objective here is to dissect the complex systems of interest into a modular, clean, and comprehensible form. 

\medskip \noindent \textbf{Emergent computations as a probe for high-level reasoning.}
Once an algorithmic description for a neural system is established, it presents us with a unique opportunity to gain further insights through the lens of emergent computations. Leveraging mathematical theories (for instance, geometry, optimization, algebra), statistical physics tools (such as approximations, heuristics, and physical picture), along with the computational lens, the overarching goal is to consolidate computational principles and/or normative objectives for high-level reasoning. One concrete direction that we could explore is developing composition rules for algorithmic components. For example, in~\autoref{sec:SNNs emergence}, we derived an example of ``spiking neurons + an energy constraint $\rightarrow$ a sparse computation''. Therefore, emergent computations can serve as an effective tool to establish a comprehensive dictionary of neural mechanisms at various scales.

\section{Concluding remarks}
Donald Knuth, a Turing award laureate often hailed as ``father of the analysis of algorithms'', once said, ``The best theory is inspired by practice. The best practice is inspired by theory''. His wisdom, in my opinion, truly captures the essence of scientific progression - a realm that is continuously evolving, consistently posing challenges, yet invariably anchored by the harmonious blend of theory and practice. 
Trained as a theoretical computer scientists, I find myself riding the transformative waves of computing on the vast ocean of knowledge. Much like the way rational thinking ushered in a revolution in the 17th century and logical thinking reshaped the 20th, perhaps the deployment of the computational lens, with its inherent advantages of composability and modularity, could help us better decipher, and ultimately comprehend the ever-increasing complexities of the world we live in.

\begin{appendices}
    \chapter{Details for the Quantum Part}\label{app:quantum}

\section{Circuit architectures}\label{app:circuit architecture}
For practical relevance, we focus on 1D and 2D circuit architectures.
For 1D circuits, we theoretically and numerically show that our basic algorithm can achieve, in linear time, a higher average XEB value than noisy quantum systems. More specifically, we show that setting subsystem size to be constant ($ l=O(1)$) is sufficient for our algorithm to obtain a higher XEB value than that of $\epsilon$-noisy quantum simulations, for every constant $\epsilon>0$, for sufficiently large $N$.
This is due to the distinct scaling behavior of the XEB value for noisy circuits and our algorithm; we discuss in detail the origin of this difference in the scaling behavior in~\autoref{sec:Ising}.

\begin{figure}[h]
\includegraphics[width=8cm]{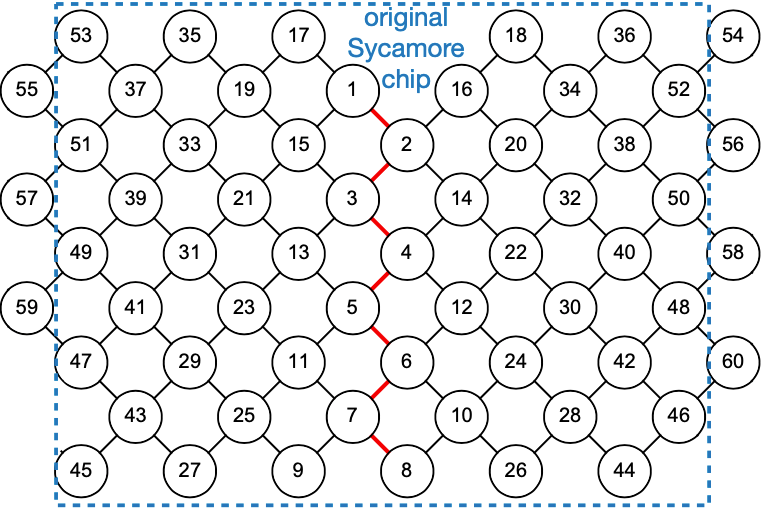}
\centering
\caption{
Sycamore circuit architecture from Ref.~\cite{arute2019quantum} and its horizontal extension. The gates marked with red lines are omitted in our algorithm. The Zuchongzhi architecture is very similar; see Ref.~\cite{USTC,zhu2021quantum} for more detail.
}
\label{fig:intro_partition}
\end{figure}

For 2D circuits, we consider Google's Sycamore architecture, which has $N=53$ qubits~\cite{arute2019quantum}, and we choose $ l\approx\left\lceil{N/2}\right\rceil = 27$ (Fig.~\ref{fig:intro_partition}). We also consider USTC's Zuchongzhi architectures which have 56 qubits and 60 qubits respectively, and we choose $l\approx28$ for both cases (with some qubits being omitted).
A subsystem of this size can be simulated by one NVIDIA Tesla V100 GPU with 32GB memory in about 1 second~\cite{julia1,julia2}. 
We analyze the performance of our algorithms on circuits constructed from  the following different quantum-gate ensembles:
\begin{description}
\item[CZ ensemble] Each random two-qubit gate is composed of the control-Z gate surrounded by four independent single-qubit Haar random gates [see~\autoref{fig:intro_alg}(b)].
\item [Haar ensemble] Each random two-qubit gate is a two-qubit Haar random gate.
\item[fSim ensemble] Similar to CZ ensemble, but replacing the control-Z gate by the fSim gate, which is defined as
\begin{equation}
\label{eq:fSim_formula}
\text{fSim}_{\theta,\phi}=
\begin{pmatrix}
1 & 0 & 0 & 0 \\
0 & \cos(\theta) & -i\sin(\theta) & 0 \\
0  & -i\sin(\theta)  & \cos(\theta) & 0  \\
0 & 0 & 0 & e^{-i\phi} 
\end{pmatrix},
\end{equation}
with parameters $\theta=90^{\circ},\phi=60^{\circ}$~\cite{arute2019quantum} (denoted as fSim); we also define a new gate fSim$^*$ which has $\theta=90^{\circ},\phi=0^{\circ}$.

\item[fSim with discrete 1-qubit ensemble]
Similar to fSim ensemble, but replacing the 1-qubit Haar random gate by $Z(\theta_1)VZ(\theta_2)$
where $V$ is chosen randomly from $\{\sqrt X,\sqrt Y, \sqrt W\}$ ($W=(X+Y)/\sqrt2$) but the two $V$s between two successive layers on the same qubit should be different; and $Z(\theta_i)$ is chosen randomly from $[0,2\pi)$. 
\end{description}
The last ensemble is closely modelled after quantum circuits used in recent experiments~\cite{arute2019quantum,USTC,zhu2021quantum}.
The only modification is that, in experiments, the single qubit rotation angles $\theta_i$' are not actively controlled, but rather determined by the specific ordering of quantum gates and the qubit specification at hardware level.
We expect that this difference does not influence the performance of our algorithm significantly, because we also consider the case where $\theta_i$ is chosen randomly from either 0 or $\pi$ (which corresponds to $I$ or $Z$ operator, respectively). The numerical result shows that the average XEB values for the top-1 method in the two cases are similar: $0.00018$ ($\theta_i \in [0, 2\pi)$) and $0.0004$ ($\theta_i \in \{0, \pi\}$), respectively, for the Sycamore architecture (53 qubits, 20 depth). Therefore, we argue that the $z$-rotation part does not influence the XEB value too much.

\subsection{Improving the algorithm}\label{app:algorithm improve}

While our basic algorithm is simple and relatively straightforward to implement, it already
has significant consequences for the computational hardness of obtaining high XEB values.
Moreover, its practical performance can be 
further improved via the following modifications. 

{\bf Top-$k$ post-processing method.} Given the output distribution $q_C(U)$ produced by our algorithm $C$,  which is correlated with the ideal distribution $p_U(x)$, it is possible to amplify such correlations by using the so-called \emph{top-$k$ post-processing heuristic}.
In this method, one modifies the bitstring distribution $q_C(x)$ by ordering the bitstrings $x_i \in \{x\}$ from largest $q_C(x_i)$ to the smallest, selecting first $k$ of them (or equivalently setting the probability of the others to 0),
\begin{equation}
    q_C(x_i)\to\tilde{q}_C(x_i)=\begin{cases} 0 &{\rm if }~ i \leq k\\
    1/k & {\rm if }~ i>k
    \end{cases}.
\end{equation}
Since we can efficiently compute the probability distribution $q_C(x)$ produced by the original algorithm, we can also efficiently compute the amplified probability distribution. As an example, we illustrate this algorithm (with slight modification for simplicity) in the case of $l=2$ and assume it is efficient to get the entire distributions $q_1$ and $q_2$ of the two subsystems respectively. Thus $q_C=q_1q_2$. Then we sort $q_1$ and $q_2$ in a decreasing order and enumerate the bitstrings corresponding to $k$-largest probability value $p_1$ and $p_2$ respectively. Finally, we get $k^2$ bitstrings from our classical algorithm.

The intuition behind this heuristic can be understood as follows.
The XEB is equivalent to evaluating the average of $p_U(x)$ weighted by $q(x)$ up to an unimportant scaling factor $2^N$, and a constant $-1$. 
If $q(x)$ is modified such that $q(x)$ is increased (decreased) for bitstrings $x$ with relatively large (small) values of $p_U(x)$, then the weighted average will increase.
Given that $q(x)$ and $p_U(x)$ are already positively correlated, such behavior is naturally expected for our top-$k$ post-processing heuristic, at least on average.

In fact, we can prove that the top-$k$ method increases the XEB if its value is positive and the STD over circuit realizations is not too large. The second requirement is necessary to avoid the situation where some occasional $x$ with small $p_x$ but large $q_x$ will be amplified (in another words, ``over-fitting").
Unfortunately, this second criterion is not satisfied by our basic algorithm where we simply omit gates.
This issue, however, can be straightforwardly addressed using the following method.

{\bf Self-averaging algorithm.} 
In order to decrease the STD, we make a small modification to our basic algorithm: instead of omitting gates, we insert maximal depolarizing noise or equivalently take average over different realizations of our basic algorithm with random single qubit unitary at the position of omission. This \emph{self-averaging algorithm} guarantees the positivity and small  STD conditions. However, the computational resources required are larger since we need to simulate mixed state evolution.
Interestingly, for a certain class of entangling gates (including the one used in recent experiments~\cite{arute2019quantum,USTC,zhu2021quantum}) that exhibit the ``maximal scrambling speed'' and that hinders the application of our basic algorithm, one can substantially reduce the computational resources needed for such mixed-state simulation.
This is possible because for that class of entangling gates the effect of depolarizing noise can be propagated efficiently.

{\bf Combining algorithmic  improvements.} In Figure~\ref{fig:intro_advantage_regime}, we present the increase of the XEB for the modified version of Google's gate set ensemble by several orders of magnitude after the application of the top-$k$ method on the self-averaging algorithm.
While the discussion above is mostly focused on the mean value of the XEB, it is important to show that our result also holds for typical, individual instances of quantum circuits with a high probability. In~\autoref{sapp:improved}, we show that the self-averaging algorithm offers a much better control over the STD, and guarantees the benefit of using the top-$k$ method. Additionally, we show evidence that the STD of the top-$k$ method decreases as $1/\sqrt{k}$.

\section{A diffusion-reaction model for XEB and fidelity}

\subsection{Dynamics of the XEB and fidelity}\label{app:stat_DR_dynamics}

\emph{Numerical demonstration.}---
To corroborate our predictions based on the diffusion-reaction model, we present the results of our numerical simulations.
First, we confirm that the XEB overestimates the fidelity, and that the discrepancy is larger for higher noise rates, as shown in~\autoref{fig:stat_noise_fid_vs_xeb}.
We find that the fSim ensemble has the smallest XEB-to-fidelity ratio. The reason for this is clear from the diffusion-reaction model: among the three gates we considered, their reaction rates $R$ are similar (between 0.6 and 0.67), but the fSim gate has the largest possible diffusion rate $D=1$, as shown in Table.~\ref{tab:DR}. 

We use this intuition to devise an even better gate, which we call the fSim$^*$. By fixing $D=1$, we find that the fSim$^*$  gate has a larger $R=2/3$. Moreover, these values of $R$ and $D$ are now optimal, which we prove in~\autoref{ssec:fSim}. Thus, fSim$^*$ has the smallest possible discrepancy between the  XEB and the fidelity.

Next, we verify that the average XEB value of our algorithm for a specific circuit architecture (the Sycamore chip) can be very accurately predicted by our diffusion-reaction model. These results are shown in~\autoref{fig:stat_DR_mean}. 
We find that our diffusion-reaction model can predict  even the fine details of the scaling with the system size $N$. For example, in~\autoref{fig:stat_DR_mean}, the rise and fall in the value of XEB is caused by the lattice structure [see~\autoref{fig:stat_noise_fid_vs_xeb}(a)] and its effect on the diffusion process.

\begin{figure}[h]
\includegraphics[width=7cm]{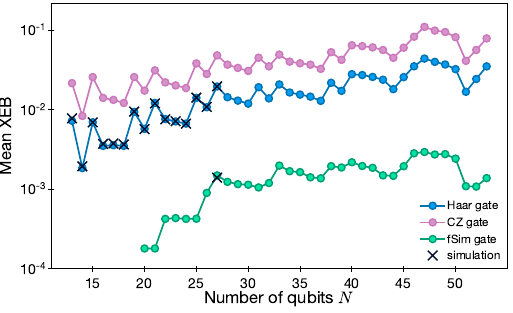}
\centering
\caption{Mean XEB value obtained by our algorithm as a function of the system size $N$, using the  ordering in~\autoref{fig:stat_noise_fid_vs_xeb}(a) and the $d=14$ circuit architecture from Ref.~\cite{arute2019quantum}. The average XEB values are calculated using the diffusion-reaction model for three different gate ensembles: Haar (blue), CZ (purple), and fSim (green).
When compared with the results of the direct simulation of quantum circuits (crosses), both methods agree very well.}
\label{fig:stat_DR_mean}
\end{figure}

\subsection{Detailed derivation of the diffusion-reaction model}
\label{app:DR}
In this section, we assume the gate ensemble consists of single-qubit Haar random gates (or more generally, single-qubit unitaries with the 2-design property) and any 2-qubit entangling gates,
present a detailed derivation of the diffusion-reaction model and discuss some properties relevant to the results in the main text. In~\autoref{sapp:1qubit}, we briefly review the properties of the unitary 2-design ensemble consisting of single-qubit unitaries, and use them in subsequent subsections to derive the diffusion-reaction model. In~\autoref{sapp:DR_transfer}, we show that the transfer matrix $T^{(G)}$ in the resulting diffusion-reaction model has the form of~\autoref{eq:transfer_general}; we also discuss the physical meaning of its parameters. Then, in~\autoref{sapp:DR_stationary}, we compute the stationary distribution using $T^{(G)}$  and present numerical evidence which shows that depth 20 in the Sycamore architecture is sufficiently deep to reach the equilibrium distribution.  In~\autoref{sapp:DR_non_ideal}, we study the effects of introducing defective gates, including noisy gates and omitted gates. We also discuss how to detect the type of noise present in the system by an algorithm similar to the one used for spoofing the XEB. Finally, in~\autoref{sapp:DR_alg}, we analyze the scaling behavior of our classical algorithm and its dependence on  the properties of omitted gates and the transfer matrix $T^{(G)}$.
We show that, using the language of the diffusion-reaction model, even the fine details of the $N$-dependence (e.g., ups and downs in~\autoref{fig:stat_DR_mean}) can be explained in an intuitive way.

\subsection{Brief review of 2-design properties of single qubit unitaries}\label{sapp:1qubit}
We explain how the behavior of quantum circuits, averaged over an ensemble of unitary gates, can be expressed in a simple form.
In particular, we will consider averaging a single-qubit unitary gate over the Haar ensemble, which is a uniform distribution over all unitaries in $\textsf{SU}(2)$.

Consider quantum states $\otimes_{i=1}^t \rho_i$, in the $t$ copies of a Hilbert space, undergoing the same unitary evolution $u$.
We are interested in the resultant quantum state averaged over the Haar random unitary $\mathbb E_{u\in\text{Haar}}\left[\bigotimes_{i=1,\cdots, t}u\rho_i u^\dag\right]$. 
The ensemble of Haar-random unitaries is defined by the invariance of the averaged quantity by both the left and right multiplications of any unitary $v\in \textsf{SU}(d)$; i.e., for any $t\in \mathbb{Z}_{+}$,
\begin{equation}\label{Seq:HaarId}
 \mathbb E_{u\in \text{Haar}}\left[ \left(u\otimes u^*\right)^{\otimes t}\right]=\mathbb E_{u\in \text{Haar}}\left[ \left(vu\otimes v^*u^*\right)^{\otimes t}\right] = \mathbb E_{u\in \text{Haar}}\left[ \left(uv\otimes u^*v^*\right)^{\otimes t}\right].
\end{equation}
This is a natural definition for a uniform distribution: if the ensemble is uniformly distributed, any application of extra rotation by $v$ should only ``permute'' the elements of $\textsf{SU}(d)$ from $u\mapsto vu$ or $u\mapsto uv$, and the average should not be affected.
We focus solely on $d=2$ (i.e., qubits) in this work.
By considering the average behavior of the $t$-copy wavefunction under the same random unitary $u$, we can study the behavior of observables in the extended ($t$-copy) Hilbert space, which contains  observables that are nonlinear (up to power $t$) in a single-copy density matrix. 
In this work, since we focus on the expectation values of the XEB and the fidelity, it suffices to study the case where $t=1$ and $t=2$. In other words, the Haar ensemble can be replaced by any other ensemble of unitaries that behaves identically to the Haar ensemble in $t=1$- and $t=2$-copy Hilbert spaces on average, which is the defining property of the so-called \emph{unitary 2-design}.

First, we consider the case of $t=1$ to introduce some useful notations and identities, which will be helpful for the $t=2$ case. Let $u$ be sampled uniformly from $\textsf{SU}(2)$ and $\rho_1$ be any 2-qubit input state. Observe that the quantity $\mathbb E_{u\in\text{Haar}}[u\rho_1 u^\dag]$ is invariant under the action of an arbitrary $v\in\textsf{SU}(2)$, i.e., $\mathbb E_{u\in\text{Haar}}[u\rho_1 u^\dag]=\mathbb E_{u\in\text{Haar}}[uv\rho_1v^\dag u^\dag]=\mathbb E_{u\in\text{Haar}}[vu\rho_1u^\dag v^\dag]$. Thus, we have
\begin{equation}\label{eq:1-design_og}
\mathbb E_{u\in\text{Haar}}[u\rho_1 u^\dag]=(\tr\rho_1)\frac{I}{2}
\end{equation}
because only $\tr\rho_1$ and $I$ are the invariant quantities with respect to $\textsf{SU}(2)$. 
We adopt two standard representations of quantum many-body states that will be mathematically convenient for later calculation: diagram (also known as tensor network representation~\cite{bridgeman2017hand}) and Choi representation \cite{choi1975completely} (which is already used in the main text and also known as Choi-Jamio\l kowski isomorphism). The former provides intuitive graphics for gates and states while the later associates a density matrix to an ``entangled state" by the map~\cite{watrous2018theory} 
$$\text{vec:}\sum_{ij}\rho_{ij}\ket i\bra j\mapsto\sum_{ij}\rho_{ij}\kket{ij}.$$ We use $\kket\rho=\text{vec}(\rho)$ to represent this ``entangled state". Then 
\begin{equation}
\text{vec}(u\rho u^\dag)=u\otimes u^*\kket{\rho} \text{ and } \tr\rho=\sqrt2\langle\innerp{\text{Bell}}{\rho}\rangle
\end{equation}
where $\kket{\text{Bell}}=\text{vec}(I)/\sqrt{2}=(\kket{00}+\kket{11})/\sqrt2$ is the ``Bell state".
Thus~\autoref{eq:1-design_og} can be rewritten as
\begin{equation}\label{eq:1-design}
\mathbb E_{u\in\text{Haar}}[u\otimes u^*\kket{\rho_1}]=\kket{\text{Bell}}\langle\langle\text{Bell}\kket{\rho_1},
\end{equation}
the diagram of which is shown in~\autoref{fig:12design}(a), where a line denotes $\text{vec}(I)$.
Focusing on the effect of unitaries averged over an ensemble, we can identify 
\begin{equation}\label{eq:1_design}
\mathbb E_{u\in\text{Haar}}[u\otimes u^*]=\kket{\text{Bell}}\bbra{\text{Bell}},
\end{equation}
which is a projector to the ``Bell state". Here we use the variation of the Dirac notation $\kket\cdot$ ($\bbra\cdot$) to represent the vector (dual vector) in the Choi representation, instead of an ordinary quantum mechanical state.
The conclusion is that, if we only have a single copy of a quantum state, all directional information on the Bloch sphere is erased after averaging over the Haar ensemble, except the normalization condition (i.e., $\tr\rho_1$ is preserved). The output state is always the maximally mixed state (or equivalently $\kket{\text{Bell}}$, in terms of the Choi representation), no matter what the initial state is.

\begin{figure*}
\includegraphics[width=0.8\textwidth]{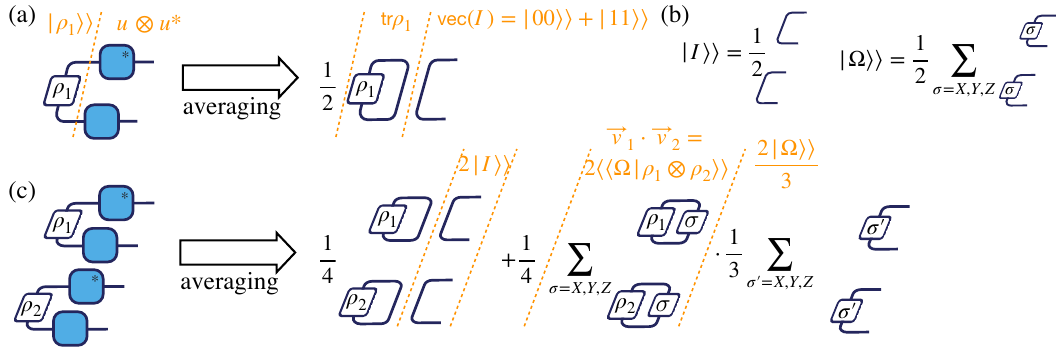}
\caption{
Diagram of averaging over Haar ensemble. The blue box represents a single qubit Haar random unitary $u$ and the blue box with a $^*$ represents $u^*$.
(a) Averaging for $t=1$ copy (i.e., unitary 1-design property). The diagram can be understood as either a tensor network representation or a circuit (non-unitary after averaging) of a state in Choi representation in which $\tr\rho_1=\langle\innerp{00}{\rho_1}\rangle+\langle\innerp{11}{\rho_1}\rangle$ . (b) Diagram of $\kket{I}$ and $\kket{\Omega}$ which are defined in~\autoref{Seq:I_Omega}. (c) Averaging for $t=2$ copies (i.e., unitary 2-design property) which is a diagram of~\autoref{eq:app_2design} by using (b).
}
\label{fig:12design}
\end{figure*}

In the case of $t=2$, as mentioned in the main text, we always have two output states after averaging, denoted as $\kket I$ and $\kket\Omega$ in the Choi representation. In particular, we use these two states as the degree of freedom in the diffusion-reaction model. The central result for a single-qubit gate is
\begin{equation}\label{eq:1qubit}
\mathbb E_{u\in\text{Haar}}[u\rho_1 u^\dag\otimes u\rho_2 u^\dag] = \frac{1}{4}\left(I\otimes I+  \frac{{\vec{\mathrm v}}_1\cdot {\vec{\mathrm v}}_2}{3}{\vec\sigma}\cdot {\vec\sigma} \right) \xrightarrow[]{\text{vec}}\frac{1}{2}\left(\kket{I}+  \frac{{\vec{\mathrm v}}_1\cdot {\vec{\mathrm v}}_2}{3}\kket{\Omega} \right)
\end{equation}
where $\vec{\mathrm v}_i = (\tr{\rho_i X},\tr{\rho_i Y},\tr{\rho_i Z})$ is the vector representation of $\rho_i$ in the Pauli-matrix basis (such that $\rho_i=(I+\vec{\mathrm v}_i\cdot \vec\sigma)/2$), ${\vec\sigma}\cdot {\vec\sigma}=\sum_{\sigma=X,Y,Z}\sigma\otimes \sigma$, $X,Y,Z$ are the 3 Pauli matrices, and 
\begin{eqnarray}\label{Seq:I_Omega}
\nonumber\kket{I}&=& \text{vec}\left(\frac{I\otimes I}{2}\right)= \kket{\text{Bell}}^{\otimes2},     \\
\kket{\Omega}&=&\text{vec}\left(\frac{{\vec\sigma}\cdot {\vec\sigma}}{2}\right)=\sum_{\sigma=X,Y,Z}[(\sigma\otimes I)\kket{\text{Bell}}]^{\otimes2} \, .
\end{eqnarray}
This can be explained according to the invariance of the expectation value under the application of an arbitrary unitary $v\otimes v$ to the density matrix  (that is two $v$s and two $v^\dag$s): (1) the output should be a linear combination of $I\otimes I$ and $\vec\sigma\cdot\vec\sigma$ because they are the only invariant 2-qubit operators (up to linear combination
); (2) by computing the expectation value of the trace with $I\otimes I$ or ${\vec\sigma}\cdot {\vec\sigma}$, the coefficients $1/4$ and $\vec{\mathrm v}_1\cdot \vec{\mathrm v}_2/(3\cdot 4)$ can be determined. The appearance of $\vec{\mathrm v}_1\cdot \vec{\mathrm v}_2$ is a consequence of this invariance, since $v$ is mapped to a rotation on the Bloch sphere while this inner product is invariant under $\text{SO}(3)$. See~\autoref{ssapp:Haar} for a more rigorous proof. The conclusion is that all the directional information on the Bloch sphere is deleted after the averaging process, except the normalization condition and the total polarization correlation between the two states, i.e., $\vec{\mathrm v}_1\cdot \vec{\mathrm v}_2$.

The above result can be formulated in the Choi representation (which is the same as~\autoref{eq:2design}):
\begin{equation}\label{eq:app_2design}
 \mathbb E_{{u\in\text{Haar}}} [u\otimes u^*\otimes u\otimes u^*]
=\kket{I}\bbra{I}+\frac{1}{3}\kket\Omega\bbra\Omega,
\end{equation}
where
$$
\langle\innerp{\Omega}{\rho_1\otimes\rho_2}\rangle=\sum_{\sigma=X,Y,Z} \bbra{\text{Bell}}\sigma \otimes I \kket{\rho_1}  \bbra{\text{Bell}}\sigma \otimes I \kket{\rho_2}=\sum_{\sigma=X,Y,Z}\frac{1}{2}\tr(\sigma\rho_1)\tr(\sigma\rho_2)=\frac{\vec{\mathrm v}_1\cdot\vec{\mathrm v}_2}{2}.
$$
The second equality is due to $\tr O=\sqrt2\langle\innerp{\text{Bell}}{O}\rangle$ for an arbitrary operator $O$ (here $O=\sigma\rho_i$ and $\kket O=\text{vec}$($O$)).
 The corresponding diagram is shown in~\autoref{fig:12design}(b,c). Intuitively, $\bbra{I}$ and $\bbra{\Omega}$ encode the normalization information and the total polarization correlation information $\vec{\mathrm v}_1\cdot \vec{\mathrm v}_2$, respectively. $\kket I$ and $\kket{\Omega}/3$ represent the propagation of the corresponding information to the next time step. The $1/3$ factor in $\kket\Omega$ could be understood as the 3 polarization correlations (represented by $\text{vec}(\sigma\otimes \sigma/2)$ with $\sigma=X,Y,Z$) with equal probability being propagated. In the next subsection (subsection \ref{sapp:DR}), we will elaborate on this interpretation in terms of the diffusion-reaction model, where $I$ and $\Omega$ represent vacuum and particle states, respectively.

\subsubsection{Proof of 2-design properties}\label{ssapp:Haar}

In the following, we will prove~\autoref{eq:1-design_og} and~\autoref{eq:1qubit}. They are special cases of the Weingarten formula~\cite{weingarten1978asymptotic} for $d=2$ and $t=1,2$, respectively. In this special situation, we present simple proofs for completeness.

First, we prove~\autoref{eq:1-design_og}. The density matrix $\rho$ of a single qubit can be written in the Pauli basis as follows.
\begin{equation}\label{Seq:densitym}
    \rho=\frac{\tr(\rho) I+\vec{\mathrm v}\cdot \vec\sigma}{2},
\end{equation}
where $\vec{\mathrm v}$ is a 3-dimensional vector and the Pauli matrices $\vec\sigma=(X,Y,Z)$.
By expanding the expression of $\rho$ in this way, it suffices to understand $\mathbb{E}_{u\in\text{Haar}}[uIu^\dagger]$ and $\mathbb{E}_{u\in\text{Haar}}[u\sigma u^\dagger]$ for all $\sigma\in\{X,Y,Z\}$. First, $\mathbb E_{u\in\text{Haar}}[u\rho u^\dag]$ is straightforwardly
\begin{equation*}
   \mathbb E_{u\in\text{Haar}}[uI u^\dag]= I. 
\end{equation*}
For each $\sigma\in\{X,Y,Z\}$, we use~\autoref{Seq:HaarId} with $v=\sigma^{\prime}\in\{X,Y,Z\}\backslash\{\sigma\}$,
\begin{align}
   \mathbb E_{u\in\text{Haar}}[u \sigma u^\dag]&=\mathbb E_{u\in\text{Haar}}[u \sigma^\prime\sigma\sigma^\prime u^\dag]\nonumber \\
   &= -\mathbb E_{u\in\text{Haar}}[u \sigma u^\dag] \nonumber\\ 
   &= 0,\label{Seq:1des2} 
\end{align}
where we used the identity $\sigma^\prime\sigma\sigma^\prime= -\sigma$ for $\sigma\neq \sigma^\prime$. Putting the above results together gives
\begin{equation*}
    \mathbb E_{u\in\text{Haar}}[u \rho u^\dag] = \tr(\rho)\frac{I}{2},
\end{equation*}
which proves~\autoref{eq:1-design_og}. .

Next, we prove the 2-design property from~\autoref{eq:1qubit}. Using the parametrization from~\autoref{Seq:densitym}, the tensor product of two density matrices is
\begin{eqnarray*}
\rho_1\otimes \rho_2 &=& \frac{I\otimes I}{4}\\
&+& \frac{I\otimes \vec{\mathrm v}_2\cdot \vec\sigma}{4}+\frac{ \vec{\mathrm v}_1\cdot \vec\sigma\otimes I}{4}\\
&+& \sum_{\sigma_1\ne \sigma_2\in\{X,Y,Z\}}\frac{\mathrm v_1^{(\sigma_1)} \mathrm v_2^{(\sigma_2)}}{4}\sigma_1\otimes\sigma_2\\
&+& \sum_{\sigma\in\{X,Y,Z\}}\frac{\mathrm v_1^{(\sigma)}\mathrm v_2^{(\sigma)}}{4}\sigma\otimes \sigma.
\end{eqnarray*}
where we used the notation, where $\mathrm v_i^{(X)}$ is the $x$-th component of $\vec{\mathrm v}_i$.
The Haar-average of the first line is simply
$$
\mathbb E_{u\in\text{Haar}}[uI u^\dag \otimes uI u^\dag]=I\otimes I.
$$
The terms in the second line become zero after averaging due to the same reasoning as in~\autoref{Seq:1des2}; as an example, the first one is
\begin{equation}\label{Seq:2-design_og}
\mathbb E_{u\in\text{Haar}}[(uI u^\dag) \otimes (u \vec{\mathrm v}_2\cdot \vec\sigma u^\dag)]=I \otimes \mathbb E_{u\in\text{Haar}}[ u \vec{\mathrm v}_2\cdot \vec\sigma u^\dag]
=0.
\end{equation}
The average of the third-line term ($\sigma_1\neq \sigma_2$) also vanishes because
\begin{equation*}
\mathbb E_{u\in\text{Haar}}[u\sigma_1 u^\dag \otimes u \sigma_2 u^\dag]=E_{u\in\text{Haar}}[u(\sigma_1)^3 u^\dag \otimes u \sigma_1\sigma_2\sigma_1 u^\dag]
=-E_{u\in\text{Haar}}[u\sigma_1 u^\dag \otimes u \sigma_2 u^\dag]
= 0,
\end{equation*}
where we again used~\autoref{Seq:HaarId} with $v=\sigma_1$ and $\sigma_1\sigma_2\sigma_1= - \sigma_2$ for $\sigma_1\neq \sigma_2$.
Finally, for the fourth term, we have
\begin{equation}\label{Seq:2 design symmetry of pauli}
\mathbb E_{u\in\text{Haar}}[uX u^\dag \otimes u X u^\dag]=\mathbb E_{u\in\text{Haar}}[uY u^\dag \otimes u Y u^\dag]
=\mathbb E_{u\in\text{Haar}}[uZ u^\dag \otimes u Z u^\dag],
\end{equation}
which can be seen from the invariance under unitary rotations, namely $Y$ and $Z$ operators are related to $X$ operator by unitary transformations (e.g. Hadamard and $\pi/2$-phase gate).
Then we use the identity
\begin{equation}\label{Seq:SWAP}
    2S=I\otimes I + X\otimes X+Y\otimes Y+ Z\otimes Z,
\end{equation}
where $S$ is a SWAP operator. Crucially, $S$ commutes with any tensor product of two identical operators
\begin{equation*}
    (u\otimes u)S = S(u\otimes u),
\end{equation*}
by definition. 
This means that $\mathbb E_{u\in\text{Haar}}[u\otimes u  S\, u^\dag \otimes u^\dag] = S$ and by~\autoref{Seq:2 design symmetry of pauli} and~\autoref{Seq:SWAP} we have
\begin{align}
    &\mathbb E_{u\in\text{Haar}}[(u\otimes u)  (X \otimes X) \, (u^\dag \otimes u^\dag)]\nonumber\\
    &=\mathbb E_{u\in\text{Haar}}[(u\otimes u)  (Y \otimes Y) \, (u^\dag \otimes u^\dag)]\nonumber\\
    &=\mathbb E_{u\in\text{Haar}}[(u\otimes u)  (Z \otimes Z) \, (u^\dag \otimes u^\dag)]\nonumber\\
    &= \mathbb E_{u\in\text{Haar}}\left[(u\otimes u)  \frac{X\otimes X + Y \otimes Y + Z\otimes Z}{3} \, (u^\dag \otimes u^\dag)\right]\nonumber\\
    &= \mathbb E_{u\in\text{Haar}}\left[(u\otimes u)  \frac{2S-I\otimes I}{3}\, (u^\dag \otimes u^\dag)\right]\nonumber\\
    &= \frac{2S-I\otimes I}{3}= \frac{X\otimes X+Y\otimes Y+ Z\otimes Z}{3}.\label{Seq:heiseninv}
\end{align}
We use this formula to obtain the expression for the fourth term
\begin{equation*}
    \sum_{\sigma\in\{X,Y,Z\}}  \mathrm v_1^{(\sigma)}\mathrm v_2^{(\sigma)} \mathbb E _{u\in\text{Haar}}[(u\otimes u) \sigma\otimes \sigma (u^\dag \otimes u^\dag)] 
    = \vec{\mathrm v}_1\cdot \vec{\mathrm v}_2\frac{\vec\sigma\cdot\vec\sigma }{3}.
\end{equation*}
Putting all these results together, we
proved~\autoref{eq:1qubit}.

\begin{figure*}[tbp]
\includegraphics[width=1\textwidth]{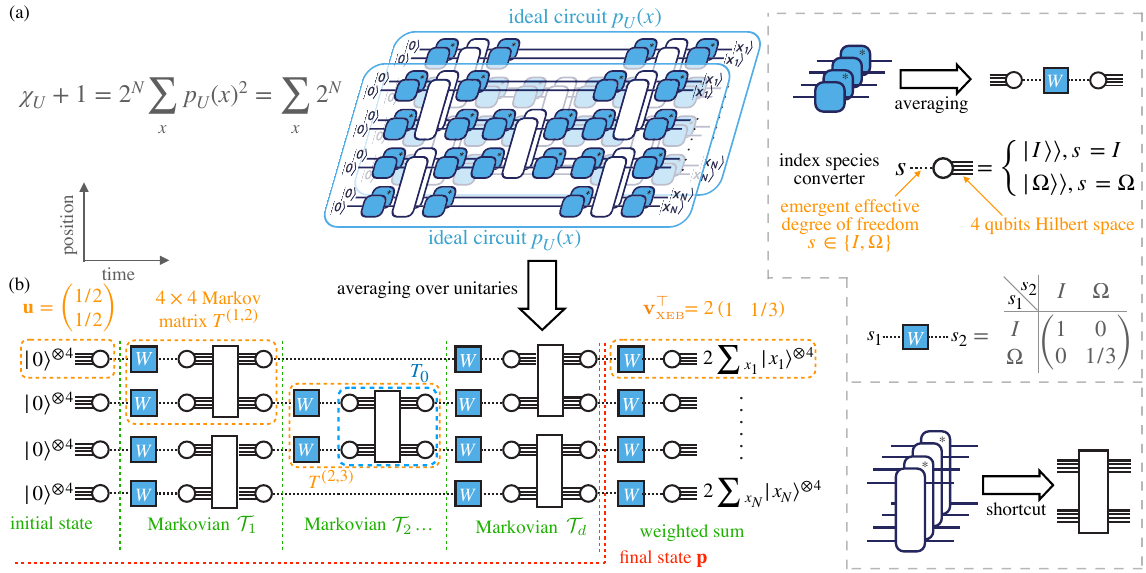}
\caption{
Illustration of the mapping from the average XEB of the ideal circuit to the diffusion-reaction model. (a) The XEB of ideal circuit can be computed by considering two copies of a state evolving under the same random quantum circuit.
Gray dashed boxes defines simple representations of tensors that appear in our tensor network diagrams. The first box gives a shortcut of the diagram for the average behavior of each single qubit gate as discussed in~\autoref{fig:12design}(c) and~\autoref{eq:app_2design}.
A circle is labelled by a classical variable $s\in \{I, \Omega\}$ and represents the corresponding ``4-qubits states" (or vectorized density matrix in duplicated Hilbert space) $\kket I$ and $\kket\Omega$ defined in~\autoref{fig:12design}(b) and~\autoref{eq:app_2design} (denoted as 4 lines). The $W$ is a diagonal matrix defined for the classical degree of freedom gives.
The second box defines a simplified diagram for the four copies of a 2-qubit entangling gate.
(b) The tensor network of the diffusion-reaction model where the horizontal direction is viewed as time evolution of a Markovian process, described by $\mathcal T_1,\cdots,\mathcal T_d$, on the classical degree of freedom.
In each $\mathcal T_i$, a matrix $T_0$ (where we omitted the gate dependence $^{(G)}$ here displayed in main text for $T_0^{(G)}$) on the classical degree of freedom is defined as the two copies of 2-qubit entangling gate combined with the white circles.
Then combining $T_0$ with $W$, we get the transfer matrix $T$ which is a Markovian (will be proved in subsection \ref{sapp:DR_transfer}).
In (a), there are two sets of independent single qubit Haar random gates on a wire between two successive entangling gates. They can be merged into one because the product of two independent Haar random unitary gates equals a single Haar random unitary gate. This is why we only have one layer of  $W$'s between two successive layers of entangling unitary gates in (b). Here we only present the tensor network diagram for XEB, fidelity only differs at the right boundary condition as discussed in~\autoref{fig:stat_mapping_outline}(a) and (b).
}
\label{fig:DR}
\end{figure*}

\subsection{Deriving the diffusion-reaction model}\label{sapp:DR}
In this subsection, we present a detailed derivation of the diffusion-reaction model. We consider the average XEB of an ideal circuit:
\begin{equation}
\chi_{\text{av}}=\mathbb E_{U\in\text{Haar}^{\otimes N_\text{\tiny single}}}\left[2^N\sum_xp_U(x)^2-1\right],
\end{equation}
where we use $\chi_{\text{av}}$ to denote $\mathbb E_{U}[\chi_U]$, and $N_\text{\tiny single}$ is the number of single-qubit Haar gates $u$.
The tensor network diagram representing this quantity is shown in~\autoref{fig:DR}(a). By applying the 2-design properties (inserting~\autoref{eq:app_2design}) in the middle of two successive layers of entangling gates, i.e., applying the upper-right gray box in~\autoref{fig:DR} to each single-qubit gate, we get a path integral of the diffusion-reaction model in terms of only $\{I,\Omega\}$ variables shown in~\autoref{fig:DR}(b). The path integral turns out to be a Markovian evolution, in the sense that each 2-qubit gate is mapped to a transition matrix $T_0$ over the state space $\{I,\Omega\}^2$, and each single-qubit gate is mapped to a weighted diagonal matrix $W$ (see the gray boxes in~\autoref{fig:DR}); then, we combine two $W$s and $T_0$s together and define $T$ to be the transition matrix over the state space $\{I,\Omega\}^2$ as follows:
\begin{equation}
T=T_0 (W\otimes W).
\end{equation}
We can show that this $T$ is indeed a stochastic matrix (\autoref{sapp:DR_transfer}).
When these gates are applied to the $(i,j)$ qubit pair, we denote $T^{(i,j)}$ to be the corresponding transition matrix over the state space $\{I,\Omega\}^2$. Also, we let $\mathcal{T}_t=\otimes_{(i,j)\in t\text{-th layer}}T^{(i,j)}$ be the transition matrix of the $t$-th layer of the circuit.
To sum up, $\chi_{\text{av}}+1$ can be written as a Markovian evolution in terms of the transition matrices $\mathcal{T}_1,\dots,\mathcal{T}_d$ over the state space $\{I,\Omega\}^N$, with appropriate  boundary conditions as described in~\autoref{eq:ini_vec}. See also~\autoref{eq:chi_U_avg} for a summary of the whole diffusion-reaction process. 

As discussed in~\autoref{fig:stat_mapping_outline}(a) and (b), the corresponding diffusion-reaction model is the same for the fidelity, except the right boundary condition which is given in~\autoref{eq:ini_vec 2}; see also~\autoref{eq:F_U_avg}.

\subsection{Properties of the transfer matrix $T^{(G)}$}\label{sapp:DR_transfer}
In this subsection, we discuss in more detail the properties of the transfer matrix $T$.
Here and below, we omit the superscript in $T^{(G)}$ in order to simplify our notations whenever doing so does not lead to ambiguity.
We will focus on the connection between $T$ and the amount of entanglement generated by an entangling gate.
\begin{figure}[tbp]
\centering
\includegraphics[width=\textwidth]{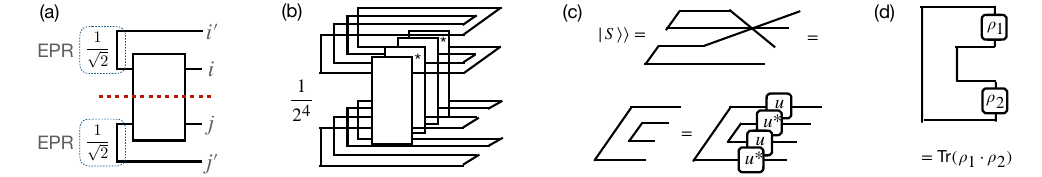}
\caption{``Apparent entanglement" and $\kket S$. (a) We consider two qubits $i$ and $j$ that are initially maximally entangled with their respective partners $i'$ and $j'$. The ``apparent entanglement'' is defined by the amount of the entanglement  (quantified by the smallness of the purity of reduced density matrices) between $ii^\prime$ and $jj^\prime$ that is generated by a unitary gate acting on $i$ and $j$.
(b) Tensor network diagram representation of  $\tr(\rho_{i^\prime i}^2)$, where $\rho_{i^\prime i}$ is the reduced density matrix of the subsystem labeled by $i$ and $i'$.
(c) Diagramatic representations of the state $\kket S$ and its invariance under the action of $u\otimes u^*\otimes u\otimes u^*$. The SWAP operator is applied to the first and third lines (could be also between the second and fourth). Here we only display the invariance under a single-qubit unitary, it is straightforward to see the invariance under 2-qubit unitaries for $\kket S^{\otimes2}$. (d) $\langle\innerp{\rho_1\otimes\rho_2}{S}\rangle=\tr{\rho_1\rho_2}$. }
\label{fig:app_D}
\end{figure} 

Recall the expression for $T$, presented in~\autoref{eq:transfer_general}, which we reproduce here for convenience
\begin{equation*}
T=\begin{pmatrix}
1 & 0 & 0 & 0\\
0 & 1-D & D-R & R/\eta\\
0 & D-R & 1-D & R/\eta\\
0 & R & R & 1-2R/\eta
\end{pmatrix}.
\end{equation*}
This matrix has the following properties, which we prove later in this section:
\begin{itemize}
\item The first column and the first row are all zero except the first entry. This reflects the fact that the polarization correlation can only be produced by the propagation from the polarization correlation with other qubits. Similarly, this holds for the reverse process. In terms of the diffusion-reaction model, a particle cannot be created or annihilated from vacuum. The interaction with other particles is necessary in order to change the particle number.
\item $T$ is symmetric with respect to switching the two sites (exchanging the second and third columns and rows). This reflects the fact that $T$ describes the process of entanglement changes, and that entanglement is a  concept symmetrical between the two qubits.
\item We define the quantities $R=T_{I\Omega\rightarrow \Omega\Omega}$ and $R/\eta=T_{\Omega\Omega\rightarrow I\Omega}$ to study the reaction process. We prove that the reaction ratio $\eta=3$ for every 2-qubit gate is set. This reflects the fact that there are 3 species corresponding to $\Omega$ (3 polarization directions) while there is only 1 species for $I$. In the next subsection, we will show that $\eta$ almost fully determines the stationary distribution $\mathbf p_\infty=\lim_{d\rightarrow\infty}\mathbf p$ (except some retrograde cases like $D=0$ or $R=0$) under the evolution of $\mathcal{T}_1,\cdots,\mathcal{T}_d$, when $d$ is large enough, for arbitrary circuit architectures in subsection.
\item Since $\eta=3$ is independent of the choice of the gate set, the reaction rate $R$ fully determines the reaction process. We prove that $R$ is quantitatively related to the ``2-body entanglement productivity" (also denoted as ``entanglement power"~\cite{zanardi2000entangling}) for $G^{(i,j)}\ket{\psi_i}\ket{\psi_j}$, which is defined as
\begin{equation}\label{eq:2_body_ent}
 S_2=1-\mathbb E_{|\psi_i\rangle,|\psi_j\rangle\in\text{Haar}}[\tr\rho_i^2],
\end{equation}
where $\rho_i$ is the reduced density matrix of the subsystem labeled by $i$. We have
\begin{equation}\label{eq:2_body_R}
R=3S_2.
\end{equation}
It can be shown that $0\le R\le 2/3$ (according to the result in Ref.~\cite{vatan2004optimal} that ``entanglement power" $S_2$ is at most 2/9). 
In terms of the diffusion-reaction model, $R$  characterizes the ability to change the particle number.
Generally, larger $R$ implies faster equilibration to the stationary distribution $\mathbf p_\infty$. 
\item We define the diffusion rate $D=1-T_{I\Omega\rightarrow I \Omega }$ to describe the process of particle diffusion or, equivalently, the random walk speed (by noting that the duplication process also includes a movement: e.g., $I\Omega \rightarrow \Omega\Omega$ should be viewed as the second particle is moved over 1 site and then duplicated). Intuitively, the more entanglement the 2-qubit gate can produce, the easier the polarization correlation propagates (the faster the particles move). In fact, we prove that
\begin{equation}\label{eq:apparent_D}
D=\frac{4}{3}S_\text{a} \text{ with }
S_\text{a}=(1-\tr\rho_{i^\prime i}^2),
\end{equation}
where $\rho_{i^\prime i}$ is the reduced density matrix of the subsystem labeled by $i$ and $i'$, as explained in~\autoref{fig:app_D}.
Note that $0\le D\le 1$ because $0\le S_\text{a}\le 3/4$ (3/4 can be fulfilled when all the eigenvalues of the reduced density matrix are $1/4$). The quantity $S_\text{a}$ represents the ``apparent entanglement productivity'' of the state $G^{(i,j)}\ket{\text{Bell}}_{i^\prime i}\ket{\text{Bell}}_{jj^\prime}$,
as shown in~\autoref{fig:app_D}(a). 
We say ``apparent'' because this quantity measures the effective bond dimension that increases when the gates is applied in the tensor network state before optimization (e.g., before truncating the bond dimension by the SVD decomposition~\cite{white1992density,vidal2003efficient,verstraete2008matrix,paeckel2019time,zhou2020limits}). However, it does not always characterize the true entanglement productivity.

We note that $D$ and $R$ characterize different aspects of an entangling gate.
For example, SWAP has $D=1$ but $R=0$, i.e., $T_{I\Omega\rightarrow \Omega I}=1$ and $T_{I\Omega\rightarrow I \Omega }=T_{I\Omega\rightarrow \Omega \Omega }=0$. 
When only SWAP gates are applied to the initial state $\ket0^{\otimes N}$, there is no entanglement produced--- no matter how many gates are applied. In terms of the  diffusion-reaction model, the particle distribution will never approach the equilibrium $\mathbf p_{\infty}$ in this case.
When $R>0$, larger $D$ implies faster equilibration time because, for example, when there are no particles around a given site, particles from other sites need to come and interact in order for the particle number to increase.
\item Finally, the quantity $T_{I\Omega\rightarrow\Omega I}=D-R$ must be non-negative, which roughly reflects the intuition that the ``apparent entanglement productivity" is larger than the ``2-body entanglement productivity", up to a rescaling factor, because the former is only ``apparent".
\end{itemize}
In summary, the reaction ratio $\eta$ is independent of the choice of a gate set. ``Reaction'', governed by the reaction rate $R>0$, is the only mechanism that changes the particle number and leads to the equilibration to $\mathbf p_\infty$. Therefore, it is necessary to produce the scrambling state. The diffusion rate $D>0$ can accelerate the equilibration if $R>0$. These quantities are essential to understand the realtion between the XEB and the fidelity for non-ideal random quantum circuits, which will be discussed in~\autoref{sapp:DR_non_ideal}.
In the rest of this subsection,
we prove the above properties. In the next subsection, we solve for $\mathbf p_\infty$ and explain why the average XEB $\chi_\text{av}\approx1$ for deep ideal circuits.

\subsubsection{Proofs of these properties}
In the rest of this subsection, to make equations shorter, we use $T_{s_1s_2,s_3s_4}$ to denote $T_{s_3s_4\rightarrow s_1s_2}$. This notation is consistent with the convention of using column vectors to represent a distribution (such that a Markov matrix is applied from the left).
We use the same convention for $T_0$.
First, we prove a few properties of $T_0$, defined in~\autoref{eq:transfer G} or~\autoref{fig:DR}(b): (1) each entry is non-negative; (2) the entry in the first row and the first column is 1; (3) all the other entries in the first row and the first column are 0; (4) this matrix is symmetric; (5) this matrix is invariant under switching of the first site and the second site, i.e., switching the second and third rows and columns.
To prove (1), we denote $\kket{\bar\sigma}^{\otimes2}=[(\bar\sigma\otimes I)\kket{\text{Bell}}]^{\otimes2}$, where $\bar\sigma$ extends $\sigma$ by including the identity $I$,
such that 
$$\kket I=\sum_{\bar\sigma\in\{I\}}\kket{\bar\sigma}^{\otimes2}
\text{ and }\kket\Omega=\sum_{\bar\sigma\in\{X,Y,Z\}}\kket{\bar\sigma}^{\otimes2}
,$$
which could be summarized as
$$
\kket s=\sum_{\bar\sigma(s)}\kket{\bar\sigma(s)}^{\otimes2}
$$
where $s\in\{I,\Omega\}$ and the summation is over $\bar\sigma(s)\in\{I\}$ if $s=I$ and $\bar\sigma(s)\in\{X,Y,Z\}$ if $s=\Omega$. Then,
\begin{eqnarray*}
T_{0; s_1s_2,s_3s_4}&=&
\sum_{\bar\sigma(s_i)}
\bbra{\bar\sigma(s_1)}^{\otimes2} \bbra{\bar\sigma(s_2)}^{\otimes2} (G\otimes G^{*})^{\otimes 2}
\kket{\bar\sigma(s_3)}^{\otimes2} \ket{\bar\sigma(s_4)}^{\otimes2}\\
&=&\sum_{\bar\sigma(s_i)}\left(
\bbra{\bar\sigma(s_1)} \bbra{\bar\sigma(s_2)}(G\otimes G^{*})
\kket{\bar\sigma(s_3)} \kket{\bar\sigma(s_4)}\right)^2\ge0,
\end{eqnarray*}
where $G$ is the 2-qubit gate. 
To prove (2), we choose $s=I$ in the last equation, 
$$T_{0;II,II}=\left(\bbra{\text{Bell}} \bbra{\text{Bell}}(G\otimes G^{*})
\kket{\text{Bell}} \kket{\text{Bell}}\right)^2=(\tr( GG^{\dag}))^2/2^4=1.$$
To prove (3), similarly, as an example, $$T_{0;II,I\Omega}=\sum_{\sigma=X,Y,Z}(\tr(\sigma\otimes I))^2/2^4=0$$ and similarly for other entries with $II$ on the left or right. To prove (4) and (5), we need to prove the following equalities
\begin{eqnarray*}
T_{0;I\Omega,\Omega I}&=&T_{0;\Omega I,I\Omega}\\
T_{0;I\Omega, I\Omega }&=&T_{0;\Omega I,\Omega I}\\
T_{0;I\Omega, \Omega\Omega }=T_{0;\Omega I,\Omega \Omega}&=&T_{0;\Omega\Omega, I\Omega }=T_{0;\Omega \Omega,\Omega I}.
\end{eqnarray*}
As an example, we prove the first equality in detail while the others follow from the same idea. Define $\kket S=\text{vec}(S)$ where $S$ is the SWAP operators (see~\autoref{fig:app_D}(c)). According to~\autoref{Seq:SWAP},
$$
\kket S=\kket I+\kket\Omega,$$
which is basically $\kket\downarrow$ as we introduced in~\autoref{sec:Ising} in which we discuss the mapping to the Ising spin model. Recall the symmetry of the exchange $2\kket I \leftrightarrow \kket S$ (which is equivalent to the permutation symmetry between the two $G$s or the two $G^*$ in $G\otimes G^*\otimes G\otimes G^*$):
\begin{eqnarray*}
T_{0;IS,SI}&=& T_{0;II,II}+T_{0;I\Omega,II}+T_{0;II,\Omega I}+T_{0;I\Omega,\Omega I}=T_{0;II,II}+T_{0;I\Omega,\Omega I}  \\
=T_{0;SI,IS}&=& T_{0;II,II}+T_{0;\Omega I,II}+T_{0;II,I\Omega}+T_{0;\Omega I,I\Omega}=T_{0;II,II}+T_{0;\Omega I,I\Omega}, 
\end{eqnarray*}
where the equalities in the last column are due to terms like $T_{0;II,I\Omega}=0$. Thus, we have proved all the general properties (1)-(5) of $T_0$. Then, we use them to prove properties of $T$.

\emph{Gate set independent properties of $T$.}---
Next, we prove that for all choices of gate sets, the corresponding transition matrix $T$ must have the form shown in~\autoref{eq:transfer_general}. Recall that $T=T_0\cdot W^{\otimes 2}$. The matrix $W$ does not change the first row and the first column of $T_0$, so $T$ also has the properties (1)--(3) of $T_0$. Because of the symmetry of exchanging $I\Omega\leftrightarrow \Omega I$ for $T_0$ and $W^{\otimes 2}$, $T$ also has this symmetry; this explains why $T$ is invariant under exchanging the second and third columns and rows. Then, the last row and the last column are almost the same except an extra factor $\eta=3$ due to the transpose symmetry of $T_0$ and $W$ (which caused the $1/3$ factor). Finally, we need to prove that each column is normalized. Recall that $\kket{S}=\kket{I}+\kket{\Omega}$, and in the following we associate $I$ with $0$ and $\Omega$ with $1$ for convenience of writing equations. For each $s_1,s_2\in\{I,\Omega\}$, the sum of the column indexed by $s_1s_2$ is
\begin{eqnarray*}
\sum_{s_a,s_b}T_{s_as_b,s_1s_2}&=&  T_{SS,s_1s_2}\\
&=& \frac{T_{0;SS,s_1s_2}}{3^{s_1+s_2}} \\
&=& \frac{\langle\innerp{S}{s_1}\rangle}{3^{s_1}}\cdot \frac{\langle\innerp{S}{s_2}\rangle}{3^{s_2}}\\
&=& \sum_{\bar\sigma(s_1),\bar\sigma(s_2)} \frac{\tr\left(\bar\sigma(s_1)^2\right)}{2\cdot3^{s_1}}\cdot \frac{\tr\left(\bar\sigma(s_2)^2\right)}{2\cdot3^{s_2}}\\
&=& \frac{3^{s_1}}{3^{s_1}} \cdot \frac{3^{s_2}}{3^{s_2}}=1,
\end{eqnarray*}
where the third and fourth equalities are due to~\autoref{fig:app_D}(c) and (d), respectively (where the later is obtained by considering $\rho_i=\bar\sigma(s_i)$). These  fully determine the form of $T$, as shown in~\autoref{eq:transfer_general}, where $D$ and $R$ are just two parameters depending on the 2-qubit gate.

\emph{Gate set dependent properties of $T$.}---
Next, we consider gate set dependent terms by linking $D$ and $R$ to the entanglement properties of the 2-qubit entangling gate $G^{(i,j)}$. First, we consider the ``2-body entanglement productivity" $S_2$ shown in~\autoref{eq:2_body_ent},
\begin{eqnarray*}
1-S_2=&&\mathbb E_{u_1,u_2\in\text{1-qubit Haar}}\tr(\rho_2^2)\\
=&&\mathbb E_{u_1,u_2\in\text{1-qubit Haar}}2\bbra I\bbra S(G^{(1,2)}u_1\ket0 u_2\ket0)\\
&&\otimes(G^{(1,2)*}u^*_1\ket0 u^*_2\ket0)\otimes(G^{(1,2)}u_1\ket0 u_2\ket0)\otimes(G^{(1,2)*}u^*_1\ket0 u^*_2\ket0)\\
=&&\mathbb E_{u_1,u_2\in\text{1-qubit Haar}}2\bbra I\bbra S
(G^{(1,2)}\otimes G^{(1,2)*}\otimes G^{(1,2)}\otimes G^{(1,2)*})\\
&&\cdot 
\left(
(u_1\otimes u_1^*\otimes u_1\otimes u_1^* \ket0^{\otimes4})(u_2\otimes u_2^*\otimes u_2\otimes u_2^* \ket0^{\otimes4})\right)\\
=&&2\bbra I\bbra S
(G^{(1,2)}\otimes G^{(1,2)*}\otimes G^{(1,2)}\otimes G^{(1,2)*})
\frac{1}{4}
\left(\kket I+\frac{1}{3}\kket\Omega \right) \left(\kket I+\frac{1}{3}\kket\Omega \right)\\
=&& \frac{1}{2}\left(T_{0;II,II}+\frac{1}{3}T_{0;I\Omega,\Omega I}+\frac{1}{3}T_{0;I\Omega,I\Omega}+\frac{1}{9}T_{0;I\Omega,\Omega \Omega}\right)\\
=&& \frac{1}{2}\left(T_{0;II,II}+\frac{1}{3}T_{0;\Omega I,I\Omega }+\frac{1}{3}T_{0;I\Omega,I\Omega}+\frac{1}{9}T_{0;\Omega \Omega,I\Omega}\right)\\
=&& \frac{1}{2}\left(T_{II,II}+T_{\Omega I,I\Omega}+T_{I\Omega,I\Omega}+\frac{1}{3}T_{\Omega\Omega,I\Omega}\right)
=\frac{1}{2}\left(
1+1-R+\frac{1}{3}R
\right)\\
=&&1-\frac{1}{3}R,
\end{eqnarray*}
where in the second line $2\kket I$ plays the role of a partial trace and $\kket S$ plays the role of matrix multiplication and then trace as shown in~\autoref{fig:app_D}(d); thus it represents the purity $\tr(\rho_2^2)$ where $\rho_2$ represents the reduced density matrix of the second qubits for the output state after $G^{(1,2)}$; the fourth equality is because of $W\cdot\mathbf u$ according to~\autoref{fig:DR}(b); we omitted terms which are 0 in the fifth line and notice the change of the second and fourth terms in the sixth line. Note that this proves the relation between the reaction rate $R$ and the 2-body entanglement productivity, as described in~\autoref{eq:2_body_R}.

Second, we consider the ``apparent entanglement'' $S_\text{a}$, which is defined by the state shown in~\autoref{fig:app_D}(a). According to~\autoref{fig:app_D}(a) and (b),
\begin{eqnarray*}
1-S_\text{a}&=&\frac{1}{4}T_{0;SI,SI}=\frac{1}{4} (T_{0;II,II}+T_{0;\Omega I,\Omega I})=\frac{1}{4} (T_{II,II}+3T_{\Omega I,\Omega I})=\frac{1+3-3D}{4}\\
&=&1-\frac{3D}{4}
\end{eqnarray*}
This proves the relation between the diffusion rate $D$ and the apparent entanglement productivity, as described in~\autoref{eq:apparent_D}.

Direct calculations could give $D$ and $R$ for arbitrary 2-qubit entangling gates like CZ, fSim and fSim$^*$. In the following, we present an example for the 2-qubit Haar ensemble.

\subsubsection{Calculation of $T_\text{Haar}$}
$T_\text{Haar}$ can be found in Ref.~\cite{dalzell2020random},  but for completeness we show the derivation here.
Denote $\kket{\sigma}=\kket{\bar\sigma}^{\otimes2}$, then $\kket\Omega=\sum_{\sigma=X,Y,Z}\kket{\sigma}$, and consider one of the entries of $T_0$ for $G^{(i,j)}$ chosen from the Haar 2-qubit unitary ensemble
$$
T_{0;I\Omega,I\sigma}=\bbra I\bbra\Omega G^{(i,j)}\otimes G^{(i,j)*}\otimes G^{(i,j)}\otimes G^{(i,j)*}\kket I\kket\sigma
$$
. We can prove
$$
T_{0,\text{Haar};I\Omega,IX}=T_{0,\text{Haar};I\Omega,IY}=T_{0,\text{Haar};I\Omega,IZ}
$$
by using the definition of the Haar random ensemble and inserting $V=I\otimes H,I\otimes HS$ into $G^{(i,j)}$. Then,
$$
T_{\text{Haar};I\Omega,I\Omega}=\frac{1}{3}\sum_{\sigma=X,Y,Z}T_{0,\text{Haar};I\Omega,I\sigma}=T_{0,\text{Haar};I\Omega,IX}.
$$
Next, consider
$$
T_{0,\text{Haar};I\Omega,IX}=T_{0,\text{Haar};I\Omega,ZX}=T_{\text{Haar};I\Omega,\Omega\Omega},
$$
where the first equality is obtained by inserting $V=\text{CZ}$, and the last equality follows similarly with $V=H\otimes I,S\otimes I$. By inserting $V=\text{SWAP}$, we can further prove $T_{\text{Haar};I\Omega,I\Omega}=T_{\text{Haar};I\Omega,\Omega I}$. Together, these formulae show that for $T_\text{Haar}$, $1-D=D-R=R/\eta$, which implies $1-D=1/5,D-R=1/5,R=3/5$.

\subsection{Stationary distribution for ideal circuits}\label{sapp:DR_stationary}

One of the important features of the XEB is that, for ideal circuits, its value approaches 1 in the large-depth limit~\cite{arute2019quantum}. Here, we  reproduce this property using our diffusion-reaction model. This familiar example will set the stage for the more complicated cases of noisy circuits and our classical algorithms.

Our problem is reduced to computing the stationary distribution $ \mathbf p_\infty=\lim_{d\rightarrow\infty}\mathbf p$ under the evolution of $\mathcal{T}_1,\cdots,\mathcal{T}_d$.
We try the factorizable distribution as an ansatz first, and then show that such an ansatz is the only solution:
\begin{equation}
    \mathbf p_\infty=\bigotimes_{i=1}^N\mathbf u^{(i)}_\infty,
\end{equation}
where $\mathbf u^{(i)}_\infty$ is a proposed single-bit probability in the product distribution ansatz on the $i$-th site, and we assume that they are identical for all sites $i$. For deep-enough circuits, we expect the diffusion-reaction process to reach a fixed point, as is the case in usual Markov processes. Hence, we can write a self-consistent equation for the above ansatz
\begin{equation}
T^{(i,j)}\mathbf u^{(i)}_\infty\mathbf u^{(j)}_\infty=\mathbf u^{(i)}_\infty\mathbf u^{(j)}_\infty.
\end{equation}
This equation has two solutions that are, surprisingly, independent of both $D$ and $R$
\begin{equation}\label{eq:fixed point sol}
{\mathbf u_1}=
\begin{pmatrix}
\frac{1}{1+\eta} \\ \frac{\eta}{1+\eta}
\end{pmatrix}
=
\begin{pmatrix}
\frac{1}{4} \\ \frac{3}{4}
\end{pmatrix}
\text{ or }
{\mathbf u_2}=
\begin{pmatrix}
1\\0
\end{pmatrix}.
\end{equation}
These two vectors fully determine all the solutions of $\mathcal T_d\cdots \mathcal T_2 \mathcal T_1\mathbf p_\infty=\mathbf p_\infty$.

We note that these two vectors are the only solutions to the Markovian dynamics of the system.
This can be seen by using the Perron–Frobenius theorem, which implies that the steady state solution is unique for any Markovian process, as long as the process is ergodic, i.e., all configurations have non-zero transition probabilities to one another.
In our diffusion-reaction model, it can be easily checked that any pair of particle configurations with at least one particle in the system have non-zero transition probabilities upon the multiplication of the transfer matrix for a finite time. 
Therefore, all these configurations form an ergodic sector.
The trivial configuration with no particles in the entire system forms its own ergodic sector.  
Therefore, our Markovian process has at most two stationary solutions, corresponding to ${\mathbf u_1}^{\otimes N}-1/4^N{\mathbf u_2}^{\otimes N}$ and $\mathbf{u}_2^{\otimes N}$. Furthermore, it is not difficult to check that our Markovian dynamics is also aperiodic, i.e., for time steps larger than 2, there is always a non-zero transition probability from one configuration to itself, which avoids the problem of periodic solutions.

The probability of the all-vacuum state in the initial distribution is $1/2^N$, which results in the final probability
\begin{align}
\mathbf p_\infty &= \frac{1-1/2^N}{1-1/4^N}({\mathbf u_1}^{\otimes N}-\frac{1}{4^N}\mathbf {u_2}^{\otimes N})+\frac{1}{2^N}{\mathbf u_2}^{\otimes N}\nonumber\\
&\approx
\left(1-\frac{1}{2^N}\right){\mathbf u_1}^{\otimes N}+\frac{1}{2^N}{\mathbf u_2}^{\otimes N}+O(1/4^N).
\end{align}
Finally, we apply the appropriate boundary conditions to $\mathbf p_\infty$ at the final time (${\mathbf v^\top}_\text{\tiny XEB}^{\otimes N}$) to get the XEB for deep circuits
\begin{eqnarray}\label{eq:ideal_XEB}
\nonumber\chi_{\infty;\text{av}}&=&{\mathbf v^\top}_\text{\tiny XEB}^{\otimes N}{\mathbf p_\infty}-1\\
\nonumber&\approx& \left(1-\frac{1}{2^N}\right) \left(\mathbf v_\text{\tiny XEB}^\top\mathbf u_1 \right)^N+\frac{1}{2^N}  \left(\mathbf v_\text{\tiny XEB}^\top\mathbf u_2 \right)^N-1\\
\nonumber&\approx&  \left(\mathbf v_\text{\tiny XEB}^\top\mathbf u_1 \right)^N+\frac{1}{2^N}2^N-1\\
\nonumber&=&   \left(\mathbf v_\text{\tiny XEB}^\top\mathbf u_1 \right)^N\\
&=&\left(\frac{1}{2}+\frac{3}{2}\cdot\frac{1}{3}\right)^N= 1.
\end{eqnarray}
The small contribution $\mathbf u_2$, which represents the all-vacuum state, cancels out with the $-1$ term in the definition of XEB.

\begin{figure}[tbp]
\centering
\includegraphics[width=0.7\textwidth]{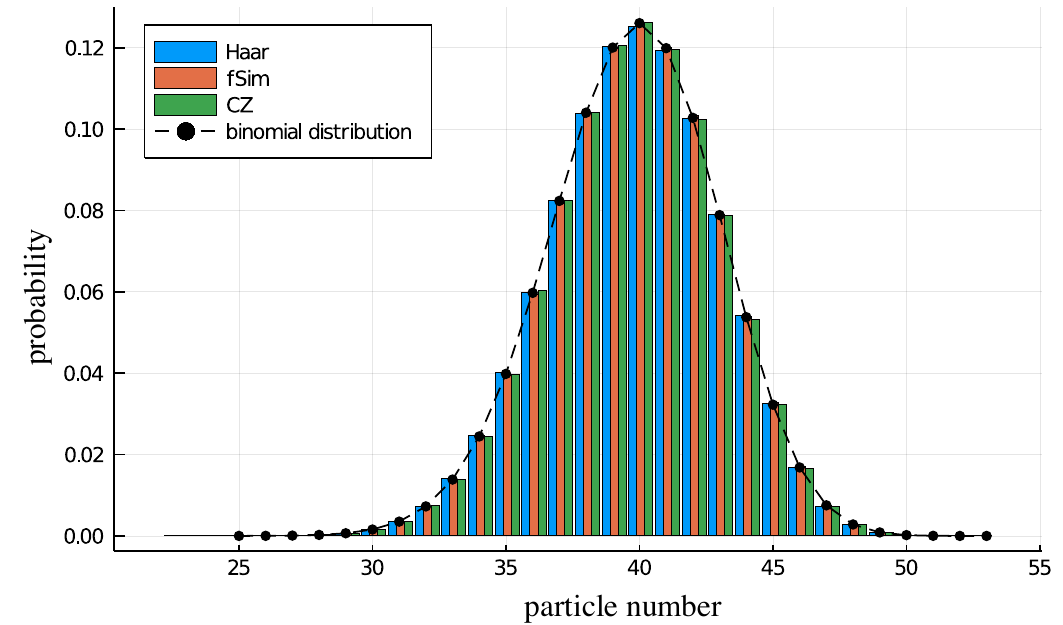}
\caption{The stationary distribution histogram fitted with a binomial distribution $B(53,3/4)$.
We use $10^7$ samples (instances of the diffusion-reaction process) to draw the normalized histogram. The circuits correspond to the Sycamore architecture, with $N=53$ and $d=20$.
}
\label{fig:stationary}
\end{figure} 

Finally, we study the distribution induced by the finite-depth random circuits. We have shown that the stationary distribution is $\mathbf p=(1/4,3/4)^{\otimes N}$ in~\autoref{eq:fixed point sol} up to an exponentially small correction, thus the particle number distribution obeys the binomial distribution $B(N,3/4)$ and this can be used to diagnose whether the depth is large enough for the equilibration to occur (strictly speaking, this is necessary but not sufficient). This can be achieved by simulating the diffusion-reaction model, which is a classical stochastic process and thus much easier to simulate (e.g., using Monte Carlo sampling). For the Sycamore architecture with 53 qubits and depth 20, we numerically test corresponding diffusion-reaction models and find that the particle number distribution can not be distinguished from $B(N,3/4)$, as shown in~\autoref{fig:stationary}, which gives strong evidence that this class of circuits with $d=20$ is deep enough.

\subsection{The effects of defective gates}\label{sapp:DR_non_ideal}
For ideal circuits, the corresponding probability distribution $\mathbf p$ of a pure-state ensemble is normalized to 1 since its average fidelity is 1. However, for mixed states or in the presence of correlations between two non-identical pure states, this is not always the case. Concretely, let $\rho_1$ and $\rho_2$ be two distinct density matrices. We consider the corresponding distribution $\mathbf p$ defined as
\begin{equation*}
p(s_1\cdots s_N)=\langle\innerp{s_1\cdots s_N}{\rho_1\otimes \rho_2}\rangle \text{ such that } \sum_{s_1\cdots s_N}\langle\innerp{s_1\cdots s_N}{\rho_1\otimes \rho_2}\rangle=\langle\innerp{S^{\otimes N}}{\rho_1\otimes \rho_2}\rangle=\tr\rho_1\rho_2.
\end{equation*}
Note that $\mathbf p$ is not always normalized to 1 because the fidelity of $\tr\rho_1\rho_2$ is less than 1 in general. In this subsection, we consider two situations: (1) noisy circuits, i.e., $\rho_1=\ket{\psi}\bra{\psi}$ and $\rho_2$ is its noisy version; (2) our algorithm: $\rho_1=\ket{\psi_1}\bra{\psi_1}$ and $\rho_2=\ket{\psi_2}\bra{\psi_2}$, where $\psi_2$  is a related, but not identical, pure state. In the following, before presenting the results of noisy circuits and our algorithm respectively, we discuss single-qubit examples first, in order to build intuition.

\subsubsection{Noisy gates}

\begin{figure}[tbp]
\centering
\includegraphics[width=1\textwidth]{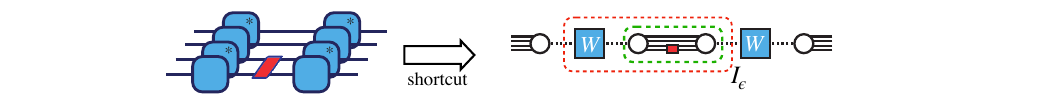}
\caption{The effect of noise. The diagram on the right represents a simplified picture of the left after averaging over two single-qubit Haar random unitaries. The matrix $I_\epsilon$ (appearing in~\autoref{eq:transfer_matrix_noisy}) is defined as the matrix in the space of classical degrees of freedom bounded by the red, dashed box. The part inside the green box is computed as $\bbra{s_1}\hat I\otimes\hat\Phi_\epsilon\kket{s_2}$.}
\label{fig:app_noise}
\end{figure} 

Recall that in~\autoref{eq:1qubit}, the only non-trivial part is $\vec{\mathrm v}_1\cdot\vec{\mathrm v}_2$, which is the coefficient of  $\kket{\Omega}$. For two identical pure states, this inner product is equal to $|\vec{\mathrm v}_1|^2=1$. However, for noisy circuits, e.g. with depolarizing noise, $\vec{\mathrm v}_2=(1-\epsilon)\vec{\mathrm v}_1$, which follows from the density matrix represention in~\autoref{Seq:densitym}; thus, the inner product is equal to $1-\epsilon$. Intuitively, a small amount of polarization correlation is lost.
In terms of the diffusion-reaction model, the noise introduces the reduction of the probability by a factor of $1-\epsilon$ if there is a particle at the given site, which is a probability loss process. Furthermore, the picture of probability loss also works for any other type of noise. For example, for a coherent noise, we have $\vec{\mathrm v}_2\sim (1-\epsilon)\vec{\mathrm v}_1+\sqrt{2\epsilon} \vec{\mathrm v}_1^\perp$ and so the inner product is also $1-\epsilon$. The amplitude damping noise is similar: $\vec{\mathrm v}_2$ is a combination of the displacement, rotation and possibly shrinkage of $\vec{\mathrm v}_1$ by a total amount $1-\epsilon$. In summary, any type of uncorrelated noise appears in the same way in  the diffusion-reaction model.

Below, we provide a more quantitative analysis of the effect of noise.
We denote $\Phi_\epsilon$ as the quantum channel of the noise, and denote its Choi representation as $\hat\Phi_\epsilon$. Suppose a quantum channel in Pauli basis is given by
$$
\Phi_\epsilon(\rho)=\frac{1}{2}\sum_{\sigma_1,\sigma_2\in\{I,X,Y,Z\}}c_{\sigma_1,\sigma_2}\tr(\rho \sigma_1)\sigma_2,
$$
which is basically the Pauli-Liouville representation for quantum channel (see e.g., Ref.~\cite{greenbaum2015introduction}) such that $c_{I,I}=1$ and $c_{\sigma,\sigma}=1-O(\epsilon)$. Similar to $T$ for entangling gate, we could also compute the corresponding matrix element ($I_\epsilon$ in~\autoref{eq:transfer_matrix_noisy}) for $\hat I\otimes \hat \Phi_\epsilon$ (we use $\hat I$ to denote the Choi representation of the identity operation for ideal circuits) in the $I,\Omega$ basis, explicitly:
\begin{eqnarray}\label{Seq:n}
\nonumber\bbra I  \hat I\otimes \hat\Phi_\epsilon \kket I &=& \frac{\tr(I\Phi_\epsilon(I)) \tr(I\Phi_\epsilon(I))}{4}=1\\
\frac{\bbra I  \hat I\otimes \hat\Phi_\epsilon \kket\Omega}{3}&=&\frac{\sum_{\sigma=X,Y,Z}\tr(I\sigma) \tr(\Phi_\epsilon(I)\sigma)}{3\cdot4}=0\\\nonumber
\frac{\bbra \Omega  \hat I\otimes \hat\Phi_\epsilon \kket I}{3}&=&\frac{\sum_{\sigma=X,Y,Z}\tr(\sigma I) \tr(\Phi_\epsilon(\sigma)I)}{3\cdot4}=0\\\nonumber
\frac{\bbra \Omega  \hat I\otimes \hat\Phi_\epsilon \kket\Omega}{3}&=&\frac{\sum_{\sigma,\sigma^\prime=X,Y,Z}\tr(\sigma^\prime\sigma) \tr(\sigma^\prime\Phi_\epsilon(\sigma))}{3\cdot4}=\frac{\sum_{\sigma=X,Y,Z} \tr(\sigma\Phi_\epsilon(\sigma))}{3\cdot2}\\\nonumber
&=&\frac{\sum_{\sigma\in\{X,Y,Z\}}c_{\sigma,\sigma}}{3}=1-O(\epsilon),
\end{eqnarray}
where the $1/3$ factor comes from $W$ [see~\autoref{fig:app_noise}]. 

As examples, we consider depolarizing noise (parameterized by $\mathcal N_\epsilon(\rho)=(1-\epsilon)\rho+\epsilon/3\sum_{\sigma=X,Y,Z}\sigma\rho\sigma$) and amplitude damping noise. For the depolarizing noise, $c_{X,X}=c_{Y,Y}=c_{Z,Z}=1-4\epsilon/3$. For the amplitude damping noise, $c_{X,X}=c_{Y,Y}=\sqrt{1-\epsilon}$ and $c_{Z,Z}=1-\epsilon$, so the second diagonal element is roughly $1-2\epsilon/3$.

\subsubsection{Omitting gates}
Now, we consider the effect of omitting gates. Recall that there are two distinct density matrices $\rho_1$ and $\rho_2$  involved in the definition of $\mathbf{p}$, where the former corresponds to the ideal circuit and the latter would correspond to our algorithm. In $\rho_2$, the single-qubit Haar gates applied to $\rho_1$ are omitted for $\rho_2$ if the corresponding 2-qubit gates in the circuit are removed as part of the algorithm. We recall that each entangling gate is accompanied by 4 single-qubit Haar unitaries, as shown in~\autoref{fig:intro_alg}. 
Therefore, omitting a 2-qubit gate implies that we omit not only the entangling gate but also four single-qubit Haar random gates associated with it. Upon averaging, this belongs to the $t=1$ case in~\autoref{sapp:DR} and so only normalization survives the process. This effectively corresponds to a maximally depolarizing noise: any directional information is deleted. Another way to view this is to think of the remaining $u$ on $\rho_1$ as an extra unitary which is effectively a rotation (denoted as $\hat R$) on $\vec{\mathrm v}_2$. Then, the inner product between this vector, which is denoted as $\vec{\mathrm v}_1=\hat R\vec{\mathrm v}_2$, and $\vec{\mathrm v}_2$ is $\langle\hat R\vec{\mathrm v}_1,\vec{\mathrm v}_1\rangle$. Its average value is clearly 0. In terms of the diffusion-reaction model, this corresponds to a strong probability loss at the position of omitted gates: once a particle hits this region, the probability density of this particle configuration of $I$ and $\Omega$ over the entire space-time is set to 0. Namely, in the diffusion-reaction model, only configurations  without any probability loss would contribute to $\mathbf p$ at the last layer.

Formalizing the above discussion, $I_\epsilon$ appearing in~\autoref{eq:transfer_matrix_noisy}, will be replaced by the projector $P_I$, since this corresponds to the situation of a noisy circuit with maximal depolarizing noise (such that $c\epsilon=1$), according to the 1-design property.

\subsubsection{Detecting noise type by generalizing XEB}
 As a remark, we note that  if one replaces the ideal circuit with a non-trivial quantum channel (as a reference state), our previous results can be used to extract information about the noise type. Mathematically, this changes $\vec{\mathrm v}_1$ (ideal circuit) to $\vec{\mathrm v}_1'$ (non-trivial quantum channel). Then, the inner product between $\vec{\mathrm v}_1'$ and $\vec{\mathrm v}_2$ (noisy circuit whose properties we want to detect) can reveal the information about the noise type. For example, if we introduce a noisy circuit with amplitude damping noise $\Phi_{\epsilon_0}$ as the reference circuit (instead of the ideal circuit), one of the entries in $I_\epsilon$ becomes $\bbra{I}\hat{\Phi}_{\epsilon_0}\otimes\hat{\Phi}_{\epsilon}\kket{\Omega}/3=\epsilon_0\epsilon/12\ne0$, which is different from~\autoref{Seq:n}.

\subsection{Analysis of the scaling behavior of our algorithm through the diffusion-reaction model}\label{app:our_algorithm}\label{sapp:DR_alg}

Applying the diffusion-reaction model we developed, it is intuitive to understand the scaling behavior of our algorithm, as discussed in the main text. Increasing $N$ while keeping the number of omitted gates fixed usually increases XEB for our circuit but the opposite is expected from noisy circuits. This can be explained through the diffusion-reaction model. For our algorithm, larger space for particles undergoing random walk will decrease the probability loss rate, because the particles that are far away from the boundary are less likely to hit the loss region (positions of omitted gates).

Another way to describe the same phenomena is based on the following observation: XEB behaves more like an additive (rather than multiplicative) quantity; thus, we should consider the average loss over particles.
In contrast, fidelity behaves multiplicatively, such that we should consider the total loss over particles, thus increasing the system size will decrease fidelity. In~\autoref{fig:app_scaling}, we present a step-by-step argument regarding the scaling behavior of the XEB for our algorithm.

\begin{figure}[tbp]
\centering
\includegraphics[width=0.7\textwidth]{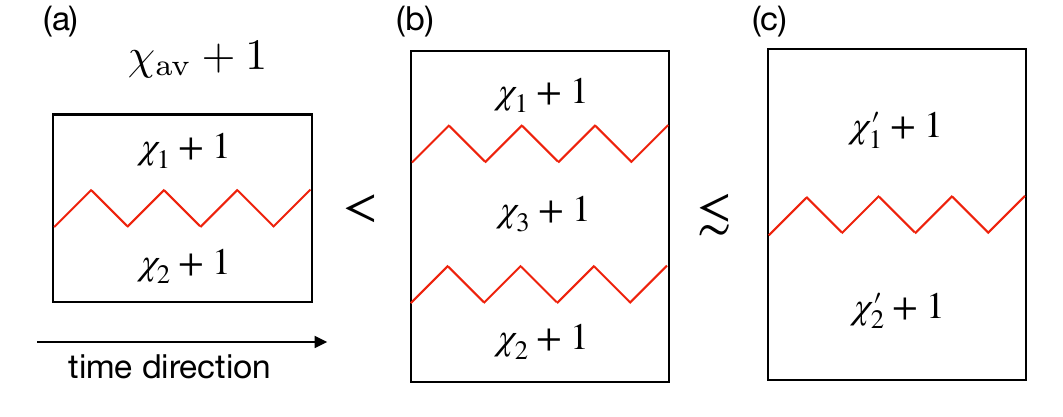}
\caption{Scaling of the XEB when the number of omitted gates is fixed. (a) The behavior of the XEB in our algorithm with two disconnected subsystems. The total $\chi_\text{av}+1$ can be written as $(\chi_1+1)(\chi_2+1) = 1+\chi_1 + \chi_2 + \chi_1 \chi_2$ owing to the decoupling of the diffusion-reaction model by omitting gates.
This quantity is larger than $1+\chi_1$ or $1+\chi_2$.
In fact, the total XEB is approximately additive, $\chi_\text{av}\approx \chi_1 + \chi_2$ when $\chi_1\chi_2$ is small.
(b) Introducing more subsystems only increases XEB, as long as $\chi_3>0$.
(c) It is highly likely that by reducing the number of omitted gates, the XEB can be further increased.}
\label{fig:app_scaling}
\end{figure}

\autoref{fig:stat_DR_mean} shows that the scaling behavior with the system size is quantitatively similar for different gate sets. We observe that the CZ ensemble has much larger XEB than the fSim ensemble. This is because the diffusion rate of the fSim ensemble is the largest (see~\autoref{tab:DR}), which means that particles require the least amount of time to hit the loss region. At the first sight, it might seem this result suggests that the smaller XEB for the fSim ensemble is due to the fact that each fSim gate produces larger (at least ``apparent'') entanglement, or larger bond dimension in the tensor network representation--- thus, it is more dangerous to omit fSim gates. However, this is not correct: if all the omitted gates were fSim gates, but the remaining gates would belong to the CZ ensemble, the mean value of the XEB would be the same as in the case where all the omitted gates were CZ gates. In short, the XEB value does not depend on the properties of the omitted gates, but rather on those in the rest of the circuit.

\subsubsection{Fine structure of the scaling in the Sycamore architecture}

The intuitive picture based on the random walk (diffusion) can explain even finer details of the scaling with the system size $N$. For example, in~\autoref{fig:stat_DR_mean}, the rise and fall in the value of the XEB is caused by the lattice structure [see~\autoref{fig:stat_noise_fid_vs_xeb}(a)] and its effect on the diffusion process.
For examples, the large fall occurring at $N=51$ is caused by adding 2 omitted gates that enlarge the loss region and connect qubit 16 and 46 closer to the loss region.
Other examples include adding qubits 14, 16, 20: they shorten the distance for particles at position 13, 15, 19 from the loss region. The qubits that have only a single connection to the rest of the system (before adding subsequent qubits), such as qubits 15, 27, 47, 52, 53, for example, contribute a lot to increasing the XEB value since particles at these positions have only one way out and are kept away from the loss region. This is reflected in the sudden rises in the XEB value when those qubits are included.

\subsection{An Ising model for 1D Haar ensembles}\label{sec:Ising}

The diffusion-reaction model is useful for analyzing our system qualitatively and numerically, for general circuit architectures and two qubit gate sets. However, for a certain class of systems, such as 1D circuits with Haar two-qubit gates, one can further simplify the classical statistical physics model to the 2D Ising model.
This can be understood as a  special case of the diffusion-reaction model, related to it mathematically through a basis transformation. This mapping has been studied previously in Refs.~\cite{hayden2016holographic, you2018machine, hunter2019unitary,zhou2019emergent, jian2020measurement, bao2020theory, napp2019efficient,dalzell2020random}.
The Ising model allows us to obtain more quantitative results. We find that the behavior of the XEB is related to symmetry, symmetry breaking, and magnetization.

The basis change from the diffusion-reaction model to the Ising model is 
\begin{eqnarray*}
\kket{\uparrow}&=&2\kket I,\\
\kket{\downarrow}&=&\kket I+\kket\Omega
\end{eqnarray*}
such that 
\begin{eqnarray*}
\nonumber  \bbra{a,b,c,d} \uparrow\rangle \rangle&=& \delta_{ab} \delta_{cd},\\
\bbra{a,b,c,d} \downarrow\rangle \rangle
&=&\delta_{ad} \delta_{bc},
\end{eqnarray*}
where the second equation indicates that $\kket{\downarrow}$ corresponds to a swap between indices $a$ and $c$ (or $b$ and $d$). This new basis reflects the symmetry in $u\otimes u^{*}\otimes u\otimes u^{*}$ between the two copies: the state is invariant if we exchange the positions of the two $u$s or $u^{*}$s (labeled by $a,c$ and $b,d$, respectively). 

\begin{figure}[h]
\includegraphics[width=7cm]{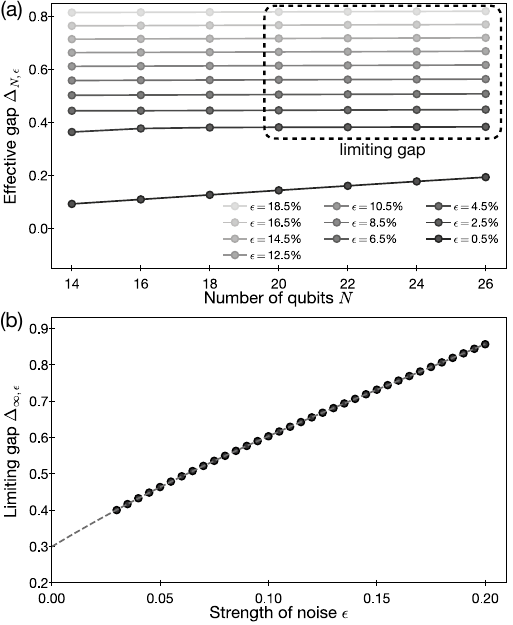}
\centering
\caption{Effective gaps of 1D noisy circuits. (a) For any noise strength $\epsilon$, the gap $\Delta_{N,\epsilon}$ saturates, for sufficiently large $N$, at the  \emph{limiting gap} value $\Delta_{\infty,\epsilon}$. (b) The limiting gap as a function of the noise strength.  Polynomial extrapolation indicates the $\epsilon\to0$ limit of the gap to be $\approx 0.03$. We define this limiting value as $\Delta_3:=\lim_{\epsilon\rightarrow0}\Delta_{\infty,\epsilon}$; in ~\autoref{fig:intro_1D_gaps}, it is represented by the orange, dotted horizontal line. The subsystem considered here has only one boundary with omitted gates as the total system has open boundary condition.
}
\label{fig:1D_mean_noisy}
\end{figure} 

We regard $\uparrow$ and $\downarrow$ as the up and down spins, and the path integral of the diffusion-reaction dynamics is mapped to the partition function of the spin model [see~\autoref{fig:stat_mapping_outline}(d)]. 
In the absence of noise or omitted gates, the partition function has a global $\mathbb Z_2$ Ising symmetry, such that $\kket{\uparrow}\leftrightarrow\kket{\downarrow}$ applied to all spins does not change the partition function.

After the basis change, XEB$+1$ corresponds to the partition function of the $\mathbb Z_2$-symmetric Ising spin model with identical boundary conditions at both the initial and final times. In the special case of Haar entangling gates, this model is the ordinary Ising model with 2-body interactions, which are detailed in the~\autoref{sapp:Ising_model}  and Refs.~\cite{hayden2016holographic, you2018machine, hunter2019unitary,zhou2019emergent, jian2020measurement, bao2020theory, napp2019efficient,dalzell2020random}. 

This mapping allows us to write the XEB for the ideal circuit in the following form:
\begin{equation}\label{eq:1D Ising partition}
\chi_\text{ideal}+1=Z=\bbra\psi \mathcal T_\text{Ising}^{(d-1)/2}\kket\psi,
\end{equation}
where $\kket\psi$ and $\bbra\psi$ are the boundary conditions, and $\mathcal T_\text{Ising}$ is the transfer matrix of the Ising model along the horizontal direction in~\autoref{fig:stat_mapping_outline}(d); it is semi-definite positive and can be computed from $T_0^{(\text{Haar})}$ and~\autoref{eq:2design}. We defer the details of this calculation to the~\autoref{sec:stat_DR_mapping}.  Here, we only need to know that this Ising model is in the ferromagnetic phase. Thus, the largest eigenvalue of $\mathcal T_\text{Ising}$ is doubly degenerate, which gives $Z=2$ and so  XEB$=1$ in the large-$d$ limit.

Once noise or gate defects are introduced, the Ising symmetry is violated. In the case of noisy circuits, the symmetry is violated everywhere, with each local interaction modified by the presence of effective magnetic fields with strength $\epsilon$. Then, there will be a spectral gap $\Delta_{N,\epsilon}=\lambda_1-\lambda_2$ in the modified $\mathcal T_\text{Ising}$, which we evaluate exactly.
Figure~\ref{fig:1D_mean_noisy}(a) shows the gap as a function of the system size for various error rates.
We show that in this case
\begin{equation}
   \chi_\text{noisy}=O\left(e^{-\Delta_{N,\epsilon}d}\right).
\end{equation}
If the violation is small enough ($N\epsilon\ll1$), the spectral gap is $\Delta_{N,\epsilon}\propto N\epsilon$ because the total magnetic field is only a small perturbation from the ideal (symmetric) case. However, if we consider the asymptotic behavior of noisy circuits, $\epsilon$ is assumed constant, but $N$ could be very large. In this limit, the gap will saturate to a fixed value $\Delta_{\infty,\epsilon}$, as shown in~\autoref{fig:1D_mean_noisy}(a). This corresponds to the thermodynamic limit in terms of statistical physics (taking $N\rightarrow\infty$ first then $\epsilon\rightarrow0$). In this case, even if $\epsilon$ tends to 0, as long as $N\epsilon$ is still large, there is a finite gap in $\mathcal T_\text{Ising}$. This corresponds to the phenomena of spontaneous magnetization: even if the magnetic field fades away, most of the spins still point in the same direction leading to a non-vanishing decay rate which is the indicator of the symmetry breaking. We numerically extrapolate the limiting gap to the vanishing noise rate $\epsilon$ and get
\begin{equation}
    \Delta_3=\lim_{\epsilon\rightarrow0}\Delta_{\infty,\epsilon}=\lim_{\epsilon\rightarrow0}\lim_{N\rightarrow+\infty}\Delta_{N,\epsilon}\approx0.3,
\end{equation}
as shown in~\autoref{fig:1D_mean_noisy}(b). This corresponds to the orange dashed line in~\autoref{fig:intro_1D_gaps}.

\begin{figure}[h]
\includegraphics[width=8cm]{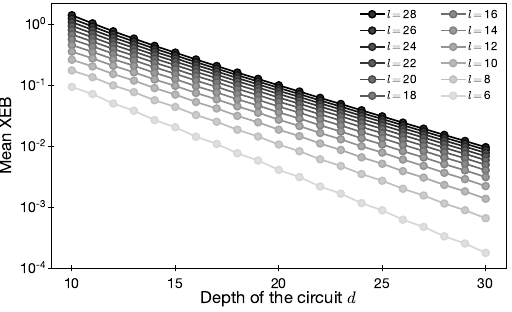}
\centering
\caption{Exponential decay of the average XEB value with increasing circuit depth $d$ for our algorithm. Through linear interpolation (on the semi-log plot), we compute the slope of the lines at each subsystem size $l$ and extract the spectral gap $\Delta_1$. The dependence of $\Delta_1$ on $l$ is shown in~\autoref{fig:intro_1D_gaps} (blue solid curve).
}
\label{fig:1D_mean_our}
\end{figure}

For our algorithm, the omitted gates are mapped to a tensor product of projectors, as shown in~\autoref{eq:TP_projector}, so the partition function will also be separated into the product of partition functions of isolated subsystems
\begin{align}
\chi_\text{algo}=Z-1
&=\prod_{i=1}^{\lceil N/l\rceil} Z_l^{(i)}-1\\
&\approx \prod_{i=1}^{\lceil N/l\rceil} (e^{-\Delta_l^{(i)}d}+1) - 1\\
&\approx \sum_{i=1}^{\lceil N/l\rceil} e^{-\Delta^{(i)}_l d}\sim \frac{N}{l}e^{-\Delta_l d},\label{eq:Isingpart}
\end{align}
where $\Delta_l^{(i)}$ is the gap of the $i$-th subsystem, and $\Delta_l$ is the typical gap among these subsystems, assuming they have similar sizes.~\autoref{eq:Isingpart} shows that the XEB increases with the system size if the subsystem size $l$ is fixed. The decay rate is mainly determined by the subsystem with $\Delta_1=\min_i \Delta^{(i)}_l$. For each subsystem, the omitted gates correspond to strong magnetic fields at the bottom (or top) boundary, which have been previously identified as ``sinks'' in our diffusion-reaction model. These fields violate the $\mathbb Z_2$ symmetry, which causes the gap to open. The gap decreases if the subsystem size $l$ increases; see the discussion in the previous subsection. We numerically compute the gap for different circuit parameters and present the results in~\autoref{fig:1D_mean_our}. We find that when $l\ge15$, $\Delta_1$ approaches to a constant $\Delta_1\approx0.25$. Crucially, we see that $\Delta_1 < \Delta_3$; this means that our algorithm generates a higher XEB value, in the large-depth limit, than noisy 1D circuits --- even with arbitrarily weak noise. 

\emph{A remark.-- In the above discussion, we ignored the factor in front of the exponential decay with depth. In the case of our algorithm, it is a constant (which could depend on the subsystem size $l$) for each subsystem because the subsystem can not distinguish how large of the total system it belongs to. Thus the factor in the total XEB is proportional to system size $N$. In the case of noisy circuit, the factor is possible to grow at most $\poly(N)$ because $d=O(\log N)$ is enough to guarantee the XEB of noisy circuit is less than 1. This is due to that the XEB of noisy circuit should be smaller than ideal circuit, XEB of ideal circuit is exactly the anti-concentration constant and anti-concentration depth is the order of $\log N$~\cite{lightcone,dalzell2020random}.
If the factor grows faster than any polynomial, $d=O(\log N)$ would not be enough to converge to 1. Since the decay rate of XEB in our algorithm is smaller than that of noisy circuit, $d=\Omega(\log N)$ makes former XEB larger.}

We note that many of the qualitative behaviors discussed in this section  also hold in general architectures and two qubit gate set. For example,~\autoref{eq:Isingpart} shows that the XEB obtained by our algorithm behaves more like an additive quantity, i.e., the total XEB approximately equals the sum of XEB values for each subsystems, if they are decoupled (in our algorithm) or only weakly coupled (in noisy circuits).
In contrast, fidelity exhibits multiplicative behavior, i.e., every error contributes to reducing the fidelity of the total system exponentially.

\subsection{The numerical results for STD}\label{sapp:1D_fSim}

In the main text, we mainly focus on the mean value of the XEB. To complete our understanding, we also need to address the fluctuations of the XEB value caused by the random unitaries. If the fluctuations turned out to be much larger than the mean value, then it suggests that our result might not hold for individual instances of random circuits with a large probability. 
If the fluctuation of XEB over different instances of random circuits is sufficiently small, it implies that our algorithm can spoof the XEB for any given randomly choosen instance with a large probability. 
In this section, we numerically estimate the STD of our algorithm in several settings. Additionally, we propose a variant of our algorithm to significantly decrease the STD, however, with the cost of a higher running time.

\begin{figure*}[tbp]
\includegraphics[width=1\textwidth]{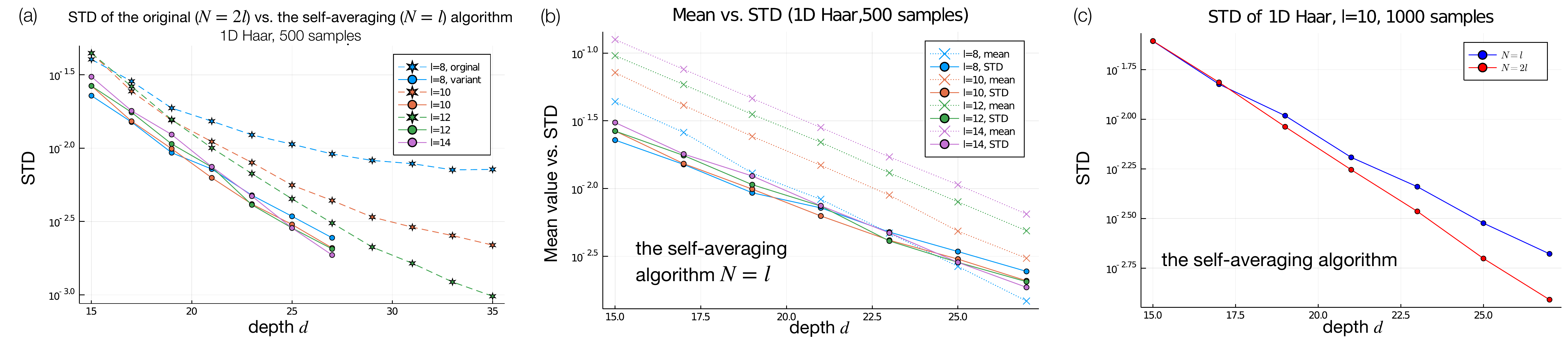}
\caption{(a) The STD of the original algorithm vs. a self-averaging version of our original algorithm (see (b) for more detail) for 1D Haar ensembles with open boundary condition. The former saturates  for a sufficiently large   depth $d$, while the latter does not and is smaller than the former even when the depth is small.
(b) Mean value vs. STD of the self-averaging algorithm (by inserting maximal depolarizing noise instead of omitting gates, see subsection \ref{app:new_algorithm}). Here, the STD is estimated for a subsystem. 
By computing the slope of the solid lines, we extract the decay rate $\Delta_3$, shown in~\autoref{fig:intro_1D_gaps} (green curve). 
(c) Comparison between the STD in (b), i.e., the $N=l$ case and the $N=2l$ case. This indicates that (b) actually overestimates the actual value of the STD.
}
\label{fig:1D_STD}
\end{figure*} 

It is likely that the STD of the original algorithm saturates to a depth-independent value $2^{-O(l)}$, as suggested by~\autoref{fig:1D_STD}(a).
This is expected because there is no mechanism that would decrease fluctuations further below $2^{-O(l)}$ for disconnected evolution in increasing depth (limited by the Hilbert space dimension of the subsystem).
More specifically,  we note that the (sub)system size is the only characteristic length in random circuits~\cite{brandao2016local,harrow2018approximate} (because the subsystems decouple with each other thus one of them are not influenced by other subsystems). For a sufficiently deep circuit, this is analogous to considering a fully scrambling system. There, the variance of observables is indeed depth-independent, since the system wavefunction approaches Haar-random (or more precisely, 4-design) states within each subsystem of size $l$.

For complexity-theoretic purposes, we must consider the limit of deep circuits. Thus, in 1D systems, the original algorithm does not provide a good asymptotic scaling with $d$, because the mean XEB value will eventually drop below the STD value. However, it is still practical for finite-depth systems.

Therefore, in 1D systems, we focus only on the STD of a self-averaging version of our algorithm (see subsection \ref{app:1D_fSim_STD}) and estimate its value numerically. In this case, we expect the fluctuations to decrease with the depth of the circuit since the maximal depolarizing noise on the boundary adds entropy to the system and causes it to decay to the maximally-mixed state. Compared to the original algorithm, the numerical analysis of such mixed state evolution requires further computational resources. To reduce the amount of required computational resources, we focus on the analysis of only one subsystem.
We argue that it overestimates the STD, which means that the true magnitude of fluctuations is even smaller.
This is because the STD of a single subsystem turns out to be smaller than the STD of the joint distribution of $N/l$ identical subsystems that comprise the whole circuit. We demonstrate this numerically in~\autoref{fig:1D_STD}(c) on the example of $N=2l$. Intuitively, this is because the joint system makes the scrambling more complete (for both ideal circuit and the self-averaging algorithm; the later is regarded as a fully connected system but with very strong depolarizing noise at the positions of omitted gates). Thus, the joint system experiences smaller  fluctuations around the typical cases. 

For more discussion of the self-averaging algorithm and a more efficient implementation for fSim gate, see~\autoref{sapp:fSim_trick}.

\section{Properties of circuits with fSim entangling gates}\label{ssec:fSim}

In this section, we discuss several special properties of quantum circuits consisting of single qubit rotations and the fSim gate.
We will see that these properties both improve and obstruct the performance of our spoofing algorithm: on the one hand, fSim gives rise to the optimal ``scrambling speed" such that our original algorithm becomes relatively less efficient;
on the other, we can take advantage of this ``optimal scrambling'' property to design an improved, more efficient algorithm for spoofing the XEB.
To illustrate the role of this ``optimal scrambling'' property, we first study the effect of maximally depolarizing noise in fSim circuits in~\autoref{ssec:fSim identity}, and then analyze the effect of omitted gates on the XEB in~\autoref{ssec:fSim omitting gate effect}.

\subsection{Maximally depolarizing noise
in fSim circuits}\label{ssec:fSim identity}

Here we present a useful property of maximally depolarizing noise (MDN), when applied to fSim circuits.
Formally, MDN is defined as
\begin{align}
    \rho \mapsto \mathcal{D}[\rho] \equiv  \tr{[\rho]} \;  \mathbf{1}/2,
\end{align}
where $\rho$ is the density matrix of a single qubit.
When the MDN is applied twice to a qubit: before and after an fSim gate, its effect is equivalent to removing the fSim gate and applying MDN to both qubits [see Fig.\ref{fig:app_fSim_identity}(a)].
Concretely,
\begin{align}
   \mathcal{D}_1\left[ U_\textrm{fSim}\mathcal{D}_1[\rho_{12}] U_\textrm{fSim}^\dagger\right] = \mathcal{D}_2[\mathcal{D}_1[\rho_{12}]],\label{seq:fsimmdn}
\end{align}
where $\rho_{12}$ is a two-qubit density operator, $\mathcal{D}_i$ is the MDN applied to qubit $i$, and $U_\textrm{fSim}$ is the unitary representing the fSim gate.
In fact, this relation holds much more generally for any unitary gate that is equivalent (up to single-qubit rotations) to $U =U_1 U_2$, where $U_1$ and $U_2$ represent the SWAP and the controlled-phase gate, respectively.

This identity can be understood in the following way.
The SWAP operation moves the MDN from the first qubit to the second qubit, while the controlled phase operation preserves the MDN.
More formally, we can write the action of the unitary $U$ on a density matrix in the tensor notation $U_{b_1c_1,b_2c_2}U^*_{b^\prime_1c^\prime_1,b^\prime_2c^\prime_2}$, and similarly the action of the MDN channel $\delta_{aa^\prime}\delta_{bb^\prime}/2$, where the index with $^\prime$ labels the complex conjugate part. When $U$ is of the above-mentioned ``SWAP$+$control-phase gate'' form, we have  $U_{b_1c_1,b_2c_2}=\delta_{b_1c_2}\delta_{b_2c_1}e^{i\phi_{b_1b_2}}$. Then, the left-hand side of~\autoref{fig:app_fSim_identity}(a) is 
\begin{eqnarray*}
  &&\sum_{b_1,b_1^\prime,c_1,c_1^\prime} \frac{1}{2}\delta_{a_1a_1^\prime}
   \delta_{b_1b_1^\prime}\cdot
   \delta_{b_1c_2}\delta_{b_2c_1}e^{i\phi_{b_1c_2}}
   \delta_{b^\prime_1c^\prime_2}\delta_{b^\prime_2c^\prime_1}e^{-i\phi_{b^\prime_1c^\prime_2}}
   \cdot \frac{1}{2}\delta_{c_1c_1^\prime}
   \delta_{d_1d_1^\prime}\\
   =&&\sum_{c_1,c_1^\prime}\frac{1}{2}\delta_{a_1a_1^\prime}\delta_{c_2c_2^\prime}\delta_{b_2c_1}\delta_{b^\prime_2c^\prime_1}e^{i\phi_{b_1c_2}}e^{-i\phi_{b_1c_2}}\cdot \frac{1}{2}\delta_{c_1c_1^\prime}
   \delta_{d_1d_1^\prime}\\
   =&& 
   \frac{1}{2}\delta_{a_1a_1^\prime}\delta_{d_1d_1^\prime}
    \frac{1}{2}\delta_{b_2b_2^\prime}\delta_{c_2c_2^\prime},
\end{eqnarray*}
where the final result corresponds exactly to the right-hand side of~\autoref{fig:app_fSim_identity}(a) and the statement in~\autoref{seq:fsimmdn}

\begin{figure}[tbp]
    \centering
    \includegraphics[width=\linewidth]{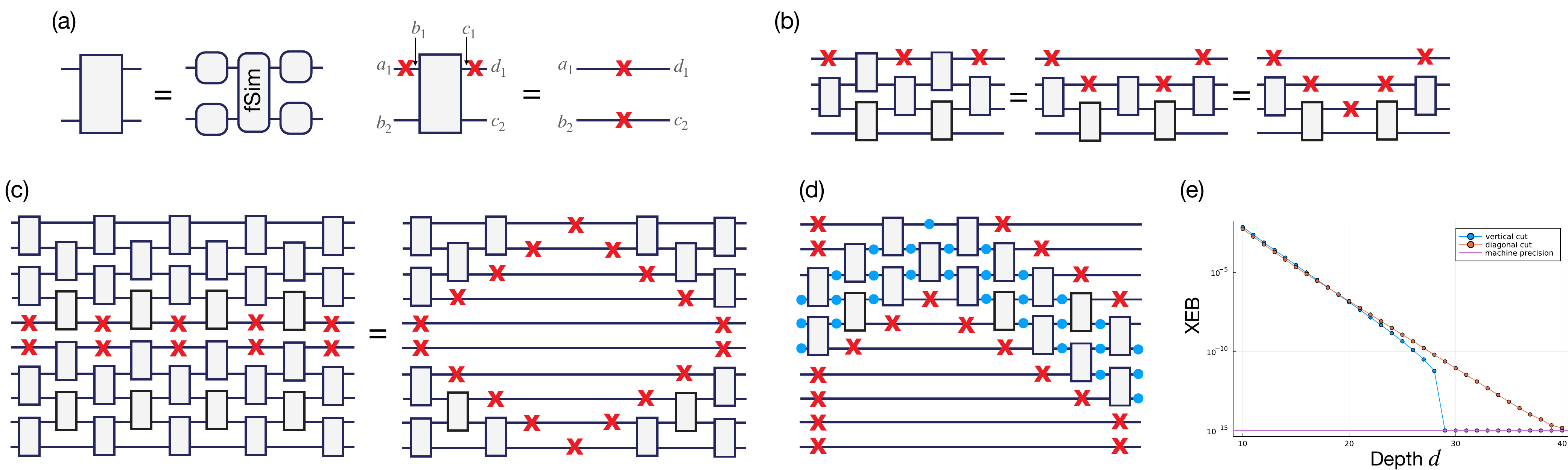}
    \caption{Properties of MDN applied to fSim circuits. The red crosses represent instances of inserting MDN. (a) When MDN is applied to the same qubit before and after the action of an fSim gate (or any gate of the form $U_1 U_2$, where $U_1$ and $U_2$ are the SWAP and controlled phase gate, respectively) while surrounded by single-qubit unitaries, the fSim gate can be effectively replaced by the MDN acting on the other qubit. (b) By repeatedly applying the identity, one can propagate the MDNs in an fSim circuits. (c) For a sufficiently deep fSim circuit, the XEB equals 0, since all the gates in the middle are removed an replaced by MDN; this effectively stop all information flow from the input to the output of the circuit. (d) Different pattern of MDN-insertions, such that the XEB is non-zero even for deep circuits. The blue dots represent particles from the diffusion-reaction model, which in the language of Sec.~\ref{app:DR} corresponds to a
    path with non-zero contribution to the XEB. (e) Numerical simulations verifying (c) and (d), where the subsystem size is 29, and we used periodic boundary conditions and  Haar-random single-qubit gates.}
    \label{fig:app_fSim_identity}
\end{figure}

Note that this MDN property can be used to explain the propagation of noise in this system. 
For example, as shown in~\autoref{fig:app_fSim_identity}(b), the MDNs applied to the first qubit (represented by the top line) can propagate to the second, and the third qubit by repeatedly applying the identity depicted pictorially in~\autoref{fig:app_fSim_identity}(a).

\subsection{Limitations of our original algorithm applied to fSim circuits}\label{ssec:fSim omitting gate effect}

The MDN identity described in~\autoref{ssec:fSim identity} is helpful in studying the effect of omitting gates on the XEB value, in fSim circuits. This is because, once averaged over random unitary gates, omitting gates and applying MDN leads to the same fidelity and XEB in both cases.
Therefore, we consider our algorithm, in which, instead of omitting gates, we apply MDNs for every gates in the middle of the circuit, as depicted in~\autoref{fig:app_fSim_identity}(c).
The MDN property tells us that the MDNs would propagate all the way to the (top and bottom) boundaries of the circuit as long as the circuit is sufficiently deep.
In this case, we find that inputs and outputs for all qubits are completely disconnected by MDNs, implying that the output of the quantum circuit is exactly the maximally mixed state [see~\autoref{fig:app_fSim_identity}(c)].
In other words, when the fSim circuit is sufficiently deep, our algorithm with omitted gates in the middle of the circuit cannot produce any meaningful output bitstring distribution; i.e., the XEB value of our algorithm using this particular positioning of omitted gates will be zero.

There is a simple way to bypass this catastrophic situation by judiciously choosing the position of omitted gates.
As an example, see the ``zig-zag'' pattern in~\autoref{fig:app_fSim_identity}(d).
In this case, the input and output of the circuit remains connected. 
Consequently, the XEB value of our algorithm using this particular choice of omitted gates will be positive even when the circuit is deep. For example, if the subsystem size is 4, as shown in~\autoref{fig:app_fSim_identity}(d), the expected XEB value will scale at least (e.g., there exist other non-zero transition paths) as 
$$
\left(T^{(\text{fSim})}_{I\Omega\rightarrow \Omega\Omega}T^{(\text{fSim})}_{\Omega\Omega\rightarrow \Omega I}\right)^d=
[(1/3+\sqrt{3}/6)^2/3]^d\approx0.13^d,
$$
where $d$ is the depth of the circuit, and we obtained this result using a direct calculation within the diffusion-reaction model.

\subsection{Deriving the Ising model from the diffusion-reaction model}\label{sapp:Ising_model}

The mapping from the XEB to the partition function of an Ising model is diagramatically illustrated in Fig.~\ref{fig:ising}.
This mapping is obtained simply by a basis change from the diffusion-reaction model.
The new basis is motivated by the symmetry in the Choi representation of unitary operation $u\otimes u^{*}\otimes u\otimes u^{*}$ on two copies of the states when we exchange the position of the two $u$s or $u^{*}$s (labeled by $1,2$ or $\bar 1,\bar 2$ respectively). This symmetry is hidden in the basis $\kket I$ and $\kket\Omega$, but explicitly exhibited by $\kket I$ and $\kket S$ (which is a SWAP operator acting on the first and the second line of $2\kket I$, as shown in Fig.~\ref{fig:app_D}(c)). This inspires us to try the following basis transformation 
\begin{eqnarray*}
\kket{\uparrow}&=&2\kket I,\\
\kket{\downarrow}&=& \kket S=\kket I+\kket\Omega.
\end{eqnarray*}
Then, the corresponding transfer matrices for ideal circuits, as shown in Fig.~\ref{fig:ising}(d) and (e), are unchanged by exchanging $\kket{\uparrow}\leftrightarrow\kket{\downarrow}$. We regard these two variables as the spins in an Ising model. Concretely, the mapping works in the following way:
(i) turn each local $2$-qubit gate into two spins and a blue box $A$, (ii) turn each wire connecting two 2-qubit gates into a red box $B_\epsilon$ ($\epsilon=0$ and $\epsilon=1$ correspond to ideal gates and omitted gates, respectively; other non-trivial $\epsilon$ values correspond to the noise strength in noisy circuits), (iii) the boundary condition for input and output states are simply equal-weight summations over all possible spin configurations, (iv) the presence of noise or gate defects corresponds to magnetic field towards $\uparrow$ directions with non-zero $\epsilon$. See Fig.~\ref{fig:ising} and Fig.~\ref{fig:partition_function} for a pictorial explanation.

Next, after the basis change, $\chi_\text{av}+1$ from Fig.~\ref{fig:DR}, as well as its versions for noisy circuits (by inserting $I_\epsilon$) and our algorithm (by replacing the omitted gates by projectors $P\otimes P$), correspond to the partition functions of the respective Ising spin models shown in Fig.~\ref{fig:partition_function}. 

We note that there are negative Boltzmann weights in the matrix $A$, which is represented by blue boxes in Fig.~\ref{fig:ising}. Na\"ively, it seems to indicate that we have a non-classical Ising model. However, this model can be transformed into a spin problem with non-negative Boltzmann weights by either integrating out some of the spins~\cite{hunter2019unitary} (also known as the ``star-triangle transformation''), or via the Kramers-Wannier duality~\cite{baxter2016exactly}. Ref.~\cite{hunter2019unitary} makes use of the first approach and shows that this model is in the ferromagnetic phase by counting the domain walls in the model after the transformation.

\begin{figure}
\centering
\includegraphics[width=0.45\textwidth]{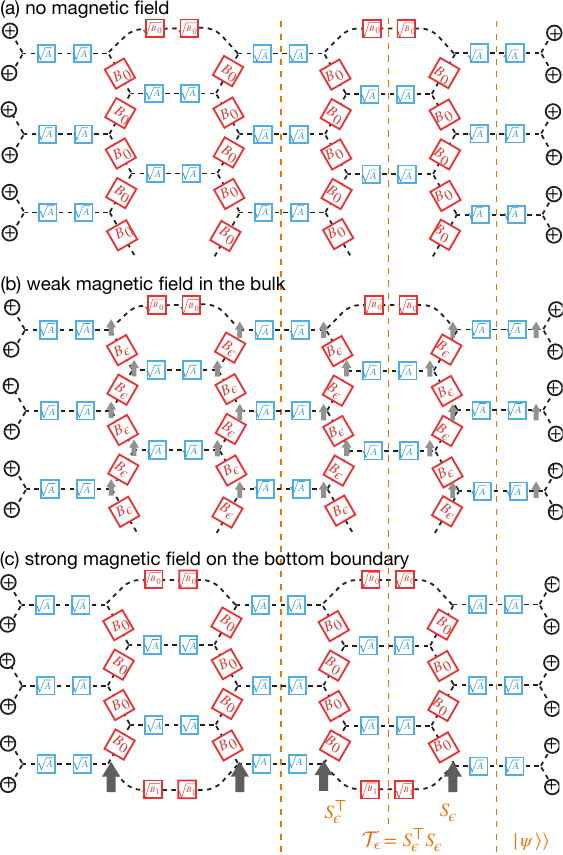}
\caption{$\chi_\text{av}+1$ as a partition function of the Ising spin model. Here we only draw the top part instead of the whole circuit. (a) Ideal circuits. There is a global $\mathbb Z_2$ Ising symmetry.
(b) Noisy circuits. There are weak magnetic fields over the entire bulk with strength $\sim \epsilon$, which breaks the Ising symmetry.
(c) Our algorithm. There are strong magnetic fields on the boundary (at the positions of omitted gates).  The bottom is the transfer matrix view of the partition function. Because $\mathcal T_\epsilon=S^\top_\epsilon S_\epsilon$, the transfer matrix is positive semi-definite.
}
\label{fig:partition_function}
\end{figure} 

We start with the XEB of ideal circuits. Let $Z$ be the partition function of an ideal circuit. The key observation is that $Z$ (i.e., $\chi_\text{av}+1$) can be written in the following form ,when the depth $d$ is odd,
\begin{equation}\label{eq:1D Ising partition 2}
\chi_\text{av}+1=Z=\bbra\psi \mathcal T_0^{(d-1)/2}\kket\psi
\end{equation}
where $\kket\psi$ and $\mathcal T_0$ are shown in the bottom of Fig.~\ref{fig:partition_function} for the case of $\epsilon=0$. The transfer matrix $\mathcal T_0$ is  positive semi-definite and, hence, it has an eigen-decomposition $\mathcal T_0=\sum_i \lambda_i \kket i\bbra i$ with $\lambda_0\geq\lambda_1\geq\cdots\geq0$ and $\langle\innerp{i}{j}\rangle=\delta_{ij}$. Thus, the partition function can be further written as follows,
\begin{eqnarray}\label{eq:Z}
Z&=&\lambda_0^{(d-1)/2}|\langle\innerp{0}{\psi}\rangle|^2
\left(1+\sum_{i>0}\left(\frac{\lambda_i}{\lambda_0}\right)^{(d-1)/2}\frac{|\langle\innerp{i}{\psi}\rangle|^2}{|\langle\innerp{0}{\psi}\rangle|^2}
\right)\nonumber\\
&\xrightarrow{\text{large }d}&\lambda_0^{(d-1)/2}|\langle\innerp{0}{\psi}\rangle|^2\left(
1+\left(\frac{\lambda_1}{\lambda_0}\right)^{(d-1)/2}\frac{|\langle\innerp{1}{\psi}\rangle|^2}{|\langle\innerp{0}{\psi}\rangle|^2}\right)\nonumber\\
&=&
1+\lambda_1^{(d-1)/2}|\langle\innerp{1}{\psi}\rangle|^2.\nonumber
\end{eqnarray}
As discussed in the section about the diffusion-reaction model, we know that when $d\rightarrow +\infty$, $Z\rightarrow 2$. Thus, $\lambda_0=\lambda_1=1$ and $|\langle\innerp{0}{\psi}\rangle |=|\langle\innerp{1}{\psi}\rangle |=1$.  This coincides with the $\mathbb Z_2$ symmetry in the ferromagnetic phase.
For noisy circuits or our algorithm, $\lambda_0$ and $\lambda_1$ are no longer equal because of the violation of the Ising symmetry caused by the presence of magnetic fields, as shown in Fig.~\ref{fig:partition_function}(b,c).
We denote $\Delta=(\lambda_0-\lambda_1)/2=(1-\lambda_1)/2$ as the gap and obtain
\begin{equation}
\chi_\text{av}=O\left(e^{-\Delta d}\right),
\end{equation}
where $\lambda_0=1$ and $|\innerp{0}{\psi}|=1$ because $Z\rightarrow 1$ when $d\rightarrow +\infty$. Note that for ideal circuits, we have $\Delta=0$.

\section{Improved algorithm: Mixed state simulation and top-$k$ heuristics for fSim circuits}\label{sapp:improved}

In this section, we provide more details for our improved algorithm. This algorithm has two steps: (1) replace the omission of gates by inserting maximal depolarizing noise (MDN) and get a probability distribution $\{\widetilde q_x\}$; (2) sort $\{\widetilde q_x\}$ in the decreasing order and choose the first $k$ bistrings with the largest $\widetilde q_x$ as our samples.
In many cases, the distribution from step (1) has exactly the same expected XEB value as the original algorithm due to the 2-design property of the gate set, e.g., for all the ensembles that contain single-qubit Haar ensemble. However, for Google's gate set or its modification (fSim+discrete single-qubit gate), this is not guaranteed, and one has to rely on numerical results.

The motivation of step (1) is two-fold: (i) to get a provable positive XEB; (ii) to have a reduced STD over random circuits. Both (i) and (ii) are important for the step (2) of our improved algorithm. 
This is because as the step (2) amplifies the weight of $x$ with large $\widetilde q_x$, it may also amplify the fluctuation, i.e., the difference between $\widetilde q_x$ and true probabilities.
The two properties (i) and (ii) of the step (1) of our improved algorithm can aid successful amplification in the step (2).
To be more precise, we restate the two properties as follows. (i) $\widetilde q_x$ should have positive correlation with $p_x$; (ii) The STD should be small in order to avoid the case that some occasional $x$ with small $p_x$ but large $\widetilde q_x$ will be amplified (in another words, ``over-fitting'').

In the rest of this section, we first prove the 1-design property of Google's gate set which is the key to prove (i). Second, we prove properties (i) and (ii) of step (1). Third, we discuss in detail how to get $\widetilde q_x$ by simulating the mixed state evolution for the gate set with fSim as the entangling gate. Finally, we discuss the top-$k$ amplification method.

\subsection{1-design property of the  modified Google's single-qubit gate set}

Google's gate set has two ingredients: the single qubit random gate of the form $Z(\theta_1)VZ(\theta_2)$, and the two-qubit fSim entangling gate. For the single-qubit gate, $V$ is chosen randomly from $\{\sqrt X,\sqrt Y, \sqrt W\}$ except with a constraint that two $V$s in successive layers on the same qubit must be different; $Z(\theta_i)$ is a $z$-axis rotation on the Bloch sphere with a site-dependent angle $\theta_i$ which is not actively chosen but is known to be constant and can be potentially calibrated. For simplicity, we introduce two analogous ensembles, with small modifications to the behavior of the $Z$ gate.
To the best of our knowledge, these modifications do not lead to any significant changes in the behavior of quantum circuits.

Ensemble 1: $\theta_i$ is chosen randomly from $[0,2\pi)$. Ensemble 2: $\theta_i$ is chosen randomly from either 0 or $\pi$ (which is $I$ or $Z$ operator respectively). For these ensembles, numerical simulations show that the average XEB values using the top-1 method are $0.00018$ and $0.0004$, respectively, for the Sycamore architecture (53 qubits, 20 depth). Since these values are similar, we argue that the details of the $z$-rotation do not influence the XEB value too much at least not in orders of magnitudes. 

Next, we prove that the single qubit random gates (after the slight modification) form a 1-design ensemble, even with the constraint that two successive $V$'s on the same qubit must be different. 
To see this, we first observe that a single qubit rotation always maps computational states $\ket{0}$ and $\ket{1}$ into their equal superpositions with different relative phases:
\begin{equation*}
    V\ket0=\frac{\ket0+e^{i\phi}\ket1}{\sqrt2}\text{ and } V\ket1=\frac{\ket0-e^{-i\phi}\ket1}{\sqrt2},
\end{equation*}
where $\phi$ depends on the specific gate $V$ we are considering.
We note that this property makes the circuit scramble faster for a fixed entangling gate, compared to the Haar single-qubit gate ensemble. Finally, consider a matrix $M$ under the action of $Z(\theta_1)VZ(\theta_2)$
\begin{eqnarray*}
&&M=\begin{pmatrix}
a & b\\
c & d
\end{pmatrix}\\
\Longrightarrow &&
\mathbb E_{\theta_2}[Z(\theta_2)MZ^\dag(\theta_2)]=
\begin{pmatrix}
a & 0\\
0 & d
\end{pmatrix}\\
\Longrightarrow &&
V\mathbb E_{\theta_2}[Z(\theta_2)MZ^\dag(\theta_2)]V^\dag
=\frac{a}{2}
\begin{pmatrix}
1 & e^{-i\phi}\\
e^{i\phi} & 1
\end{pmatrix}
+
\frac{d}{2}
\begin{pmatrix}
1 & -e^{i\phi}\\
-e^{-i\phi} & 1
\end{pmatrix}\\
\Longrightarrow&&
E_{\theta_1,\theta_2}[Z(\theta_1)V\mathbb Z(\theta_2)MZ^\dag(\theta_2)V^\dag Z^\dag(\theta_1)]=
\frac{a+d}{2}
\begin{pmatrix}
1 & 0\\
0 & 1
\end{pmatrix}
=\tr{M}\cdot\frac{I}{2},
\end{eqnarray*}
where the expectation is over either the ensemble 1 or 2. This proves that these ensembles form a 1-design, no matter which $V$ is chosen.

\subsection{Self-averaging algorithm with maximally depolarizing noise}\label{app:new_algorithm}\label{app:1D_fSim_STD}

Recall that we denote the bitstring distribution from an ideal quantum circuit as $p_x$ and denote the probability distribution after inserting MDNs as $\widetilde q_x$. Since the two subsystems decouple after the insertion of MDNs, we have $\widetilde q_x=\widetilde q_{x_1}\widetilde q_{x_2}$. In the rest of this Supplementary Material, we use $\langle\cdot\rangle$ to denote expectation value instead of $\mathbb E[\cdot]$, since the Dirac notation is no longer used.

We first show property (i) for step (1) of our improved algorithm. The XEB between $p_x$ and $\widetilde q_x$ is
\begin{equation}\label{seq:pq_qq}
    2^{N}\sum_{x}\langle p_x\widetilde q_x \rangle-1
    =2^{N}\sum_{x_1,x_2}\langle p_{x_1x_2}\widetilde q_{x_1}\widetilde q_{x_2} \rangle-1
    =2^{N}\sum_{x_1,x_2}\langle \widetilde q_{x_1}\widetilde q_{x_2}\widetilde q_{x_1}\widetilde q_{x_2} \rangle-1
    =2^{N}\sum_{x}\langle \widetilde q_x^2 \rangle-1
    >
    0,
\end{equation}
where the average is taken over the ensemble of omitted gates. The second equality is due to the 1-design property of the gate set, and the last equality is due to the fact that the XEB is 0 if and only if the distribution $\widetilde q_x$ is uniform (which can be shown by applying the Cauchy-Schwartz inequality). Since $\widetilde q_x$ is non-uniform, as long as the subsystem density matrix is not maximally mixed, the last inequality is strict, leading to property (i).

Next, we move on to show property (ii) for step (1) of our improved algorithm.
Compared to our original algorithm, which directly removes gates along a cut dividing the circuit into two, it is reasonable to expect that the STD of this new algorithm is smaller since the introduction of MDNs intuitively produces averaging effects. The mean value, however,  remains the same at least in the case that the single-qubit gate set is Haar random (or has the 2-design property). Concretely, the improved algorithm can also be realized by averaging over many realizations of the original algorithm by inserting different Paulis (plus $I$) or other arbitrary $t$-design single-qubit gates with $t\ge1$ at the positions of omitted gates. 
Each realization is equivalent to the original algorithm, because the omitted gates are random as well, which would effectively introduce these extra Paulis (plus $I$) (by choosing $v$ as the corresponding Paulis in Eq.~\eqref{Seq:HaarId}). The mean value of the new algorithm is exactly the same as that of the original one because of the 1-design property of random Paulis. Meanwhile, this improved algorithm effectively averages over  different instances of the original algorithm,  hence reducing the STD.
For this reason,  we call this step ``self-averaging''.
Because the output bitstrings could be influenced after the propagation of these single qubit gates in the middle of the circuit, the STD over different circuits is also associated with the STD in the property (ii).

\begin{figure}\includegraphics[width=0.8\textwidth]{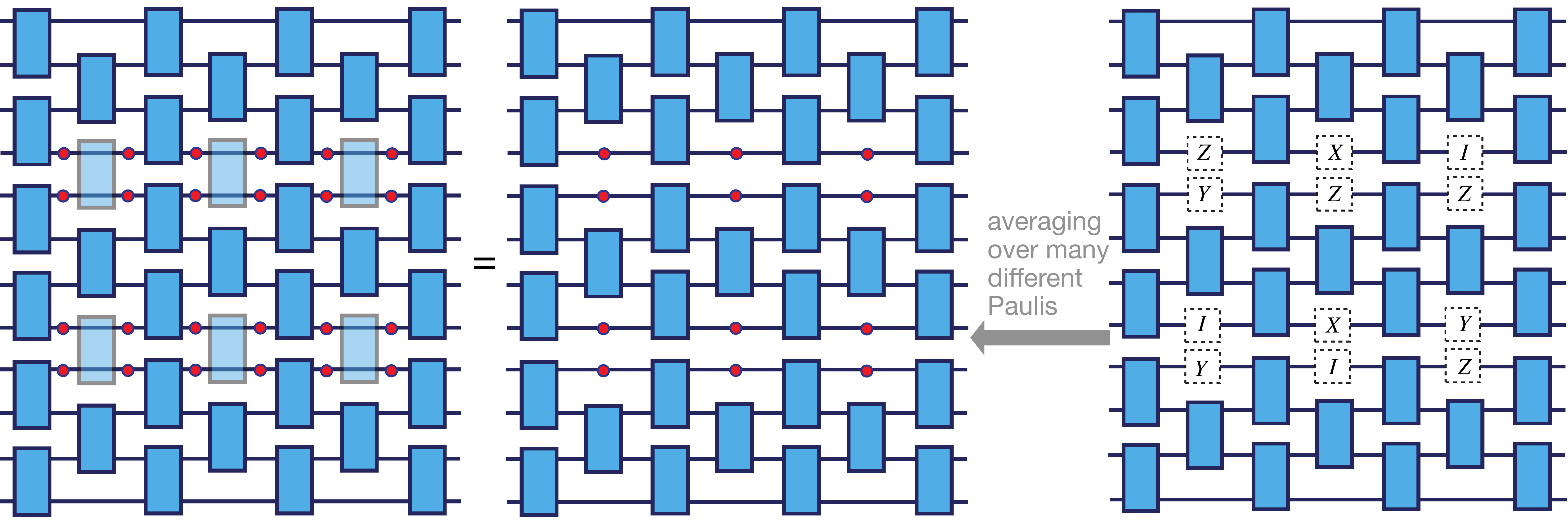}\caption{Illustration of the self-averaging algorithm: simulating the target circuit by inserting MDNs (red circles) or, equivalently, taking partial trace and preparing maximally-mixed states at the positions of light blue gates in the target circuit. In this case, the light blue gates can be omitted since the state is still maximally-mixed even after applying arbitrary unitary gates. This algorithm can also be realized by taking the average over many realizations of the original algorithm while inserting different Paulis (plus $I$) at the positions of omitted gates. }
\label{fig:algorithm2}
\end{figure}

\subsection{Numerical techniques for simulating fSim circuits with MDNs}\label{sapp:fSim_trick}
In order to exactly simulate MDN, the memory resources necessary for simulating the dynamics increase substantially as one needs to use density matrices to represent mixed states.
Naively, this is equivalent to doubling the system size.
Therefore, a direct and exact simulation of 53 qubits is no longer numerically viable with our resources.
Instead, one can sample many different realizations of the original algorithm, by applying single-qubit gates, randomly chosen from a 1-design ensemble, at the positions of omitted gates.

In the special case where the entangling gate is fSim, however, we can take advantage of the property discussed in the previous section, and presented in Fig.~\ref{fig:app_fSim_identity}(a), to efficiently simulate the dynamics of mixed states. For completeness, Fig.~\ref{fig:app_identities} shows all the identities used for simplifying the tensor network that represents the mixed state evolution of the quantum circuit.
After simplifying a quantum circuit using these identities, we use a tensor network contraction algorithm based on a Julia package OMEinsum\footnote{``https://under-peter.github.io/OMEinsum.jl/dev/"} in which the contraction order is found using the algorithm in Ref.~\cite{kalachev2021recursive}. The subsystems we consider are given in Table~\ref{tab:app_subsystems}.

In this work, we only consider the most direct way to insert MDNs which is time/depth independent.
The choice of the subsystem is not optimized either.
By generalizing the way of inserting MDNs, e.g., making their locations to be time/depth dependent, the tensor network contraction algorithm can still be used straightforwardly. We expect that this type of optimization can produce higher XEB without increasing the necessary computational resources too much.

\begin{figure}
\centering
\includegraphics[width=1\textwidth]{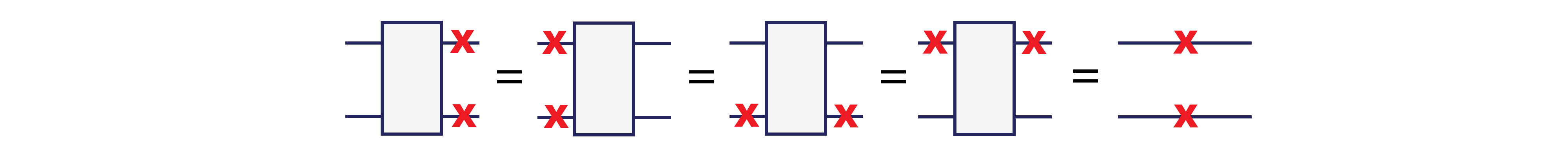}
\caption{Identities used for simplifying the tensor network representing the mixed state evolution of the fSim quantum circuits (or any other SWAP+control-phase gate).}
\label{fig:app_identities}
\end{figure}

\begin{table*}
\begin{center}
\begin{tabular}{|c|c|c|}
\hline
 & subsystem 1 & subsystem 2\\
\hline
\hline
   Google: Fig.~S27 in Ref.~\cite{arute2019quantum} &
   the left part of red lines &
   the right part of red lines\\
\hline
USTC-1: Fig.~S11 in Ref.~\cite{USTC} &
   the blue part &
   exclude black part with 1\\
\hline
USTC-2: Fig.~3(a) in Ref.~\cite{zhu2021quantum} & \makecell{
   1,2,3,7,8,9,10,12,\\13,14,15,18,19,20,\\21,24,25,26,30,31,\\32,36,37,42,43,48 }&
   \makecell{
   23,27,28,29,33,34,35,\\38,39,40,41,44,45,\\46,47,49,50,51,52,53,\\55,56,57,58,59}\\
\hline
\hline
   Google (another partition) &
   \makecell{
   40,53,26,44,25,48,15,42,\\16,46,6,51,2,12,47,\\10,41,9,20,50,19,\\43,23,34,49,33,45 } &
   the rest qubits\\
\hline

USTC-2 (another partition) & 0$\sim$34 exclude 4,5,17,11,22  & 40$\sim$60
   \\
\hline
\end{tabular}
\end{center}
\caption{
The subsystems in our simulation. They are not optimized and chosen for simulating the mixed state evolution using 1 GPU (NVIDIA Tesla V100). The last two rows are different partitions with better XEB but longer running time.}
\label{tab:app_subsystems}
\end{table*}

\subsection{Top-$k$ method}
Suppose we replace $\widetilde q_x$ by another distribution $r_x=r_{x_1}r_{x_2}$, then
\begin{equation}
   \chi_\text{av}= 2^{N}\sum_{x}\langle p_x r_x \rangle-1
    =2^{N}\sum_{x_1,x_2}\langle p_{x_1x_2} r_{x_1} r_{x_2} \rangle-1
    =2^{N}\sum_{x_1,x_2}\langle \widetilde q_{x_1}\widetilde q_{x_2} r_{x_1}r_{x_2} \rangle-1
    =2^{N}\sum_{x}\langle \widetilde q_xr_x \rangle-1,
\end{equation}
where the expectation value is taken over the omitted gates, and the second equality follows from the 1-design property. Here, we choose $r_x$ in the following way:
\begin{equation}
    r_x=\begin{cases}
       \frac{1}{k}, \text{ if }\widetilde q_x \text{ is in the first }k\text{ largest probabilities;}\\
       0, \text{ otherwise.}
    \end{cases}
\end{equation}
This is called a ``top-$k$ method'', and it can substantially amplify the XEB. The resulting XEB is
\begin{equation}
    2^{N}\sum_{x^*}\frac{\widetilde q_{x^*}}{k}-1 \text{ where }x^*\in\{\widetilde q_{x^*}\text{ is in the first }k\text{ largest probabilities}\}.
\end{equation}

In the actual simulation, we choose top-$k_1$ and top-$k_2$ bitstrings from $\widetilde q_{x_1}$ and $\widetilde q_{x_2}$ respectively. More accurately, we only choose top-$k_i$ bitstrings from the non-trivial part of $\widetilde q_{x_i}$.
Here we mention ``non-trivial" because many bitstrings share exactly the same value for $\widetilde q_{x_i}$.
This is because usually some output qubits experience MDN right before measurements, and thus the corresponding distribution is perfectly uniform, leading to degeneracies in $\widetilde q_{x_i}$.
In this case, we say the output qubits are ``trivial''.
Denoting the total number of trivial qubits as $m$, we can get $k_1k_22^m$ distinct bitstrings with $k_1 k_2$ distinct values of $\widetilde q_{x_i}$ (up to potential accidental degeneracies).
In Fig.~\ref{fig:app_top_k}, we present the performance of the top-$k$ method for the non-trivial part of subsystem 1 of Google's Sycamore architecture using the modification of their gate set. We can see that the mean value hardly decreases when increasing $k$ until $k$ becomes very large $\sim 10^5$.
However, the STD decreases $\propto 1/\sqrt k$. Intuitively, this suggests that the top-$k$ bitstrings are roughly independent due to strong scrambling.

\begin{figure}\includegraphics[width=1.0\textwidth]{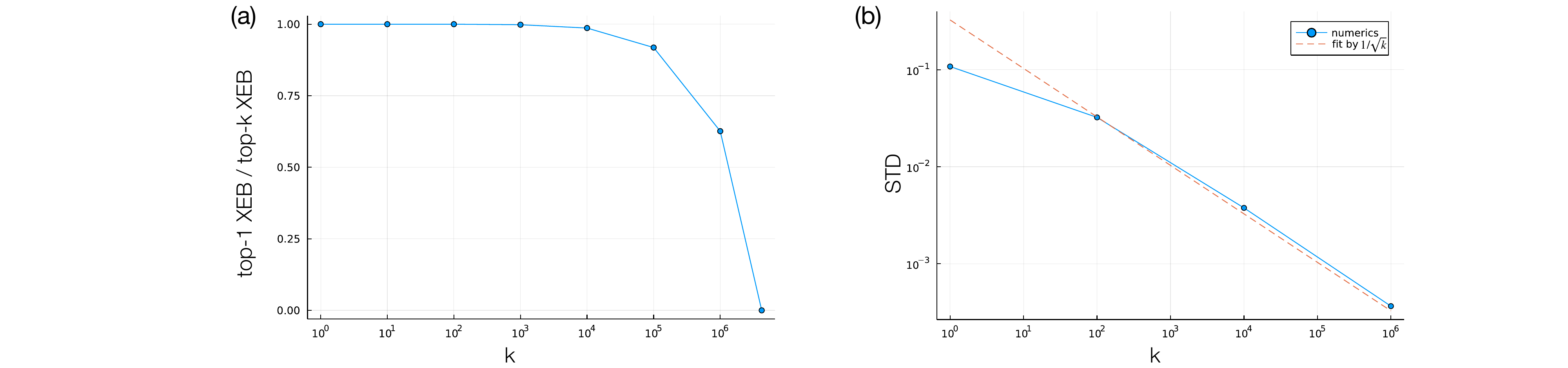}
\caption{Performance of the top-$k$ method. (a) The ratio between XEB values of top-$1$ and top-$k$ methods. (b) The STD of top-$k$ is approximately $\propto 1/\sqrt{k}$. Direct verification of the STD is difficult, since it requires simulating the ideal circuit. Here, we replace the omitted gates with the tensor product of two random, single-qubit gates in the ideal circuit as a reasonable approximation.}
\label{fig:app_top_k}
\end{figure}

\section{Refuting XQUATH}
The Linear Cross-Entropy Quantum Threshold Assumption (XQUATH) is proposed by Aaronson and Gunn~\cite{aaronson2019classical} and serves as the complexity-theoretic foundation of the XEB-based quantum computational advantage. The assumption is that there is no efficient classical algorithm to estimate the output probability of a string from a randomly sampled quantum circuit. Formally, we restate XQUATH from Ref.~\cite{aaronson2019classical} as follows.

\begin{conjecture}[Linear Cross-Entropy Quantum Threshold Assumption (XQUATH)~\cite{aaronson2019classical}]
Given a random circuit description $U$, there is no polynomial time classical algorithm to compute an estimation $q_U(0^N)$ of $p_U(0^N)$ (the probability of getting $0^N$ for the ideal quantum circuit $U$) such that
\begin{equation}\label{Seq:XQUATH}
   2^{2N}\ex{\left(p_U(0^N)-\frac{1}{2^N}\right)^2-\left(p_U(0^N)-q_U(0^N)\right)^2}{U}=\delta
\end{equation}
where $\delta=\Omega(2^{-N})$ and the expectation is over a random circuit ensemble.
\end{conjecture}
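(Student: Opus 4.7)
The plan is to exhibit an efficient classical algorithm whose output can be converted into an estimator $\hat q_U(0^N)$ of $p_U(0^N)$ achieving $\delta=\Omega(2^{-N})$, thereby refuting the conjecture. I would start from the partition algorithm $C$ of~\autoref{sec:classical spoofing algorithm}: on input $U$, omit every cross-subsystem gate so that the circuit splits into $k=\lceil N/l\rceil$ independent pieces of at most $l$ qubits, then exactly simulate each piece to produce a factorized distribution $q^{\mathrm{alg}}_U(x)=\prod_i q_{U_i}(x_i)$ in total time $O(Nd\,2^{O(l)})$. The raw probability $q^{\mathrm{alg}}_U(0^N)$ has variance far exceeding its covariance with $p_U(0^N)$, so I would shrink it toward the trivial guess and set
\[
\hat q_U(0^N)\;:=\;2^{-N}+\alpha\bigl(q^{\mathrm{alg}}_U(0^N)-2^{-N}\bigr),\qquad \alpha\in\mathbb R,
\]
with $\alpha$ to be optimized.

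The next step is an algebraic reduction of $\delta$ to quantities already controlled in this chapter. Writing $a:=p_U(0^N)-2^{-N}$ and $b:=q^{\mathrm{alg}}_U(0^N)-2^{-N}$, the identity $a^2-(a-\alpha b)^2=2\alpha\,ab-\alpha^2 b^2$ gives
\[
\delta \;=\; 2\alpha\cdot 2^{2N}\,\mathbb E_U[ab]\;-\;\alpha^2\cdot 2^{2N}\,\mathbb E_U[b^2].
\]
The unitary 2-design assumption on the single-qubit gates makes both $\mathbb E_U[p_U(x)q^{\mathrm{alg}}_U(x)]$ and $\mathbb E_U[q^{\mathrm{alg}}_U(x)^2]$ independent of $x$, so summing over $x$ and using $\mathbb E_U[p_U(0^N)]=\mathbb E_U[q^{\mathrm{alg}}_U(0^N)]=2^{-N}$ yields
\[
2^{2N}\,\mathbb E_U[ab]=\chi_{\mathrm{av}},
\qquad
2^{2N}\,\mathbb E_U[b^2]=Z-1,
\quad\text{with }\; Z:=2^N\,\mathbb E_U\!\left[\sum_x q^{\mathrm{alg}}_U(x)^2\right].
\]
Optimizing over $\alpha$ gives $\alpha^\star=\chi_{\mathrm{av}}/(Z-1)$ and $\delta^\star=\chi_{\mathrm{av}}^2/(Z-1)$, so the task reduces to proving $\chi_{\mathrm{av}}^2/(Z-1)=\Omega(2^{-N})$.

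Both $\chi_{\mathrm{av}}$ and $Z$ can be controlled with the diffusion-reaction framework of~\autoref{sec:stat_DR_mapping}. For the denominator, independence of the subsystem circuits factorizes $Z=\prod_i\bigl(2^{N_i}\,\mathbb E[\sum_{x_i} q_{U_i}(x_i)^2]\bigr)$; under the 2-design assumption a Porter--Thomas calculation bounds each factor by a universal constant $c$ (equal to $2$ in the scrambled limit), so $Z\le c^{N/l}$. For the numerator, the additivity of XEB for 1-design ensembles (the first rigorous result of this chapter) gives $\chi_{\mathrm{av}}+1=\prod_i(1+\chi_i)$, and the transfer-matrix analysis of~\autoref{sapp:DR_stationary} with the projectors $P_I\otimes P_I$ of~\autoref{eq:TP_projector} inserted at the omitted-gate sites identifies a spectral gap $\Delta_1(l)$ depending only on $l$ and the gate ensemble, yielding $\chi_i\gtrsim e^{-\Delta_1(l)\,d}$ per subsystem and hence $\chi_{\mathrm{av}}\gtrsim e^{-\Delta_1(l)\,d}$. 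Combining these bounds, $\delta^\star\gtrsim e^{-2\Delta_1 d}\cdot c^{-N/l}$; choosing the constant $l$ large enough that $(\log_2 c)/l<1$ leaves room for any sublinear $d$ to satisfy $2\Delta_1 d+(N/l)\log_2 c\le N-\omega(1)$, which delivers $\delta^\star=\Omega(2^{-N})$ and thus refutes XQUATH.

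The hardest step is the $N$-independent lower bound $\chi_{\mathrm{av}}\gtrsim e^{-\Delta_1(l)\,d}$. The matching upper bound $\chi_{\mathrm{av}}\le e^{-\Delta_1 d}$ is comparatively routine and follows from sub-stochasticity of the projected transfer matrix. The lower bound, however, requires exhibiting a non-negligible overlap of the initial vector $\mathbf u^{\otimes N}$ and the XEB boundary $\mathbf v_{\mathrm{XEB}}^{\otimes N}$ with the dominant eigenvector of the transfer matrix modified by the projectors at omitted-gate sites, and ruling out collapse of its spectral gap as $N\to\infty$ with $l$ held fixed. This is precisely where the 2-design hypothesis and the geometric regularity of the partition enter non-trivially; a collapse of $\Delta_1(l)$ would be the loophole through which XQUATH could survive, so pinning down its positivity uniformly in $N$ is the technical core of the argument.
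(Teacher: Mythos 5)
Your reduction is sound and lands in the same place as the paper---both arguments convert $\delta$ into XEB-type moments of a partition-based simulator and then invoke the diffusion-reaction/Ising lower bound $\chi_{\mathrm{av}}=\Omega(e^{-\Delta d})$ with $\Delta$ depending only on the subsystem size---but the estimator you build is genuinely different from the paper's, and the difference matters quantitatively. The paper does not feed the raw omitting-gates distribution into the XQUATH expression at all: it uses the self-averaging variant $\widetilde q_U$ obtained by inserting maximal depolarizing noise at the cut, precisely because the $1$-design property of the averaged-out gates gives the exact identity $\mathbb{E}_U[p_U(0^N)\,\widetilde q_U(0^N)]=\mathbb{E}_U[\widetilde q_U(0^N)^2]$; plugging $\widetilde q_U$ in directly then makes the quadratic penalty cancel against half of the cross term, so $\delta=\mathbb{E}_U[\chi_U]$ exactly, with no free parameter (in your notation the optimum is $\alpha^\star=1$ because $Z-1=\chi_{\mathrm{av}}$ for that algorithm). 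Your route keeps the raw factorized $q^{\mathrm{alg}}_U$, whose second moment is Porter--Thomas-like on each subsystem, so $Z-1\approx c^{N/l}$ is exponentially large; the shrinkage $\alpha^\star=\chi_{\mathrm{av}}/(Z-1)$ is then genuinely necessary and you land on the much weaker $\delta^\star=\chi_{\mathrm{av}}^2/(Z-1)\sim e^{-2\Delta_1 d}\,c^{-N/l}$. This still clears the $\Omega(2^{-N})$ bar for constant $l$ and sublinear $d$, but only after the exponent bookkeeping you describe, and it silently requires the subsystem circuits to anti-concentrate so that the per-subsystem constant satisfies $c<2^{l}$ strictly (without some minimal depth one only gets $c\le 2^{l}$ and the inequality $2\Delta_1 d+(N/l)\log_2 c\le N-\omega(1)$ fails); the paper's identity sidesteps the second-moment estimate entirely and yields an $N$-independent $\delta=e^{-O(d)}$. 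What your version buys in exchange is that it works with the cheaper pure-state simulation rather than the mixed-state (or Pauli-insertion-averaged) one. Your identification of the $N$-uniform lower bound $\chi_{\mathrm{av}}\gtrsim e^{-\Delta_1(l)d}$ as the technical core is exactly right: the paper discharges it by noting that once the projectors $P_I\otimes P_I$ decouple the subsystems, each factor of the partition function depends only on $l$ and not on $N$, and by exhibiting an explicit single-particle path of weight at least $6^{-d}/3$ through the transfer matrix.
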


In this section, we show that our techniques can refute XQUATH for every random circuits using single qubit 2-design gates. Note that the refutation applies to any circuit architecture.

\subsection{Reduction from XQUATH to the hardness of average XEB}\label{ssec:XQUATHreduction}
Here we adopt our self-averaging algorithm shown in Fig.~\ref{fig:algorithm2} to refute XQUATH, since this algorithm has smaller STD as discussed in Sec.~\ref{app:1D_fSim_STD}. The key idea is to reduce Eq.~\eqref{Seq:XQUATH} in XQUATH to the average XEB value of our algorithm.

First, we show that the quantity on the left-hand side of Eq.~\eqref{Seq:XQUATH} is exactly the same as the average XEB if $\widetilde q_U$ is computed from the self-averaging algorithm introduced in Fig.~\ref{fig:algorithm2} and if the circuits are random enough:
\begin{eqnarray*}
&&2^{2N}\ex{\left(p_U(0^N)-\frac{1}{2^N}\right)^2-\left(p_U(0^N)-\widetilde q_U(0^N)\right)^2}{U}\\
&=&2^{2N}\left(
2\ex{p_U(0^N)\widetilde q_U(0^N)}{U}-\ex{\widetilde q^2_U(0^N)}{U}
-\frac{2\ex{p_U(0^N)}{U}}{2^N}+\frac{1}{2^{2N}}\right)\\
&=&2^{2N}\left(
2\ex{p_U(0^N)\widetilde q_U(0^N)}{U}-\ex{\widetilde q^2_U(0^N)}{U}
\right)-1\\
&=&
2^{2N}\ex{p_U(0^N)\widetilde q_U(0^N)}{U}-1\\
&=&
2^{N}\ex{\sum_xp_U(x)\widetilde q_U(x)}{U}-1\\
&=&\ex{\chi_U(C)}{U}
\end{eqnarray*}
where the second line is due to $\ex{p_U(0^N)}{U}=1/2^N$; the third line is due to
$\ex{p_U(0^N)\widetilde q_U(0^N)}{U}=\ex{\widetilde q^2_U(0^N)}{U}$ (see Eq.~(\ref{seq:pq_qq})); the fourth line is due to that the circuit behaves identically for all bitstrinigs, i.e., there is nothing special about the particular choice of $0^N$, i.e., $\ex{p_U(0^N)\widetilde q_U(0^N)}{U}=\ex{p_U(x)\widetilde q_U(x)}{U}$ for every $x\in\{0,1\}^N$. 

In summary, we proved that, for our self-averaging algorithm, the $\delta$ defined in Eq.~\eqref{Seq:XQUATH} of XQUATH is exactly the same as the average XEB. In the next subsection, we discuss the value of $\delta$ obtained from our algorithm.

\subsection{Refuting XQUATH}
In~\autoref{app:1D} in this supplementary material, we have shown that, for 1D circuits, $\chi_\text{av}=\Omega(e^{-\Delta_1 d})$ where $\Delta_1\approx 0.25$. Here, we show that for a broad family of circuit architectures, our algorithm achieves $\chi_\text{av}=\Omega(e^{-\Delta d})$ with some constant $\Delta>0$. Moreover, for every circuit architecture with every two-qubit gate being surrounded by single-qubit Haar random gates, we can design a polynomial time algorithm that achieves $\chi_\text{av}=\Omega(e^{-\Delta d})$ with some constant $\Delta>0$.
As a consequence, we refute XQUATH for these random circuits.

\subsubsection{Our algorithm refutes XQUATH for a broad family of circuit architectures}

The key observation is that after our algorithm breaks a circuit into several subsystems, the corresponding diffusion-reaction model (or Ising spin model for 1D) is also decoupled into $\lceil N/l\rceil$ isolated subsystems, where $l$ is the subsystem size (measured in the number of qubits). Then, similar to~\autoref{eq:Isingpart}, we have
$$
\chi_\text{av}=\prod_i^{\lceil N/l\rceil}(\chi^{(i)}_\text{av}+1)-1,
$$
where $\chi_\text{av}^{(i)}+1$ is equal to the partition function in the $i$-th subsystem. Then, we have
$$
\chi^{(i)}_\text{av}=c_l^{(i)}e^{-\Delta_l^{(i)}d},
$$
where $c_l^{(i)}$ and $\Delta_l^{(i)}$ are some constants that depend only on the subsystem size $l$ (up to the detailed arrangement of these $l$ qubits). Specifically, $c_l^{(i)}$ and $\Delta_l^{(i)}$ would not have any dependency on $N$ because each subsystem can only see the maximal depolarizing noise at the boundaries (positions of omitted gates) and is disconnected from any information about other subsystems. Next, we choose $l$ as a constant so that each $\chi^{(i)}_\text{av}$ has the form $\Omega(e^{-\Delta d})$. Thus,
$$
\chi_\text{av}=\prod_i^{\lceil N/l\rceil}(\chi^{(i)}_\text{av}+1)-1\approx\sum_i^{\lceil N/l\rceil}\chi^{(i)}_\text{av}
=\sum_i^{\lceil N/l\rceil}c_l^{(i)}e^{-\Delta_l^{(i)} d} = \Omega(e^{-\Delta d})
$$
for some constant $\Delta>0$, as desired. Together with the reduction in subsection~\ref{ssec:XQUATHreduction}, this shows that our algorithm refutes XQUATH for circuit architectures with non-maximal-entangling 2-qubit gates because $d\sim N^{1/D}$ is a reasonable requirement for a sufficiently scrambling circuit dynamics to demonstrate quantum computational advantage. 
We start with a generic analysis for non-maximal-entangling 2-qubit gates like Haar-random and CZ gates. For special gate sets, such as fSim, one needs to carefully cut the subsystem, as described in the next subsubsection.

\subsubsection{Efficient algorithms that refute XQUATH for every circuit architectures}

We now describe how to modify our algorithms to refute XQUATH for every circuit architectures with every two-qubit gate being surrounded by single-qubit Haar random gates. The key idea is using path integral in Pauli basis so we start with explaining the underlying intuition.

\paragraph{Path integral in Pauli basis.}
The proposal of XQUATH was based on the observation that sub-sampling the path integral for XEB over the computational basis only achieves $2^{-\Omega(Nd)}$ XEB value with high probability. Concretely, if we consider the path integral for XEB using the computational basis, there are roughly $2^{Nd}$ different paths and their weights (i.e., their contribution to the XEB) are roughly equal. The belief underlying XQUATH is that a polynomial time classical algorithm can only compute the contribution from polynomial number of paths and thus the attainable XEB is only $\text{poly}(Nd)2^{-Nd}$.

However, the weight of each path is far from uniform when we consider performing a path integral with respect to the Pauli basis ($I$ and $X,Y,Z$ as we discussed in section~\ref{app:DR}). Here, we denote 
\begin{equation}
\widetilde\sigma_0=\frac{1}{\sqrt2}
\begin{pmatrix}
1&0\\
0&1
\end{pmatrix}, \quad
\widetilde\sigma_1=\frac{1}{\sqrt2}
\begin{pmatrix}
0&1\\
1&0
\end{pmatrix}, \quad
\widetilde\sigma_2=\frac{1}{\sqrt2}
\begin{pmatrix}
0&-i\\
i&0
\end{pmatrix},\quad
\widetilde\sigma_3=\frac{1}{\sqrt2}
\begin{pmatrix}
1&0\\
0&-1
\end{pmatrix},
\end{equation}
where the factor $1/\sqrt2$ makes them different from conventional Pauli matrices.
For any $2\times 2$ density matrix $\rho$, we have the following decomposition:
\begin{equation}\label{eq:dec_Her}
\rho=\sum_{i=0,1,2,3}(\tr{\widetilde\sigma_i\rho})\widetilde\sigma_i.
\end{equation}
This is the Pauli-Liouville representation~\cite{greenbaum2015introduction}, and since $\rho$ is Hermitian, $\tr{\widetilde\sigma_i\rho}$ is real. For a $4\times4$ matrix $\rho$
\begin{equation}
\rho=\sum_{i,j=0,1,2,3}(\tr{\widetilde\sigma_i\otimes\widetilde\sigma_j \rho})\widetilde\sigma_i\otimes\widetilde\sigma_j.
\end{equation}
Note that the $\{\widetilde\sigma_i\}^{\otimes n}$ basis is complete for all Hermitian matrices acting on $n$-qubits.

Similarly to inserting $I=\ket0\bra0+\ket1\bra1$ at every space-time position in the quantum circuit to get the path integral in the computational basis, we can formulate a path integral in the Pauli basis by inserting a different and appropriately chosen resolution of identity. To illustrate the idea of this path integral, we use a very simple example. First, we consider a matrix element of a single-qubit density matrix $\rho$ after the application of two single-qubit gates $U$ and $V$ as follows.
\begin{eqnarray}\label{eq:path}
\notag &&\bra x V^\dag U^\dag \rho U V \ket x\\
\notag &=&\sum_{i=0,1,2,3}(\tr{\widetilde\sigma_i\rho})\bra xV^\dag U^\dag \widetilde\sigma_iUV \ket x\\
\notag&=&\sum_{i,j=0,1,2,3}(\tr{\widetilde\sigma_i\rho}) ( \tr{U^\dag \widetilde\sigma_iU\widetilde\sigma_j })
 \bra xV^\dag \widetilde\sigma_j  V \ket x\\
 &=&\sum_{i,j,k=0,1,2,3}(\tr{\widetilde\sigma_i\rho}) ( \tr{U^\dag \widetilde\sigma_iU\widetilde\sigma_j} )
(\tr{V^\dag \widetilde\sigma_j  V \widetilde\sigma_k}) (\tr{\widetilde\sigma_k\ketbra{x}{x}}),
\end{eqnarray}
where we used Eq.~(\ref{eq:dec_Her}) iteratively.

While now there are roughly $4^{Nd}$ different paths (because we have $\propto Nd$ space-time positions and 4 basis matrices), the weights of different paths can be vastly different, as contrasted with the nearly-uniform distribution of weights in the computational-basis path integral. In particular,  
a path in the Pauli basis has a weight exponentially decaying with the number of non-trivial Paulis (i.e., other than the identity). This can be understood by the fact that the transition $I\to I$ satisfies  $|\tr{U^\dag I U I}|>|\tr{U^\dag \sigma U \sigma^\prime}|$, if any of $\sigma$,$\sigma^\prime$ is not $I$ (Clifford gates are a special case where  should use $\ge$). Thus, a transition involving a non-trivial Pauli ($X,Y,Z$) will cause a decay relative to the transition involving only $I$. The path involving only $I$s corresponds to the contribution which is cancelled by $-1$ in the definition of the XEB.

\paragraph{Modified algorithms to refute XQUATH.}
For a circuit architecture with every two-qubit gates being surrounded by single-qubit Haar random gates, we first apply our analytical tools to get its corresponding diffusion-reaction model. The modified algorithm for this circuit architecture will first cut out a subsystem (which might not be the straightforward way we did in the original algorithm) of constant width in space. Then the algorithm fully simulates this subsystem in polynomial time (since it has constant width in space) and samples output string according to the marginal distribution of the output layer with the output qubits outside being sampled from the uniform distribution. In the following, we describe how to select a subsystem for a circuit architecture and then argue that the modified algorithm indeed achieves average XEB at least $e^{-O(d)}$.

First, recall from~\autoref{eq:ini_vec} and~\autoref{eq:chi_U_avg} that the partition function of the diffusion-reaction model $\chi_\text{av}+1$ has input boundary ${\mathbf u}=(1/2\ 1/2)$ and output boundary $\mathbf{v}_\text{\tiny XEB}=(2\ 2/3)$ in the Pauli basis $\{I,\Omega\}^l$ where $\Omega$ denotes the non-trivial Pauli. 
As the contribution of both the input and output boundary being all $I$s is $1$, we have that $\chi_\text{av}$ corresponds to the contribution from boundaries containing non-Pauli's. 

Next, we focus on a boundary condition with both input and output boundary having exactly one non-trivial Pauli and argue that its contribution to the partition function is at least $e^{-O(d)}$. Before continuing, we reorganize ${\mathbf u}=(1 1)$ and output boundary $\mathbf{v}_\text{\tiny XEB}=(1 1/3)$ for convenience.
Notice that the transfer matrix of our diffusion reaction model provides a way to calculate the weight of each path in the Pauli basis (see subsubsection~\ref{sapp:DR_transfer}). In particular, as the reaction rate $R$ is at most $2/3$, there is always a non-zero probability to move from a configuration of exactly one non-trivial Pauli to another configuration of exactly one non-trivial Pauli. Thus, we can start with an input boundary having exactly one non-trivial Pauli and repeat the above argument to move it to an output boundary with exactly one non-trivial Pauli. Specifically, this path has weight at least $\max(1-D,D-R)^d/3\ge6^{-d}/3$.

Finally, we simply select a path of exactly one non-trivial Pauli and select the subsystem containing the gates that correspond to the locations of these non-trivial Pauli's. Then the algorithm simply fully simulates this subsystem and samples output string from the marginal distribution with the other output qubits being sampled from the uniform distribution.
Note that the partition function of the corresponding diffusion-reaction model is at least $1$ plus the weight of this subsystem, which is at least $e^{-O(d)}$ as discussed in the previous paragraph. Thus, we have an efficient algorithm that achieves average XEB at least $e^{-O(d)}$.

To conclude, the supporting argument for XQUATH, based on the conventional path integral formulation, does not hold if we consider a Pauli-basis path integral. This is because the contribution of each path becomes polarized (not equally weighted) and our subsystem algorithm is equivalent to outputting paths with weight $e^{-O(d)}$ in the Pauli basis.

    \chapter{Details for the Neuroscience Part}\label{app:neuroscience}

\section{Mathematical preliminaries}

\begin{definition}[Matrix norm]
For every positive semi-definite matrix $A$, the $A$-norm of a vector $\bx$ is defined as $\|\bx\|_A=\sqrt{\bx^\top A\bx}$.
\end{definition}
Note that when $A$ is the identity matrix, then $A$-norm is simply the Euclidean norm.

\begin{lemma}[Cauchy-Schwarz inequality]
Let $\bx,\by\in\Real^n$ be two vectors, we have $|\bx^\top\by|\leq\sqrt{\|\bx\|_2\cdot\|\by\|_2}$.
\end{lemma}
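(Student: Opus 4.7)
The plan is to use the standard ``polynomial trick'' argument. I would first dispose of the degenerate case $\by = \mathbf{0}$, for which both sides of the claimed inequality vanish. Then, assuming $\by \neq \mathbf{0}$, I would introduce the scalar-valued function
\[
p(t) := \|\bx - t\by\|_2^2 = \|\bx\|_2^2 - 2t\,\bx^\top\by + t^2\|\by\|_2^2,
\]
which is manifestly non-negative for every $t\in\Real$ because it is a squared Euclidean norm, and which is a genuine quadratic in $t$ with positive leading coefficient $\|\by\|_2^2$.

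From here there are two essentially equivalent finishes. The cleaner option is to observe that a non-negative quadratic must have non-positive discriminant, so $(2\bx^\top\by)^2 - 4\|\bx\|_2^2\|\by\|_2^2 \le 0$, which rearranges to $(\bx^\top\by)^2 \le \|\bx\|_2^2\|\by\|_2^2$; taking square roots and recalling $|a| = \sqrt{a^2}$ gives the desired $|\bx^\top\by| \le \|\bx\|_2\cdot\|\by\|_2$. Alternatively, I would simply substitute the minimizer $t^\star = \bx^\top\by/\|\by\|_2^2$ into $p(t)\ge 0$, obtaining $\|\bx\|_2^2 - (\bx^\top\by)^2/\|\by\|_2^2 \ge 0$, which yields the same conclusion after clearing denominators.

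There is no real obstacle in this proof; the argument is two lines once the auxiliary polynomial $p(t)$ is written down, and the only minor care needed is segregating the $\by=\mathbf{0}$ case so that division by $\|\by\|_2^2$ (or the ``quadratic in $t$'' language) is legitimate. I note that, as literally written, the statement reads $|\bx^\top\by|\le\sqrt{\|\bx\|_2\cdot\|\by\|_2}$, which appears to be a typographical artifact; the proof above establishes the standard form $|\bx^\top\by|\le\|\bx\|_2\cdot\|\by\|_2$, which is what the subsequent appendix material will invoke.
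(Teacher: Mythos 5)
Your proof is correct and complete: the non-negativity of $p(t)=\|\bx-t\by\|_2^2$ together with the discriminant (or the minimizer substitution) is the standard argument, and you handle the $\by=\mathbf{0}$ case properly. The paper itself states this lemma without proof as a preliminary, so there is no alternative argument to compare against; you are also right that the statement as printed, $|\bx^\top\by|\leq\sqrt{\|\bx\|_2\cdot\|\by\|_2}$, is a typographical error (it fails for $\bx=\by=(2,0,\dots,0)$), and the correct form $|\bx^\top\by|\leq\|\bx\|_2\cdot\|\by\|_2$ is what your proof establishes and what the appendix actually uses.
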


\section{Optimal Balanced SNNs Solves the Least Squares Problem}\label{app:SNNs least squares}

In this section, we give a complete proof for the convergence of optimal balanced SNN to the (non-negative) least squares problem as discussed in~\autoref{sec:SNNs results},~\autoref{thm:linearsystem} (with an informal version,~\autoref{thm:SNNs main least squares}), and~\autoref{thm:l1} (with an informal version,~\autoref{thm:SNNs main sparse}).

Recall that given a matrix $F \in \Real^{n\times m}$ and a vector $\bx \in \Real^m$, the goal of the least squares problem is to find $\br \in \Real^n$ that minimizes $\|\bx-F^\top\br\|_2^2/2$. Here we consider the discrete and non-leaky optimal balanced SNNs (\autoref{eq:optimal SNN discrete}):
\begin{equation}\label{eq:app discrete SNN}
\bv(t+1) = -FF^\top\bs(t) + F\bx\cdot\Dt
\end{equation}
where $\bv(0)$ is instantiated to the all zero vector. For the other parameters, we require the threshold $\eta \geq \lambda_{\max}$ and the timestep size $\Dt \leq \frac{ \sqrt{\lambda_{\min}}}{24\sqrt{n} \cdot \|\bx\|_2}$, where $\lambda_{\min}$ and $\lambda_{\max}$ are the minimal and maximal (non-zero) eigenvalues of $FF^\top$. Here we restate~\autoref{thm:linearsystem} which shows that the residual error $\|\bx-F^\top\bv(t)\|_2^2/2$ converges to its minimum with rate $O(1/t)$.

\snnleastsquares*

Before proving the theorem, we first give some intuition about why the firing rate would converge to an optimal least squares solution. An important fact here is that the neurons' potential $\bv(t)$ will remain bounded (specifically, we will show that the $(FF^\top)^{\dagger}$-norm of $\bv(t)$ is bounded). Now, recall that the potential is affected by the constant external charging current $I\cdot\Dt = F\bx\cdot\Dt$ and the spike $-FF^\top\bs$. Intuitively, to ensure the potential being bounded, the constant external charging current needs to be balanced out by the spiking effect over time. Thus, in the limit, we should expect 
\[
FF^\top \br \approx F\bx \, .
\]
Namely, the average effect of the spike should roughly be the external charging for one unit of time. Let $\bx_F$ be the orthogonal projection of $\bx$ onto the orthogonal complement of the null space of $F$. Note that as $F\bx=F\bx_F$ and $\bx_F$ lies in the range space of $F^\top$, this implies that $F^\top\br \approx \bx_F$. Moreover, the minimal residual error $\|\bx-F^\top\br\|_2/2$ would be $\|\bx-\bx_F\|_2/2$ since $F^\top\br$ always lies in the range space of $F^\top$. Thus, in the following, we focus on upper bounding $\|\bx_F-F^\top\br(t)\|_2$ instead.

In the following, we start with a lemma about the properties of matrix norm and then prove the boundedness of $\bv(t)$. The proofs of these two lemmas are provided in the subsequent subsection. Finally, we use the ``potential conservation argument'' to prove~\autoref{thm:linearsystem}.

\begin{lemma}\label{lemma:PSDmatrix}
For every $F\in\Real^{n\times m}$, we have
\begin{enumerate}
	\item For every $\br\in\Real^n$, $\|\br\|_{FF^\top}=\|F^\top\br\|_2=\|FF^\top\br\|_{(FF^\top)^{\dagger}}$.
	\item For every $\bx\in\Real^m$, $\|F\bx\|_{(FF^\top)^{\dagger}}=\|\bx_F\|_2$.
\end{enumerate}
\end{lemma}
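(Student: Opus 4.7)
The plan is to prove both parts by routine algebraic manipulation that hinges on two standard facts about the Moore--Penrose pseudoinverse: the identity $A A^\dagger A = A$ for any matrix $A$, and the characterization of $F^\top (FF^\top)^\dagger F$ as the orthogonal projector onto the range of $F^\top$. Neither statement is deep, so the goal is simply to spell out the definitions carefully and let these facts do the work.

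For part 1, the equality $\|\br\|_{FF^\top}=\|F^\top\br\|_2$ is immediate from unfolding the matrix-norm definition:
\begin{equation*}
\|\br\|_{FF^\top}^2 = \br^\top FF^\top \br = (F^\top\br)^\top (F^\top\br) = \|F^\top\br\|_2^2.
\end{equation*}
For the second equality, I would write
\begin{equation*}
\|FF^\top\br\|_{(FF^\top)^\dagger}^2 = \br^\top FF^\top (FF^\top)^\dagger FF^\top \br,
\end{equation*}
and then apply $FF^\top (FF^\top)^\dagger FF^\top = FF^\top$ to collapse this back to $\br^\top FF^\top\br = \|F^\top\br\|_2^2$. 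This is the only nontrivial step, and it is a textbook property of the pseudoinverse of a symmetric positive semi-definite matrix (which can be verified directly via the eigendecomposition $FF^\top = U D U^\top$ and $(FF^\top)^\dagger = U D^\dagger U^\top$, where $D^\dagger$ inverts the nonzero eigenvalues).

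For part 2, the key observation is that $F^\top(FF^\top)^\dagger F$ is the orthogonal projector onto $\text{range}(F^\top)$, which is exactly the orthogonal complement of $\text{null}(F)$. The cleanest way to establish this is via the singular value decomposition $F = U\Sigma V^\top$: a short computation gives
\begin{equation*}
F^\top(FF^\top)^\dagger F = V\,\Sigma^\top (\Sigma\Sigma^\top)^\dagger \Sigma\, V^\top,
\end{equation*}
and $\Sigma^\top (\Sigma\Sigma^\top)^\dagger \Sigma$ is the $m\times m$ diagonal matrix with ones in the first $r = \mathrm{rank}(F)$ entries and zeros elsewhere, so the whole expression is the projector onto the span of the right singular vectors of $F$ with nonzero singular values, i.e., onto $\text{range}(F^\top)$. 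Consequently
\begin{equation*}
\|F\bx\|_{(FF^\top)^\dagger}^2 = \bx^\top F^\top(FF^\top)^\dagger F\,\bx = \|\bx_F\|_2^2,
\end{equation*}
as claimed.

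I do not anticipate a genuine obstacle in either part; the main thing to be careful about is using the defining properties of the pseudoinverse ($A A^\dagger A = A$ and $A^\dagger A A^\dagger = A^\dagger$) rather than informally ``inverting'' $FF^\top$, since $F$ need not have full row rank. If one prefers to avoid invoking the SVD, part 2 can alternatively be proved by decomposing $\bx = \bx_F + \bx_0$ with $\bx_0 \in \text{null}(F)$, noting that $F\bx = F\bx_F$, and then observing that $\bx_F \in \text{range}(F^\top)$ admits a representation $\bx_F = F^\top\by$ for some $\by$, so that $\|F\bx\|_{(FF^\top)^\dagger} = \|FF^\top\by\|_{(FF^\top)^\dagger} = \|F^\top\by\|_2 = \|\bx_F\|_2$ by part 1. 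This second route keeps the argument entirely within the framework already introduced in the paper.
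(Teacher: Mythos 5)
Your proposal is correct and, in substance, matches the paper's own argument: part 1 rests on the same pseudoinverse identity (the paper routes it through the projection $\br_F$ onto the range of $FF^\top$, whereas your direct use of $FF^\top(FF^\top)^\dagger FF^\top=FF^\top$ is a slightly cleaner shortcut), and your ``second route'' for part 2 --- writing $\bx_F=F^\top\by$ and reducing to part 1 --- is exactly what the paper does. No gaps.
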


\begin{lemma}\label{lemma:DSNN-potentialbounded}
	For every time $t\in\N$, when the threshold $\eta\geq\lambda_{\max}$ and the discretized step size $\Dt<\frac{\sqrt{\lambda_{\min}}}{24\sqrt{n}\cdot\|\bx_F\|_2}$, we have $\|\bv(t)\|_{(FF^\top)^{\dagger}}\leq 2\sqrt{\kappa\eta n}$.
\end{lemma}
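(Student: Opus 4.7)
The plan is to prove the inductive bound $V(t):=\|\bv(t)\|^2_{(FF^\top)^\dagger}\le 4\kappa\eta n$ via a Lyapunov/conservation argument on the energy $V(t)$, with the base case $V(0)=0$ immediate. For the inductive step I would first observe that $\bv(0)=\mathbf{0}$ together with the fact that every increment ($-FF^\top\bs(t)$ and $\Dt F\bx$) lies in $\mathrm{range}(F)$ implies the whole trajectory stays in $\mathrm{range}(F)$, so $(FF^\top)^\dagger FF^\top$ acts as the identity on $\bv(t)$.

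Expanding $V(t+1)=\|\bv(t)+\Dt F\bx-FF^\top\bs(t)\|^2_{(FF^\top)^\dagger}$ and applying \autoref{lemma:PSDmatrix} would give
\[
V(t+1)-V(t) = -2\bv(t)^\top\bs(t)+\bs(t)^\top FF^\top\bs(t)+2\Dt\langle\bv(t),F\bx\rangle_{(FF^\top)^\dagger}-2\Dt\bs(t)^\top F\bx+\Dt^2\|\bx_F\|_2^2.
\]
Writing $k=\|\bs(t)\|_1$ for the number of firing neurons, the spike condition $v_i(t)>\eta$ on the support of $\bs(t)$, combined with $\bs(t)^\top FF^\top\bs(t)\le\lambda_{\max}k$ and the hypothesis $\eta\ge\lambda_{\max}$, would show that the first two (``spike'') terms contribute at most $-\eta k$. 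The remaining three (``input'') terms would be controlled by Cauchy--Schwarz in the $(FF^\top)^\dagger$ inner product, bounding the bulk cross term by $2\Dt\sqrt{V(t)}\,\|\bx_F\|_2$, together with $\|F\bx\|_2\le\sqrt{\lambda_{\max}}\|\bx_F\|_2$ for the spike--input cross term.

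From here I would split into two cases. When $k=0$, the inequality collapses (via completing the square) to $\sqrt{V(t+1)}\le\sqrt{V(t)}+\Dt\|\bx_F\|_2$, so a silent step grows $\sqrt{V}$ by at most $\Dt\|\bx_F\|_2$; moreover, no-spike steps cannot drive $V$ past the target, because $\|\bv(t)\|_\infty\le\eta$ whenever no neuron spikes, forcing $V(t)\le\|\bv(t)\|_2^2/\lambda_{\min}\le n\eta^2/\lambda_{\min}$. When $k\ge 1$, the deliberately small timestep $\Dt<\sqrt{\lambda_{\min}}/(24\sqrt{n}\,\|\bx_F\|_2)$ should be used to absorb the $\Dt$-terms into $-\eta k/2$ via AM--GM, yielding a net decrease of $V$ whenever it is close to the target bound.

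I expect the main obstacle to be the bookkeeping in the $k\ge 1$ case: one must show that the sum of ``input'' terms, in particular $2\Dt\sqrt{V(t)}\|\bx_F\|_2$ which is largest precisely when $V(t)$ approaches $4\kappa\eta n$, is strictly dominated by the $-\eta k$ gain from firing. Matching constants here is what forces the explicit $24$ in the stepsize and the factor of $2$ in the radius $2\sqrt{\kappa\eta n}$, and the delicate part will be calibrating these constants so that the induction closes cleanly regardless of whether $\eta$ sits at $\lambda_{\max}$ or far above it.
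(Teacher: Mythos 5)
Your plan is correct and follows essentially the same route as the paper's proof: an induction on the energy $\|\bv(t)\|^2_{(FF^\top)^{\dagger}}$, expanding one step of the dynamics, using $\bv(t)^\top\bs(t)\geq\eta\,|\Gamma(t)|$ together with $\eta\geq\lambda_{\max}$ to make the spike terms strictly dissipative, Cauchy--Schwarz plus the small choice of $\Dt$ to absorb the input terms, and the same two-case split where the no-spike case is closed via $\|\bv(t)\|_\infty\leq\eta$ giving $\|\bv(t)\|_{(FF^\top)^{\dagger}}\leq\eta\sqrt{n}/\sqrt{\lambda_{\min}}$. The bookkeeping you flag as the main obstacle is exactly what the paper's displayed chain of inequalities carries out, so no new idea is missing.
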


We now prove~\autoref{thm:linearsystem} using ~\autoref{lemma:DSNN-potentialbounded}.
\begin{proof}[Proof of~\autoref{thm:linearsystem}]
The proof consists of two steps. First, using the dynamics of $\bv(t)$ to derive a connection between $\bv(t)$ and the residual error of the least squares problem. Second, use ~\autoref{lemma:DSNN-potentialbounded} to upper bound the residual error.

Let us start with telescoping~\autoref{eq:app discrete SNN} from time 1 to time $t$, 
\begin{align*}
\sum_{s=1}^{t}\bv(s+1) &= \sum_{s=1}^{t}\left(\bv(s) - FF^\top\bs(s) + F\bx\cdot\Dt\right)\\
&= \left(\sum_{s=1}^{t}\bv(s)\right) - (t\cdot\Dt)\cdot\left(FF^\top\br(t) -F\bx\right) \, .
\end{align*}
Note that the second equality uses the fact that $\br(t)=\sum_{s=1}^t\bs(s)$ for non-leaky SNNs. Next, subtract $\sum_{s=1}^{t}\bv(s)$ on both sides and divide with $t\cdot\Dt$. Now, we have established an elegant connection between the firing rate and the potential vector.
\begin{equation}\label{eq:conservation}
\frac{\bv(t+1)-\bv(1)}{t\cdot\Dt} = F\bx -FF^\top\br(t) \, .
\end{equation}

Note that the right-hand side of~\autoref{eq:conservation} is $F$ multiplying the residual error $\bx-F^\top\br(t)$. To get rid of the $F$ term, we apply the $(FF^\top)^{\dagger}$-norm on the both sides of \autoref{eq:conservation}. By ~\autoref{lemma:PSDmatrix} and the fact that $\bx_F$ is the projection of $\bx$ on the range space of $F^\top$, we have

\begin{equation}\label{eq:deterministicSNN-conservation}
\frac{\|\bv(t+1)-\bv(1)\|_{(FF^\top)^{\dagger}}}{t\cdot\Dt}=\|F\bx-FF^\top\br(t)\|_{(FF^\top)^{\dagger}}=\|\bx_F - F^\top\br(t)\|_2 \, .
\end{equation}

As a result, to upper bound the error $\|\bx_F-F^\top\br(t)\|_2$, it suffices to upper bound $\|\bv(t+1)-\bv(1)\|_{(FF^\top)^{\dagger}}$. Moreover, as we take $\bv(1)=\mathbf{0}$, now we only need to show $\|\bv(t+1)\|_{(FF^\top)^{\dagger}}$ is bounded by some $\epsilon$ multiplicative factor of $\|\bx_F\|_2$. As we take $\Dt\leq\frac{\sqrt{\lambda_{\min}}}{24\sqrt{n}\cdot\|\bx_F\|_2}$ and $\eta\geq\lambda_{\max}$, ~\autoref{lemma:DSNN-potentialbounded} gives that $\|\bv(t+1)\|_{(FF^\top)^{\dagger}}\leq 2\sqrt{\kappa\eta n}$ for every $t\in\N$. Combine this upper bound for $\|\bv(t+1)\|_{(FF^\top)^{\dagger}}$ with~\autoref{eq:deterministicSNN-conservation} and the choice of $t\geq2\sqrt{\kappa\eta n}\cdot\|\bx_F\|_2/(\epsilon\cdot\Dt)$, we have
\begin{equation}
	\|\bx_F-F^\top\br(t)\|_2\leq\frac{\|\bv(t+1)\|_{(FF^\top)^{\dagger}}}{t\cdot\Dt}\leq\frac{2\epsilon\sqrt{\kappa\eta n}\cdot\|\bx_F\|_2}{2\sqrt{\kappa\eta n}}= \epsilon\cdot\|\bx_F\|_2 \, .
\end{equation}
\end{proof}

\subsection{Proof of~\autoref{lemma:PSDmatrix}}
In the following, we prove two properties of the matrix norm. First, for every $\br\in\Real^n$, $\|\br\|_{FF^\top}=\|F^\top\br\|_2=\|FF^\top\br\|_{(FF^\top)^{\dagger}}$. Second, $\|F\bx\|_{(FF^\top)^{\dagger}}=\|\bx_F\|_2$.
\begin{enumerate}
	\item By definition, we have $\|\br\|_{FF^\top}=\sqrt{\br^\top FF^\top\br}=\|F^\top\br\|_2$. As for the second part of the equality, recall that if a vector $\bv\in\Real^n$ lies in the range space of $FF^\top$, then
    \begin{equation}\label{eq:pinv}
    (FF^\top)^{\dagger}(FF^\top)\bv=\bv.
    \end{equation}
    Let $\br_F$ be the orthogonal projection of $\br$ such that $F^\top\br=F^\top\br_F$ and $\br_F$ lies in the range space of $FF^\top$. By~\autoref{eq:pinv}, we have
    \begin{align*}
    	\|FF^\top\br\|_{(FF^\top)^{\dagger}}^2 &= (FF^\top\br)^\top(FF^\top)^{\dagger}(FF^\top\br)\\
        &=(FF^\top\br_F)^\top(FF^\top)^{\dagger}(FF^\top\br_F)\\
        &=(FF^\top\br_F)^\top\br_F\\
        &=\|F^\top\br_F\|_2^2=\|F^\top\br\|_2^2 \, .
    \end{align*}
    
	\item Let $\br\in\Real^n$ be a solution of $F^\top\br=\bx_F$ where $\bx_F$ is the orthogonal projection of $\bx$ to the orthogonal complement of the null space of $F$. By~\autoref{eq:pinv}, we have
    \begin{align*}
    \|F\bx\|_{(FF^\top)^{\dagger}}^2 &= (F\bx)^\top(FF^\top)^{\dagger}(F\bx)\\
    &= (FF^\top\br)^\top(FF^\top)^{\dagger}(FF^\top\br)\\
    &= \|F^\top\br\|_2^2=\|\bx_F\|_2^2.
    \end{align*}
\end{enumerate}

\subsection{Proof of~\autoref{lemma:DSNN-potentialbounded}}
At timestep t, denote the set of neurons that exceeds the threshold as $\Gamma(t) := \{i \in [n] \, |\,  |\bv_i(t)| > \eta\}$. We have the following observation.

\begin{lemma}\label{lemma:DSNN-keylemma}
	For every $t\in\N$, $\bv(t)^\top \bs(t)\geq \eta\cdot|\Gamma(t)|$.
\end{lemma}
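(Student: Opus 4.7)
The plan is to prove the inequality pointwise, coordinate by coordinate, using only the structure of the integrate-and-fire spiking rule. The key observation is that the inner product $\bv(t)^\top \bs(t)=\sum_{i=1}^n v_i(t)s_i(t)$ is automatically supported on $\Gamma(t)$: the spiking rule forces $s_i(t)=0$ whenever $|v_i(t)|\le \eta$, so every index outside $\Gamma(t)$ contributes nothing. What remains is to show that the contribution from each $i \in \Gamma(t)$ is at least $\eta$.

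Concretely, I would fix $t$ and decompose
\[
\bv(t)^\top \bs(t) \;=\; \sum_{i \in \Gamma(t)} v_i(t)\,s_i(t) \;+\; \sum_{i \notin \Gamma(t)} v_i(t)\,s_i(t),
\]
and immediately discard the second sum. For the first sum, the reason $\Gamma(t)$ is defined using $|v_i(t)|$ rather than $v_i(t)$ is that the spiking rule used here is bi-directional: an index $i \in \Gamma(t)$ fires with $s_i(t) = \operatorname{sign}(v_i(t)) \in \{-1,+1\}$, so that the spike always carries the sign of the supra-threshold potential. Hence for every $i \in \Gamma(t)$ we have $v_i(t)\, s_i(t) = |v_i(t)| > \eta$, and summing yields
\[
\bv(t)^\top \bs(t) \;=\; \sum_{i \in \Gamma(t)} |v_i(t)| \;\geq\; \eta \cdot |\Gamma(t)|,
\]
which is exactly the claim.

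There is no real obstacle here beyond making the spiking convention precise: once the ``fire with matching sign when $|v_i| > \eta$'' rule is stated, the bound is a one-line pointwise estimate and no summation trick, telescoping, or potential argument is needed. This lemma is best viewed as a structural warm-up that will then be fed into the energy-conservation/$\|\bv(t)\|_{(FF^\top)^\dagger}$ analysis used to prove \autoref{lemma:DSNN-potentialbounded} and ultimately \autoref{thm:linearsystem}.
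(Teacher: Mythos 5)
Your proof is correct and is essentially identical to the paper's: the paper likewise drops the sub-threshold coordinates, uses $s_i(t)=\operatorname{sign}(v_i(t))$ for $i\in\Gamma(t)$ to write $\bv(t)^\top\bs(t)=\sum_{i\in\Gamma(t)}|v_i(t)|$, and lower-bounds each term by $\eta$. Your explicit remark about the bi-directional (signed) spiking convention is exactly the implicit convention the paper's proof relies on.
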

\begin{proof}
	Observe that
	\begin{align}
	\bv(t)^\top \bs(t) &= \sum_{i\in[n]}\bv_i(t)\cdot\bs_i(t)\\
	&= \sum_{i\in\Gamma(t)}\bv_i(t)\cdot\mbox{sign}(\bv_i(t))= \sum_{i\in\Gamma(t)}|\bv_i(t)|\\
	&\geq\sum_{i\in\Gamma(t)}\eta=\eta\cdot|\Gamma(t)|.
	\end{align}
\end{proof}

\begin{lemma}\label{lemma:potentialliesinrangespaceofA}
	For every $t\in\N$, $\bv(t)^\top(FF^\top)^{\dagger}FF^\top\bs(t) = \bv(t)^\top \bs(t)$.
\end{lemma}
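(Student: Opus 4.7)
The plan is to reduce the claim to the single observation that $\bv(t)$ always lies in the range of $FF^\top$ (equivalently, in the column space of $F$), and then invoke the standard pseudoinverse identity $(FF^\top)(FF^\top)^{\dagger}\bw = \bw$ for any $\bw$ in that range.

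First I would establish the range-space invariance by induction on $t$. The base case $\bv(0)=\mathbf{0}$ is trivial. For the inductive step, rewrite the update rule~\autoref{eq:app discrete SNN} as
\[
\bv(t+1) = \bv(t) - FF^\top \bs(t) + F\bx\cdot\Dt,
\]
and observe that both $FF^\top \bs(t)$ and $F\bx\cdot\Dt = (FF^\top)(FF^\top)^{\dagger}F\bx\cdot\Dt$ lie in the column space of $FF^\top$; combined with the inductive hypothesis on $\bv(t)$, this gives $\bv(t+1)\in \mathrm{Range}(FF^\top)$.

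Next I would apply the pseudoinverse identity. Since $\bv(t)\in \mathrm{Range}(FF^\top)$, we have $(FF^\top)(FF^\top)^{\dagger}\bv(t) = \bv(t)$. Transposing and using the fact that both $FF^\top$ and $(FF^\top)^{\dagger}$ are symmetric, this yields $\bv(t)^\top(FF^\top)^{\dagger}(FF^\top) = \bv(t)^\top$. Multiplying both sides on the right by $\bs(t)$ gives the claimed identity $\bv(t)^\top(FF^\top)^{\dagger}FF^\top\bs(t) = \bv(t)^\top\bs(t)$.

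I do not anticipate a real obstacle here: the entire content of the lemma is that the SNN dynamics preserve the subspace $\mathrm{Range}(FF^\top)$, which is manifest from the update rule and the initialization. The only small care needed is to make explicit the reformulation $F\bx = (FF^\top)(FF^\top)^{\dagger}F\bx$ used in the inductive step, which holds because $F\bx = F\bx_F$ and $\bx_F$ lies in the row space of $F$, so $F\bx$ lies in the column space of $FF^\top$.
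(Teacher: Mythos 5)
Your proposal is correct and follows essentially the same route as the paper: both arguments rest on the observation that the dynamics and zero initialization keep $\bv(t)$ in the range of $FF^\top$, and then apply the pseudoinverse identity $(FF^\top)(FF^\top)^{\dagger}\bw=\bw$ for $\bw$ in that range (the paper phrases this by writing $\bv(t)=FF^\top\bz$ and using $FF^\top(FF^\top)^{\dagger}FF^\top=FF^\top$, while you apply the projection directly to $\bv(t)$ and transpose, which is an equivalent and if anything slightly cleaner bookkeeping). Your explicit induction simply fills in the step the paper asserts in one line.
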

\begin{proof}
	First, note that $(FF^\top)^{\dagger}FF^\top$ is the orthogonal projection matrix for the range space of $FF^\top$. Next, by the dynamics of SNN and $\bv(1)=\mathbf{0}$, we know that $\bv(t)$ is in the range space of $FF^\top$. Let $\bz\in\Real^n$ be the unique vector in the range space of $FF^\top$ such that $\bv(t)=FF^\top\bz(t)$. Let $\bs_F(t)$ be the orthogonal projection of $\bs(t)$ onto the range space of $FF^\top$. Note that $FF^\top\bs(t)=FF^\top\bs_F(t)$. By~\autoref{lemma:PSDmatrix}, we have
    \begin{align*}
    \bv(t)^\top (FF^\top)^{\dagger}FF^\top\bs(t) &= \bz^\top FF^\top(FF^\top)^{\dagger}FF^\top\bs_F(t)\\
    &= \bz^\top FF^\top\bs_F(t)\\
    &= \bz^\top FF^\top\bs(t) = \bv(t)^\top\bs(t).
    \end{align*}
\end{proof}

Finally, we are ready to prove~\autoref{lemma:DSNN-potentialbounded}.
\begin{proof}[Proof of~\autoref{lemma:DSNN-potentialbounded}]
The proof is based on an induction over the number of iterations $t\in\N$. Let $B=2\sqrt{\kappa\eta n}$, our goal is to show that $\|\bv(t)\|_{(FF^\top)^{\dagger}}\leq B$ for every $t$ when $\Dt\leq\frac{\sqrt{\lambda_{\min}}}{24\cdot\sqrt{n}\cdot\|\bx_F\|_2}$. Clearly that $\|\bv(1)\|_{(FF^\top)^{\dagger}}=0\leq B$. Now, assume that $\|\bv(t)\|_{(FF^\top)^{\dagger}}\leq B$, let us consider
\begin{align}
\|\bv(t+1)\|_{(FF^\top)^{\dagger}}^2 &= \|\bv(t) - FF^\top\bs(t) + F\bx\cdot\Dt\|_{(FF^\top)^{\dagger}}^2 \nonumber\\
&= \|\bv(t)\|_{(FF^\top)^{\dagger}}^2 + \|FF^\top\bs(t) - F\bx\cdot\Dt\|_{(FF^\top)^{\dagger}}^2\nonumber\\
&- 2\bv(t)^\top (FF^\top)^{\dagger}\Big(FF^\top\bs(t)-F\bx\cdot\Dt\Big).\label{eq:proofof-lemma-DSNN-potentialbounded-1}
\end{align}
Expand the $\|FF^\top\bs(t) - F\bx\cdot\Dt\|_{(FF^\top)^{\dagger}}^2$ term, by ~\autoref{lemma:PSDmatrix}, we have
\begin{align}
\|FF^\top\bs(t) - F\bx\cdot\Dt\|_{(FF^\top)^{\dagger}}^2 &= \|F^\top\bs(t)-\bx_F\cdot\Dt\|_2^2 \nonumber\\
&=\|F^\top\bs(t)\|_2^2+\|\bx_F\cdot\Dt\|_2^2-2\bs(t)^\top F\bx_F\cdot\Dt \nonumber\\
&\leq\lambda_{\max}\cdot|\Gamma(t)| + \|\bx_F\|_2^2\cdot\Dt^2+2\sqrt{\lambda_{\max}\cdot |\Gamma(t)|}\|\bx_F\|_2\cdot\Dt.\label{eq:proofof-lemma-DSNN-potentialbounded-2}
\end{align}
The last inequality is from Cauchy-Schwartz inequality and the fact that $\|\bs(t)\|^2_2=|\Gamma(t)|$. Next, by~\autoref{lemma:PSDmatrix},~\autoref{lemma:DSNN-keylemma},~\autoref{lemma:potentialliesinrangespaceofA}, Cauchy-Schwartz inequality, and the induction hypothesis, we have
\begin{align}
- 2\bv(t)^\top (FF^\top)^{\dagger}\Big(FF^\top\bs(t)-F\bx\cdot\Dt\Big)&=- 2\bv(t)^\top \bs(t)+ 2\bv(t)^\top (FF^\top)^{\dagger}\Big(F\bx\cdot\Dt\Big) \\
&\leq- 2\bv(t)^\top \bs(t)+ 2\|\bv(t)\|_{(FF^\top)^{\dagger}}\cdot\|\bx_F\|_2\cdot\Dt\\
&\leq -2\eta|\Gamma(t)| + 2B\|\bx_F\|_2\cdot\Dt.\label{eq:proofof-lemma-DSNN-potentialbounded-3}
\end{align}
Combine~\autoref{eq:proofof-lemma-DSNN-potentialbounded-1},~\autoref{eq:proofof-lemma-DSNN-potentialbounded-2},~\autoref{eq:proofof-lemma-DSNN-potentialbounded-3}, and induction hypothesis, we have
\begin{align*}
\|\bv(t+1)\|_{(FF^\top)^{\dagger}}^2 &\leq \|\bv(t)\|_{(FF^\top)^{\dagger}}^2 + (\lambda_{\max}-2\eta)\cdot|\Gamma(t)|\nonumber\\
&+ (2\sqrt{\lambda_{\max}|\Gamma(t)|}\cdot\|\bx_F\|_2+2B\cdot\|\bx_F\|_2)\cdot\Dt+\|\bx_F\|_2^2\cdot\Dt^2.
\end{align*}
By the choice of $\eta, \Dt$, and $B$, we have
\begin{align}
\|\bv(t+1)\|_{(FF^\top)^{\dagger}}^2&\leq \|\bv(t)\|_{(FF^\top)^{\dagger}}^2 - \lambda_{\max}\cdot|\Gamma(t)|+ \frac{\sqrt{\lambda_{\max}\lambda_{\min}|\Gamma(t)|}}{6\sqrt{n}} + \frac{\lambda_{\max}}{6} + \frac{\lambda_{\min}}{144n}\\
&\leq \|\bv(t)\|_{(FF^\top)^{\dagger}}^2 - \lambda_{\max}|\Gamma(t)|+ \frac{\lambda_{\max}|\Gamma(t)|}{2}+\frac{\lambda_{\max}}{2}.\label{eq:deterministicSNN-keyinequality}
\end{align}
Now, there are two cases to consider: $\Gamma(t)$ is empty or not. From (\autoref{eq:deterministicSNN-keyinequality}) we can see that if $\Gamma(t)$ is nonempty, then by the induction hypothesis, $\|\bv(t+1)\|_{(FF^\top)^{\dagger}}\leq \|\bv(t)\|_{(FF^\top)^{\dagger}}\leq B$. If $\Gamma(t)$ is empty, then for any $i\in[n]$, $|\bu_i(t)|\leq\eta$. Thus, we have
\begin{align}\label{eq:mainlemma-1}
\|\bv(t+1)\|_{(FF^\top)^{\dagger}}&\leq\|\bv(t)\|_{(FF^\top)^{\dagger}}+\frac{\lambda_{\max}}{2}.
\end{align}
Since $|\Gamma(t)|=0$, we know that $\|\bv(t)\|_{\infty}\leq\eta$ and thus	
\begin{equation*}
\|\bv(t)\|_{(FF^\top)^{\dagger}}^2\leq\frac{\|\bv(t)\|_2^2\leq n\cdot\max_{i\in[n]}|\bv_i(t)|^2}{\lambda_{\min}}\leq\frac{\eta^2n}{\lambda_{\min}}.
\end{equation*}
As a result, for the $|\Gamma(t)|=0$ case, we have
\begin{align*}
\|\bv(t+1)\|_{(FF^\top)^{\dagger}}&\leq\|\bv(t)\|_{(FF^\top)^{\dagger}}+\frac{\lambda_{\max}}{2}\\
&\leq\frac{\eta\sqrt{n}}{\sqrt{\lambda_{\min}}}+\frac{B}{2}\\
&\leq\sqrt{\kappa\eta n}+\frac{B}{2}\leq B.
\end{align*}
We conclude that the induction holds and hence finish the proof of~\autoref{lemma:DSNN-potentialbounded}.
\end{proof}

\section{Derivations of the dual programs}

\subsection{$\ell_1$ minimization}\label{app:l1 min derivation}
Recall that the (non-negative) $\ell_1$ minimization problem is defined as follows.

\begin{equation}
	\begin{aligned}
	& \underset{\br\in\Real^n}{\text{minimize}}
	& & \|\br\|_1 \\
	& \text{subject to}
	& & F^\top\br=\bx,\ \br\geq0.
	\end{aligned}
    \tag{$\ell_1$ minimization}
\end{equation}

To derive the dual problem, we first write down the Lagrangian:
\begin{equation}
L(\br,\blambda,\nu) = \mathbf{1}^\top\br - \blambda^\top\br + \bnu^\top(F^\top\br-\bx)
\end{equation}
where we implicitly replace $\|\br\|_1$ with $\mathbf{1}^\top\br$ as they are the same when $\br\geq0$. Next, we derive the Lagrangian dual function:
\begin{align}
g(\blambda,\bnu) &= \inf_{\br} L(\br,\blambda,\bnu)\\
&= \inf_{\br} \left\{[(\mathbf{1}-\blambda)^\top+\bnu^\top F^\top]\br - \bnu^\top\bx\right\}\\
&= \left\{\begin{array}{ll}
-\infty     &  \text{, if }(\mathbf{1}-\blambda)^\top+\bnu^\top F^\top\neq0 \\
-\bnu^\top\bx     & \text{, if }(\mathbf{1}-\blambda)^\top+\bnu^\top F^\top=0.
\end{array}
\right. \label{eq: l1 min dual 1}
\end{align}

Finally, the dual problem is defined as the maximum of $g(\blambda,\bnu)$ subject to $\blambda\geq0$. As the first condition in~\autoref{eq: l1 min dual 1} won't lead to the minimizer, the dual problem is the following.

\begin{equation*}
	\begin{aligned}
	& \underset{\blambda\in\Real^n,\bnu\in\Real^m}{\text{maximize}}
	& & -\bnu^\top\bx \\
	& \text{subject to}
	& & (\mathbf{1}-\blambda)^\top+\bnu^\top F^\top=0,\ \blambda\geq0.
	\end{aligned}
\end{equation*}

Note that the condition in the above program is equivalent to $-\bnu^\top F^\top\leq1$. By setting $\bu=-\bnu$ and slightly rearranging the equations, we can rewrite the dual program in the form as we saw in~\autoref{op:l1 min dual}.

\begin{equation*}
	\begin{aligned}
	& \underset{\bu\in\Real^m}{\text{maximize}}
	& & \bx^\top\bu \\
	& \text{subject to}
	& & F\bu\leq1.
	\end{aligned}
\end{equation*}

\subsection{Lasso}\label{app:lasso derivation}
Recall that the (non-negative) Lasso problem is defined as follows.

\begin{equation}
	\begin{aligned}
	& \underset{\br\in\Real^n}{\text{minimize}}
	& & \frac{1}{2}\|F^\top\br-\bx\|_2^2 + \beta\|\br\|_1 \\
	& \text{subject to}
	& & \br\geq\mathbf{0}.
	\end{aligned}
    \tag{Lasso}
\end{equation}

To derive the dual problem, we first write down the Lagrangian:
\begin{equation}
L(\br,\blambda) = \frac{1}{2}\|F^\top\br-\bx\|_2^2 + \beta\mathbf{1}^\top\br - \blambda^\top\br
\end{equation}
where we implicitly replace $\|\br\|_1$ with $\mathbf{1}^\top\br$ as they are the same when $\br\geq0$. Next, we derive the Lagrangian dual function:
\begin{align}
g(\blambda) &= \inf_{\br} L(\br,\blambda)\\
&= \inf_{\br} \left\{(\beta\mathbf{1}-\blambda)^\top\br + \frac{1}{2}\|F^\top\br-\bx\|_2^2 \right\} \, .
\end{align}
Note that the minimizer should be of the form $(F^\dagger)^\top\by$ where $F^\dagger$ is the (right) pseudo-inverse of $F$. Thus we have
\begin{align}
g(\blambda) =&\ \inf_{\by} \left\{ (\beta\mathbf{1}-\blambda)^\top (F^\dagger)^\top\by + \frac{1}{2}\|\by-\bx\|_2^2\right\} \\
=&\ \frac{1}{2}\|\by-(\bx-F^\dagger(\beta\mathbf{1}-\blambda))\|_2^2 \\
+&\ \frac{1}{2}\|\bx\|_2^2 - \frac{1}{2}\|\bx-F^\dagger(\beta\mathbf{1}-\blambda)\|_2^2 \\
=&\ \frac{1}{2}\|\bx\|_2^2 - \frac{1}{2}\|\bx-F^\dagger(\beta\mathbf{1}-\blambda)\|_2^2 \, .
\end{align}

Finally, the dual problem is defined as the maximum of $g(\blambda)$ subject to $\blambda\geq0$:

\begin{equation*}
	\begin{aligned}
	& \underset{\blambda\in\Real^n}{\text{maximize}}
	& & \frac{1}{2}\|\bx\|_2^2 - \frac{1}{2}\|\bx-F^\dagger(\beta\mathbf{1}-\blambda)\|_2^2 \\
	& \text{subject to}
	& & \blambda\geq0 \, .
	\end{aligned}
\end{equation*}

Note that by setting $\bu=F^\dagger(\beta\mathbf{1}-\blambda)/\beta$, the constraints become $F\bu\leq\mathbf{1}$. Hence, we can rewrite the dual program in the form as we saw in~\autoref{op:lasso dual}.

\begin{equation*}
\begin{aligned}
& \underset{\bu\in\Real^m}{\text{maximize}}
& & \frac{1}{2}\|\bx\|_2^2 - \frac{1}{2}\|\bx-\beta\bu\|_2^2 \\
& \text{subject to}
& & F\bu\leq1 \, .
\end{aligned}
\end{equation*}

\section{Optimal Balanced SNNs Solves the $\ell_1$ Minimization Problem}\label{app:SNNs l1 min details}

\snnlonemin*

Two remarks on the statement of~\autoref{thm:l1}. First, we consider the \textit{continuous SNN} instead of the discrete SNN, which is of interest for simulation on classical computer. In discrete SNN, the \textit{step size} is some non-negligible $\Dt>0$ instead of $dt$. The main reason for considering continuous SNN is that this significantly simplify the proof by avoiding a huge amount of nasty calculations. We suspect that the proof idea would hold for discrete SNN with discretization parameter $\Dt\leq\Dt(\frac{\gamma(F)}{n\cdot\lambda_{\max}})$ for some polynomial $\Dt(\cdot)$.
Second, the parameters in~\autoref{thm:l1} have not been optimized and we believe all the dependencies can be improved. Since the parameters highly affect the efficiency of SNN as an algorithm for $\ell_1$ minimization problem, we pose it as an interesting open problem to study what are the best dependencies one can get.

\subsection{Overview of the proof for~\autoref{thm:l1}}

The proof of~\autoref{thm:l1} consists of two main steps as mentioned in the previous subsection. The first step argues that the dual SNN $\bu(t)$ would converge to the neighborhood of the optimal dual solution $\bu^\OPT$. The second step is connecting the dual solution (\textit{i.e.,} the dual SNN) to the primal solution (\textit{i.e.,} the firing rate).

In the first step, we try to identify a \textit{potential function}\footnote{Potential function is widely used in the analysis of many gradient-descent based algorithm. In the theory of dynamical systems, sometimes people use the term ``Lyapunov function''. The difficulty lies in the search of a good potential function for the algorithm.} that captures how close is $\bu(t)$ to the optimal dual solution $\bu^\OPT$.
It turns out that this is not an easy task since the effect of spikes makes the behavior of dual SNN very non-monotone. We conquer the difficulty via a technique that we call \textit{ideal coupling} (see~\autoref{def:ideal coupling} and Figure~\ref{fig:ideal coupling}). The idea is to associate the dual SNN $\bu(t)$ with an \textit{ideal SNN} $\bu^\text{ideal}(t)$ for every $t\geq0$ such that the ideal SNN would have \textit{smoother} behavior comparing to the spiking phenomenon in the dual SNN. We will formally define the ideal SNN in Section~\ref{sec:ideal coupling}. There are two advantages of using ideal SNN instead of handling dual SNN directly: 
(i) Ideal SNN is smoother than dual SNN in the sense that it would not change after spikes (see~\autoref{lem:ideal SNN unchaged}). Further, by introducing some auxiliary processes (\textit{i.e.,} the auxiliary SNNs defined in~\autoref{def:auxiliary}), we are able to identify a potential function that is strictly improving at any moment and measures how well the dual SNN has been solving the $\ell_1$ minimization problem (see~\autoref{lem:strict improvement}). 
(ii) ideal SNN is naturally associated with an \textit{ideal solution} (defined in~\autoref{def:ideal solution}) which is easier to analyze than the firing rate. Using these good properties of ideal SNN, we can prove in~\autoref{lemma:idealalgorithm-l2bound} that the $\ell_2$ residual error of the ideal solution will converge to $0$.

After we are able to show the convergence of the $\ell_2$ residual error in~\autoref{lemma:idealalgorithm-l2bound}, we move to the second step where the goal is showing that the $\ell_1$ norm of the solution is also small. We look at the KKT conditions of the $\ell_1$ minimization problem and observe that the primal and dual solutions of SNN satisfy the KKT conditions of a \textit{perturbed} program of the $\ell_1$ minimization problem. Finally, combine tools from perturbation theory, we can upper bound the $\ell_1$ error of the ideal solution by its $\ell_2$ residual error in~\autoref{lemma:idealalgorithm-OPTbounds}.

Theorem~\ref{thm:l1} then follows from~\autoref{lemma:idealalgorithm-l2bound} and~\autoref{lemma:idealalgorithm-OPTbounds} with some special cares on how to transform everything for ideal solution to the firing rate. See Figure~\ref{fig:overview proof for thm l1} for an overall structure of the proof for~\autoref{thm:l1}.

In the rest of this section, we are going to start from some definitions on the \textit{nice conditions} we need for the input matrix in Section~\ref{sec:nice}. Next, we define the ideal coupling in Section~\ref{sec:ideal coupling} and prove~\autoref{lem:ideal SNN unchaged} and~\autoref{lem:strict improvement} in Section~\ref{sec:unchange} and Section~\ref{sec:strict} respectively. Finally, we wrap up the proof for~\autoref{thm:l1} in Section~\ref{sec:l1 convergence}.

\subsection{Some nice conditions on the input matrix}\label{sec:nice}
We need some \textit{nice conditions} for the input matrix as follows.
\begin{definition}[non-degeneracy]
	Let $F\in\Real^{n\times m}$ where $m\leq n$. We say $F$ is non-degenerate if for any size $m\times m$ submatrix of $F$ has full rank. For any $\gamma>0$, we say $F$ is $\gamma$-non-degenerate if for any $\Gamma\subseteq[n]$, $|\Gamma|=m$, and $i\in\Gamma$, $\|F_i-\Pi_{F_{\Gamma\backslash\{i\}}}F_i\|_2\geq\gamma$ where $\Pi_{F_{\Gamma\backslash\{i\}}}\bv$ is the projection of $\bv$ onto subspace spanned by $\{F_j:\ j\in\Gamma\backslash\{i\}\|\}$ for any $\bv\in\Real^m$.
\end{definition}

Note that if $F$ is non-degenerate, then for any $S\subseteq[n]$ and $|S|=m$ and $\bx\in\{-1,1\}^m$, there exists an unique solution $\bv\in\Real^m$ to $F_S^\top\bv=\bx$ where $F_S$ is the submatrix of $F$ restricted to columns in $S$. We call such $\bv$ a \textit{vertex} of the polytope $\mathcal{P}_{F,1}$. Note that in this definition, a vertex might not lie in $\mathcal{P}_{F,1}$. An important parameter for future analysis is the minimum distance between two distinct vertices of $\mathcal{P}_{F,1}$.

\begin{definition}[nice input matrix]\label{def:nice}
	Let $F\in\Real^{n\times m}$ and $\gamma\geq0$. We say $F$ is $\gamma$-nice if all of the following conditions hold.
	\begin{enumerate}[label=(\arabic*)]
		\item $F$ is $\gamma$-non-degenerate.
		\item The distance between any two distinct vertices of $\mathcal{P}_{F,1}$ is at least $\gamma$.
		\item For any $\bx\in\{-1,1\}^m$, $\Gamma\subseteq[n]$, and $|\Gamma|=m$, let $\bx=(F_\Gamma^\top)^{-1}\bx$. For any $i\in[m]$, $|\bx_i|\geq\gamma$.
	\end{enumerate}
	Define $\gamma(F)$ to be the largest $\gamma$ such that $F$ is $\gamma$-nice. We say $F$ is \textit{nice} if $\gamma(F)>0$.
\end{definition}

To motivate the definition of niceness, the following lemma shows that the $\ell_1$ minimization problem defined by matrix $F$ has unique solution if $\gamma(F)>0$.
\begin{lemma}
	Let $F\in\Real^{n\times m}$. If $\gamma(F)>0$, then for any $\bx\in\Real^m$, the $\ell_1$ minimization problem for $(F,\bx)$ has an unique solution.
\end{lemma}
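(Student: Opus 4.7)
The plan is to view the $\ell_1$ minimization problem as a linear program and exploit the non-degeneracy encoded in $\gamma(F)>0$. Since $\br\geq 0$ lets us replace $\|\br\|_1$ by $\mathbf{1}^\top\br$, the problem becomes $\min\mathbf{1}^\top\br$ subject to $F^\top\br=\bx$ and $\br\geq 0$. Assuming feasibility (implicit in the statement, and guaranteed under the hypotheses of~\autoref{thm:l1}), the LP is bounded below by $0$ and therefore admits an optimal basic feasible solution $\br^\star$ whose support $S^\star$ has size at most $m$. By the $\gamma$-non-degeneracy condition~(1), the columns $\{F_i\}_{i\in S^\star}$ are linearly independent, so $\br^\star_{S^\star}=(F^\top_{S^\star})^{-1}\bx$ is uniquely determined by $S^\star$.

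To upgrade existence of an optimal vertex into uniqueness, I would pass to the LP dual $\max\bx^\top\bu$ subject to $F\bu\leq\mathbf{1}$ (derived in~\autoref{app:l1 min derivation}). By strong duality an optimal $\bu^\star$ exists, and complementary slackness forces every primal optimum $\br^\star$ to satisfy $\mathrm{supp}(\br^\star)\subseteq T(\bu^\star):=\{i:F_i^\top\bu^\star=1\}$. As soon as $|T(\bu^\star)|\leq m$, condition~(1) makes the corresponding submatrix of $F$ invertible and the restricted system $F^\top_{T(\bu^\star)}\br_{T(\bu^\star)}=\bx$ has a single solution, yielding uniqueness of $\br^\star$. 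The crux is therefore to rule out the dual-degenerate situation $|T(\bu^\star)|>m$ at some optimum.

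To accomplish this, I would select $\bu^\star$ as a vertex of the dual polytope $\mathcal P_{F,1}=\{\bu:F\bu\leq\mathbf{1}\}$ lying inside the dual optimal face. Conditions~(2) and~(3) of niceness are then used to prevent several distinct $m$-subsets $\Gamma\subseteq[n]$ from all producing the same geometric vertex $\bu_\Gamma=F_\Gamma^{-1}\mathbf{1}$, which is exactly what dual degeneracy would require. Condition~(2) ensures distinct vertices of $\mathcal P_{F,1}$ are separated by at least $\gamma$, while condition~(3) forces the coordinates of the basis-representation $(F_\Gamma^\top)^{-1}\by$ of any sign vector $\by\in\{-1,1\}^m$ to exceed $\gamma$ in magnitude, preventing the coincidental alignment of more than $m$ tight constraints at a single point.

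The step I expect to demand the most care is this last one: translating conditions~(2) and~(3), phrased quantitatively via vertex separation and sign-vector expansions, into a clean exclusion of dual degeneracy at the specific LP optimum of interest. Once $|T(\bu^\star)|\leq m$ has been secured, condition~(1) closes the argument by forcing $\br^\star=(F^\top_{T(\bu^\star)})^{-1}\bx$ to be the unique minimizer, completing the uniqueness proof.
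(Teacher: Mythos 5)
Your proposal follows essentially the same route as the paper's proof: pass to the LP dual, use complementary slackness to confine the support of every primal optimum to the tight set $S=\{i: F_i^\top\bu^\star=1\}$, argue that niceness forces $|S|\leq m$, and then invoke non-degeneracy to make the restricted system $F_S^\top\br=\bx$ uniquely solvable. The step you flag as delicate — ruling out $|S|>m$ via conditions (2) and (3) — is precisely the step the paper itself dispatches with the one-line assertion "As $\gamma(F)>0$, we have $|S|\leq m$," so your sketch is, if anything, slightly more explicit about where the niceness conditions enter.
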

\begin{proof}
	We prove the lemma by contradiction. Suppose there exists $\bx\in\Real^m$ such that there are two distinct solutions $\bx_1\neq\bx_2$ to the $\ell_1$ minimization problem for $(F,\bx)$. Let $\bv^*$ be the optimal solution of the dual program as in equation~\autoref{op:basispursuit-dual}. By the complementary slackness in the KKT condition, for any optimal solution $\bx$ to the primal program, $\text{supp}(\bx)\subseteq\{i\in[n]:\ |F_i^\top\bv^*|=1 \}$. Let $S=\{i\in[n]:\ |F_i^\top\bv^*|=1 \}$, then both $\bx_1$ and $\bx_2$ are solution to $F_S\bx=\bx_S$ where $F_S$ and $\bx_S$ are restrictions to index set $S$.  As $\gamma(F)>0$, we have $|S|\leq m$. By the non-degeneracy of $F$, $F_S$ has full rank and thus $F_S\bx=\bx_S$ has unique solution. That is, $\bx_1=\bx_2$, which is a contradiction.
	
	We conclude that if $F$ is non-degenerate and $\gamma(F)>0$, then for any $\bx\in\Real^m$, the $\ell_1$ minimization problem for $(A,\bx)$ has unique solution.
\end{proof}

In general, it is easy to find a matrix $F$ such that $\gamma(F)=0$. However, we would like to argue that most of the matrices are actually nice.
The following lemma shows that random matrix $F$ sampled from the \textit{rotational symmetry model (RSM)} is nice. In RSM, each column of $F$ is an uniform vector on the unit sphere of $\Real^m$. Note that such matrix for $\ell_1$ minimization problem is commonly used in practice such as compressed sensing.
\begin{lemma}\label{lem:gamma lb of RSM}
Let $F\in\Real^{n\times m}$ be a random matrix samples from RSM, then $\gamma(F)>0$ with high probability.
\end{lemma}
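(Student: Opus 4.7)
The plan is a genericity argument showing that $\gamma(F) > 0$ almost surely under RSM, which is enough for ``with high probability." First I would unpack Definition~\ref{def:nice}: $\gamma(F) > 0$ is equivalent to requiring, for each of the finitely many combinatorial choices of $m$-subsets $\Gamma, S_1, S_2 \subseteq [n]$, indices $i$, and sign patterns $\bx \in \{-1,1\}^m$, that a specific continuous (in fact rational) function of $F$ be strictly positive. Since a minimum of finitely many almost-surely-positive continuous quantities is almost surely positive, by a union bound it suffices to show that for each fixed combinatorial object, the corresponding defining quantity is nonzero with probability one.

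Next I would reduce each of the three conditions to a polynomial non-vanishing event in the entries of $F$. Condition (1) says $\det(F_\Gamma) \neq 0$ for every $m$-subset $\Gamma$, since $\|F_i - \Pi_{F_{\Gamma \setminus \{i\}}} F_i\|_2 = 0$ iff $F_i$ lies in the span of the remaining $m-1$ rows, iff $F_\Gamma$ is singular. Condition (3), by Cramer's rule, says $\det(M_i) \neq 0$ where $M_i$ is obtained from $F_\Gamma^\top$ by replacing its $i$-th column with $\bx$. Condition (2) requires that no two distinct pairs $(S_1, \bx_1) \neq (S_2, \bx_2)$ produce the same vertex: if they did, the vector $\bv$ would satisfy the overdetermined system $F_{S_1 \cup S_2}^\top \bv = \by$ (with $\by$ built from $\bx_1, \bx_2$), whose consistency is captured by the vanishing of certain maximal minors of the augmented matrix $[F_{S_1 \cup S_2} \mid \by]$. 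All three reductions therefore produce polynomial equations in the entries of $F$.

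The final step is a measure-theoretic argument. Each row of $F$ is i.i.d. uniform on $S^{m-1}$, and any proper algebraic subvariety of $(S^{m-1})^n$ has measure zero under the uniform product measure, because locally the uniform measure on the sphere is equivalent to Lebesgue measure on an $(m-1)$-manifold, and a real-analytic function that is not identically zero on such a manifold vanishes only on a lower-dimensional (hence measure-zero) subset. Therefore each of the finitely many bad events has probability zero, and the union bound yields $\Pr[\gamma(F) > 0] = 1$.

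The main technical obstacle is verifying, for each of the three families of polynomial equations above, that the polynomial is not identically zero when restricted to $(S^{m-1})^n$. I would handle this by exhibiting explicit witnesses: for condition (1) take $F_\Gamma$ with orthonormal rows, giving $\det(F_\Gamma) = \pm 1$; for condition (3) take the same $F_\Gamma$ so that $(F_\Gamma^\top)^{-1}\bx = \bx$, whose coordinates are $\pm 1$; for condition (2) arrange $F_{S_1 \cup S_2}$ to have rows in general position, so that the overdetermined system is inconsistent and the corresponding minor is nonzero. Once these witnesses are in place, the rest of the argument is routine, and one concludes $\gamma(F) > 0$ with probability one, which is the stated high-probability bound.
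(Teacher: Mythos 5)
Your proposal is correct and takes essentially the same route as the paper: the paper's proof simply notes that each bad event (e.g., $F_i = \Pi_{F_{\Gamma\setminus\{i\}}}F_i$) has measure zero under the continuous RSM distribution and applies a union bound over the finitely many combinatorial choices, remarking that the other two niceness conditions follow by similar arguments. Your write-up is a more detailed and careful version of the same genericity argument, making explicit the reduction to polynomial non-vanishing and the witnesses showing the polynomials are not identically zero.
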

\begin{proof}
First, we show that $F$ is non-degenerate with high probability. For any $\Gamma\subseteq[n]$ and $i\in\Gamma$, denote the event where $F_i = \Pi_{F_{\Gamma\backslash\{i\}}}F_i$ as $E_{\Gamma,i}$. Note that this event is measured zero for all choice of $\Gamma$ and $i$ and thus by union bound, we have $F$ being non-degenerate with high probability. For the other two properties, similar arguments hold.
\end{proof}

We remark that giving a lower bound in terms of $m$ and $n$ for $\gamma(F)$ would result in a better asymptotic bound for our main theorem and could have applications in other problems too. Since the goal of this paper is giving a provable analysis, we do not intend to optimize the parameter. Note that for $F$ sampled from RSM, $\gamma(F)$ has an inverse exponential lower bound directly from union bound when $n$ and $m$ are polynomially related. As for upper bound, there are inverse quasi-polynomial upper bound if $n\geq\textsf{polylog}(m)\cdot m$ and inverse exponential upper bound if $n\geq m^{1+\Omega(1)}$ as pointed out by the anonymous reviewer from ITCS 2019. We leave it as an open question to understand the correct asymptotic behavior of $\gamma(F)$ when $F$ is sampled from RSM.

\subsection{Ideal coupling}\label{sec:ideal coupling}
Ideal coupling is a technique to keeping track of the dual SNN $\bu(t)$ by associating any point in the dual polytope to a point in a smaller polytope. Concretely, let $\mathcal{P}_{F,1}=\{\bu\in\Real^m:\ \|F\bu\|_\infty\leq1\}$ be the dual polytope and $\mathcal{P}_{F,1-\tau}$ be the \textit{ideal polytope} where $\tau\in(0,1)$ is  an important parameter that will be properly chosen\footnote{The choice of $\tau$ depends on $F$ and $1$ and will be discussed later.} in the end of the proof. Observe that $\mathcal{P}_{F,1-\tau}\subsetneq\mathcal{P}_{F,1}$. The idea of ideal coupling is associating each $\bu\in\mathcal{P}_{F,1}$ with a point $\bu^{\text{ideal}}$ in $\mathcal{P}_{F,1-\tau}$. In the analysis, we will then focus on the dynamics of $\bu^\text{idael}$ instead of that of $\bu$.

Before we formally define the coupling, we have to define a \textit{partition} of $\mathcal{P}_{F,1}$ with respect to $\mathcal{P}_{F,1-\tau}$ as follows.

\begin{definition}[partition of $\mathcal{P}_{F,1}$]\label{def:ideal partition}
Let $\mathcal{P}_{F,1}$ and $\mathcal{P}_{F,1-\tau}$ be defined as above. For each $\bu^\text{ideal}\in\mathcal{P}_{F,1-\tau}$, define
\begin{equation*}
S_{\bu^\text{ideal}} = \{\bu^\text{ideal}+\mathcal{C}_{F,\Gamma(\bu^{ideal})}\}\cap\mathcal{P}_{F,1}.
\end{equation*}
where $\Gamma(\bu^\text{ideal})=\{i\in[\pm n]:\ \langle F_i,\bu^\text{ideal}\rangle=1-\tau\}$ is the active walls of $\bu^\text{ideal}$ and $\mathcal{C}_{F,\Gamma(\bu^{ideal})}=\{\sum_{i\in\Gamma(\bu^\text{ideal})}a_iF_i,\ \forall a_i\geq0\}$ is the cone spanned by the column of $F$ indexed by $\Gamma(\bu^\text{ideal})$.
\end{definition}

Consider the example where $A=\bigl( \begin{smallmatrix}1&0\\0&1\end{smallmatrix}\bigr)$ and $\tau\in(0,1)$. The dual polytope (resp. ideal polytope) is the square with vertices in the form $(\pm1,\pm1)$ (resp. $(\pm1-\tau,\pm1-\tau)$). For a arbitrary $\bu^\text{ideal}=(x,y)\in\mathcal{P}_{F,1-\tau}$, let us see what $S_{\bu^\text{ideal}}$ is:
\begin{itemize}
	\item When $|x|,|y|<1-\tau$, \textit{i.e.,} $\bu^\text{ideal}$ strictly lies inside $\mathcal{P}_{F,1-\tau}$, $\Gamma(\bu^\text{ideal})=\emptyset$ and thus $C_{F,\Gamma(\bu^\text{ideal})}=\emptyset$. Namely, $S_{\bu^\text{ideal}}=\bu^\text{ideal}$.
	\item When $|x|=1-\tau$ and $|y|<1-\tau$, \textit{i.e.,} $\bu^\text{ideal}$ lies on an edge of the ideal polytope, $\Gamma(\bu^\text{ideal})=\{\text{sgn}(x)\cdot1\}$ and thus $C_{F,\Gamma(\bu^\text{ideal})}=\{(a,0):\ a\geq0\}$. Namely, $S_{\bu^\text{ideal}}=\{(a,y):\ a\in[1-\tau,1] \}$.
	\item When $|x|<1-\tau$ and $|y|=1-\tau$, \textit{i.e.,} $\bu^\text{ideal}$ lies on an edge of the ideal polytope, $\Gamma(\bu^\text{ideal})=\{\text{sgn}(y)\cdot2\}$ and thus $C_{F,\Gamma(\bu^\text{ideal})}=\{(0,b):\ b\geq0\}$. Namely, $S_{\bu^\text{ideal}}=\{(x,b):\ b\in[1-\tau,1] \}$.
	\item When $|x|=|y|=1-\tau$, \textit{i.e.,} $\bu^\text{ideal}$ lies on a vertex of the ideal polytope, $\Gamma(\bu^\text{ideal})=\{\text{sgn}(x)\cdot1,\text{sgn}(y)\cdot2\}$ and thus $C_{F,\Gamma(\bu^\text{ideal})}=\{(a,b):\ a,b\geq0\}$. Namely, $S_{\bu^\text{ideal}}=\{(a,b):\ a,b\in[1-\tau,1] \}$.
\end{itemize}

The following lemma checks that~\autoref{def:ideal partition} does give a partition for $\mathcal{P}_{F,1}$.
\begin{lemma}\label{lem:ideal partition}
	$\{S_{\bu^\text{ideal}}\}_{\bu^\text{ideal}\in\mathcal{P}_{F,1-\tau}}$ is a partition for $\mathcal{P}_{F,1}$.
\end{lemma}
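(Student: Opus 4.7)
The plan is to prove the partition claim by identifying the family $\{S_{\bu^\text{ideal}}\}$ with the Euclidean normal-cone decomposition of the inner polytope $\mathcal{P}_{F,1-\tau}$. The key observation, which I would establish first, is that for the polyhedron $\mathcal{P}_{F,1-\tau} = \{\bw \in \Real^m : \langle F_i, \bw\rangle \leq 1-\tau\ \forall i \in [\pm n]\}$, the outer normal cone at any $\bu^\text{ideal}$ equals exactly $\mathcal{C}_{F,\Gamma(\bu^\text{ideal})}$, the conic hull of the outward normals $F_i$ associated with the constraints active at $\bu^\text{ideal}$. This is a standard consequence of Farkas' lemma (equivalently, the KKT conditions for the projection problem) and requires no assumptions on $F$.

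For coverage, I would take an arbitrary $\bu \in \mathcal{P}_{F,1}$ and set $\bu^\text{ideal} := \pi_{\mathcal{P}_{F,1-\tau}}(\bu)$, the Euclidean projection of $\bu$ onto the closed convex set $\mathcal{P}_{F,1-\tau}$. By the variational characterization of projection onto a closed convex set, $\bu - \bu^\text{ideal}$ lies in the normal cone to $\mathcal{P}_{F,1-\tau}$ at $\bu^\text{ideal}$, which by the identification above equals $\mathcal{C}_{F,\Gamma(\bu^\text{ideal})}$. Hence $\bu \in \bu^\text{ideal} + \mathcal{C}_{F,\Gamma(\bu^\text{ideal})}$, and intersecting with $\mathcal{P}_{F,1}$ (which contains $\bu$) gives $\bu \in S_{\bu^\text{ideal}}$. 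The subcase $\bu \in \mathcal{P}_{F,1-\tau}$ is covered automatically, since then $\bu^\text{ideal} = \bu$ and the empty-sum convention puts $\mathbf{0}$ in $\mathcal{C}_{F,\Gamma(\bu^\text{ideal})}$.

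For pairwise disjointness, I would avoid reusing projection uniqueness and instead give a direct inner-product argument. Suppose $\bu \in S_{\bu_1^\text{ideal}} \cap S_{\bu_2^\text{ideal}}$ and write $\bu = \bu_j^\text{ideal} + \sum_{i \in \Gamma_j} c_i^{(j)} F_i$ for $j=1,2$, with $c_i^{(j)} \geq 0$ and $\Gamma_j := \Gamma(\bu_j^\text{ideal})$. Subtracting the two expressions and pairing with $\bu_1^\text{ideal} - \bu_2^\text{ideal}$ yields
\begin{equation*}
\|\bu_1^\text{ideal} - \bu_2^\text{ideal}\|_2^2 = \Big\langle \sum_{i \in \Gamma_2} c_i^{(2)} F_i - \sum_{i \in \Gamma_1} c_i^{(1)} F_i,\ \bu_1^\text{ideal} - \bu_2^\text{ideal} \Big\rangle.
\end{equation*}
For each $i \in \Gamma_2$, the identity $\langle F_i, \bu_2^\text{ideal}\rangle = 1-\tau$ combined with $\bu_1^\text{ideal} \in \mathcal{P}_{F,1-\tau}$ gives $\langle F_i, \bu_1^\text{ideal} - \bu_2^\text{ideal}\rangle \leq 0$, so the first sum contributes a non-positive term. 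Symmetrically, for $i \in \Gamma_1$ we have $\langle F_i, \bu_1^\text{ideal} - \bu_2^\text{ideal}\rangle \geq 0$, so the subtracted second sum is also non-positive. Hence $\|\bu_1^\text{ideal} - \bu_2^\text{ideal}\|_2^2 \leq 0$, forcing $\bu_1^\text{ideal} = \bu_2^\text{ideal}$.

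The main conceptual step is the identification of $\mathcal{C}_{F,\Gamma(\bu^\text{ideal})}$ with the Euclidean normal cone at $\bu^\text{ideal}$, which requires some care when several constraints are simultaneously active but is purely a polyhedral fact. Once that is in hand, both coverage and disjointness reduce to one-line convex-analytic computations, and in particular the argument uses none of the niceness assumptions of~\autoref{def:nice}: the lemma is a general statement about the nested pair $(\mathcal{P}_{F,1}, \mathcal{P}_{F,1-\tau})$.
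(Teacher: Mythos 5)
Your proposal is correct, and it is a genuinely different (and in places tighter) argument than the one in the paper. For disjointness, the paper splits into two cases according to whether $\Gamma(\bu_1^\text{ideal})=\Gamma(\bu_2^\text{ideal})$: when the active sets coincide it invokes the $\gamma$-non-degeneracy of $F$ to get positive definiteness of $F_\Gamma F_\Gamma^\top$ and conclude $\bz_1=\bz_2$, and when they differ it runs an inner-product sign argument close in spirit to yours (but phrased with a ``WLOG $\bz_1\neq\mathbf{0}$'' that needs a little extra care). Your single computation $\|\bu_1^\text{ideal}-\bu_2^\text{ideal}\|_2^2\leq 0$ subsumes both cases, needs no strict inequalities, and — as you note — uses none of the niceness conditions of~\autoref{def:nice}; it is just the statement that a point cannot lie in two distinct translated normal cones of a convex set. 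For coverage, the paper only asserts that the sets cover $\mathcal{P}_{F,1}$ (``it is easy to see''), whereas you actually prove it by projecting onto $\mathcal{P}_{F,1-\tau}$ and identifying $\mathcal{C}_{F,\Gamma(\bu^\text{ideal})}$ with the Euclidean normal cone via the standard Farkas/KKT characterization of normal cones of polyhedra. The only point worth flagging is cosmetic: the paper's example treats $\mathcal{C}_{F,\emptyset}$ as $\emptyset$ while the partition property needs it to be $\{\mathbf{0}\}$, and your empty-sum convention is the right reading. Net effect: your proof is complete where the paper's is partly asserted, and it establishes the lemma for arbitrary $F$ rather than only for nice $F$.
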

\begin{proof}[Proof of~\autoref{lem:ideal partition}]
	The proof is basically doing case analysis and using some basic properties from linear algebra. See Section~\ref{sec:missing proofs ideal auxiliary SNN} for details.
\end{proof}

\begin{definition}[ideal coupling]\label{def:ideal coupling}
	Let $\mathcal{P}_{F,1}$ and $\mathcal{P}_{F,1-\tau}$ be defined as above. For any $\bv\in\mathcal{P}_{F,1}$, define $\bu^\text{ideal}(\bv)$ be the unique $\bu^\text{ideal}$ such that $\bv\in S_{\bu^\text{ideal}}$. We denote $\bu^\text{ideal}(\bv)$ as $\bu^\text{ideal}$ when the context is clear. Specifically, for any $t\geq0$, we denote $\bu^\text{ideal}(t)=\bu^\text{ideal}(\bv(t))$ as the ideal SNN at time $t$.
\end{definition}

See Figure~\ref{fig:ideal coupling} for an example of the ideal coupling.

\begin{figure}[h]
	\centering
	\includegraphics[width=10cm]{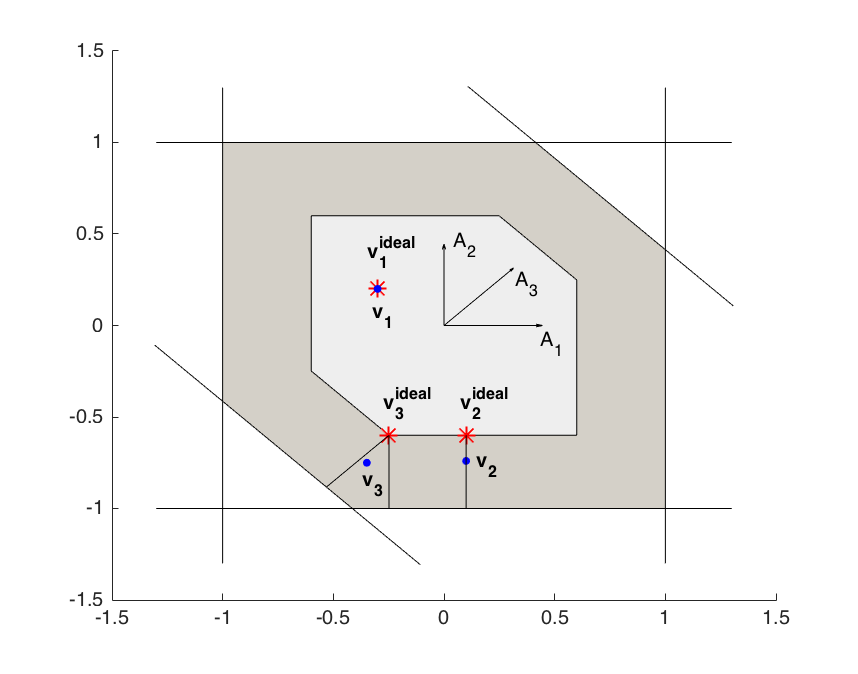}
	\caption{This is an example of ideal coupling in $\Real^2$ where $\tau=0.4$, $F_1=[1\ 0]^\top$, $F_2=[0\ 1]^\top$, and $F_3=[\frac{1}{\sqrt{2}}\ \frac{1}{\sqrt{2}}]^\top$. The dots (\textit{i.e.,} $\bv_1,\bv_2,\bv_3$) are dual SNN and the stars (\textit{i.e.,} $\bv_1^\text{ideal},\bv_2^\text{ideal},\bv_3^\text{ideal}$) are the corresponding ideal SNN. The whole gray area is the dual polytope $\mathcal{P}_{F,1}$ and the gray area in the middle is the ideal polytope $\mathcal{P}_{1-\tau}$.}
	\label{fig:ideal coupling}
\end{figure}
\vspace{3mm}

Note that~\autoref{def:ideal coupling} is well-defined due to~\autoref{lem:ideal partition}. With the ideal coupling, we are then switching to analyze the \textit{ideal SNN} $\bu^\text{ideal}(t)$ instead of the dual SNN $\bu(t)$. In the following, we are going to show that the ideal SNN is indeed tractable for analysis, though it is highly non-trivial and is very sensitive to the choice of parameters.

To show the convergence of ideal SNN, we need a notion to measure how close $\bu^\text{ideal}(t)$ and the optimal point is. To do so, we define the \textit{ideal solution} of ideal SNN at time $t$ as follows.

\begin{definition}[ideal solution]\label{def:ideal solution}
	For any $t\geq0$, define the ideal solution $\br^\text{ideal}(t)$ at time $t$ as
	\begin{equation*}
	\br^\text{ideal}(t) = \argmin_{\substack{\br\geq0,\\\bx_i=0,\ \forall i\in\Gamma(\bu^\text{ideal}(t))}}\|\bx-F^\top\br\|_2.
	\end{equation*}
	Also, let $\Gamma^*(\bu^\text{ideal}(t))=\{i\in\Gamma(\bu^\text{ideal}(t)):\ \br^\text{ideal}(t)\neq0\}$ to be the set of super active neurons.
\end{definition}

In the later proof, we need one more definition on a variant of ideal SNN called the \text{super SNN}. Similar to~\autoref{def:ideal solution}, we define the super ideal SNN $\bu^\text{super}(t)$ as the projection of $\bv(t)$ to the ideal polytope \textit{without} those non-super ideal neurons. Formally, define $\bu^\text{super}(t)$ be the unique solution of the following equations: $\bu=\bu(t)-F_{\Gamma^*(\bu^\text{ideal}(t))}\bz$ and $F_i^\top\bu=1-\tau$ for each $i\in\Gamma^*(\bu^\text{ideal}(t))$. See Figure~\ref{fig:super} for example. Note that the uniqueness of the solution is guaranteed by the non-degeneracy of $F$.

\begin{figure}[h]
	\centering
	\includegraphics[width=10cm]{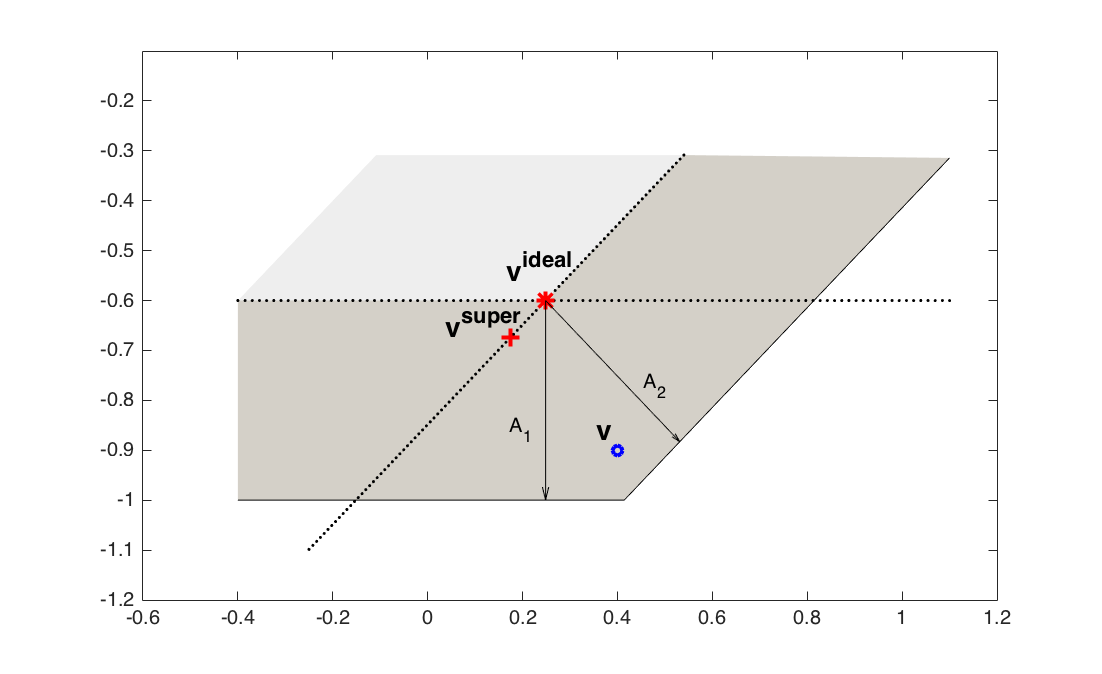}
	\caption{This is an example of $\bu^\text{super}$ in $\Real^2$ where $\tau=0.4$, $F_1=[0\ -1]^\top$, $F_2=[\frac{1}{\sqrt{2}}\ -\frac{1}{\sqrt{2}}]^\top$, $\bx=[1\ 0]^\top$, and $\bu=[0.4\ -0.9]^\top$. The light gray area is the ideal polytope and the dark gray area is the dual polytope. In this example, we have $\Gamma(\bu)=\{1,2\}$ while $\Gamma^*(\bu)=\{2\}$. As a result, $\bu^\text{super}$ is defined as the projection of $\bu$ onto the ideal polytope that only contains neuron $2$.}
	\label{fig:super}
\end{figure}
\vspace{3mm}

It is indeed unclear why we need these definitions at this stage of the proof. It would be clearer why we need them in the next two subsections once we go into the main analysis. Before we move on to more details, see Figure~\ref{fig:ideal coupling} and Figure~\ref{fig:super} again to familiarize with the definitions.

\subsection{Ideal SNN remains unchanged after firing spikes}\label{sec:unchange}
In this subsection, we are going to prove the following important lemma saying that the dual SNN would not change its ideal SNN after firing spikes.

\begin{lemma}[ideal SNN remains unchanged after firing spikes]\label{lem:ideal SNN unchaged}
There exists a polynomial $\alpha(\cdot)$ such that if $F$ is nice and $0<\alpha\leq\alpha(\frac{\tau\cdot\gamma(F)}{n\cdot\lambda_{\max}})$, then $\bu(t)-\alpha F^\top\bs(t)\in S_{\bu^\text{ideal}(t)}$ for each $t\geq0$.
\end{lemma}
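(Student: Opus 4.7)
The plan is to analyze how the cone decomposition of $\bu(t)$ behaves across a spike, and to close the calculation by choosing $\alpha$ small relative to the geometric margin dictated by $\tau$ and $\gamma(F)$. By~\autoref{def:ideal partition} and~\autoref{def:ideal coupling}, every $\bu(t)\in S_{\bu^{\text{ideal}}(t)}$ admits a representation
\begin{equation*}
\bu(t) \;=\; \bu^{\text{ideal}}(t) \;+\; \sum_{j\in\Gamma(\bu^{\text{ideal}}(t))} c_j\, F_j,\qquad c_j\geq 0.
\end{equation*}
A spike at index $i$ sends $\bu(t)$ to $\bu(t)-\alpha F_i$, so it suffices to establish two claims: (a) every firing index $i$ lies in $\Gamma(\bu^{\text{ideal}}(t))$; and (b) for each such $i$, some choice of the coefficients satisfies $c_i\geq\alpha$. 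Granted (a) and (b), the post-spike point equals $\bu^{\text{ideal}}(t)+(c_i-\alpha)F_i+\sum_{j\neq i}c_jF_j$, a nonnegative cone combination still anchored at $\bu^{\text{ideal}}(t)$. Membership in $\mathcal{P}_{F,1}$ is automatic: the spike strictly decreases $F_i^\top\bu$ at the hit wall, and any other wall $F_k^\top\bu$ changes by at most $\alpha|F_k^\top F_i|\leq\alpha\lambda_{\max}$, which stays below the gap $1-\max_k F_k^\top\bu(t)$ once $\alpha$ is chosen polynomially small.

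For claim (a), suppose towards contradiction that $i$ fires but $i\notin\Gamma(\bu^{\text{ideal}}(t))$, so that $F_i^\top\bu^{\text{ideal}}(t)<1-\tau$ with a strict margin. The firing condition $F_i^\top\bu(t)\geq 1$ would force the cone contribution $\sum_j c_j F_i^\top F_j$ to exceed $\tau$. We rule this out by an a priori bound $\sum_j c_j\leq \mathrm{poly}(n,\lambda_{\max})/\gamma(F)$: the $\gamma(F)$-non-degeneracy of $F$ guarantees that any subset of at most $m$ active columns $\{F_j\}$ is a basis of its span with minimum singular value at least $\gamma(F)$, so inverting this basis against the bounded residual $\bu(t)-\bu^{\text{ideal}}(t)$ controls $\{c_j\}$ as stated. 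Since $|F_i^\top F_j|\leq\lambda_{\max}$, the cone contribution is at most $\mathrm{poly}(n,\lambda_{\max})/\gamma(F)\cdot\tau$, which is strictly less than $\tau$ once $\tau$ is polynomially small in $\gamma(F)/(n\lambda_{\max})$.

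For claim (b), at the firing time $F_i^\top\bu(t)=1$, and using $\|F_i\|_2=1$,
\begin{equation*}
\tau \;=\; F_i^\top\bu(t)-F_i^\top\bu^{\text{ideal}}(t) \;=\; c_i \;+\; \sum_{j\in\Gamma\setminus\{i\}}c_j\, F_i^\top F_j.
\end{equation*}
The cross-terms can be negative, so a naive estimate only yields $c_i\geq\tau-\lambda_{\max}\sum_{j\neq i}c_j$, which is useless. To close the gap, project $F_i$ onto the orthogonal complement of $\mathrm{span}\{F_j:j\in\Gamma\setminus\{i\}\}$: by $\gamma(F)$-non-degeneracy this projection has norm at least $\gamma(F)$, and inner-producting the identity $\bu(t)-\bu^{\text{ideal}}(t)=c_iF_i+\sum_{j\neq i}c_jF_j$ with it isolates $c_i$, yielding $c_i\geq\tau\gamma(F)/\mathrm{poly}(n,\lambda_{\max})$. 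Choosing $\alpha(\cdot)$ so that $\alpha\leq\alpha(\tau\gamma(F)/(n\lambda_{\max}))$ matches this lower bound and completes both claims. The case of several neurons firing simultaneously is handled index-by-index since the per-index bound on $c_i$ is insensitive to the identity of the other firing neurons.

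The main obstacle is the quantitative calibration in claim (b): the possibly negative cross-terms $c_j F_i^\top F_j$ for $j\neq i$ block any direct lower bound on $c_i$, and the only way we see to break this is the geometric argument based on minimum-singular-value bounds supplied by $\gamma(F)$-non-degeneracy. This geometric route is what ultimately determines the polynomial dependence of $\alpha$ on $\tau\gamma(F)/(n\lambda_{\max})$ appearing in the statement of the lemma, and will need to be threaded consistently with the choice of $\tau$ later used to control the ideal-SNN potential function.
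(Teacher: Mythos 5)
Your reduction of the lemma to showing that every firing neuron's cone coordinate exceeds $\alpha$ is exactly the paper's strategy, but the step that is supposed to deliver the lower bound on $c_i$ --- your claim (b) --- does not close. Projecting $F_i$ onto the orthogonal complement of $\mathrm{span}\{F_j : j\in\Gamma\setminus\{i\}\}$ to get $g_i$ with $\|g_i\|_2\ge\gamma(F)$ and pairing with $\bu(t)-\bu^{\text{ideal}}(t)=\sum_j c_jF_j$ does isolate $c_i$, namely $c_i = g_i^\top(\bu(t)-\bu^{\text{ideal}}(t))/\|g_i\|_2^2$, but it gives no lower bound on the numerator. The only quantitative input available, $F_i^\top(\bu(t)-\bu^{\text{ideal}}(t))=\tau$, decomposes as $g_i^\top(\cdot)+(\Pi F_i)^\top(\cdot)=\tau$, and the term $(\Pi F_i)^\top\sum_j c_jF_j$ is precisely the uncontrolled cross-term you set out to avoid; the projection relocates the problem rather than solving it, and in particular nothing here rules out $c_i=0$ at the moment of firing. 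The paper's mechanism is genuinely different and global: it first proves (by a convexity argument) that the ``corner'' $\bu_\Gamma$ satisfying $F_j^\top(\bu_\Gamma-\bu^{\text{ideal}}(t))=\tau$ for all active $j$ is the \emph{farthest} point of $S_{\bu^{\text{ideal}}(t)}$ from $\bu^{\text{ideal}}(t)$, and then shows that the slice of the wall $W_i$ on which the $i$-th coordinate equals $\alpha$ lies strictly beyond that radius, using niceness twice --- to lower-bound the corner's coordinates by $\tau\gamma(F)$ and to lower-bound $\mathrm{dist}(F_i,\mathrm{span}\{F_j\}_{j\ne i})$ by $\gamma(F)$. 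Hitting the wall with $i$-th coordinate at most $\alpha$ is therefore geometrically impossible. Some such global use of the other walls (i.e., of $\bu(t)\in\mathcal{P}_{F,1}$ together with the shape of the whole cone region) appears unavoidable; a purely local identity at the firing wall cannot do it.

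Two further gaps. In claim (a), your bound on the cone contribution has the form $C(n,\lambda_{\max},\gamma(F))\cdot\tau$, since the residual $\|\bu(t)-\bu^{\text{ideal}}(t)\|_2$ is itself of order $\tau$ times a polynomial factor; comparing this against $\tau$, the factor of $\tau$ cancels and you would need $C<1$, which is false in general --- shrinking $\tau$ does not help. And the assertion that membership in $\mathcal{P}_{F,1}$ after the spike is ``automatic'' fails for simultaneous spikes: at another firing wall $W_k$ the gap $1-F_k^\top\bu(t)$ is zero, and the combined spike $-\alpha\sum_i F_k^\top F_i$ need not be nonpositive, so this too has to be absorbed into the cone/farthest-point argument rather than handled wall by wall.
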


\begin{proof}[Proof of~\autoref{lem:ideal SNN unchaged}]
	First, note that for each $\bu\in S_{\bu^\text{ideal}(t)}$, by the property of dual polytope, there exists an unique $\bz\in\Real_{\geq0}^{|\Gamma(\bu^\text{ideal}(t))|}$ such that $\bu=\bu^\text{ideal}(t)+F_{\Gamma(\bu^\text{ideal}(t))}^\top\bz$ where $\bz$ can be thought of as the \textit{coordinates} of $\bu$ in $S_{\bu^\text{ideal}(t)}$. With this concept in mind, it is then sufficient to show that whenever neuron $i$ fires, $\bz_i>\alpha$. The reason is that
	\begin{align}
	\bu(t)-\alpha F^\top\bs(t) &= \bu^\text{ideal}(t) + F_{\Gamma(\bu^\text{ideal}(t))}^\top\bz - \sum_{i\in\Gamma(\bu(t))}\alpha F_i\nonumber\\
	&= \bu^\text{ideal}(t) + \sum_{i\in\Gamma(\bu^\text{ideal}(t))\backslash\Gamma(\bu(t))}\bz_iF_i + \sum_{i\in\Gamma(\bu(t))}(\bz_i-\alpha)F_i.\label{eq:ideal SNN fire}
	\end{align}
	As a result, if $\bz_i-\alpha>0$ for every $i\in\Gamma(\bu(t))$, then we have $\bu(t)-\alpha F^\top\bs(t)\in S_{\bu^\text{ideal}(t)}$ because every new coordinates are still non-negative. See Figure~\ref{fig:coordinate} for an example.
	
	\begin{figure}[h]
		\centering
		\includegraphics[width=10cm]{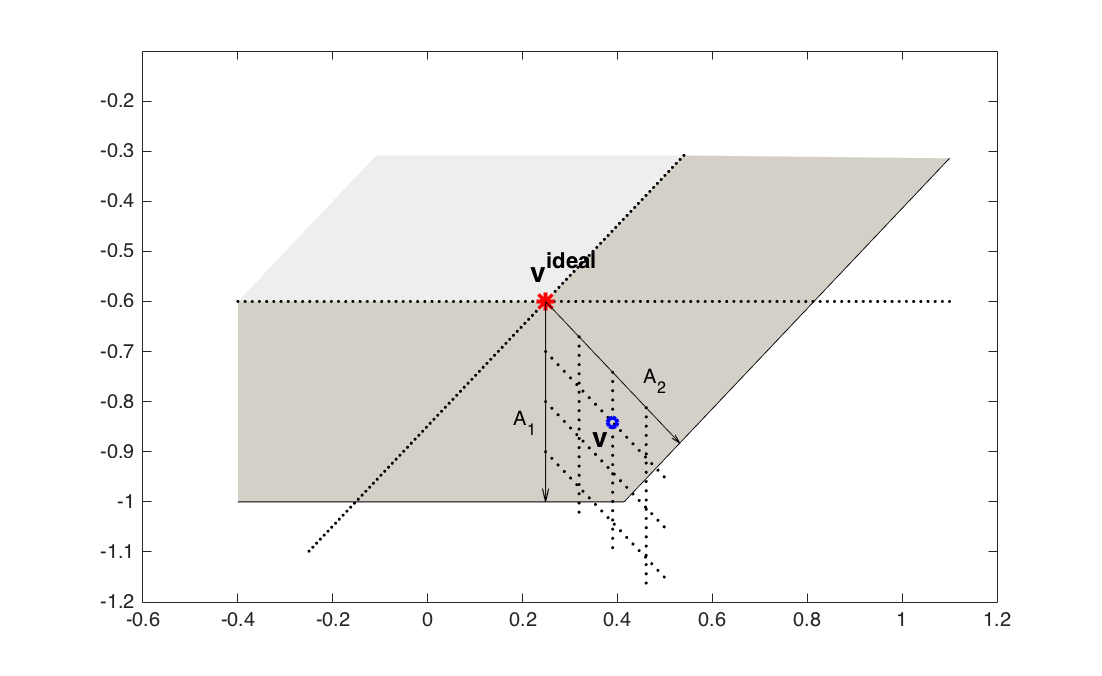}
		\caption{This is an example of \textit{coordinates} of  $S_{\bu^\text{ideal}}$ in $\Real^2$ where $\tau=0.4$, $F_1=[0\ -1]^\top$ and $F_2=[\frac{1}{\sqrt{2}}\ -\frac{1}{\sqrt{2}}]^\top$. The light gray area is the ideal polytope and the dark gray area is the dual polytope. In this example, the dot lines are the \textit{level set} of each coordinates in $S_{\bu^\text{ideal}}$. For instance, the $\bu$ in the figure has coordinate $\bz=[0.1\ 0.2]^\top$ and thus we have $\bu=\bu^\text{ideal}+F^\top\bz$.}
		\label{fig:coordinate}
	\end{figure}
	\vspace{3mm}
	
	\begin{claim}\label{claim:ideal SNN stay after fire coordinate}
		There exists a polynomial $\alpha(\cdot)$ such that when $0<\alpha\leq\poly(\frac{\tau\cdot\gamma(F)}{n\cdot\lambda_{\max}})$ and $\bu(t)=\bu^\text{ideal}(t)+F_{\Gamma(\bu^\text{ideal}(t))}^\top\bz\in S_{\bu^\text{ideal}(t)}$ for some $t\geq0$, if $i\in\Gamma(\bu(t))$, then $\bz_i>\alpha$.
	\end{claim}
	\begin{proof}[Proof of Claim~\ref{claim:ideal SNN stay after fire coordinate}]
		The proof consists of two steps. First, we are going to show that for any $t\geq0$, $\bu(t)$ is close to $\bu^\text{ideal}(t)$. Concretely, if $\alpha\leq\frac{\tau}{m}$, then $\|\bu(t)-\bu^\text{ideal}(t)\|_2\leq\tau\lambda_{\max}$. Second, we are going to show that once we pick $\alpha$ small enough, then for any $i\in\Gamma(\bu^\text{ideal}(t))$, the wall $W_i$ is far away from the $\alpha$-level set in $S_{\bu^\text{ideal}(t)}$. Thus, whenever neuron $i$ fires, $\bz_i>\alpha$.
		
		The first step is a key observation that the distance between $\bu(t)$ and $\bu^\text{ideal}(t)$ would not increase after the neurons fire spikes. The main reason is that neuron $i$ fires at time $t$ if and only if $F_i^\top\bu(t)>1$. As a result,
		\begin{align*}
		&\|\left(\bu(t)-\alpha F^\top\bs(t)\right)-\bu^\text{ideal}(t)\|_2^2\\
        =&\ \|\bu(t)-\bu^\text{ideal}(t)\|_2^2 + \alpha^2\|F^\top\bs(t)\|_2^2 - 2\alpha\left(F^\top\bs(t)\right)^\top\left(\bu(t)-\bu^\text{ideal}(t)\right)\\
		=&\ \|\bu(t)-\bu^\text{ideal}(t)\|_2^2 + \alpha^2\|F^\top\bs(t)\|_2^2 -2\alpha\sum_{i\in\Gamma(\bu(t))}F_i^\top\left(\bu(t)-\bu^\text{ideal}(t)\right)\\
		\leq&\ \|\bu(t)-\bu^\text{ideal}(t)\|_2^2 + \alpha^2|\Gamma(\bu(t))|^2 -2\alpha\tau|\Gamma(\bu(t))| \, .
		\end{align*}
		That is, if $\alpha\leq\frac{\tau}{m}$, then $\|\left(\bu(t)-\alpha F^\top\bs(t)\right)-\bu^\text{ideal}(t)\|_2$ would not increase after some neurons fire. Furthermore, the longest distance between $\bu(t)$ and $\bu^\text{ideal}(t)$ would then be $\tau\lambda_{\max}$.
		
		The second step is rather complicated. Let us start with some definitions. Recall that for any $i\in[\pm n]$, the wall $i$ is defined as $W_i=\{\bu\in\Real^m:\ F_i^\top\bu=1 \}$. Now, define the $\alpha$-level set of $i$ in $S_{\bu^\text{ideal}(t)}$ as
		$$
		L_{\bu^\text{ideal}(t),i,\alpha} = \{\bu\in S_{\bu^\text{ideal}(t)}:\ \bu=\bu^\text{ideal}(t)+F_{\Gamma(\bu^\text{ideal}(t))}^\top\bz,\ \bz_i=\alpha\}. 
		$$
		That is, $L_{\bu^\text{ideal}(t),i,\alpha} $ consists of the set of points in $S_{\bu^\text{ideal}(t)}$ that has the $i$-th coordinate to be $\alpha$.
		
		\begin{claim}[furtherest point in $S_{\bu^\text{ideal}}$]\label{claim:furthest point in safe region}
			For any $t\geq0$, let $\bu_{\Gamma(\bu^\text{ideal}(t))}$ be the unique point $\bu\in S_{\bu^\text{ideal}(t)}$ such that for any $i\in\Gamma(\bu^\text{ideal}(t))$, $F_i^\top\left(\bu-\bu^\text{ideal}(t)\right)=\tau$. Then, we have $\|\bu_{\Gamma(\bu^\text{ideal}(t))}-\bu^\text{ideal}(t)\|_2=\max_{\bu\in S_{\bu^\text{ideal}(t)}}\|\bu-\bu^\text{ideal}(t)\|_2$.
		\end{claim}
		\begin{proof}[Proof of Claim~\ref{claim:furthest point in safe region}]
			Let us prove by contradiction. Suppose $\bu^*\in S_{\bu^\text{ideal}(t)}$ such that $\|\bu_{\Gamma(\bu^\text{ideal}(t))}-\bu^\text{ideal}(t)\|_2<\|\bu^*-\bu^\text{ideal}(t)\|_2$. To simplify the notations, let $\bu_{\Gamma}=\bu_{\Gamma(\bu^\text{ideal}(t))}-\bu^\text{ideal}(t)$ and $\bu=\bu^*-\bu^\text{ideal}(t)$.
			
			By definition, we have $F_i^\top\bu_\Gamma=\tau$ for all $i\in\Gamma(\bu^\text{ideal}(t))$ and $\bu=F_{\Gamma(\bu^\text{ideal}(t))}^\top\bz_\Gamma$ for some $\bz_\Gamma\in\Real_{>0}$. On the other hand, we also have $0\leq F_i^\top\bu\leq\tau$ for all $i\in\Gamma(\bu^\text{ideal}(t)$.
			
			Now, look at the quantity $\bu_{\Gamma}^\top\left(\bu-\bu_\Gamma\right)$. Note that since $\|\bu\|_2>\|\bu_\Gamma\|_2$, we have $\bu_{\Gamma}^\top\left(\bu-\bu_\Gamma\right)>0$. Also, for any $i\in\Gamma(\bu^\text{ideal}(t))$, we have $F_i^\top\left(\bu-\bu_\Gamma\right)\leq0$. Using the fact that $\bu=F_{\Gamma(\bu^\text{ideal}(t))}^\top\bz_\Gamma$ for some $\bz_\Gamma\in\Real_{>0}$, we have
			\begin{align*}
			0 < \bu_{\Gamma}^\top\left(\bu-\bu_\Gamma\right)&=\bz_\Gamma^\top F_{\Gamma(\bu^\text{ideal}(t))}\left(\bu-\bu_\Gamma\right)\\
			&= \bz_\Gamma^\top\bu\leq0,
			\end{align*}
			where $\bu=F_{\Gamma(\bu^\text{ideal}(t))}\left(\bu-\bu_\Gamma\right)\in\Real_{\leq0}^{|\Gamma(\bu^\text{ideal}(t))|}$. That is, we reach a contradiction and thus $\|\bu\|_2\leq\|\bu_\Gamma\|_2$ and we conclude that $\bu_{\Gamma(\bu^\text{ideal}(t))}$ is the furtherest point from $\bu^\text{ideal}(t)$ in $S_{\bu^\text{ideal}(t)}$.
		\end{proof}
		\begin{claim}[intersection of wall and $\alpha$-level set is far]\label{claim:ideal SNN intersection is far}
			When $0<\alpha\leq\tau^2\cdot\gamma(F)^3$, for any $t\geq0$ and $i\in\Gamma(\bu^\text{ideal}(t))$, we have
			$$
			\min_{\bu:\ \bu\in W_i\cap L_{\bu^\text{ideal}(t),i,\alpha}}\|\bu-\bu^\text{ideal}(t)\|_2>\|\bu_{\Gamma(\bu^\text{ideal}(t))}-\bu^\text{ideal}(t)\|_2.
			$$
		\end{claim}
		\begin{proof}[Proof of Claim~\ref{claim:ideal SNN intersection is far}]
			First, let us write $\bu_{\Gamma(\bu^\text{ideal}(t))}=\bu^\text{ideal}(t)+\sum_{i\in\Gamma(\bu^\text{ideal}(t))}\bz_iF_i$ where $\bz_i\geq\tau\cdot\gamma(F)$ by~\autoref{def:nice}. Furthermore, for any $i\in\Gamma(\bu^\text{ideal}(t))$, we have
			\begin{equation*}
			\text{dist}\left(\bu_{\Gamma(\bu^\text{ideal}(t))},\text{span}(F_{\Gamma(\bu^\text{ideal}(t))\backslash\{i\}})\right)\geq|\bz_i|\cdot\text{dist}\left(F_i,\text{span}(F_{\Gamma(\bu^\text{ideal}(t))\backslash\{i\}})\right)\geq\tau\cdot\gamma(F)^2,
			\end{equation*}
			where the last inequality follows~\autoref{def:nice}. Namely, if we pick $0<\alpha<\tau^2\cdot\gamma(F)^3$, then
			\begin{equation*}
			\text{dist}\left(\bu_{\Gamma(\bu^\text{ideal}(t))},L_{\bu^\text{ideal}(t),i,\alpha}\right)>0
			\end{equation*}
			and $\bu_{\Gamma(\bu^\text{ideal}(t))}\in\text{Cone}(F_i,L_{\bu^\text{ideal}(t),i,\alpha})$ because $\bz_i\geq\gamma(F)$. Finally, observe that for any $\bu\in W_i\cap L_{\bu^\text{ideal}(t),i,\alpha}$, we have $\bu_{\Gamma(\bu^\text{ideal}(t))}^\top\left(\bu-\bu_{\Gamma(\bu^\text{ideal}(t))}\right)>0$. This completes the proof of Claim~\ref{claim:ideal SNN intersection is far}.
		\end{proof}
		Combine Claim~\ref{claim:furthest point in safe region} and Claim~\ref{claim:ideal SNN intersection is far}, we know that when neuron $i$ fires, the corresponding coordinate $\bz_i$ will be at least $\alpha$. This completes the proof of Claim~\ref{claim:ideal SNN stay after fire coordinate}.
	\end{proof}
	Now,~\autoref{lem:ideal SNN unchaged} follows from Claim~\ref{claim:ideal SNN stay after fire coordinate} and equation~\autoref{eq:ideal SNN fire}.
\end{proof}

\subsection{Strict convergence of ideal SNN and auxiliary SNNs}\label{sec:strict}
In this subsection, the goal is to characterize the dynamics of both ideal and auxiliary SNN. Before defining auxiliary SNN, let us first see the following lemma about the dynamics of ideal SNN. 

\begin{lemma}[dynamics of ideal SNN]\label{lem:ideal SNN dynamics}
	If $F$ is nice, then for any $t\geq0$, we have
	$$\bu^\text{ideal}(t+dt)=\bu^\text{ideal}(t) + \left(\bx-\Pi_{F_{\Gamma(\bu^\text{ideal}(t))}}\bx\right)dt.$$
\end{lemma}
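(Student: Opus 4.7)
The strategy is to decompose the infinitesimal drift of the dual SNN into a component tangent to the active faces of the ideal polytope (which moves $\bu$ within the region $S_{\bu^\text{ideal}(t)}$ without changing $\bu^\text{ideal}$) and a component normal to those faces (which carries $\bu^\text{ideal}$ along the face). Concretely, over a time window $[t, t+dt]$, the dual dynamics give $d\bu = \bx\, dt - \alpha F^\top \bs(t)$. By the preceding Lemma (ideal SNN remains unchanged after firing spikes), the spike term $-\alpha F^\top \bs(t)$ keeps $\bu$ inside the same cell $S_{\bu^\text{ideal}(t)}$, so $\bu^\text{ideal}$ is unaffected by spikes. Hence it suffices to analyze the effect of the continuous drift $\bx\, dt$ on the ideal coupling.

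Write $\Gamma = \Gamma(\bu^\text{ideal}(t))$, let $V = \mathrm{span}\{F_i : i \in \Gamma\}$, and let $\Pi_{F_\Gamma}$ be the orthogonal projection onto $V$. By~\autoref{def:ideal coupling}, there exist unique coefficients $z_i \geq 0$ (using $\gamma$-non-degeneracy of $F$) with
\[
\bu(t) = \bu^\text{ideal}(t) + \sum_{i \in \Gamma} z_i F_i.
\]
Decompose the drift as $\bx\, dt = (I - \Pi_{F_\Gamma})\bx\, dt + \Pi_{F_\Gamma}\bx\, dt$, and write $\Pi_{F_\Gamma}\bx = \sum_{i \in \Gamma} c_i F_i$ for some real $c_i$. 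The candidate update is then
\[
\bu^\text{ideal}(t+dt) = \bu^\text{ideal}(t) + (I - \Pi_{F_\Gamma})\bx\, dt, \qquad z_i(t+dt) = z_i(t) + c_i\, dt.
\]

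Next I would verify that this candidate agrees with the definition of the ideal coupling at time $t+dt$. Two points must be checked. First, $\bu^\text{ideal}(t+dt)$ should lie in the ideal polytope with active set $\Gamma$: for every $i \in \Gamma$, the increment $(I - \Pi_{F_\Gamma})\bx$ is orthogonal to $V \ni F_i$, so $F_i^\top \bu^\text{ideal}(t+dt) = F_i^\top \bu^\text{ideal}(t) = 1-\tau$, and no wall outside $\Gamma$ can be crossed during an infinitesimal step because such walls are strictly inactive at $\bu^\text{ideal}(t)$. Second, the cone coordinates $z_i(t) + c_i\, dt$ must remain non-negative; this follows because the previous Lemma gives a uniform lower bound $z_i(t) \geq \alpha$ on any coordinate that could otherwise be driven toward $0$, so for $dt$ sufficiently small all coordinates stay positive. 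Uniqueness of the decomposition (again from non-degeneracy) then identifies this candidate as the actual $\bu^\text{ideal}(t+dt)$, yielding the claimed formula.

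\textbf{Main obstacle.} The delicate step is handling transitions where $\Gamma$ changes: either a new wall becomes active (some $F_i^\top \bu^\text{ideal}$ reaches $1-\tau$) or an active one becomes inactive (a coordinate $z_i$ drops to $0$). Generically this is a measure-zero event on $t$, and the right-derivative formula I write is valid on each interval of constancy of $\Gamma$. At a transition, one simply takes the appropriate one-sided derivative using the new active set. I would address this by arguing that the formula holds in the distributional/right-derivative sense, with the ideal dynamics being a piecewise constant vector field on the stratification of $\mathcal{P}_{F,1-\tau}$ induced by the faces, and by invoking the niceness parameter $\gamma(F) > 0$ together with the choice of $\alpha$ to guarantee that at most one stratum change occurs per spike-free interval.
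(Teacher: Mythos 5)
Your proposal matches the paper's proof in both structure and substance: the paper likewise splits into a no-spike case, where $\bx\,dt$ is decomposed into $(\bx-\Pi_{F_{\Gamma}}\bx)\,dt$ (orthogonal to the active walls, translating $\bu^{\text{ideal}}$) plus $\Pi_{F_{\Gamma}}\bx\,dt$ (staying inside the cell $S_{\bu^{\text{ideal}}(t)}$), and a spike case handled entirely by the preceding lemma that spikes do not change the ideal point. If anything you are more explicit than the paper about the boundary issues (non-negativity of the cone coordinates and changes of the active set $\Gamma$), which the paper's Case (i) passes over with an ``observe''; just note that the preceding lemma's lower bound $\bz_i>\alpha$ applies only to coordinates of currently firing neurons, so the non-negativity of the remaining coordinates under an infinitesimal drift really rests on the right-derivative/stratification argument you give, not on that lemma.
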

\begin{proof}[Proof of~\autoref{lem:ideal SNN dynamics}]
	We consider two cases: (i) there is no neuron fires any spike and (ii) there is a neuron fires a spike.
	
	\textbf{Case (i)}: By~\autoref{def:ideal coupling}, $\bu(t)=\bu^\text{ideal}+F_{\Gamma(\bu^\text{ideal}(t))}^\top\bz$ for some $\bz\geq0$. Also, rewrite the updates $\bx$ as 
	$$\bx=\left(\bx-\Pi_{F_{\Gamma(\bu^\text{ideal}(t))}}\bx\right)+\Pi_{F_{\Gamma(\bu^\text{ideal}(t))}}\bx.$$
	First, $F_i^\top\left(\bx-\Pi_{F_{\Gamma(\bu^\text{ideal}(t))}}\bx\right)=0$ for each $i\in\Gamma(\bu^\text{ideal}(t))$. Next, since there is no neuron fires at time $t$, observe that $\bu(t)+\Pi_{F_{\Gamma(\bu^\text{ideal}(t))}}\bx\in S_{\bu^\text{ideal}(t)}$. Finally, since $\bx-\Pi_{F_{\Gamma(\bu^\text{ideal}(t))}}\bx$ is orthogonal to the subspace spanned by the active neurons, we then have $\bu(t)+\bx dt\in S_{\bu^\text{ideal}(t)+(\bx-\Pi_{F_{\Gamma(\bu^\text{ideal}(t))}}\bx)dt}$.
	
	\textbf{Case (ii)}: To handle spikes, the idea is to focus on the spike term first, and once $\bu(t)$ goes back to the interior of the dual polytope, then it becomes case (i). Here, we use an assumption that if there are some neurons fire at time $t$ and they trigger consecutive firing, we add the external charging \textit{after} the consecutive firing. As a result, it suffices to show that $\bu(t)-\alpha F^\top\bs(t)\in S_{\bu^\text{ideal}(t)}$, which immediately follows from~\autoref{lem:ideal SNN unchaged}.
	
	We conclude that for any $t\geq0$, $\bu^\text{ideal}(t+dt)=\bu^\text{ideal}(t) + \left(\bx-\Pi_{F_{\Gamma(\bu^\text{ideal}(t))}}\bx\right)dt$.
\end{proof}

From~\autoref{lem:ideal SNN dynamics}, one can see that the improvement of ideal SNN is not proportional to the residual error when the $\Pi_{F_{\Gamma(\bu^\text{ideal}(t))}}\bx\neq F^\top\br^\text{ideal}(t)$. As a result, we have to design a bunch of \textit{auxiliary SNN} to make sure that at least one of them has improvement proportional to the residual error. The auxiliary SNNs $\{\bu^\text{auxiliary}_d(t)\}_{d\in[m-1]}$ is defined as follows.
\begin{definition}[auxiliary SNNs]\label{def:auxiliary}
	For each $t\geq0$, and $d\in[m-1]$, define $\bu^\text{auxiliary}(0)=\mathbf{0}$ and
	\[
	\bu^\text{auxiliary}_d(t+dt) = \left\{
	\begin{array}{ll}
	\bu^\text{auxiliary}_d(t)+F^\top\left(\br-\br^\text{ideal}(t)\right)dt&\text{, if }|\Gamma^*(\bu^\text{ideal}(t+dt))|=d\\&\text{ and } |\Gamma^*(\bu^\text{ideal}(t))|=d,\\
	\bu^\text{super}(t+dt)&\text{, if }|\Gamma^*(\bu^\text{ideal}(t+dt))|=d\\&\text{ and }|\Gamma^*(\bu^\text{ideal}(t))|\neq d,\\
	\bu_d^\text{auxiliary}(t)&\text{, else}.
	\end{array}
	\right.
	\]
\end{definition}

The auxiliary SNNs have the following important property that is crucial in the proof of the~\autoref{lem:strict improvement} which gives the strict improvement guarantee.

\begin{lemma}[Auxiliary SNNs jump]\label{lem:auxiliary SNN jump}
	Suppose $F$ is nice and $\tau=O(\frac{\gamma(F)}{n^2\cdot\lambda_{\max}^2})$. For any $t>0$ and $d\in[m-1]$, if $|\Gamma^*(\bu^\text{ideal}(t))|\neq|\Gamma^*(\bu^\text{ideal}(t+dt))|=d$, then $\bx^\top\left(\bu^\text{auxiliary}_d(t+dt)-\bu^\text{auxiliary}_d(t)\right)>0$.
\end{lemma}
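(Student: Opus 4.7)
The strategy is to trace $\bu^\text{auxiliary}_d$ back to the most recent time $t_0 \le t$ at which it was updated, express $\bu^\text{super}(t+dt) - \bu^\text{auxiliary}_d(t)$ as a controlled combination of the dual SNN's drift in direction $\bx$, spike corrections $-\alpha F^\top \bs$, and face-change re-projections, and then take inner product with $\bx$ to isolate a strictly positive leading term. By~\autoref{def:auxiliary}, either $t_0=0$ with $\bu^\text{auxiliary}_d(t_0)=\mathbf{0}$, or $t_0$ is the endpoint of a maximal interval on which $|\Gamma^*(\bu^\text{ideal})|=d$. In the latter case, the first step is to verify that on any interval where $|\Gamma^*|=d$ is constant, the continuous update $d\bu^\text{auxiliary}_d/dt=F^\top(\br-\br^\text{ideal})$, initialized by a jump to $\bu^\text{super}$, keeps $\bu^\text{auxiliary}_d$ equal to $\bu^\text{super}$ up to an error of order $\tau$. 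This uses the explicit projection formula
\begin{equation*}
\bu^\text{super}(s) = \bu(s) - F_{\Gamma^*}(F_{\Gamma^*}^\top F_{\Gamma^*})^{-1}\!\bigl(F_{\Gamma^*}^\top \bu(s) - (1-\tau)\mathbf{1}\bigr),
\end{equation*}
together with the observation that $\br^\text{ideal}$ acts as the coefficient vector of the least-squares projection of $\bx$ onto $\mathrm{col}(F_{\Gamma^*})$, so that the ``adjustment'' $F^\top(\br-\br^\text{ideal})$ exactly cancels the net spike drift of $\bu$ projected onto $\mathrm{col}(F_{\Gamma^*})$.

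Next, expanding $\bu(t+dt)-\bu(t_0)=(t+dt-t_0)\,\bx-\alpha F^\top\Delta\bs$ via the dual-SNN dynamics, applying the orthogonal projection $P$ onto the new face $H_{t+dt}=\{\bu:F_i^\top\bu=1-\tau,\ i\in\Gamma^*(\bu^\text{ideal}(t+dt))\}$, and taking inner product with $\bx$ would yield the decomposition
\begin{equation*}
\bx^\top\!\bigl(\bu^\text{super}(t+dt)-\bu^\text{auxiliary}_d(t)\bigr) = (t+dt-t_0)\,\|P\bx\|_2^2 \;-\; \alpha\!\sum_{i\notin\Gamma^*(t+dt)}\!(\Delta\bs)_i\,\bx^\top P F_i \;+\; R,
\end{equation*}
where $R$ captures the face-change correction $\bx^\top(P-P_{t_0})\bu(t_0)$ arising from the difference between the old face $H_{t_0}$ and the new face $H_{t+dt}$, plus the $O(\tau)$ error from the first step. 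The leading drift term $(t+dt-t_0)\,\|P\bx\|_2^2$ is the intended source of strict improvement.

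The main obstacle will be bounding $R$, because intermediate transitions of $\Gamma^*$ on $(t_0,t+dt]$ can each shift $\bu^\text{super}$ against $\bx$. I would invoke $\gamma(F)$-niceness (\autoref{def:nice}) to simultaneously (a) lower-bound $\|P\bx\|_2\gtrsim\gamma(F)\,\|\bx\|_2/\sqrt{m}$ whenever $d<m$, so that the drift term is non-degenerate, and (b) upper-bound each individual face-change shift of $\bu^\text{super}$ by $O(\tau\,\lambda_{\max}/\gamma(F))$ via the projection-distance estimates already used in the proof of~\autoref{lem:ideal SNN unchaged}. Combining these with the hypothesis $\tau=O\!\bigl(\gamma(F)/(n^2\lambda_{\max}^2)\bigr)$, and noting that between consecutive jumps of $|\Gamma^*|$ the dual SNN must traverse a positive distance $\gtrsim\gamma(F)$ (again by~\autoref{lem:ideal SNN unchaged}), so that $(t+dt-t_0)\,\|\bx\|_2\gtrsim\gamma(F)$, the drift term will strictly dominate both $R$ and the spike sum. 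The delicate point is that $|\Gamma^*|$ may oscillate rapidly on $(t_0,t+dt]$; handling this requires niceness to bound the number of distinct super-active configurations visited and hence to control the cumulative face-change contribution to $R$.
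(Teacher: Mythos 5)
Your overall strategy differs from the paper's, and it contains two genuine gaps. The paper's proof never tracks the dual SNN $\bu(t)$ through the interval at all: it uses only that both endpoints of the jump are super SNNs, each within $O(\tau\, n\,\lambda_{\max})$ of the \emph{ideal} SNN at the corresponding time ($t+dt$ and some earlier $t'$ at which the last update occurred), and that $\bx^\top\bu^\text{ideal}$ is monotone with $\frac{d}{dt}\bx^\top\bu^\text{ideal}(t)=\|\bx-\Pi_{F_{\Gamma(\bu^\text{ideal}(t))}}\bx\|_2^2\ge 0$ by \autoref{lem:ideal SNN dynamics}. Since the active set changed between $t'$ and $t$, the ideal SNN must have traversed a definite distance, giving $\bx^\top(\bu^\text{ideal}(t)-\bu^\text{ideal}(t'))=\Omega(\|\bx\|_2\,\gamma(F)/(n\lambda_{\max}))$, which dominates the two $O(\|\bx\|_2\,\tau\, n\,\lambda_{\max})$ endpoint approximations once $\tau=O(\gamma(F)/(n^2\lambda_{\max}^2))$. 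Intermediate oscillations of $\Gamma^*$ are irrelevant precisely because the comparison is made only at the two endpoints of a monotone quantity.

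The first gap in your argument is the claimed bound $\|P\bx\|_2\gtrsim\gamma(F)\,\|\bx\|_2/\sqrt{m}$. This cannot follow from \autoref{def:nice}: $\gamma(F)$ is a property of $F$ alone, while $\bx$ is an arbitrary input that may lie in (or arbitrarily close to) $\mathrm{span}(F_{\Gamma^*})$ for some $|\Gamma^*|=d<m$, making $\|P\bx\|$ arbitrarily small and your leading drift term degenerate. The positivity must instead be extracted from the hypothesis that $|\Gamma^*|$ \emph{changed} — a change forces the ideal SNN to move, and it only moves in directions with nonnegative inner product with $\bx$ — which is exactly how the paper argues. The second gap is the one you flag yourself but do not close: the remainder $R$ accumulates one $O(\tau\lambda_{\max}/\gamma(F))$ correction per transition of $\Gamma^*$ on $(t_0,t+dt]$, and the number of \emph{transitions} is not bounded by the number of \emph{distinct} configurations (configurations can be revisited), and even the distinct count is exponential in $n$, which would force $\tau$ far below the stated $O(\gamma(F)/(n^2\lambda_{\max}^2))$. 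Without a per-interval, transition-count-independent control of $R$ — which is what anchoring to the monotone ideal SNN provides for free — the decomposition does not yield the lemma.
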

\begin{proof}[Proof of~\autoref{lem:auxiliary SNN jump}]
	By the definition of auxiliary SNNs, we have three observations. First, $\|\bu^\text{auxiliary}_d(t+dt)-\bu^\text{ideal}(t)\|_2=\|\bu^\text{super}(t+dt)-\bu^\text{ideal}(t)\|_2=O(\tau\cdot n\cdot\lambda_{\max})$. Second, there exists $0\leq t'<t$ such that $\bu^\text{auxiliary}_d(t)=\bu^\text{super}(t')$ and $\Gamma(\bu^\text{ideal}(t'))\neq\Gamma(\bu^\text{ideal}(t))$. That is, we also have $\|\bu^\text{auxiliary}_d(t)-\bu^\text{ideal}(t')\|_2=\|\bu^\text{super}(t')-\bu^\text{ideal}(t)\|_2=O(\tau\cdot n\cdot\lambda_{\max})$. Finally, since $\Gamma(\bu^\text{ideal}(t'))\neq\Gamma(\bu^\text{ideal}(t))$, by~\autoref{lem:ideal SNN dynamics}, we have $\bx^\top\left(\bu^\text{ideal}(t)-\bu^\text{ideal}(t')\right)=\Omega(\|\bx\|_2\cdot\frac{\gamma(F)}{n\cdot\lambda_{\max}})$. Combine the three we have 
	\begin{align*}
	\bx^\top\left(\bu^\text{auxiliary}_d(t+dt)-\bu^\text{auxiliary}_d(t)\right)&\geq\bx^\top\left(\bu^\text{ideal}(t)-\bu^\text{ideal}(t')\right) - O(\|\bx\|_2\cdot\tau\cdot n\cdot\lambda_{\max})\\
	&\geq\Omega(\|\bx\|_2\cdot\frac{\lambda(F)}{n\cdot\lambda_{\max}}) - O(\|\bx\|_2\cdot\tau\cdot n\cdot\lambda_{\max})>0,
	\end{align*}
	where the last inequality holds when we pick $\tau=O(\frac{\gamma(F)}{n^2\lambda_{\max}^2})$.
\end{proof}

Now, we are able to prove the main lemma about identifying a potential function that is strictly improving as long as $\br^\text{ideal}(t)$ is not the optimal solution for $\ell_1$ minimization problem.

\begin{lemma}[strict improvement]\label{lem:strict improvement}
	For any $t>0$, we have
	\begin{equation*}
	\frac{d}{dt}\bx^\top\left(\bu^\text{ideal}(t)+\sum_{d\in[m-1]}\bu^\text{auxiliary}_d(t)\right)\geq\bx^\top F^\top\br^\text{ideal}(t).
	\end{equation*}
\end{lemma}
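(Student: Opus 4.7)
My plan is to analyze the derivative separately in two regimes: (i) the ``regular'' regime where the active super-set $\Gamma^*(\bu^\text{ideal}(t))$ is locally constant, and (ii) transition times when $|\Gamma^*|$ changes. By \autoref{lem:ideal SNN dynamics}, in regime (i) the ideal SNN evolves as $\dot{\bu}^\text{ideal}(t)=\bx-\Pi_{F_{\Gamma(\bu^\text{ideal}(t))}}\bx$, so
\[
\frac{d}{dt}\bx^\top \bu^\text{ideal}(t)=\bx^\top\bx-\bx^\top \Pi_{F_{\Gamma(\bu^\text{ideal}(t))}}\bx.
\]
Among the $m-1$ auxiliary processes, only $\bu^\text{auxiliary}_{d^*(t)}(t)$ with $d^*(t)=|\Gamma^*(\bu^\text{ideal}(t))|$ is evolving at time $t$ (the other auxiliaries are held constant by \autoref{def:auxiliary}), and its derivative is prescribed by the first branch of the definition. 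Combining these two contributions yields a clean expression in which the projection term $\Pi_{F_{\Gamma}}\bx$ will be split into the contribution from super-active coordinates (matched with $F^\top \br^\text{ideal}(t)$) and from the complementary non-super active coordinates.

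The key algebraic identity I would then invoke is the KKT characterization of $\br^\text{ideal}(t)$ as the non-negative least-squares optimum over the support $\Gamma(\bu^\text{ideal}(t))$: for $i\in\Gamma^*$ we have $F_i^\top(\bx-F^\top\br^\text{ideal}(t))=0$, while for $i\in\Gamma\setminus\Gamma^*$ we have $F_i^\top(\bx-F^\top\br^\text{ideal}(t))\le 0$. Using this, the difference $\Pi_{F_\Gamma}\bx - F^\top\br^\text{ideal}(t)$ lies in the span of the non-super columns $F_{\Gamma\setminus\Gamma^*}$ and satisfies a sign constraint that, when paired with the definition of the super SNN $\bu^\text{super}(t)$ (the projection onto the ideal polytope using only super-active columns), converts the ideal-SNN contribution into $\bx^\top \bx-\bx^\top F^\top\br^\text{ideal}(t)$ plus a controlled ``super-residue'' term. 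The auxiliary SNN contribution, expanded via $\br-\br^\text{ideal}(t)$ and the super structure, is designed precisely to cancel this residue and introduce the missing $\bx^\top F^\top \br^\text{ideal}(t)$ term, yielding the desired inequality in regime (i).

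At transition times where $d^*(t)$ switches from some value to $d$, the auxiliary SNN $\bu^\text{auxiliary}_d$ is reset to $\bu^\text{super}(t^+)$. By \autoref{lem:auxiliary SNN jump}, the discrete jump satisfies $\bx^\top(\bu^\text{auxiliary}_d(t^+)-\bu^\text{auxiliary}_d(t^-))>0$, so distributionally the derivative of $\bx^\top \bu^\text{auxiliary}_d$ picks up a positive delta-spike; this contributes non-negatively to the claimed lower bound and, in particular, absorbs any boundary discontinuity of $\bx^\top \Pi_{F_{\Gamma}}\bx$ that arises because the projection operator changes as $\Gamma(\bu^\text{ideal}(t))$ loses or gains an index. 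Crucially, because the jump formula refers to $\bu^\text{super}$ rather than to $\bu^\text{ideal}$, the change of the projector is matched exactly to the change in $F^\top \br^\text{ideal}(t)$, so the inequality is preserved across transitions.

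The main obstacle I anticipate is the careful bookkeeping in regime (i): matching the projection $\Pi_{F_\Gamma}\bx$ against the non-negative-least-squares object $F^\top\br^\text{ideal}(t)$ and verifying that the auxiliary derivative exactly supplies the required correction. This requires a delicate use of the KKT conditions together with the niceness assumption on $F$ (so that the distinction between $\Gamma$ and $\Gamma^*$ is well-defined and stable under the smallness of $\tau$ inherited from \autoref{lem:ideal SNN unchaged} and \autoref{lem:auxiliary SNN jump}). Once the algebra goes through in each fixed $d^*$-phase, the transition analysis follows from \autoref{lem:auxiliary SNN jump}, and the per-instant derivative inequality is established.
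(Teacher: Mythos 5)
Your proposal follows the same skeleton as the paper's proof — split time into phases where $|\Gamma^*(\bu^\text{ideal}(t))|$ is constant versus transition instants, use \autoref{lem:ideal SNN dynamics} for the ideal contribution, the first branch of \autoref{def:auxiliary} for the active auxiliary, and \autoref{lem:auxiliary SNN jump} for the jumps — but it organizes the regular-regime bookkeeping differently. You propose to add the ideal and auxiliary derivatives and then perform a KKT-based cancellation, splitting $\Pi_{F_{\Gamma}}\bx$ into super-active and non-super-active parts so that the auxiliary term "cancels a residue." The paper avoids this matching entirely: it first records (as a separate claim) that \emph{every} term $\frac{d}{dt}\bx^\top\bu^\text{ideal}(t)$ and $\frac{d}{dt}\bx^\top\bu^\text{auxiliary}_d(t)$ is non-negative in all three cases (the inactive auxiliaries are constant, the active one moves along the residual direction, and the ideal one moves along $\bx-\Pi_{F_\Gamma}\bx$ which pairs non-negatively with $\bx$), and then observes that in each case a \emph{single} process already attains the required lower bound on its own — the ideal SNN when $\Gamma^*=\Gamma$ (since then $F^\top\br^\text{ideal}(t)=\Pi_{F_{\Gamma}}\bx$), and the auxiliary indexed by $|\Gamma^*|$ otherwise (its derivative is by definition the residual, and at a transition the jump lemma gives a non-negative, in fact strictly positive, increment). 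So the "delicate matching" you flag as the main obstacle is not actually needed: because all contributions are individually non-decreasing, it suffices to exhibit one that meets the target, and the definition of the auxiliary processes hands you that one directly. Your cancellation route should still close (the extra projection term you would need to control satisfies $\bx^\top\Pi_{F_\Gamma}\bx\le\bx^\top\bx$, so the combined sum dominates either single contribution), but it is strictly more work; the one point you should make explicit either way is the non-negativity of all the non-active terms, since the lemma bounds the derivative of the full sum over $d\in[m-1]$.
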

\begin{proof}[Proof of~\autoref{lem:strict improvement}]
	The proof is based on case analysis on the size of $|\Gamma^*(\bu^\text{ideal}(t))|$. We consider three cases:
	\begin{enumerate}[label=(\roman*)]
		\item $\Gamma^*(\bu^\text{ideal}(t))=\Gamma(\bu^\text{ideal}(t))$, 
		\item $\Gamma^*(\bu^\text{ideal}(t))\subsetneq\Gamma(\bu^\text{ideal}(t))$ and $|\Gamma^*(\bu^\text{ideal}(t))|=|\Gamma^*(\bu^\text{ideal}(t+dt))|$, and 
		\item $\Gamma^*(\bu^\text{ideal}(t))\subsetneq\Gamma(\bu^\text{ideal}(t))$ and $|\Gamma^*(\bu^\text{ideal}(t))|\neq|\Gamma^*(\bu^\text{ideal}(t+dt))|$.
	\end{enumerate}
	In each case, we are going to show that at least one of $\bu^\text{ideal}(t)$ or $\bu^\text{auxiliary}_d(t)$ for some $d\in[m-1]$ has the desired improvement. Also, we need to show that all of them would not get worse. Formally, we state it as the following claim.
	\begin{claim}\label{claim:ideal auxiliary does not get worse}
		For every $t>0$ and $d\in[m-1]$, we have $\frac{d}{dt}\bx^\top\bu^\text{ideal}(t), \frac{d}{dt}\bx^\top\bu^\text{auxiliary}_d(t)\geq0$.
	\end{claim}
	\begin{proof}[Proof of Claim~\ref{claim:ideal auxiliary does not get worse}]
		From~\autoref{lem:ideal SNN dynamics}, we already have $\bx^\top\bu^\text{ideal}(t)\geq0$. For any $d\in[m-1]$, consider three cases as in~\autoref{def:auxiliary}.
		
		If $|\Gamma^*(\bu^\text{ideal}(t))|=|\Gamma^*(\bu^\text{ideal}(t+dt))|=d$, then $\frac{d}{dt}\bx^\top\bu^\text{auxiliary}_d(t)=\bx^\top F^\top(\br-\br^\text{ideal}(t))\geq0$.
  
		If $|\Gamma^*(\bu^\text{ideal}(t))|\neq|\Gamma^*(\bu^\text{ideal}(t+dt))|=d$, then by~\autoref{lem:auxiliary SNN jump} we have $$\bx^\top\left(\bu^\text{auxiliary}_d(t+dt)-\bu^\text{auxiliary}_d(t)\right)>0$$ 
        and thus  $\frac{d}{dt}\bx^\top\bu^\text{auxiliary}_d(t)\geq0$.
		
		Finally, when none of the above happens, we simply have $\frac{d}{dt}\bx^\top\bu^\text{auxiliary}_d(t)=0$.
	\end{proof}
	With Claim~\ref{claim:ideal auxiliary does not get worse}, it suffices to show that at least one of $\bu^\text{ideal}(t)$ or $\bu^\text{auxiliary}_d(t)$ for some $d\in[m-1]$ has the desired improvement in all of the above three cases.
	
	\textbf{Case (i)}: In this case, $F^\top\br^\text{ideal}(t)=\Pi_{F_{\Gamma(\bu^\text{ideal}(t))}\bx}$. Thus, by~\autoref{lem:ideal SNN dynamics}, we have $\frac{d}{dt}\bx^\top\bu^\text{ideal}(t)=\bx^\top\left(F^\top\br-\br^\text{ideal}(t)\right)$.
	
	\textbf{Case (ii)}: In this case, let $d=|\Gamma^*(\bu^\text{ideal}(t+dt))|$. By~\autoref{def:auxiliary}, we have $\frac{d}{dt}\bx^\top\bu^\text{auxiliary}(t)=\bx^\top\left(F^\top\br-\br^\text{ideal}(t)\right)$.
	
	\textbf{Case (iii)}: In this case, let $d=|\Gamma^*(\bu^\text{ideal}(t+dt))|$. By~\autoref{lem:auxiliary SNN jump}, we have $$\bx^\top\left(\bu^\text{auxiliary}_d(t+dt)-\bu^\text{auxiliary}_d(t)\right)>0$$ and thus $\frac{d}{dt}\bx^\top\bu^\text{auxiliary}_d(t)\geq\bx^\top\left(F^\top\br-\br^\text{ideal}(t)\right)$.
	
	This completes the proof of~\autoref{lem:strict improvement}.
\end{proof}

Finally, before we go into the final proof for~\autoref{thm:l1}, we need the following lemma about some properties about the ideal solution defined in~\autoref{def:ideal solution}.

\begin{lemma}[properties of ideal solution]\label{lem:ideal solution properties}
	For any $t\geq0$, we have the following.
	\begin{enumerate}
		\item $\bx^\top F^\top\br^{\text{ideal}}(t)=\|F^\top\br^{\text{ideal}}(t)\|_2^2$,
		\item $\|\bx-F^\top\br^{\text{ideal}}(t)\|_2^2=\|\bx\|_2^2-\|F^\top\br^{\text{ideal}}(t)\|_2^2$, and
	\end{enumerate}
\end{lemma}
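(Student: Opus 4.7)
The plan is to derive both identities from the optimality (KKT) conditions of the convex program defining $\br^\text{ideal}(t)$. Recall that $\br^\text{ideal}(t)$ minimizes $\tfrac12\|\bx-F^\top\br\|_2^2$ over the convex set of non-negative $\br$ with support contained in $\Gamma(\bu^\text{ideal}(t))$. Since the feasible set is polyhedral and the objective is convex, KKT is both necessary and sufficient at the minimizer. Writing out stationarity, the gradient $-F(\bx-F^\top\br^\text{ideal}(t))$ must lie in the normal cone of the feasible set at $\br^\text{ideal}(t)$; equivalently, there exist multipliers $\blambda \geq 0$ (supported on $\Gamma(\bu^\text{ideal}(t))$) such that $F_i^\top(F^\top\br^\text{ideal}(t)-\bx) = -\lambda_i$ for $i\in\Gamma(\bu^\text{ideal}(t))$, with complementary slackness $\lambda_i\,\br^\text{ideal}_i(t) = 0$ and $\br^\text{ideal}_i(t) = 0$ for $i\notin\Gamma(\bu^\text{ideal}(t))$.

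For Property~1, I would take the inner product of the stationarity relation with $\br^\text{ideal}(t)$. The terms indexed by $i \notin \Gamma(\bu^\text{ideal}(t))$ vanish because $\br^\text{ideal}_i(t) = 0$, while for $i \in \Gamma(\bu^\text{ideal}(t))$ the complementary slackness condition $\lambda_i\,\br^\text{ideal}_i(t) = 0$ forces $\br^\text{ideal}(t)^\top F (F^\top\br^\text{ideal}(t) - \bx) = 0$. Rearranging gives $\bx^\top F^\top \br^\text{ideal}(t) = \|F^\top\br^\text{ideal}(t)\|_2^2$, which is exactly Property~1.

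For Property~2, I would simply expand the squared norm and substitute Property~1:
\begin{equation*}
\|\bx - F^\top\br^\text{ideal}(t)\|_2^2 = \|\bx\|_2^2 - 2\,\bx^\top F^\top \br^\text{ideal}(t) + \|F^\top \br^\text{ideal}(t)\|_2^2 = \|\bx\|_2^2 - \|F^\top \br^\text{ideal}(t)\|_2^2,
\end{equation*}
where the second equality uses Property~1 to cancel one copy of $\|F^\top\br^\text{ideal}(t)\|_2^2$ against the cross term.

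I do not expect a genuine obstacle: the statement is essentially a repackaging of the fact that the residual $\bx - F^\top\br^\text{ideal}(t)$ is orthogonal to the optimizer $F^\top\br^\text{ideal}(t)$ itself, a standard Pythagorean decomposition for projections onto convex cones. The only minor care point is making sure the support constraint $\br_i = 0$ for $i\notin\Gamma(\bu^\text{ideal}(t))$ is incorporated correctly when taking the inner product with $\br^\text{ideal}(t)$; this is automatic since those coordinates contribute nothing. No non-degeneracy hypothesis on $F$ is needed for this lemma.
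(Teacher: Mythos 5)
Your proof is correct and follows essentially the same route as the paper: the paper likewise writes the conic projection defining $\br^{\text{ideal}}(t)$ as a constrained least-squares program, combines the vanishing Lagrangian gradient $FF^\top\br - F\bx - \by^* = 0$ with complementary slackness $\br^\top\by^* = 0$ to get the orthogonality $(F^\top\br)^\top(F^\top\br - \bx) = 0$, and reads off both identities from there. Your observation that no non-degeneracy of $F$ is needed is also consistent with the paper, which states this property of conic projections for arbitrary $F$.
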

\begin{proof}[Proof of~\autoref{lem:ideal solution properties}]
	The lemma is directly followed by the following property of conic projection. For any $F\in\Real^{n\times m}$, $\bx\in\Real^m$, and $\Gamma\subseteq[\pm n]$ be a valid set, we have $\bx^\top F\bx_{F,\bx,\Gamma}=\|F\bx_{F,\bx,\Gamma}\|_2^2$. In the following, we are going to first prove this property of conic projection and then use it to prove the lemma.
	
	Let us rewrite the definition of conic projection as an optimization program.
	\begin{equation}\label{op:conicproj}
	\begin{aligned}
	& \underset{\br\in\Real^n}{\text{minimize}}
	& & \frac{1}{2}\|\bx-F^\top\br\|_2^2 \\
	& \text{subject to}
	& & \br_j\geq0,\ j\in\Gamma,\\
	& & & \br_i=0,\ i,-i\notin\Gamma.
	\end{aligned}
	\end{equation}
	Let $\by$ be the dual variable of~\autoref{op:conicproj} and $\by^*$ be the optimal dual value, the Lagrangian of~\autoref{op:conicproj} is
	\begin{equation*}
	\mathcal{L}(\br) = \frac{1}{2}\|\bx-F^\top\br\|_2^2-\by^\top\br,
	\end{equation*}
	and its gradient is
	\begin{equation*}
	\nabla_{\br}\mathcal{L}(\br) = FF^\top\br - F\bx - \by.
	\end{equation*}
	By the KKT condition, we know that the optimal primal solution $\br_{F,\bx,\Gamma}$ and the optimal dual solution $\by^*$ make the gradient of the Lagrangian diminish.
	\begin{equation}\label{eq:idealalgorithm-conicproj-KKT-gradient}
	\nabla_{\br}\mathcal{L}(\br_{F,\bx,\Gamma}) = FF^\top\br_{F,\bx,\Gamma} - F\bx - \by^*=0,
	\end{equation}
	and the complementary slackness
	\begin{equation}\label{eq:idealalgorithm-conicproj-KKT-cs}
	\br_{F,\bx,\Gamma}^\top\by^*=0.
	\end{equation}
	By~\autoref{eq:idealalgorithm-conicproj-KKT-gradient} and~\autoref{eq:idealalgorithm-conicproj-KKT-cs}, we have
	\begin{equation*}\label{eq:idealalgorithm-conicproj-orthogonal}
	(F^\top\br_{F,\bx,\Gamma})^\top(F^\top\br_{F,\bx,\Gamma}-\bx)=0.
	\end{equation*}
	As a result, $\bx^\top F^\top\br_{F,\bx,\Gamma}=\|F^\top\br_{F,\bx,\Gamma}\|_2^2$.
	
	This completes the proof of~\autoref{lem:ideal solution properties}.
\end{proof}

\subsection{The convergence of dual SNN}\label{sec:l1 convergence}
In this subsection, we are going to prove the main convergence theorem of the dual SNN using ideal and auxiliary SNN. The following lemma says that at least one of ideal SNN or auxiliary SNN improves at each step.

The following lemma shows the monotonicity of the residual error $\|\bx^{\text{ideal}}-F^\top\br\|_2$.
\begin{lemma}[monotonicity of residual error]\label{lemma:idealalgorithm-primal-nondecreasing}
	There exists a polynomial $\alpha(\cdot)$ such that when $0<\alpha\leq \alpha(\frac{\gamma(F)}{n\cdot\lambda_{\max}})$, we have $\|\bx-F^\top\br^{\text{ideal}}(t)\|_2$ is non-increasing and $\|F^\top\br^{\text{ideal}}(t)\|_2$ is non-decreasing in $t$.
\end{lemma}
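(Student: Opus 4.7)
The plan is to reduce both monotonicity statements to a single structural claim: the active-wall set $\Gamma(\bu^{\text{ideal}}(t))$ is non-decreasing in $t$ as a subset of $[\pm n]$. Since by Definition \ref{def:ideal solution} the ideal solution $\br^{\text{ideal}}(t)$ depends on $t$ only through $\Gamma(\bu^{\text{ideal}}(t))$, monotonicity of the active set immediately gives a non-decreasing feasible region for the optimization problem defining $\br^{\text{ideal}}$, from which the non-increasing behavior of the residual norm follows. In particular, on any time interval where $\Gamma$ is constant, $\br^{\text{ideal}}(t)$ is literally constant, so the only events to analyze are those where $\Gamma$ changes.

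First I would separately handle the two mechanisms that can alter $\bu^{\text{ideal}}(t)$. During a spiking event, Lemma \ref{lem:ideal SNN unchaged} guarantees that, provided $\alpha$ is upper bounded by the polynomial supplied there, the updated potential $\bu(t)-\alpha F^\top\bs(t)$ remains in the same partition piece $S_{\bu^{\text{ideal}}(t)}$, so $\bu^{\text{ideal}}$ and hence $\Gamma(\bu^{\text{ideal}})$ are unchanged by spikes. During smooth external charging, Lemma \ref{lem:ideal SNN dynamics} shows that $\bu^{\text{ideal}}$ evolves in the direction $\bx-\Pi_{F_{\Gamma(\bu^{\text{ideal}}(t))}}\bx$, which is by construction orthogonal to $F_i$ for every $i\in\Gamma(\bu^{\text{ideal}}(t))$; consequently $\langle F_i,\bu^{\text{ideal}}\rangle=1-\tau$ is preserved for each currently-active wall $i$, and the only possible changes to $\Gamma$ are the addition of new walls $j\notin\Gamma$ when $\bu^{\text{ideal}}$ meets $W_j$. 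This establishes that $\Gamma(\bu^{\text{ideal}}(t))$ is non-decreasing across both smooth intervals and jump events. The $\tau$-dependence in the condition of Lemma \ref{lem:ideal SNN unchaged} is absorbed into the polynomial $\alpha(\cdot)$ by using the niceness-based bound on $\tau$ already imposed in the chapter, so the final requirement becomes $\alpha\leq\alpha(\gamma(F)/(n\lambda_{\max}))$ as stated.

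Given that the feasible set $\{\br\geq 0 : \br_i=0\ \forall i\notin\Gamma(\bu^{\text{ideal}}(t))\}$ can only grow with $t$, the minimum value $\|\bx-F^\top\br^{\text{ideal}}(t)\|_2$ is monotone non-increasing. For the second claim I would invoke item (2) of Lemma \ref{lem:ideal solution properties}, the Pythagorean identity
\begin{equation*}
\|\bx-F^\top\br^{\text{ideal}}(t)\|_2^2+\|F^\top\br^{\text{ideal}}(t)\|_2^2=\|\bx\|_2^2,
\end{equation*}
whose right-hand side is independent of $t$; since the first term is non-increasing, $\|F^\top\br^{\text{ideal}}(t)\|_2$ must be non-decreasing.

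The main obstacle will be the bookkeeping at jump events, namely confirming that $\Gamma$ cannot lose an index as $\bu^{\text{ideal}}$ transitions between pieces of the partition. The orthogonality of the smooth-flow direction handles generic motion, but one must verify degenerate configurations, such as simultaneous addition of multiple walls or the flow direction $\bx-\Pi_{F_{\Gamma}}\bx$ vanishing so that $\bu^{\text{ideal}}$ comes to rest, and check that the discrete spike-then-charge decomposition assumed in the proof of Lemma \ref{lem:ideal SNN dynamics} is consistent across these moments. A secondary subtlety is keeping track of how small $\alpha$ must be: it is the composition of the bound from Lemma \ref{lem:ideal SNN unchaged} with the constraint on $\tau$, and one needs to verify that this composition still yields a single polynomial in $\gamma(F)/(n\lambda_{\max})$, which is routine but must be done explicitly to match the statement.
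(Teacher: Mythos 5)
Your reduction hinges on the claim that the active set $\Gamma(\bu^{\text{ideal}}(t))$ is non-decreasing in $t$, and that is precisely where the argument breaks: indices \emph{can} leave the active set, and handling that departure is the entire content of the paper's proof. You flag this yourself as ``the main obstacle'' and assert it will resolve in your favor, but it does not. The orthogonality of the smooth-flow direction $\bx-\Pi_{F_{\Gamma}}\bx$ to the active walls only controls the wall constraints $\langle F_i,\bu^{\text{ideal}}\rangle=1-\tau$; it says nothing about the cone coordinates of $\bu(t)$ inside $S_{\bu^{\text{ideal}}(t)}$. When the tangential component $\Pi_{F_{\Gamma}}\bx$ has a negative coefficient on some $F_j$ and the corresponding coordinate is already zero, the dual trajectory exits $S_{\bu^{\text{ideal}}(t)}$ through the facet $a_j=0$ rather than through a new wall, and the coupling forces $j$ out of $\Gamma$. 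Equivalently, in the paper's language, a neuron $j$ with $F_j^\top(\bx-F^\top\br^{\text{ideal}}(t))<0$ has net downward drift and falls off its wall. So your single structural claim is false, and on intervals where it fails your proof gives no control on either norm.

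The paper's proof splits into exactly the two cases your plan collapses into one. Case (1), a new index joins: the feasible cone grows, so $\|F^\top\br^{\text{ideal}}(t)\|_2$ cannot decrease, and the Pythagorean identity from Lemma \ref{lem:ideal solution properties} transfers this to the residual --- this part of your argument is fine and matches the paper up to which of the two quantities you treat first. Case (2), an index $j$ leaves: the paper shows by a perturbation-of-optimality argument that the departing index necessarily satisfies $\br^{\text{ideal}}_j(t)=0$ (otherwise moving $\br^{\text{ideal}}(t)$ by a small multiple of $-F_j/\|F_j\|_2$ stays in the old cone and strictly decreases the residual, contradicting optimality of the conic projection). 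Hence $F^\top\br^{\text{ideal}}$ is unchanged when $j$ drops out, and both monotonicity claims survive. Your proposal contains no substitute for this argument, so it has a genuine gap rather than an alternative route.
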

\begin{proof}[Proof of~\autoref{lemma:idealalgorithm-primal-nondecreasing}]
	Consider two cases.
	\begin{enumerate}[label=(\arabic*)]
		\item When there is a new index joins the active set. Clearly that $\|F^\top\br^{\text{ideal}}(t)\|_2$ won't decrease since the new cone contains the old one. By~\autoref{lem:ideal solution properties}, we know that $\|\bx-F^\top\br^{\text{ideal}}(t)\|_2$ is non-increasing.
		\item When there is an index leaves the the active set. Without loss of generality, assume $j\in[\pm n]$ leaves the active set. In the following, we want to show that $\bx^{\text{ideal}}_{|j|}(t)=0$. As the direction of $\bu^{\text{ideal}}(t)$ is $\bx-F^\top\br^{\text{ideal}}(t)$, it means that $F_j^{\top}(\bx-F^\top\br^{\text{ideal}}(t))<0$. Suppose $\bx^{\text{ideal}}_{|j|}(t)\neq0$ for contradiction. Since $j$ was in the active set, it is the case that $\bx^{\text{ideal}}_j(t)>0$. Take $0<\epsilon<\min\{\bx^{\text{ideal}}_j(t)/2,-(\bx-F^\top\br^{\text{ideal}}(t))^{\top}F_j/\|F_j\|_2\}$ and define $\br'=\br^{\text{ideal}}(t) - \epsilon\cdot F_j/\|F_j\|_2$. Note that $\br'$ lies in the original active cone. Observe that
		\begin{align*}
		\|\bx-F^\top\br'\|_2^2 &= \|\bx - F^\top\br^{\text{ideal}}(t) - \epsilon\cdot F_j/\|F_j\|_2\|_2^2\\
		&=\|\bx-F^\top\br^{\text{ideal}}(t)\|_2^2 + \|\epsilon\cdot F_j/\|F_j\|_2\|_2^2+2\epsilon\cdot(\bx-F^\top\br^{\text{ideal}}(t))^{\top}F_j/\|F_j\|_2\\
		&\leq\|\bx-F^\top\br^{\text{ideal}}(t)\|_2^2 + \epsilon^2 - 2\epsilon^2\\
		&<\|\bx-F^\top\br^{\text{ideal}}(t)\|_2^2
		\end{align*}
		which contradicts to the optimality of $\br^{\text{ideal}}(t)$ since $\br'$ is also a feasible solution. We conclude that $\br^{\text{ideal}}_j(t)=0$. As a result, $F^\top\br^{\text{ideal}}(t)$ remains the same and $\|F^\top\br^{\text{ideal}}(t)\|_2$ won't decrease.
	\end{enumerate}
\end{proof}

The next lemma upper bounds the $\ell_2$ residual error of $\br^\text{ideal}(t)$.
\begin{lemma}[convergence of residual error]\label{lemma:idealalgorithm-l2bound}
	There exists a polynomial $\alpha(\cdot)$ such that when $0<\alpha\leq \alpha(\frac{\gamma(F)}{n\cdot\lambda_{\max}})$, we have  for any $\epsilon>0$, when $t\geq\frac{m\cdot\OPT^{\ell_1}}{\epsilon\cdot\|\bx\|_2}$, $\|\bx-F^\top\br^{\text{ideal}}(t)\|_2\leq\epsilon\cdot\|\bx\|_2$.
\end{lemma}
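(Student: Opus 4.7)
The plan is to identify a potential function whose time derivative lower-bounds the squared residual error $R(t)^2 := \|\bx - F^\top \br^\text{ideal}(t)\|_2^2$, upper-bound it by a small multiple of $\OPT^{\ell_1}$, and then combine with the monotonicity of $R(t)$ from Lemma~monotonicity of residual error to read off an $O(1/\sqrt t)$ convergence rate. Following Lemma~strict improvement I would take
\[
\Phi(t) := \bx^\top\!\left(\bu^\text{ideal}(t) + \sum_{d \in [m-1]} \bu^\text{auxiliary}_d(t)\right),
\]
and observe that the per-case analysis in that lemma's proof shows that in every regime some summand improves at rate $\bx^\top F^\top(\br - \br^\text{ideal}(t))$ while the remaining summands are non-decreasing (Claim~ideal auxiliary does not get worse). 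Since $\br$ is $\ell_1$-feasible, $F^\top \br = \bx$, so Lemma~ideal solution properties collapses this improvement to $\|\bx\|_2^2 - \|F^\top\br^\text{ideal}(t)\|_2^2 = R(t)^2$; hence $\Phi'(t) \geq R(t)^2$.

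Next I would establish $\Phi(t) \leq m \cdot \OPT^{\ell_1}$. For $\bu^\text{ideal}$, Lemma~ideal SNN unchanged confines the trajectory to $\mathcal{P}_{F, 1-\tau} \subseteq \mathcal{P}_{F, 1}$, and LP strong duality for the $\ell_1$ program (with dual $\max\{\bx^\top \bu : \bu \in \mathcal{P}_{F, 1}\}$) then yields $\bx^\top \bu^\text{ideal}(t) \leq \OPT^{\ell_1}$. Each auxiliary SNN admits an analogous dual-feasibility bound because resets place $\bu^\text{auxiliary}_d$ at the point $\bu^\text{super}(t+dt)$, which is a projection of the feasible $\bu(t)$ onto super-active walls, while between resets the continuous drift in direction $F^\top(\br - \br^\text{ideal}(t))$ grows $\bx^\top \bu^\text{auxiliary}_d$ only at rate $R(t)^2 \leq \|\bx\|_2^2$. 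Combining $\Phi'(s) \geq R(s)^2$ with the monotonicity $R(s) \geq R(t)$ for $s \leq t$ then gives
\[
m \cdot \OPT^{\ell_1} \;\geq\; \Phi(t) - \Phi(0) \;\geq\; \int_0^t R(s)^2\,ds \;\geq\; t \cdot R(t)^2,
\]
so that $R(t) \leq \sqrt{m\,\OPT^{\ell_1}/t}$, which is of the form required for the lemma (absorbing constants into the threshold $t \gtrsim m\,\OPT^{\ell_1}/(\epsilon\|\bx\|_2)^{2}$).

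The main obstacle will be the uniform upper bound on the auxiliary SNNs. Unlike $\bu^\text{ideal}$, the trajectories $\bu^\text{auxiliary}_d(t)$ are stitched together from discrete resets to $\bu^\text{super}(t+dt)$ and continuous drifts in a direction $F^\top(\br - \br^\text{ideal}(t))$ that need not preserve membership in any fixed polytope. The way around this is to note that resets can happen only finitely many times, because there are only finitely many distinct active sets $\Gamma^*(\bu^\text{ideal}(t))$ of any given size $d$; by Lemma~auxiliary SNNs jump each reset strictly improves $\bx^\top \bu^\text{auxiliary}_d$ by a definite positive amount, and dual feasibility of each $\bu^\text{super}$ at the reset moment lets one telescope across the resets to conclude $\bx^\top \bu^\text{auxiliary}_d(t) \leq \OPT^{\ell_1}$ throughout.
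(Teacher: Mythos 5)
Your proposal follows essentially the same route as the paper's proof: the same potential $\Phi(t)=\bx^\top\bigl(\bu^\text{ideal}(t)+\sum_{d}\bu^\text{auxiliary}_d(t)\bigr)$, whose rate of increase is lower-bounded by the squared residual via the strict-improvement lemma together with the properties of the ideal solution, capped above by $m\cdot\OPT^{\ell_1}$ through dual feasibility of each summand, and combined with the monotonicity of the residual — the paper merely phrases the conclusion as a contradiction rather than as your explicit rate $R(t)\le\sqrt{m\,\OPT^{\ell_1}/t}$. Your remark that this honestly yields a threshold scaling as $1/\epsilon^2$ rather than the lemma's stated $1/\epsilon$ is apt: the paper's own chain of inequalities lower-bounds the integral by $t\cdot\epsilon\cdot\|\bx\|_2$ where the squared-residual bound only gives $t\cdot\epsilon^2\cdot\|\bx\|_2^2$, so your bookkeeping is the more careful one.
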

\begin{proof}[Proof of~\autoref{lemma:idealalgorithm-l2bound}]
	Assume the statement is wrong, i.e., $\|\bx-F^\top\br^{\text{ideal}}(t)\|_2>\epsilon\cdot\|\bx\|_2$. Then by~\autoref{lemma:idealalgorithm-primal-nondecreasing}, for any $0\leq s\leq t$,
	\begin{align*}
	\|\bx-F^\top\br^{\text{ideal}}(s)\|_2^2 &= \|\bx\|_2^2-\|F\br^{\text{ideal}}(s)\|_2^2\\
	&\geq\|\bx\|_2^2-\|F^\top\br^{\text{ideal}}(t)\|_2^2\\
	&=\|\bx-F^\top\br^{\text{ideal}}(t)\|_2^2>\epsilon^2\cdot\|\bx\|_2^2.
	\end{align*}
	Since $t\geq\frac{\OPT^{\ell_1}}{\epsilon\cdot\|\bx\|_2}$, by~\autoref{lem:strict improvement},
	\begin{align*}
	\bx^\top\left(\bu^\text{ideal}(t)+\sum_{d\in[m-1]}\bu^\text{auxiliary}_d(t)\right) &= \int_0^t\bx^\top d\bu^\text{ideal}(t)+\sum_{d\in[m-1]}\int_0^t\bx^\top d\bu^\text{auxiliary}_d(t)\\
	&> t\cdot\epsilon\cdot\|\bx\|_2\geq m\cdot\OPT^{\ell_1},
	\end{align*}
	which is a contradiction to the optimality of $\OPT^{\ell_1}$ since $\bx^\top\bu^{\text{ideal}}(t),\bx^\top\bu^\text{auxiliary}_d(t)\leq\OPT^{\ell_1}$ for all $d\in[m-1]$. As a result, we conclude that $\|\bx-F^\top\br^{\text{ideal}}(t)\|_2\leq\epsilon\cdot\|\bx\|_2$.
\end{proof}

Finally, the following lemma shows that the $\ell_1$ error of $\br^\text{ideal}(t)$ can be upper bounded by the $\ell_2$ error via the strong duality of $\ell_1$ minimization problem and perturbation trick.
\begin{lemma}[convergence of $\ell_1$ error]\label{lemma:idealalgorithm-OPTbounds}
	For any $t\geq0$,
	\begin{equation}
	\left|\|\br^{\text{ideal}}(t)\|_1-\OPT^{\ell_1}\right|\leq\sqrt{\frac{n}{\lambda_{\min}}}\cdot\|\bx-F^\top\br^{\text{ideal}}(t)\|_2
	\end{equation}
\end{lemma}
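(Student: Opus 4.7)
The plan is to establish the two-sided bound by handling the two directions separately via duality and a perturbation argument. The starting observation is a uniform bound on the dual feasible set: any $\bu \in \mathcal{P}_{F,1}$ satisfies $\bu^\top F^\top F \bu = \|F\bu\|_2^2 \le n$, which together with $F^\top F \succeq \lambda_{\min} I$ yields $\|\bu\|_2 \le \sqrt{n/\lambda_{\min}}$. By the scaling $\mathcal{P}_{F,1-\tau} \subseteq (1-\tau)\mathcal{P}_{F,1}$, we also have $\|\bu^{\text{ideal}}(t)\|_2 \le (1-\tau)\sqrt{n/\lambda_{\min}}$.

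For the upper bound $\|\br^{\text{ideal}}(t)\|_1 \le \OPT^{\ell_1} + \sqrt{n/\lambda_{\min}}\|\bx - F^\top\br^{\text{ideal}}(t)\|_2$, I would exploit that the support of $\br^{\text{ideal}}(t)$ is contained in $\Gamma^*(\bu^{\text{ideal}}(t)) \subseteq \Gamma(\bu^{\text{ideal}}(t))$, so $F_i^\top \bu^{\text{ideal}}(t) = 1-\tau$ for every $i$ in the support. Since $\br^{\text{ideal}}(t) \ge 0$, this gives the key identity
\[
(1-\tau)\|\br^{\text{ideal}}(t)\|_1 \;=\; \sum_i r_i^{\text{ideal}}(t)\, F_i^\top \bu^{\text{ideal}}(t) \;=\; (F^\top\br^{\text{ideal}}(t))^\top \bu^{\text{ideal}}(t).
\]
Splitting $F^\top\br^{\text{ideal}}(t) = \bx + (F^\top\br^{\text{ideal}}(t) - \bx)$, applying Cauchy--Schwarz to the second term, and using strong duality of $\ell_1$ minimization (so $\bx^\top\bu \le \OPT^{\ell_1}$ for any $\bu \in \mathcal{P}_{F,1}$, in particular for $\bu^{\text{ideal}}(t)/(1-\tau)$), the $(1-\tau)$ factors cancel and we obtain the desired upper bound.

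For the lower bound $\|\br^{\text{ideal}}(t)\|_1 \ge \OPT^{\ell_1} - \sqrt{n/\lambda_{\min}}\|\bx - F^\top\br^{\text{ideal}}(t)\|_2$, I will use a perturbation/duality argument. Let $\bx' := F^\top\br^{\text{ideal}}(t)$ and denote by $\OPT^{\ell_1}(\bx')$ the optimal value of the $\ell_1$ minimization program with right-hand side $\bx'$. By construction $\br^{\text{ideal}}(t) \ge 0$ and $F^\top\br^{\text{ideal}}(t) = \bx'$, so $\br^{\text{ideal}}(t)$ is feasible for the perturbed primal, giving $\OPT^{\ell_1}(\bx') \le \|\br^{\text{ideal}}(t)\|_1$. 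Next, let $\bu^*$ be a maximizer of the original dual $\max_{F\bu \le 1} \bx^\top \bu = \OPT^{\ell_1}$; since $\bu^*$ depends only on the feasible set (not on the objective), it is also feasible for the perturbed dual, whence by weak duality $\OPT^{\ell_1}(\bx') \ge \bx'^\top \bu^* = \OPT^{\ell_1} + (\bx'-\bx)^\top \bu^*$. Applying Cauchy--Schwarz together with the uniform bound $\|\bu^*\|_2 \le \sqrt{n/\lambda_{\min}}$ from the opening paragraph yields $\OPT^{\ell_1}(\bx') \ge \OPT^{\ell_1} - \sqrt{n/\lambda_{\min}}\|\bx - F^\top\br^{\text{ideal}}(t)\|_2$, and chaining the two inequalities completes the lower bound.

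The main obstacle is the upper bound: it hinges on the (seemingly innocuous) identity $(1-\tau)\|\br^{\text{ideal}}(t)\|_1 = (F^\top\br^{\text{ideal}}(t))^\top \bu^{\text{ideal}}(t)$, which requires that every index in the support of $\br^{\text{ideal}}(t)$ be an active wall of $\bu^{\text{ideal}}(t)$ at the exact level $1-\tau$. This is a KKT-type statement about the ideal solution and is the place where the $(1-\tau)$ factor that one ``wants'' to appear on the right-hand side is instead absorbed into a clean cancellation with $\|\bu^{\text{ideal}}(t)\|_2 \le (1-\tau)\sqrt{n/\lambda_{\min}}$. In particular one must be careful that the argument uses only $i \in \Gamma^*(\bu^{\text{ideal}}(t))$ rather than all of $\Gamma(\bu^{\text{ideal}}(t))$; the complementary indices contribute zero to the sum because $r_i^{\text{ideal}}(t) = 0$ there, so the identity survives without any error term. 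The lower-bound direction is comparatively routine once the dual-polytope diameter bound is in hand.
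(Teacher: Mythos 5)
Your proposal is correct and in substance follows the same route as the paper: pass to the perturbed $\ell_1$ program with right-hand side $F^\top\br^{\text{ideal}}(t)$, use that $\bu^{\text{ideal}}(t)$ (suitably normalized) is a dual certificate via complementary slackness on the active set, and finish with Cauchy--Schwarz plus the dual-polytope diameter bound $\|\bu\|_2\le\sqrt{n/\lambda_{\min}}$ --- your upper-bound computation is exactly what one gets by unrolling the Boyd--Vandenberghe perturbation inequality that the paper cites. If anything you are more careful on two points: you track the $(1-\tau)$ normalization explicitly (the paper's KKT verification asserts $F_i^\top\bu^{\text{ideal}}(t)=1$ on the active set, though by the definition of $\Gamma(\bu^{\text{ideal}})$ the level is $1-\tau$), and you prove both directions of the absolute value, whereas the paper only writes out the upper bound; just note that the $\sqrt{n/\lambda_{\min}}$ bound needs $\bu$ in the row space of $F$ (automatic here since niceness forces $F$ to have full column rank).
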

\begin{proof}[Proof sketch]
	The proof of~\autoref{lemma:idealalgorithm-OPTbounds} consists of two steps. First, we show that the primal and the dual solution pair of ideal SNN at time $t$ is the optimal solution pair of a \textit{perturbed $\ell_1$ minimization problem} defined as shifting the $\bx$ in the constraint $F\bx=\bx$ to $F^\top\br^{\text{ideal}}(t)$. See~\autoref{op:basispursuit-ideal-perturbed} for the definition of the perturbed program. Next, by the standard perturbation theorem from optimization, we can upper bound $\|\br^{\text{ideal}}(t)\|_1$ with the distance between the original program and the perturbed program. Specifically, the difference induced by the perturbation is related to the $\ell_2$ norm of the differnce between $\bx$ and $F^\top\br^{\text{ideal}}(t)$, which is exactly the residual error. As a result, we know that the difference between the optimal value of the original $\ell_1$ minimization program and that of the perturbed program will converge to 0. Namely, we yield a convergence of $\|\br^{\text{ideal}}(t)\|_1$ to $\OPT^{\ell_1}$. See Section~\ref{sec:missing proofs convergent analysis l1} for more details.
\end{proof}

Finally, we can prove the main theorem in this section as follows.
\begin{proof}[Proof of~\autoref{thm:l1}]
	Pick $t_0=\Theta(\frac{m\cdot\sqrt{n}\cdot\|\bx\|_2}{\epsilon\cdot\sqrt{\lambda_{\min}\cdot\OPT^{\ell_1}}})$. By~\autoref{lemma:idealalgorithm-l2bound}, for any $t\geq t_0$, we can upper bound the $\ell_2$ residual error by
	$$\|\bx-F^\top\br^\text{ideal}(t)\|_2\leq\sqrt{\frac{\lambda_{\min}}{n}}\cdot\frac{\epsilon}{10}\cdot\OPT^{\ell_1}.$$
	Next, by~\autoref{lemma:idealalgorithm-OPTbounds}, we can then upper bound the $\ell_1$ error by
	\[
    \left|\|\br^\text{ideal}(t)\|_1-\OPT^{\ell_1}\right|\leq\sqrt{\frac{n}{\lambda_{\min}}}\cdot\|\bx-F^\top\br^\text{ideal}(t)\|_2\leq\frac{\epsilon}{10}\cdot\OPT^{\ell_1} \, .
    \]
	Now, the only thing left is connecting the ideal solution $\br^\text{ideal}(t)$ to the firing rate $\br(t)$. First, divide $\br(t)$ into two parts: the firing rate $\br^{[0,t_0]}$ before time $t_0$ and the firing rate $\bx^{(t_0,t]}$ from time $t_0$ to $t$. That is, $\br(t) = \frac{t_0}{t}\cdot\br^{[0,t_0]}+\frac{t-t_0}{t}\cdot\br^{(t_0,t]}$.
	
	Note that after $t_0\geq\Omega(\frac{m\cdot\sqrt{n}\cdot\|\bx\|_2}{\epsilon\cdot\sqrt{\lambda_{\min}\cdot\OPT^{\ell_1}}})$, the ideal solution has $\ell_1$ norm at most $(1+\epsilon)\cdot\OPT^{\ell_1}$. Thus, $\|\br^{(t_0,t]}\|_1\leq(1+(1+O(\frac{1}{t}))\cdot\frac{\epsilon}{10})\cdot\OPT^{\ell_1}\leq(1+\frac{\epsilon}{5})\cdot\OPT^{\ell_1}$.
	As for $\br^{[0,t_0]}$, from~\autoref{lemma:idealalgorithm-OPTbounds}, we have $\|\br^{[0,t_0]}\|_1\leq\OPT^{\ell_1}+\sqrt{\frac{n}{\lambda_{\min}}}\cdot\|\bx\|_2$.
	Combine the two, we have
	\[
	\left|\|\br(t)\|_1-\OPT^{\ell_1}\right|\leq\frac{\epsilon\cdot\OPT^{\ell_1}}{5} + \frac{t_0\cdot\left(\OPT^{\ell_1}+\sqrt{\frac{n}{\lambda_{\min}}}\cdot\|\bx\|_2\right)}{t}\leq\epsilon\cdot\OPT^{\ell_1},
	\]
	where the last inequality holds since $t\geq\Omega(\frac{m^2\cdot n\cdot\|\bx\|_2^2}{\epsilon^2\cdot\lambda_{\min}\cdot\OPT^{\ell_1}})$. This completes the proof for~\autoref{thm:l1}.
\end{proof}

\subsection{Proofs for the properties of ideal and auxiliary SNN}\label{sec:missing proofs ideal auxiliary SNN}
\begin{proof}[Proof of~\autoref{lem:ideal partition}]
	Let us start with an observation on~\autoref{def:ideal partition} about the points on the boundary of  the ideal polytope $\mathcal{P}_{F,1-\tau}$.
	\begin{claim}\label{claim:ideal active PD}
		If $F$ is non-degenerate, then for any $\bu^\text{ideal}\in\partial\mathcal{P}_{F,1-\tau}$, $\text{rank}(F_{\Gamma(\bu^\text{ideal})})=|\Gamma(\bu^\text{ideal})|$. Thus, $F_{\Gamma(\bu^\text{ideal})}^\top F_{\Gamma(\bu^\text{ideal})}$ is positive definite.
	\end{claim}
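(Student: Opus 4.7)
The claim splits naturally into two sub-assertions that I will attack separately: (i) $|\Gamma(\bu^\text{ideal})| \leq m$, and (ii) once (i) is in hand, the $|\Gamma|$ row-vectors $\{F_i\}_{i \in \Gamma(\bu^\text{ideal})}$ are linearly independent, giving $\text{rank}(F_{\Gamma(\bu^\text{ideal})}) = |\Gamma(\bu^\text{ideal})|$. The positive-definiteness of $F_{\Gamma(\bu^\text{ideal})}^\top F_{\Gamma(\bu^\text{ideal})}$ then drops out of (ii) by a one-line calculation.

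\textbf{The easy step (ii).} Suppose $|\Gamma(\bu^\text{ideal})| \leq m$. First observe that $\Gamma(\bu^\text{ideal})$ never contains both $i$ and $-i$: if it did, then $F_i^\top \bu^\text{ideal} = 1-\tau$ and $F_{-i}^\top \bu^\text{ideal} = -F_i^\top \bu^\text{ideal} = 1-\tau$ would force $\tau = 1$, contradicting $\tau \in (0,1)$. So $\{|i| : i \in \Gamma(\bu^\text{ideal})\}$ has size $|\Gamma(\bu^\text{ideal})| \leq m \leq n$ and can be enlarged to some $\widetilde\Gamma \subseteq [n]$ of size exactly $m$. By non-degeneracy, the $m \times m$ submatrix of $F$ on rows $\widetilde\Gamma$ has full rank, so the vectors $\{F_i\}_{i \in \widetilde\Gamma}$ form a basis of $\Real^m$. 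Any subset of a basis is linearly independent, so $\{F_i\}_{i \in \Gamma(\bu^\text{ideal})}$ is linearly independent, giving the rank identity. For PD: any nonzero $\bz \in \Real^{|\Gamma|}$ satisfies $\bz^\top F_\Gamma^\top F_\Gamma \bz = \|F_\Gamma \bz\|_2^2 > 0$, because $F_\Gamma \bz \neq 0$ when $F_\Gamma$ has linearly independent columns.

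\textbf{The hard step (i) and the main obstacle.} To establish $|\Gamma(\bu^\text{ideal})| \leq m$ I plan to argue by contradiction. Suppose $|\Gamma(\bu^\text{ideal})| \geq m+1$ and pick any $m+1$ active indices $i_1, \ldots, i_{m+1}$. The $m+1$ vectors $\{F_{i_j}\}_{j=1}^{m+1}$ live in $\Real^m$, so they admit a nontrivial linear dependency $\sum_j c_j F_{i_j} = 0$; non-degeneracy forces every $m$-subset to be linearly independent, so every $c_j$ is nonzero (otherwise discarding a zero coefficient would produce a dependency among $\leq m$ vectors). Taking the inner product of this dependency with $\bu^\text{ideal}$ and substituting $F_{i_j}^\top \bu^\text{ideal} = 1-\tau$ collapses the left-hand side to $(1-\tau)\sum_j c_j$, forcing $\sum_j c_j = 0$ since $\tau < 1$. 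The genuine difficulty is ruling out this last zero-sum coincidence: non-degeneracy on its own does \emph{not} suffice, as the rows $(1,0), (2,1), (3,2) \in \Real^2$ are pairwise linearly independent yet satisfy the zero-sum dependence $F_1 - 2F_2 + F_3 = 0$, so three walls can indeed meet at one point. I therefore expect that closing this gap will require invoking a finer genericity condition — the most natural candidate is clause~(2) of the niceness definition, which enforces a uniform separation $\gamma(F) > 0$ between distinct vertices of $\mathcal{P}_{F,1}$ and thus forbids $m+1$ walls from collapsing to a common intersection point. Combining this separation with the zero-sum coefficient identity should produce the desired contradiction and finish the proof.
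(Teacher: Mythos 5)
The paper offers no proof of this claim at all—it is stated as a bare observation at the top of the proof of Lemma~\ref{lem:ideal partition}—so there is nothing of the paper's to compare your argument against. Your step (ii) is correct and complete: once $|\Gamma(\bu^\text{ideal})|\leq m$ is known, non-degeneracy makes every $m$-subset of the (signed) rows a basis of $\Real^m$, hence every subset of $\{F_i\}_{i\in\Gamma(\bu^\text{ideal})}$ is linearly independent, and positive definiteness of the Gram matrix follows. More importantly, the obstacle you flag in step (i) is genuine: plain non-degeneracy does \emph{not} imply $|\Gamma(\bu^\text{ideal})|\leq m$, so the claim as literally stated is false. A concrete counterexample with unit-norm rows (so it even respects the hypotheses of \autoref{thm:l1}, unlike your $(1,0),(2,1),(3,2)$ example, which cannot be normalized without destroying the coincidence): in $\Real^3$ take the rows $e_1,e_2,e_3$ and $(2/3,2/3,-1/3)$. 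Every $3\times3$ submatrix is invertible, yet all four walls $F_i^\top\bu=1-\tau$ pass through $\bu=(1-\tau)(1,1,1)\in\partial\mathcal{P}_{F,1-\tau}$, so $|\Gamma(\bu^\text{ideal})|=4>3\geq\text{rank}(F_{\Gamma(\bu^\text{ideal})})$.

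Your proposed repair is the right one, and it closes more directly than your zero-sum computation suggests. If $|\Gamma(\bu^\text{ideal})|\geq m+1$, then (after replacing $F_{-i}$ by $-F_i$ and recording the signs) the point $\bu^\text{ideal}/(1-\tau)$ solves $F_S^\top\bv=\bx$ for every $m$-subset $S$ of the active unsigned indices, with $\bx\in\{-1,1\}^m$ the induced sign pattern; two distinct such subsets give two distinct vertices of $\mathcal{P}_{F,1}$ that coincide, i.e., lie at distance $0<\gamma(F)$, contradicting clause (2) of \autoref{def:nice}. No dependency coefficients are needed. The only genuine criticisms of your write-up are that it stops at ``should produce the desired contradiction'' rather than carrying this out, and that you should state explicitly that the claim's hypothesis must be upgraded from ``non-degenerate'' to ``nice''—which is harmless, since $\gamma(F)>0$ is assumed wherever the partition lemma is actually used.
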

	
	Next, let us show that for $\bu^\text{ideal}_1\neq\bu^\text{ideal}_2\in\mathcal{P}_{F,1-\tau}$, $S_{\bu^\text{ideal}_1}\cap S_{\bu^\text{ideal}_2}=\emptyset$. It is trivially true when at least one of them does not lie on the boundary\footnote{Note that $\bu^\text{ideal}$ does not lie on the boundary of $\mathcal{P}_{F,1-\tau}$ if and only if $\Gamma(\bu^\text{ideal})=\emptyset$.} of $\mathcal{P}_{F,1-\tau}$. Now, consider the case where both of them lie on the boundary of $\mathcal{P}_{F,1-\tau}$ and denote their active set as $\Gamma_1=\Gamma(\bu^\textit{ideal}_1)$ and $\Gamma_2=\Gamma(\bu^\textit{ideal}_2)$. To prove from contradiction, suppose there exists $\bu\in S_{\bu^\text{ideal}_1}\cap S_{\bu^\text{ideal}_2}$. By definition, we have
	\begin{align*}
	\bu &= \bu^\text{ideal}_1 + F_{\Gamma_1}^\top\bz_1\\
	&=\bu^\text{ideal}_2 + F_{\Gamma_2}^\top\bz_2,
	\end{align*}
	where $\bz_1,\bz_2\geq0$. Let $\Gamma=\Gamma_1\cap\Gamma_2$. Consider the following cases.
	\begin{itemize}
		\item ($\Gamma=\Gamma_1=\Gamma_2$) By~\autoref{def:ideal partition}, we have $F_\Gamma^\top\bu^\text{ideal}_1=F_\Gamma^\top\bu^\text{ideal}_2=\mathbf{1}$ and thus
		\begin{equation*}
		(\bz_2-\bz_1)^\top F_\Gamma^\top F_\Gamma(\bz_2-\bz_1)=(\bz_2-\bz_1)^\top F_\Gamma^\top(\bu^\text{ideal}_1-\bu^\text{ideal}_2)=0.
		\end{equation*}
		As $F_\Gamma^\top F_\Gamma$ is positive definite by Claim~\ref{claim:ideal active PD}, we have $\bz_1=\bz_2$ and $\bu^\text{ideal}_1=\bu^\text{ideal}_2$, which is a contradiction.
		
		\item ($\Gamma_1\neq\Gamma_2$) Without loss of generality, assume $\Gamma_1\backslash\Gamma\neq\emptyset$ and $\bz_1\neq\mathbf{0}$. By~\autoref{def:ideal partition}, we have
		\begin{align*}
		F_{\Gamma_1\backslash\Gamma_2}^\top\left(\bu^\text{ideal}_1-\bu^\text{ideal}_2\right)&>\mathbf{0},\\
		F_{\Gamma_2\backslash\Gamma_1}^\top\left(\bu^\text{ideal}_1-\bu^\text{ideal}_2\right)&\leq\mathbf{0},\\
		F_{\Gamma}^\top\left(\bu^\text{ideal}_1-\bu^\text{ideal}_2\right)&=\mathbf{0}.
		\end{align*}
		As $\bz_1\neq\mathbf{0}$, we then have
		\begin{align*}
		\|F_{\Gamma_2}\bz_2-F_{\Gamma_1}\bz_1\|_2^2 &= \left(F_{\Gamma_2}\bz_2-F_{\Gamma_1}\bz_1\right)^\top\left(\bu^\text{ideal}_1-\bu^\text{ideal}_2\right)\\
		&=\left(-F_{\Gamma_1\backslash\Gamma}\bz_1|_{\Gamma_1\backslash\Gamma}\right)^\top\left(\bu^\text{ideal}_1-\bu^\text{ideal}_2\right)\\
		&+\left(F_{\Gamma_2\backslash\Gamma}\bz_2|_{\Gamma_2\backslash\Gamma}\right)^\top\left(\bu^\text{ideal}_1-\bu^\text{ideal}_2\right)\\
		&+\left(F_{\Gamma}\bz_2|_{\Gamma}-F_{\Gamma}\bz_1|_{\Gamma}\right)^\top\left(\bu^\text{ideal}_1-\bu^\text{ideal}_2\right)\\
		&<0.
		\end{align*}
		Note that the reason why the last inequality holds is because $\left(-F_{\Gamma_1\backslash\Gamma}\bz_1|_{\Gamma_1\backslash\Gamma}\right)^\top\left(\bu^\text{ideal}_1-\bu^\text{ideal}_2\right)<0$.
	\end{itemize}
	Finally, it is easy to see that $\{S_{\bu^\text{ideal}}\}_{\bu^\text{ideal}\in\mathcal{P}_{F,1-\tau}}$ covers $\mathcal{P}_{F,1}$ and thus we conclude that it is indeed a partition for $\mathcal{P}_{F,1}$.
\end{proof}

\subsection{Proofs for the convergent analysis of solving $\ell_1$ minimization}\label{sec:missing proofs convergent analysis l1}

\begin{proof}[Proof of~\autoref{lemma:idealalgorithm-OPTbounds}]
	For any $t\geq0$, define the following perturbed program of (\ref{op:basispursuit}) and its dual.
	\vspace{3mm}
	\begin{minipage}{\linewidth}
		\begin{minipage}{0.45\linewidth}
			\begin{equation}\label{op:basispursuit-ideal-perturbed}
			\begin{aligned}
			& \underset{\br}{\text{minimize}}
			& & \|\br\|_1 \\
			& \text{subject to}
			& & F^\top\br-F^\top\br^{\text{ideal}}(t)=0
			\end{aligned}
			\end{equation}
		\end{minipage}
		\begin{minipage}{0.45\linewidth}
			\begin{equation}\label{op:basispursuit-ideal-perturbed-dual}
			\begin{aligned}
			& \underset{\bu\in\Real^m}{\text{maximize}}
			& & (F^\top\br^{\text{ideal}}(t))^\top\bu \\
			& \text{subject to}
			& & \|F\bu\|_{\infty}\leq1.
			\end{aligned}
			\end{equation}
		\end{minipage}
	\end{minipage}
	
	Note that $\br^{\text{ideal}}(t)$ is treated as a given constant to the optimization program. It turns out that the ideal algorithm optimizes this primal-dual perturbed program at time $t$ with the following parameters.
	\begin{lemma}\label{lemma:proofofidealalgorithm-KKT}
		For any $t\geq0$, $(\br^*,\bu^*) = (\br^{\text{ideal}}(t),\bu^{\text{ideal}}(t))$ is the optimal solutions of (\ref{op:basispursuit-ideal-perturbed}).
	\end{lemma}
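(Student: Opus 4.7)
The plan is to verify that $(\br^*, \bu^*) = (\br^{\text{ideal}}(t), \bu^{\text{ideal}}(t))$ satisfies the KKT conditions for the LP pair (\ref{op:basispursuit-ideal-perturbed})--(\ref{op:basispursuit-ideal-perturbed-dual}); by standard LP duality, this then gives simultaneous primal and dual optimality. Primal feasibility is essentially trivial: substituting $\br = \br^{\text{ideal}}(t)$ into $F^\top\br - F^\top\br^{\text{ideal}}(t) = 0$ yields $0=0$, while $\br^{\text{ideal}}(t) \geq 0$ holds by~\autoref{def:ideal solution}. Dual feasibility is immediate from the ideal coupling: $\bu^{\text{ideal}}(t) \in \mathcal{P}_{F,1-\tau} \subset \mathcal{P}_{F,1}$ by~\autoref{def:ideal coupling}, so $\|F\bu^{\text{ideal}}(t)\|_\infty \leq 1-\tau \leq 1$.

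The substantive step is complementary slackness. The key inputs are the KKT conditions for the non-negative least squares that defines $\br^{\text{ideal}}(t)$. Specifically, viewing $\br^{\text{ideal}}(t)$ as the minimizer of $\tfrac12\|\bx - F^\top\br\|_2^2$ over $\br \geq 0$ with support in $\Gamma(\bu^{\text{ideal}}(t))$, the normal equations yield $F_i^\top(\bx - F^\top\br^{\text{ideal}}(t)) = 0$ for every super-active index $i \in \Gamma^*(\bu^{\text{ideal}}(t))$, and $F_i^\top(\bx - F^\top\br^{\text{ideal}}(t)) \leq 0$ for the remaining active indices $i \in \Gamma(\bu^{\text{ideal}}(t)) \setminus \Gamma^*(\bu^{\text{ideal}}(t))$. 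On the dual side, by construction of the ideal polytope, $(F\bu^{\text{ideal}}(t))_i = 1-\tau$ for $i \in \Gamma(\bu^{\text{ideal}}(t))$ and $|(F\bu^{\text{ideal}}(t))_i| < 1-\tau$ otherwise. Since $\br^{\text{ideal}}(t)$ is supported on $\Gamma^*(\bu^{\text{ideal}}(t)) \subseteq \Gamma(\bu^{\text{ideal}}(t))$, combining these facts verifies the LP complementary slackness relations $\br^{\text{ideal}}_i(t)>0 \Rightarrow (F\bu^{\text{ideal}}(t))_i = 1$, up to the overall scaling factor $1-\tau$ between the ideal polytope and the dual polytope.

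The final piece is to check strong duality by direct computation: $(F^\top\br^{\text{ideal}}(t))^\top \bu^{\text{ideal}}(t) = \sum_{i \in \Gamma^*(\bu^{\text{ideal}}(t))} \br_i^{\text{ideal}}(t) \cdot (F\bu^{\text{ideal}}(t))_i = (1-\tau)\|\br^{\text{ideal}}(t)\|_1$, which matches the primal objective after rescaling the dual variable by $1/(1-\tau)$ (a rescaling that preserves dual feasibility since it only lifts $\bu^{\text{ideal}}(t)$ to the boundary of $\mathcal{P}_{F,1}$).

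The main obstacle I expect is precisely the mismatch between the $(1-\tau)$-boundary on which $\bu^{\text{ideal}}(t)$ sits and the unit-norm boundary of the dual feasible set in (\ref{op:basispursuit-ideal-perturbed-dual}). A naive KKT check fails complementary slackness by a factor of $1-\tau$, so one must either explicitly rescale the certifying dual witness (as above) or, alternatively, absorb the $O(\tau)$ duality gap into the perturbation bound of~\autoref{lemma:idealalgorithm-OPTbounds}, using that the parameter $\tau$ can be chosen as small as needed (cf.~\autoref{lem:auxiliary SNN jump}, which already requires $\tau = O(\gamma(F)/(n^2 \lambda_{\max}^2))$). All remaining ingredients---LP strong duality, the NN-LS normal equations, and the geometric description of $\mathcal{P}_{F,1-\tau}$ encoded in $\Gamma$ and $\Gamma^*$---are either classical or established earlier in this appendix.
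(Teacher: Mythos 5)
Your proposal follows essentially the same route as the paper's proof: check primal and dual feasibility, then verify stationarity and complementary slackness of the KKT system for the perturbed LP, using the conic-projection optimality conditions defining $\br^{\text{ideal}}(t)$ on the primal side and the geometry of the active set $\Gamma(\bu^{\text{ideal}}(t))$ on the dual side. The one point where you are more careful than the paper is the $(1-\tau)$ mismatch: the paper's proof simply asserts $F_i^\top\bu^{\text{ideal}}(t)=1$ for $i\in\Gamma(\bu^{\text{ideal}}(t))$, even though by~\autoref{def:ideal coupling} the active walls of the ideal polytope satisfy $F_i^\top\bu^{\text{ideal}}(t)=1-\tau$; your explicit rescaling of the dual witness to $\bu^{\text{ideal}}(t)/(1-\tau)$ (or absorbing the $O(\tau)$ gap into the perturbation bound) is the honest fix, at the harmless cost that the certified optimal dual is the rescaled vector rather than $\bu^{\text{ideal}}(t)$ itself, which only changes the constant in~\autoref{lemma:idealalgorithm-OPTbounds} by a factor $1/(1-\tau)$.
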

	\begin{proof}[Proof of~\autoref{lemma:proofofidealalgorithm-KKT}]
		We simply check the KKT condition. Since the program can be rewritten as a linear program, it satisfies the regularity condition of the KKT condition.
		
		First, the primal and the dual feasibility can be verified by the dynamics of ideal algorithm. That is, $F^\top\br^*-F^\top\br^{\text{ideal}}(t)=0$ and $\|F\bu^*\|_{\infty}\leq1$. Next, consider the Lagrangian of~\autoref{op:basispursuit-ideal-perturbed} as follows.
		\begin{align*}
		\mathcal{L}(\br,\bu) &= \|\br\|_1 - \bu^\top (F^\top\br-F^\top\br^{\text{ideal}}(t)),\\
		\nabla_{\br}\mathcal{L}(\br,\bu) &= \nabla_{\br}\|\br\|_1-F \bu.
		\end{align*}
		Now, let's verify that the gradient of the Lagrangian over $\bx$ is vanishing at $(\br^*,\bu^*) = (\br^{\text{ideal}}(t),\bu^{\text{ideal}}(t))$. That is,  $0\in\nabla_{\br}\mathcal{L}(\br^*,\bu^*)=\nabla_{\br}\|\br\|_1-F\bu$. Consider two cases as follows. For any $i\in[n]$,
		\begin{enumerate}[label=(\arabic*)]
			\item When $i,-i\notin\Gamma^{\text{ideal}}(t)$. We have $\Big(\br^{\text{ideal}}(t) \Big)_i=0$, i.e., the sub-gradient of the $i$th coordinate of $\|\br^{\text{ideal}}(t)\|_1$ lies in $[-1,1]$. As $F_i^\top\bu^{\text{ideal}}(t)\in[-1,1]$, we have $F_i^\top\bu^{\text{ideal}}(t)\in\partial_{\br_i}\|\br^{\text{ideal}}(t)\|_1$.
			\item When $i\in\Gamma^{\text{ideal}}(t)$ (or $-i\in\Gamma^{\text{ideal}}(t)$). We have $F_i^\top\bu^{\text{ideal}}(t)=1$ (or $F_i^\top\bu^{\text{ideal}}(t)=-1$). As $\text{sgn}\Big(\br^{\text{ideal}}(t) \Big)_i=1$ (or $\text{sgn}\Big(\br^{\text{ideal}}(t) \Big)_i=-1$), we have $F_i^\top\bu^{\text{ideal}}(t)=\text{sgn}\Big(\br^{\text{ideal}}(t) \Big)_i=\partial_{\br_i}\|\br^{\text{ideal}}(t)\|_1$.
		\end{enumerate}
		Finally, the complementary slackness is satisfied because $F^\top\br^*-F^\top\br^{\text{ideal}}(t)=0$.
		As a result, we conclude that $(\br^*,\bu^*)=(\br^\text{ideal}(t),\bu^\text{ideal}(t))$ is the optimal solution of (\ref{op:basispursuit-ideal-perturbed}).
	\end{proof}
	
	Next, we are going to use the perturbation lemma in the Chapter 5.6 of~\cite{boyd2004convex} stated as follows.
	
	\begin{lemma}[perturbation lemma]
		Given the following two optimization programs.\\
		\begin{minipage}{\linewidth}
			\begin{minipage}{0.45\linewidth}
				\begin{equation}\label{op:perturbation-original}
				\begin{aligned}
				& \underset{\br}{\text{minimize}}
				& & f(\br) \\
				& \text{subject to}
				& & h(\br)=\mathbf{0}.
				\end{aligned}
				\end{equation}
			\end{minipage}
			\begin{minipage}{0.45\linewidth}
				\begin{equation}\label{op:perturbation-perturbed}
				\begin{aligned}
				& \underset{\br}{\text{minimize}}
				& & f(\br) \\
				& \text{subject to}
				& & h(\br)=\by.
				\end{aligned}
				\end{equation}
			\end{minipage}
		\end{minipage}
		Let $\OPT^{\text{original}}$ be the optimal value of the original program~\autoref{op:perturbation-original} and $\OPT^{\text{perturbed}}$ be the optimal value of the perturbed program~\autoref{op:perturbation-perturbed}. Let $\bu^*$ be the optimal dual value of the perturbed program~\autoref{op:perturbation-perturbed}. We have
		\begin{equation}
		\OPT^{\text{original}}\geq\OPT^{\text{perturbed}}+\by^\top\bu^*.
		\end{equation}
	\end{lemma}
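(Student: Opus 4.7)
The plan is to prove this perturbation bound via a standard Lagrangian duality argument, exploiting the fact that the two programs share the same objective $f$ and the same constraint function $h$, differing only by a shift $\by$ in the right-hand side. The key observation is that a shift in the primal right-hand side translates to a simple linear shift of the dual function, and the claim then falls out of combining strong duality for the perturbed program with weak duality for the original.

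First, I would write down the Lagrangians. For the perturbed program, encoding $h(\br)=\by$ as $h(\br)-\by = \mathbf 0$, the Lagrangian is $\mathcal L^{\text{perturbed}}(\br,\bu) = f(\br) + \bu^\top(h(\br)-\by)$, and for the original program $\mathcal L^{\text{original}}(\br,\bu) = f(\br) + \bu^\top h(\br)$. Minimizing each over $\br$ immediately gives the key identity between the two dual functions:
\begin{equation*}
g^{\text{perturbed}}(\bu) \;=\; g^{\text{original}}(\bu) - \by^\top \bu .
\end{equation*}

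Next, I would apply strong duality to the perturbed program, so that $\OPT^{\text{perturbed}} = g^{\text{perturbed}}(\bu^*)$. Combined with the identity above, this yields $g^{\text{original}}(\bu^*) = \OPT^{\text{perturbed}} + \by^\top \bu^*$. Finally, I would invoke weak duality for the original program, which is unconditional: for any dual-feasible $\bu$, $\OPT^{\text{original}} \geq g^{\text{original}}(\bu)$. Specializing to $\bu = \bu^*$ (which is dual-feasible for the original program since the dual feasibility constraints do not involve $\by$) gives $\OPT^{\text{original}} \geq \OPT^{\text{perturbed}} + \by^\top \bu^*$, as desired.

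The main obstacle will be cleanly invoking the regularity hypothesis under which strong duality applies to the perturbed program. Fortunately, in the context in which this lemma is used—$\ell_1$ minimization with a linear equality constraint $F^\top \br = \bx$—the program can be written as a linear program after the standard splitting $\br = \br^+ - \br^-$, so strong duality is automatic whenever both primal and dual are feasible; feasibility is itself a mild assumption inherited from the niceness of $F$ and the construction of $\br^{\text{ideal}}(t)$. The only other subtle point is a sign convention: encoding the equality constraint as $h(\br)-\by = \mathbf 0$ contributes a $-\by^\top\bu$ to the dual, which after the strong-duality/weak-duality chain flips to the $+\by^\top\bu^*$ appearing in the stated inequality, so one has to be careful to track signs consistently.
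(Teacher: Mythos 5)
Your proof is correct and is the standard argument: the paper itself does not prove this lemma but cites Chapter 5.6 of Boyd and Vandenberghe, and your chain — strong duality for the perturbed program, the dual-function identity $g^{\text{perturbed}}(\bu)=g^{\text{original}}(\bu)-\by^\top\bu$, then weak duality for the original program evaluated at $\bu^*$ — is exactly the textbook proof of that global sensitivity inequality. Your observation that strong duality for the perturbed program is justified because the intended application is an equality-constrained $\ell_1$ program (a linear program after the standard variable splitting) correctly supplies the regularity hypothesis that the paper leaves implicit.
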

	
	Now, think of~\autoref{op:l1 min} as the original program and~\autoref{op:basispursuit-ideal-perturbed} as the perturbed program. Namely, $f(\br)=\|\br\|_1$, $h(\br)=F^\top\br-\bx$, and $\by=F^\top\br^{\text{ideal}}(t)-\bx$. By the perturbation lemma, we have
	\begin{align*}
	\OPT^{\ell_1}&\geq \|\br^{\text{ideal}}(t)\|_1 +  \Big(F^\top\br^{\text{ideal}}(t)-\bx\Big)^\top\bu^{\text{ideal}}(t).
	\end{align*}
	As a result, the following upper bound holds.
	\begin{equation}\label{eq:proofofidealalgorithm-1}
	\|\br^{\text{ideal}}(t)\|_1\leq\OPT^{\ell_1} + \|\bu^{\text{ideal}}(t)\|_2\cdot\|\bx-F^\top\br^{\text{ideal}}(t)\|_2.
	\end{equation}
	Finally, as $\bu^{\text{ideal}}(t)$ lies in the feasible region $\{\bu:\ F\bu\|_{\infty}\leq1 \}$ and the range space of $F$, we can upper bound the $\|\bu^{\text{ideal}}(t)\|_2$ term in~\autoref{eq:proofofidealalgorithm-1} as follows.
	\begin{lemma}\label{lemma:proofofidealalgorithm-vbound}
		For every $\bu$ in the range space of $F$ and $\|F \bu\|_{\infty}\leq1$, we have $\|\bu\|_2\leq\sqrt{\frac{n}{\lambda_{\min}}}$.
	\end{lemma}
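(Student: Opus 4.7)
The plan is to prove the bound directly via two standard norm inequalities; there is no deep obstacle here. First I interpret the hypothesis ``$\bu$ in the range space of $F$'' as $\bu$ being orthogonal to the null space of $F$ (i.e., in the range of $F^\top$, which is the natural complementary subspace in $\Real^m$ where $\bu$ lives), since otherwise the dimensions are inconsistent with the dual dynamics derived earlier.

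The first step will be to translate the constraint $\|F\bu\|_\infty \le 1$ into an $\ell_2$ bound on the $n$-dimensional vector $F\bu$. Since $F\bu \in \Real^n$ and each coordinate is at most $1$ in absolute value, we immediately get $\|F\bu\|_2^2 \le n$.

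The second step will be to lower-bound $\|F\bu\|_2^2$ in terms of $\|\bu\|_2^2$. Because $\bu$ lies in the orthogonal complement of the null space of $F$, we can decompose $\bu$ in an eigenbasis of $F^\top F$ that uses only eigenvectors associated with nonzero eigenvalues. All such eigenvalues are at least $\lambda_{\min}$ (the smallest nonzero eigenvalue of $FF^\top$, which equals the smallest nonzero eigenvalue of $F^\top F$), so
\[
\|F\bu\|_2^2 \;=\; \bu^\top F^\top F \bu \;\ge\; \lambda_{\min}\,\|\bu\|_2^2.
\]

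Combining the two inequalities gives $\lambda_{\min}\|\bu\|_2^2 \le n$, hence $\|\bu\|_2 \le \sqrt{n/\lambda_{\min}}$. No case analysis or ideal-coupling machinery is required; the only subtlety is the aforementioned typographical interpretation of ``range space of $F$'' as the natural $m$-dimensional subspace on which $F$ is injective, which is what the preceding dynamical setup demands.
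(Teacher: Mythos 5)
Your proof is correct and follows essentially the same route as the paper's: bound $\|F\bu\|_2 \le \sqrt{n}$ from the $\ell_\infty$ constraint, and bound $\|F\bu\|_2 \ge \sqrt{\lambda_{\min}}\|\bu\|_2$ from the spectral condition on the complement of the null space. Your reading of ``range space of $F$'' as the range of $F^\top$ is the intended one (the paper's wording is loose on this point), and your eigenbasis justification just makes explicit a step the paper asserts without comment.
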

	\begin{proof}[Proof of~\autoref{lemma:proofofidealalgorithm-vbound}]
		As $\bu$ lies in the range space of $F$, we have $\|F\bu\|_2\geq\sqrt{\lambda_{\min}}\|\bu\|_2$. Also, because $\|F\bu\|_{\infty}\leq1$, we have $\|F\bu\|_2\leq\sqrt{n}$. As a result, 
		\begin{equation*}
		\|\bu\|_2\leq\frac{\|F\bu\|_2}{\sqrt{\lambda_{\min}}}\leq\sqrt{\frac{n}{\lambda_{\min}}}.
		\end{equation*}
	\end{proof}
	By~\autoref{eq:proofofidealalgorithm-1} and~\autoref{lemma:proofofidealalgorithm-vbound},~\autoref{lemma:idealalgorithm-OPTbounds} holds.		
\end{proof}

    \chapter{Publication list}\label{app:papers}

The authorship order of a paper is by contribution if it is marked with `*'; otherwise the authorship follows alphabetical order.

\section{Conference publications (peer-reviewed)}

\noindent Chi-Ning Chou, Alexander Golovnev, Amirbehshad Shahrasbi, Madhu Sudan, and Santhoshini Velusamy. Sketching approximability of (weak) monarchy predicates. In Approximation,Randomization, and Combinatorial Optimization. Algorithms and Techniques (APPROX/RANDOM 2022), 2022

\vspace{5mm}
\noindent J.Peter Love, Chi-Ning Chou, Juspreet Singh Sandhu, and Jonathan Shi. Limitations of local quantum algorithms on maximum cuts of sparse hypergraphs and beyond. In 49th EATCS International Colloquium on Automata, Languages and Programming (ICALP 2022), 2022

\vspace{5mm}
\noindent Alexander Chou, Chi-Ning Golovnev, Madhu Sudan, Ameya Velingker, and Santhoshini Velusamy. Linear space streaming lower bounds for approximating csps. In 54th Annual ACM Symposium on Theory of Computing (STOC 2022), 2022

\vspace{5mm}
\noindent Nai-Hui Chia, Chi-Ning Chou, Jiayu Zhang, and Ruizhe Zhang. Quantum meets the minimum circuit size problem. In 13th Innovations in Theoretical Computer Science Conference (ITCS 2022), 2022

\vspace{5mm}
\noindent Chi-Ning Chou, Alexander Golovnev, Madhu Sudan, and Santhoshini Velusamy. Approximability of all finite csps with linear sketches. In 2021 IEEE 62nd Annual Symposium on Foundations of Computer Science (FOCS 2021), 2021

\vspace{5mm}
\noindent Boaz Barak, Chi-Ning Chou, and Xun Gao. Spoofing linear cross-entropy benchmarking in shallow quantum circuits. In 12th Innovations in Theoretical Computer Science Conference (ITCS 2021), 2021

\vspace{5mm}
\noindent Chi-Ning Chou, Alexander Golovnev, and Santhoshini Velusamy. Optimal streaming approximations for all boolean max-2csps and max-ksat. In 2020 IEEE 61st Annual Symposium on Foundations of Computer Science (FOCS 2020), 2020

\vspace{5mm}
\noindent Chi-Ning Chou and Mien Brabeeba Wang. Ode-inspired analysis for the biological version of oja’s rule in solving streaming pca. In Proceedings of Thirty Third Conference on Learning Theory (COLT 2020), Proceedings of Machine Learning Research. PMLR, 2020

\vspace{5mm}
\noindent Boaz Barak, Chi-Ning Chou, Zhixian Lei, Tselil Schramm, and Yueqi Sheng. (Nearly) efficient algorithms for the graph matching problem on correlated random graphs. In Advances in Neural Information Processing Systems 32 (NeurIPS 2019), 2019

\vspace{5mm}
\noindent Chi-Ning Chou, Zhixian Lei, and Preetum Nakkiran. Tracking the l2 Norm with Constant Up- date Time. In Approximation, Randomization, and Combinatorial Optimization. Algorithms and Techniques (APPROX/RANDOM 2019), 2019

\vspace{5mm}
\noindent Chi-Ning Chou,Kai-Min Chung, and Chi-Jen Lu. On the Algorithmic Power of Spiking Neural Networks. In 10th Innovations in Theoretical Computer Science Conference (ITCS 2019), 2019

\vspace{5mm}
\noindent Chi-Ning Chou, Yu-Jing Lin, Ren Chen, Hsiu-Yao Chang, I-Ping Tu, and Shih-wei Liao. Personalized Difficulty Adjustment for Countering the Double-Spending Attack in Proof-of-Work Consensus Protocols. In 2018 IEEE Conference on Blockchain (Blockchain 2018), 2018

\vspace{5mm}
\noindent Chi-Ning Chou, Mrinal Kumar, and Noam Solomon. Hardness vs Randomness for Bounded Depth Arithmetic Circuits. In 33rd Computational Complexity Conference (CCC 2018), 2018

\section{Journal publications (peer-reviewed)}

\vspace{5mm}
\noindent *Andres E. Lombo, Jesus E. Lares, Matteo Castellani, Chi-Ning Chou, Nancy Lynch, and Karl K. Berggren. A superconducting nanowire-based architecture for neuromorphic computing. Neuromorphic Computing and Engineering, 2022

\vspace{5mm}
\noindent Chi-Ning Chou, Mrinal Kumar, and Noam Solomon. Closure results for polynomial factorization. Theory of Computing, 15(13):1–34, 2019

\section{Manuscripts}

\vspace{5mm}
\noindent *Michael Kuoch, Nikhil Parthasarathy, Chi-Ning Chou, Joel Dapello, Haim Sompolinsky, and SueYeon Chung. Probing Biological and Artificial Neural Networks with Task-Dependent Neural
Manifolds. 2023.

\vspace{5mm}
\noindent *Nihaad Paraouty, Justin D. Yao, Léo Varnet, Chi-Ning Chou, SueYeon Chung, and Dan H. Sanes. Sensory cortex plasticity supports auditory social learning, 2022

\vspace{5mm}
\noindent *Yao-Yuan Yang,Chi-Ning Chou, and Kamalika Chaudhuri. Understanding rare spurious correlations in neural networks. arXiv preprint: 2202.05189, 2022

\vspace{5mm}
\noindent Chi-Ning Chou, Juspreet Singh Sandhu, Mien Brabeeba Wang, and Tiancheng Yu. A general framework for analyzing stochastic dynamics in learning algorithms. arXiv preprint: 2006.06171,  2021

\vspace{5mm}
\noindent *Xun Gao, Marcin Kalinowski, Chi-Ning Chou, Mikhail Lukin, Boaz Barak, and Soonwon Choi. Limitations of linear cross-entropy as a measure for quantum advantage. arXiv preprint: 2112.01657, 2021

\vspace{5mm}
\noindent Chi-Ning Chou, Mrinal Kumar, and Noam Solomon. Closure of VP under taking factors: a short and simple proof. arXiv preprint: 1903.02366, 2019

\end{appendices}

\setstretch{\dnormalspacing}

\backmatter
\clearpage
\begin{spacing}{\dcompressedspacing}
\addcontentsline{toc}{chapter}{References}
\bibliographystyle{unsrtnat}
\end{spacing}

\end{document}